%% file: main.tex
\def\QuantumSubmissionVersion{1}
\documentclass[a4paper,twocolumn,10pt,unpublished]{quantumarticle}
\pdfoutput=1
\usepackage[T1]{fontenc}
\usepackage[utf8]{inputenc}
\usepackage[english]{babel}
\usepackage{amsmath}
\usepackage{microtype}
\usepackage{amssymb,amsthm,mathtools}
\usepackage{bm}
\usepackage{booktabs}
\usepackage{multirow}
\usepackage{array}
\usepackage{longtable}
\usepackage{tabularx}
\usepackage{enumitem}
\usepackage{graphicx}
\usepackage{xcolor}
\usepackage[numbers]{natbib}
\usepackage{hyperref}
\usepackage[nameinlink,noabbrev]{cleveref}
\hypersetup{hidelinks}

\makeatletter
\let\switch@array\relax
\makeatother

\newtheorem{definition}{Definition}
\newtheorem{theorem}{Theorem}
\newtheorem{corollary}{Corollary}
\newtheorem{proposition}{Proposition}
\newtheorem{lemma}{Lemma}
\newtheorem{remark}{Remark}
\newtheorem{assumption}{Assumption}

\newcommand{\pcharge}{\mathsf{prec}}
\newcommand{\Ftwo}{\mathbb{F}_2}
\DeclareMathOperator{\lcmop}{lcm}
\newcommand{\DataNumber}[2]{#2}
\input{generated/paper_numbers}

\newcommand{\pkResultsSummary}{All five paths completed at $m=4$; semantic
UCC, \texttt{qiskit opt3}, staq rotation folding, and the phase-polynomial
reference also completed at $m=8,16,32,64$, while rebased TKET PauliSimp
returned \texttt{predicate\_error} at those four sizes because its output
fell outside the Hadamard-free symbolic-certificate predicate. Semantic UCC
and the phase-polynomial reference produced the same linear resource sequence,
from $25/21/8$ gates/depth/\texttt{cx} at $m=4$ to $385/321/128$ at $m=64$.
Among the completed external rows, \texttt{qiskit opt3} and staq produced
$17m$ and $10m$ output gates, respectively, as the $r m$ token stream grew.
Status outcomes are recorded separately and omitted from numerical curves.}

\begin{document}

\title{Representation-Dependent Recoverability in Quantum Compilation}

\author{Jinze Yang}
\affiliation{School of Physics, Xidian University, Xi'an 710071, China}

\author{Yangyang Li}
\email{yyli@xidian.edu.cn}
\affiliation{Key Laboratory of Intelligent Perception and Image Understanding of Ministry of Education of China,
Collaborative Innovation Center of Quantum Information of Shaanxi Province, Institute of Interdisciplinary Quantum Science and Technology (IIQST),
School of Artificial Intelligence, Xidian University, Xi'an 710071, China}

\author{Xiu-Hao Deng}
\email{dengxiuhao@iqasz.cn}
\affiliation{Shenzhen International Quantum Academy, Shenzhen 518048, China}
\affiliation{Shenzhen Branch, Hefei National Laboratory, Shenzhen, 518048, China}

\maketitle

\begin{abstract}
Fault-tolerant compilation can disperse high-level structure.
Time-dependent and variational schedules split an accumulated phase
across many rounds, gate synthesis approximates an angle by a
Clifford+$T$ word whose individual gates do not encode that angle, and
randomized compiling represents logical rotations through
sign-randomized fragments and associated corrections.  These
transformations preserve the intended computation within their stated
equivalence and error conventions, but they change how cheaply a
downstream compiler can recover the aggregate phase data.

We formalize this representation-dependent recoverability for a
downstream compiler---one positioned after scheduling, randomization,
and lowering, such as a cloud transpiler or a control-stack stream
optimizer.  The model charges two information channels: output
irreversibly committed before the suffix arrives and a complete restart
state crossing the cut, serialized into control, store, live-window,
parameter, and intermediate-representation fields.  For an $r$-round
modular accumulation of a commuting layer with $m$ generators, use an
$\epsilon$-calibrated grid
$Q_\epsilon=K_\epsilon+1=\Theta(1/\epsilon)$ with exact packing
separation $2\sin(\pi/(2Q_\epsilon))$.  Any compiler that is correct on
every valid stream, makes $p$ forward passes, and fails with probability
at most $\delta$ obeys
$\overline A_p+(2p-1)S\geq
(1-\delta)m\log_2K_\epsilon-h_2(\delta)$, where $\overline A_p$ is
committed output and $S$ is the serialized crossing-state cap.
Within this scheduled family, sign normalization is a bijection of the
residue alphabet, so the folded (compile-time) presentation of
randomized compiling inherits the bound; no claim depends on tracking
frames.  Under the additional, explicitly disclosed frame-tracked
presentation, the Pauli frame is a sub-register of $S$.  A
memory-capped block compiler attains the envelope within a constant
factor, whereas direct semantic aggregation attains the compact-output
corner with logarithmic state.

We then measure the predicted channels rather than infer them.
Instrumented reference compilers expose the state, IR, and commitment
coordinates directly.
The two information channels are not, however, always interchangeable
at par, and we characterize when they are not.  A compiler whose committed prefix is
executed at the cut---applied before the suffix is parsed, and so
obliged to name the coordinates it acts on---pays a toll the state
channel never pays: the entropy $\log_2\binom{m}{k}$ of its own
mask specifying which $k$ coordinates are dispersed and committed early.  A compiler that
defers, or that merely buffers what it has written, is handed that
decision for free by the stream that follows and pays none of it.  We
prove both directions and measure both under a preregistered protocol
frozen before the run.  The measured per-generator toll matches
$\log_2\binom{m}{k}/k$, stays approximately constant in $m$ at fixed
$k/m$, and vanishes for a public mask and for the buffered control.

One downstream experiment fixes the physical stakes without promoting
them to a representation-independent theorem.  Under one disclosed
fixed-total-error Clifford+$T$ synthesis and surface-code model, a
materialize-first pipeline that retains separate rotations costs orders
of magnitude more logical $T$ states and spacetime volume than the
semantic-first pipeline; external tools that reconstruct the aggregate
before synthesis are classified on the semantic-capable side.  The
operational design rule is therefore to preserve scheduling and phase
semantics until aggregation whenever the downstream interface and
latency budget permit it.
\end{abstract}

\section{Introduction}
\label{sec:introduction}

A fault-tolerant compiler is not only a translator; some of its standard
stages also disperse structure by design.  Time-dependent schedules and
variational ansatz layers split a single
accumulated phase across many
rounds~\cite{farhi2014qaoa,peruzzo2014vqe,cerezo2021variational,childs2021trotter}.
Gate synthesis dissolves an angle
into a Clifford+$T$ word whose individual letters do not encode the angle
on their own~\cite{dawson2006solovay,kliuchnikov2013fast,selinger2015efficient,ross2016optimal}.
Randomized compiling replaces each rotation
by a sign-randomized fragment so that coherent errors average
away~\cite{wallman2016noise}.  Each of these transformations
is correct, standard, and desirable for a distinct
reason---yet each takes a quantity that was once written down in
one place and scatters it across an instruction stream.

This paper asks what it costs to reconstruct such a quantity.  Within
the downstream model defined below, the answer is an
information-theoretic tradeoff with two charged channels.  A
compiler positioned downstream of the dispersal---a cloud transpiler
receiving a flat program, or a stream optimizer inside a control
stack---must either carry the scattered information in internal state as
it goes, or commit output before the information has arrived and pay
in output length.  All restart channels, including a reconstructed
phase table or graph, are included in the state term.  The resulting
frontier depends on the accuracy $\epsilon$ and the number of phase
terms $m$; a separate synthesis/QRE experiment evaluates how one
materialized-output path converts that difference into physical cost.

\begin{figure}[tbp]
\centering
\includegraphics[width=\columnwidth]{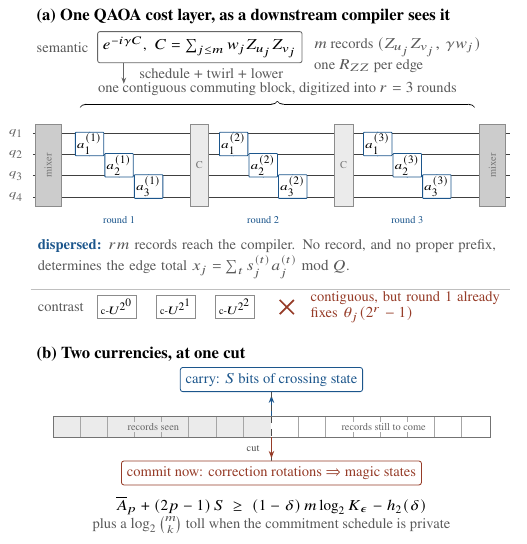}
\caption{Dispersal, and the two currencies that undo it.  (a) A QAOA
cost layer $e^{-i\gamma C}$ on $m$ edges is one commuting object
carrying $m$ records.  Lowered under a digitized schedule it reaches a
downstream compiler as $r$ contiguous rounds of grid residues,
optionally sign-randomized and separated by Clifford records; the edge
total $x_j$ is then a sum that no single record and no proper prefix
reveals.  The mixer layers bound the block rather than sitting inside
it: at generic angles they are non-Clifford $R_X$ rotations
(\cref{rem:round-contiguity}).  The controlled-power ladder beneath is
the complementary case---its rounds are contiguous, but round one
already fixes the aggregate $\theta_j(2^r-1)$, so it supplies
contiguity without dispersal (\cref{rem:dispersal}).  (b) At any cut
the scattered information is either carried across in compiler state or
committed to output.  When those committed rotations are synthesized
separately, the disclosed QRE model converts the extra multiplicity into
additional magic-state demand.
\Cref{thm:w1-unified-frontier} states that the two cannot both be
small, and \cref{prop:commitment-toll} charges the committed channel a
further $\log_2\binom{m}{k}$ when the dispersal mask is private
\emph{and} the prefix is executed at the cut rather than buffered;
a buffered prefix pays none of it.}
\label{fig:intro-dispersal}
\end{figure}

A concrete instance fixes the scale (\cref{fig:intro-dispersal}).  Take
a single QAOA~\cite{farhi2014qaoa} cost layer on a $3$-regular MaxCut instance over $22$
vertices, so $m=33$ edges, compiled to projective accuracy
$\epsilon=10^{-3}$.  The synthesis grid at that tolerance has
$L_\epsilon=\lfloor\pi/(8\arcsin(2\epsilon))\rfloor=196$ and hence
$K_\epsilon=197$ distinguishable settings per edge
(\cref{prop:epsilon-packing}), so each edge total is worth
$\log_2 197=7.62$ bits and the layer as a whole carries $252$ bits, or
about $31$ bytes, of aggregate information.  A deferred compact-output
compiler must carry at least this order of information across the
relevant cut.  In the materialize-first pipeline evaluated here, the
alternative is to retain separate rotations and synthesize them
individually; a generic non-Clifford rotation at this tolerance has a
typical Ross--Selinger cost near
$3\log_2(1/\epsilon)\approx30$ $T$ gates
(\cref{rem:rs-grid}), supplied by distilled magic states rather than by
the code itself~\cite{bravyi2005magic,litinski2019magic}.  This example
motivates, but does not itself prove, the physical conversion.
\Cref{sec:ft-consequences} evaluates it under a common error budget and
finds a \QREMaterializeSpacetimeRatio$\times$ spacetime-volume ratio
between the aggregate and materialized paths in the representative
slice.

The same figure also delineates the scope of the statement, which is
worth clarifying early because a figure is easier to overinterpret than a theorem.
Two hypotheses are required, and different families supply them.  The
rounds must be contiguous in the stream, and they must carry
coefficients that vary freely from round to round.  A variational cost
layer supplies the second within one block of consecutive commuting
rounds, the mixer closing the block rather than lying inside it;
controlled-power ladders and phase estimation supply the first and not
the second, since a ladder's aggregate is already determined by its
first round.  Clifford records---routing, \textsc{swap}, and the
Pauli-frame insertions of randomized compiling itself---are interior
and normalize away when they permute the declared generator set up to
sign (\cref{lem:clifford-interleaving}).  The boundaries
are stated in full in \cref{rem:round-contiguity,rem:clifford-scope}.

Underlying all three dispersal mechanisms is a single representation
problem.  A high-level circuit describing
a commuting diagonal phase layer with $m$ generators can be compiled in at
least two qualitatively different ways. A \emph{semantic-first} pipeline
preserves the generator--coefficient structure, aggregates repeated
coefficients, and lowers the aggregate representative once. A
\emph{materialize-first} pipeline expands the same layer into a flat
target-basis instruction stream and then attempts to recover aggregate
structure from that stream, whether by local rewrites or by constructing
a new global IR.  Both pipelines produce correct unitaries,
but they expose very different resource profiles to downstream synthesis
and fault-tolerant estimation.

This representation dependence is well known anecdotally. Phase-polynomial
and Pauli-gadget
methods~\cite{amy2014tdepth,nam2018automated,cowtan2020phase,amy2019cnot,beaudrap2020techniques}
exploit algebraic structure before returning to a gate circuit.
ZX-calculus
methods~\cite{duncan2020graph,kissinger2020reducing,kissinger2020pyzx,vandewetering2020zx}
provide a global representation in which phases can propagate and merge
beyond local syntax. Architecture-aware
compilers~\cite{martiel2022architecture,meijer2023architecture} and full-stack
toolchains~\cite{watkins2024surfacecode} make representation choices that
propagate into physical resource
estimates~\cite{beverland2022assessing}.  We develop a
quantitative account of \emph{how recoverable} an aggregate is from the
representation in which it arrives---one that isolates representation from
the other variables named in the model, proves lower bounds for the
declared stream families, and measures the charged channels.  Physical resource estimation
is then a consequence of such an account rather than a parallel goal:
after the information tradeoff is established in bits, the observed
rotation multiplicity of particular pipelines can be propagated through
synthesis and error correction under a disclosed model.

We address this gap by formalizing \emph{approximate recoverability}:
given a target unitary $U$ and a fixed approximation tolerance $\epsilon$
in projective operator norm, what compiler resources are required to emit
an output within $\epsilon$ of $U$ from a lowered representation? The
contributions of this paper are:
\begin{enumerate}[leftmargin=1.5em]
    \item An approximate recoverability framework with a budget-explicit
    compiler model charging a unified bit-level cut budget
    $B_{\mathrm{cut}}=B_{\mathrm{control}}+B_{\mathrm{store}}+
    B_{\mathrm{window}}+B_{\mathrm{parameter}}+B_{\mathrm{IR}}$, together
    with a self-delimiting growing-width token codec and explicit charges
    for RAM/DAG/cache layout, dynamic passes, random seeds, output length,
    and approximation error
    (\cref{sec:framework}).
    \item A cyclic $K_\epsilon$-ary packing with exact wrap margin,
    $K_\epsilon=\Theta(1/\epsilon)$, and an $r$-round modular update
    representation in which no token determines the aggregate.  For
    failure probability $\delta$, separate output-code bounds
    $\overline B_{\mathrm{rel}}^{\mathrm{out}},
    \overline B_{\mathrm{self}}^{\mathrm{out}}
    \geq(1-\delta)m\log_2K_\epsilon-h_2(\delta)$, and the explicit
    forward-pass frontier
    $\overline A_p+(2p-1)S\geq
    (1-\delta)m\log_2K_\epsilon-h_2(\delta)$.  A memory-capped block
    compiler reaches compact relative output with
    $S=O((m/p)\log(K_\epsilon+1))$, giving a constant-factor upper
    envelope; the older two-round $K=2,Q=3$ result retains its stronger
    zero-error one-pass constant
    (\cref{sec:packing,sec:reduction,sec:w1-unified-frontier}).
    \item A characterization of when the two payment channels are
    \emph{not} interchangeable at par, stated in both directions.  A
    compiler whose committed prefix is executed at the cut---applied
    before the suffix is parsed, and so obliged to name the coordinates
    it acts on---pays $\log_2\binom{m}{k}$ bits beyond the payload, the
    entropy of the dispersal mask specifying which $k$ coordinates are committed
    early.  A compiler that defers, or that merely buffers what it has
    written, pays none of it, because the stream that follows discloses
    that decision for free.  We prove both clauses and measure both on
    an instrumented device under a preregistered protocol frozen before
    the run; the per-generator toll approaches
    $h_2(k/m)/(k/m)$ rather than being independent of the mask density,
    and the buffered control, which pays zero, is what fixes
    the boundary between the two regimes
    (\cref{def:self-identifying,prop:commitment-toll,fig:commit-toll}).
    \item A representation-aware recovery pipeline with bounded recognition,
    aggregation certificates, and conservative fallback, evaluated on a
    complete nine-representation-by-five-compiler matrix and a separately
    versioned campaign of local, semantic-capable, and authors' reference
    baselines.  Bridge loss and non-completion statuses are reported separately
    (\cref{sec:compiler,sec:validation}).
    \item Direct event-level measurement of committed output, all five
    cross-cut state fields, serialized IR, RSS, and time for echo, aggregation,
    memory-capped hybrid, and materialized-IR reference compilers.
    A separate fixed-total-error synthesis/QRE campaign then propagates
    the observed rotation multiplicities of three certified pipelines
    into physical units: logical $T$, data and factory qubits, cycles, factories,
    runtime, and spacetime volume under two
    versioned QEC models (\cref{sec:cut-tracing,sec:ft-consequences}).
    \item A natural-workload campaign covering structured algorithm families,
    two negative controls, and the complete six-factor ablation, together with
    a normalized frozen-data registry, numeric provenance map, and
    independent-directory reconstruction of theorem, benchmark, and QRE
    subsets (\cref{sec:workloads,sec:reproducibility}).
\end{enumerate}

The prior version of this work established a
representation-dependent recoverability boundary for exact semantics on
fixed-width witness families. Those results---the final-output lower
bound for constant-budget finite-window transducers, the four-qubit
irrational-angle independent-Pauli witness, and the rational CP witness
with period $T=131040$---are retained here as an exact-semantics
corollary (\cref{cor:exact-semantics-corollary}). The primary asymptotic
axes are now the number of independent phase terms $m$ and the requested
approximation tolerance $\epsilon$; the output lower bound scales as
$m\log(1/\epsilon)$.

\section{Operational Setup}
\label{sec:setup}

We define the operational pipeline in four stages:
\emph{semantic input}, \emph{basis lowering}, \emph{recovery}, and
\emph{synthesis/resource estimation}.

\begin{definition}[Semantic input]
\label{def:semantic-input}
A semantic input is a circuit description $C_{\mathrm{sem}}$ in a
representation that preserves generator--coefficient structure. For a
commuting diagonal phase layer with $m$ independent generators
$P_1,\ldots,P_m$ and coefficients $\theta_1,\ldots,\theta_m$, the semantic
input has $O(m)$ records of the form $(P_j, \theta_j)$.
\end{definition}

\begin{definition}[Basis lowering]
\label{def:lowering}
Let $\mathcal{L}_B(\cdot)$ denote exact lowering into a fixed target basis
$B$. The lowered circuit $\mathcal{L}_B(C_{\mathrm{sem}})$ is a flat
instruction stream over $B$ implementing the same unitary up to global
phase. Lowering preserves unitary semantics but need not preserve the
generator--coefficient structure.
\end{definition}

\begin{definition}[Recovery]
\label{def:recovery}
A recovery procedure $\mathcal{R}$ takes either a semantic input or a
lowered circuit $\mathcal{L}_B(C_{\mathrm{sem}})$ and produces an output circuit
$C_{\mathrm{out}}$ such that
$d_{\mathrm{proj}}(U(C_{\mathrm{out}}), U(C_{\mathrm{sem}})) \leq \epsilon$
under the projective operator norm $d_{\mathrm{proj}}$. The recovery is
\emph{semantic-first} if the aggregate structure remains available before
basis lowering; it is \emph{materialize-first} if recovery begins only from
the lowered stream.  A materialize-first procedure may still reconstruct a
global phase-polynomial, Pauli, or ZX representation; the serialized state
needed for that reconstruction is charged by \cref{def:compiler}.
\end{definition}

\begin{definition}[Synthesis and resource estimation]
\label{def:synthesis}
A fault-tolerant synthesis procedure maps $C_{\mathrm{out}}$ to a
discrete-gate circuit over a fault-tolerant gate set (e.g., Clifford+$T$)
subject to a total error budget $\epsilon_{\mathrm{total}}$. A quantum
resource estimator (QRE) then reports logical $T$ count, magic-state
demand, physical qubits, cycles, and spacetime volume under specified
architecture assumptions.
\end{definition}

The central question is: for a fixed target unitary $U$ and fixed
$\epsilon$, how do the compiler resources required by $\mathcal{R}$
depend on whether $\mathcal{R}$ receives $C_{\mathrm{sem}}$ or
$\mathcal{L}_B(C_{\mathrm{sem}})$?

\section{Approximate Recoverability Framework}
\label{sec:framework}

\subsection{Approximate Distance and Recoverability}

\begin{definition}[Projective operator norm distance]
\label{def:proj-norm}
For unitaries $U, V$ on $n$ qubits,
\begin{equation}
\label{eq:proj-norm}
  d_{\mathrm{proj}}(U, V) = \inf_{\phi \in \mathbb{R}} \|U - e^{i\phi} V\|_{\mathrm{op}}.
\end{equation}
  
The infimum is attained because the circle $\{e^{i\phi}\}$ is compact and
$\|U - e^{i\phi}V\|_{\mathrm{op}}$ is continuous in $\phi$.
\end{definition}

The projective distance directly eliminates global phase and serves as
the primary correctness criterion below; see \eqref{eq:proj-norm}.

\begin{lemma}[Unitary-channel diamond comparison]
\label{lem:diamond-comparison}
For the unitary channels
$\mathcal U(X)=UXU^\dagger$ and $\mathcal V(X)=VXV^\dagger$, define the
\emph{unnormalized} diamond distance
$d_\diamond(\mathcal U,\mathcal V)
:=\|\mathcal U-\mathcal V\|_\diamond\in[0,2]$.  Then
\begin{equation*}
 d_{\mathrm{proj}}(U,V)
 \leq d_\diamond(\mathcal U,\mathcal V)
 \leq 2d_{\mathrm{proj}}(U,V).
\end{equation*}
If the normalized convention
$\widetilde d_\diamond:=\tfrac12\|\mathcal U-\mathcal V\|_\diamond$ is used,
the corresponding explicit relation is
$\tfrac12d_{\mathrm{proj}}\leq\widetilde d_\diamond\leq
d_{\mathrm{proj}}$.
\end{lemma}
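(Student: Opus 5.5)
The plan is to prove the two inequalities separately, reducing everything to the single unitary $W:=U^\dagger V$. Both $d_{\mathrm{proj}}$ and $d_\diamond$ are invariant under $V\mapsto e^{i\phi}V$, and unitarily invariant under left multiplication by $U^\dagger$. The normalized statement then follows by dividing by $2$, so only the unnormalized chain needs proof.

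\emph{Upper bound $d_\diamond\le 2d_{\mathrm{proj}}$.} Fix $\phi$ and put $V'=e^{i\phi}V$, which induces the same channel $\mathcal V$. For any input $X$ on system plus ancilla I will use the identity
\[
  (U\otimes I)X(U\otimes I)^\dagger-(V'\otimes I)X(V'\otimes I)^\dagger
  =((U-V')\otimes I)X(U\otimes I)^\dagger+(V'\otimes I)X((U-V')\otimes I)^\dagger .
\]
Hölder's inequality $\|AXB\|_1\le\|A\|_{\mathrm{op}}\|X\|_1\|B\|_{\mathrm{op}}$ bounds the trace norm of the right-hand side by $2\|U-V'\|_{\mathrm{op}}\|X\|_1$. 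Taking the supremum over $\|X\|_1\le1$ and then the infimum over $\phi$ gives the bound. The infimum is attained by \cref{def:proj-norm}.

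\emph{Lower bound $d_{\mathrm{proj}}\le d_\diamond$.} I will use pure inputs without an ancilla. For a unit vector $\psi$, the standard pure-state formula gives
\[
  \|U\psi\psi^\dagger U^\dagger-V\psi\psi^\dagger V^\dagger\|_1=2\sqrt{1-|\langle\psi|W|\psi\rangle|^2}.
\]
Hence $d_\diamond\ge 2\sqrt{1-\nu^2}$, where $\nu=\min_\psi|\langle\psi|W|\psi\rangle|$ is the distance from $0$ to the numerical range of $W$. Since $W$ is normal, its numerical range is the convex hull of its eigenvalues $e^{i\alpha_k}$. Let $\theta\in[0,\pi]$ be the half-width of the smallest closed arc containing all the eigenvalues, and let $c$ be its midpoint. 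There are two cases.

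First, if $\theta\ge\pi/2$, the hull contains $0$. Then $\nu=0$ and $d_\diamond\ge 2\ge\|U\|+\|V\|\ge d_{\mathrm{proj}}$.

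Second, if $\theta<\pi/2$, the point of the polygon nearest the origin lies on the chord joining the two extreme eigenvalues $e^{i(c\pm\theta)}$. That chord is at distance $\cos\theta$ from $0$, so $\nu=\cos\theta$ and $d_\diamond\ge 2\sin\theta$. On the other side I only need an upper bound on $d_{\mathrm{proj}}$. Choosing the phase so that it cancels $c$ gives
\[
  d_{\mathrm{proj}}\le\max_k|1-e^{i(\alpha_k-c)}|\le 2\sin(\theta/2).
\]
This uses $d_{\mathrm{proj}}(U,V)=\min_\phi\|I-e^{i\phi}W\|_{\mathrm{op}}$ and the fact that the operator norm of a normal operator is its spectral radius. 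Finally, $2\sin\theta=4\sin(\theta/2)\cos(\theta/2)\ge 2\sin(\theta/2)$ because $\cos(\theta/2)\ge\cos(\pi/4)>1/2$.

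The step needing the most care is the geometric identification $\nu=\cos\theta$ in the case $\theta<\pi/2$. Specifically, one must check that when all eigenvalues lie in an open half-circle, the nearest point of their convex hull to the origin is on the outer chord between the extreme eigenvalues. I would argue this directly: every eigenvalue $e^{i\alpha}$ satisfies $\mathrm{Re}(e^{-ic}e^{i\alpha})\ge\cos\theta$, so the whole hull lies in that half-plane and $\nu\ge\cos\theta$. The midpoint of the chord attains $\cos\theta$. Only the inequality $\nu\le\cos\theta$ is actually needed for the lower bound, so the midpoint of the chord alone suffices.
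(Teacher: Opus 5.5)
Your proof is correct, and it takes a genuinely different route from the paper's. The paper computes both quantities exactly in terms of the length $\Theta$ of the shortest arc containing the spectrum of $U^\dagger V$. It takes $d_{\mathrm{proj}}=2\sin(\Theta/4)$ from centering the arc. It quotes the numerical-range formula for unitary channels to get $d_\diamond=2\sin(\Theta/2)$ for $\Theta\le\pi$ and $d_\diamond=2$ otherwise. It then reads off the ratio: $2\cos(\Theta/4)\in[\sqrt2,2]$ in the first regime, and $d_{\mathrm{proj}}\in[\sqrt2,2]$ against $d_\diamond=2$ in the second.

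You instead bound each quantity only in the direction you need.
\begin{itemize}
\item The upper bound $d_\diamond\le2d_{\mathrm{proj}}$ comes from the telescoping identity and H\"older's inequality. This handles arbitrary ancillas with no spectral analysis at all.
\item The lower bound uses only the easy half of the numerical-range formula: ancilla-free pure inputs give $d_\diamond\ge2\sqrt{1-\nu^2}$.
\item One explicit test vector, the equal superposition of the two extreme eigenvectors, gives $\nu\le\cos\theta$.
\item One explicit phase gives $d_{\mathrm{proj}}\le2\sin(\theta/2)$.
\end{itemize}

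What your route buys is self-containment. The paper's equalities rest on two ingredients it does not prove. One is the nontrivial direction of the diamond formula: that an ancilla cannot beat the best pure input, i.e.\ $d_\diamond\le2\sqrt{1-\nu^2}$. The other is the matching lower bound $d_{\mathrm{proj}}\ge2\sin(\Theta/4)$, which needs a largest-gap/antipode argument beyond centering. Your argument needs neither: centering supplies exactly the upper bound on $d_{\mathrm{proj}}$ you use. What the paper's route buys is exactness. Its closed forms show the constant $2$ in the upper bound is sharp, since the ratio $2\cos(\Theta/4)\to2$ as $\Theta\to0$; your one-sided bounds do not reveal this.

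One small point: your case $\theta\ge\pi/2$ uses that points of the circle not contained in any open semicircle have $0$ in their convex hull. That is standard (separating hyperplane), and there you need only the easy inclusion of the eigenvalue hull in the numerical range.
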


\begin{proof}
Let $\Theta\in[0,2\pi]$ be the length of a shortest closed arc containing
the spectrum of $U^\dagger V$.  Centering that arc by a global phase gives
\begin{equation*}
 d_{\mathrm{proj}}(U,V)=2\sin(\Theta/4).
\end{equation*}
The numerical-range formula for two unitary channels gives
\begin{equation*}
 d_\diamond(\mathcal U,\mathcal V)=
 \begin{cases}
  2\sin(\Theta/2),&0\leq\Theta\leq\pi,\\
  2,&\pi\leq\Theta\leq2\pi.
 \end{cases}
\end{equation*}
For $\Theta\leq\pi$, the ratio is $2\cos(\Theta/4)\in[\sqrt2,2]$;
for $\Theta\geq\pi$, one has
$d_{\mathrm{proj}}\in[\sqrt2,2]$ and $d_\diamond=2$.  The two inequalities
follow.
\end{proof}

Consequently, a projective packing with separation $s$ is also a diamond
packing with separation at least $s$.  Every lower bound below therefore
holds with the \emph{same numerical accuracy threshold} when correctness is
instead required as
$d_\diamond(\mathcal U(C_{\mathrm{out}}),\mathcal U)\leq\epsilon$.
Under the normalized convention, the same argument uses packing separation
$s/2$ and correctness threshold $\epsilon/2$.
Conversely, a projective-error upper bound $\epsilon$ implies diamond error
at most $2\epsilon$; the exact constructive upper bounds used here have zero
error in either metric.  No diamond packing conclusion is inferred from the
upper inequality alone.

\begin{definition}[$\epsilon$-approximate recoverability]
\label{def:approx-recoverability}
Fix $\epsilon > 0$. Given a target unitary $U$ presented as a lowered
circuit $\mathcal{L}_B(C_{\mathrm{sem}})$, the $\epsilon$-approximate
recoverability problem asks for an output $C_{\mathrm{out}}$ with
$d_{\mathrm{proj}}(U(C_{\mathrm{out}}), U) \leq \epsilon$, using compiler
resources charged by the model below.
\end{definition}

\subsection{Budget-Explicit Compiler Model}

We retain the fixed-width finite-window model only for the exact corollary
and use the following more general randomized, random-access model for the
growing-width approximate results.

\begin{definition}[Sparse formal-parameter serialization]
\label{def:parmodule}
Fix formal angle symbols $\theta_1,\ldots,\theta_M$.  A parameter is stored
in normalized sparse form
\begin{equation*}
\begin{gathered}
 \lambda=\displaystyle\sum_{\ell=1}^{s}c_\ell\theta_{j_\ell}
 +\dfrac{a}{d}\pi,\\
 j_1<\cdots<j_s,\quad c_\ell,a\in\mathbb Z,\quad d\geq1,
\end{gathered}
\end{equation*}
with zero coefficients omitted and $\gcd(a,d)=1$.  Let
$\mathsf{u}(z)$ and $\mathsf{s}(z)$ be fixed self-delimiting codes for
nonnegative and signed integers.  The executable parameter serialization is
\begin{equation}
\label{eq:precision-charge}
\begin{aligned}
 \mathsf{ser}_{\mathrm{par}}(\lambda)
 &=\mathsf{u}(s)\prod_{\ell=1}^{s}
   \bigl(\mathsf{u}(j_\ell)\mathsf{s}(c_\ell)\bigr)\\
 &\quad\cdot\mathsf{s}(a)\mathsf{u}(d),\\
 \pcharge(\lambda)&:=|\mathsf{ser}_{\mathrm{par}}(\lambda)|.
\end{aligned}
\end{equation}
Thus one nonzero formal coefficient costs $O(\log M)$ rather than
$\Omega(M)$ bits.  Qubit and generator locations are not part of this
parameter payload; they are serialized by the complete-token codec below.
\end{definition}

\begin{definition}[Width-aware complete-token codec]
\label{def:token-codec}
Fix a self-delimiting integer code $\mathsf u$, a signed variant
$\mathsf s$, a two-bit Pauli code $\mathsf p$, and a reserved record
terminator $\mathsf d$.  A token $\tau$ with gate-type ID $g$, arity $r$,
ordered operand qubits $q_1,\ldots,q_r$, sparse Pauli support
$\{(z_\ell,\sigma_\ell)\}_{\ell=1}^{s}$, and $k$ parameter slots is encoded
as
\begin{equation*}
\begin{aligned}
\mathsf{ser}_{\mathrm{tok}}(\tau)
={}&\mathsf u(g)\mathsf u(r)
  \prod_{i=1}^{r}\mathsf u(q_i)\;\mathsf u(s)
  \prod_{\ell=1}^{s}
    \bigl(\mathsf u(z_\ell)\mathsf p(\sigma_\ell)\bigr)\\
 &\cdot\mathsf u(k)
  \prod_{a=1}^{k}
    \bigl(\mathsf u(\mathrm{id}_a)
           \mathsf{ser}_{\mathrm{par}}(\lambda_a)\bigr)\mathsf d .
\end{aligned}
\end{equation*}
The zero values of $r$, $s$, or $k$ are explicitly encoded, so the record is
executable and uniquely parseable without out-of-band arity, support, or
delimiter information.  Update, IR-node, and output records use the same
field order, with an opcode specifying which fields are active.

At growing width $n$, an arbitrary qubit or generator choice alone requires
$\lceil\log_2 n\rceil$ bits in any injective codec; the concrete codec above
uses $O(\log n)$ bits per listed address and $O(s\log n)$ bits for an
$s$-sparse Pauli support.  Consequently there is no width-independent
``structural cost per token''.  The literal bit strings
$\mathsf{ser}_{\mathrm{tok}}$ are charged wherever a token is live or
emitted.  Only when $n=n_0$ and a complete-token precision bound $q$ are
both fixed is the token set $\Gamma_{B,n_0}^{\leq q}$ finite; the
fixed-width exact-semantics statements abbreviate it as $\Gamma_B(q)$.
\end{definition}

\begin{definition}[Randomized dynamic-pass compiler and serialized cut budget]
\label{def:compiler}
A compiler $\mathcal A$ may use random-access RAM, stacks, hash tables,
caches, DAG/graph IRs, and intermediate tapes; it may also select a finite
number $p=p(I,\rho)$ of scans or passes dynamically from the input $I$ and
a finite random seed/tape $\rho$.  The external input is append-only on its
first exposure.  Any consumed prefix that remains available for a later
scan---whether through a file, memory mapping, spool, or cache---is live
compiler storage and is charged below.  The final output is write-once.
At every execution cut $c$, serialize a complete restart snapshot in five
disjoint fields:
\begin{itemize}[leftmargin=1.5em]
    \item $\mathsf{ser}_{\mathrm{control}}(c)$: format version, pass and
    finite-control IDs, dynamic pass count and termination/scheduling state,
    all input/output heads and emitted-prefix lengths, end flags, framing
    lengths, and the random seed/tape with its cursor;
    \item $\mathsf{ser}_{\mathrm{store}}(c)$: the persistent and ephemeral
    RAM, stacks, hash-table keys, bucket layout, cache contents and metadata,
    with parameter payloads and IR objects replaced by location tags;
    \item $\mathsf{ser}_{\mathrm{window}}(c)$: the ordered live read-window
    token structures, including the structural fields of
    \cref{def:token-codec} but excluding parameter payloads;
    \item $\mathsf{ser}_{\mathrm{parameter}}(c)$: the location-tagged
    concatenation of every live parameter payload in control, store, window,
    and IR, each encoded by \eqref{eq:precision-charge};
    \item $\mathsf{ser}_{\mathrm{IR}}(c)$: every other live intermediate
    tape/DAG/table object and every rereadable consumed-input block that can
    affect future execution, serialized by node opcode, the structural
    fields of \cref{def:token-codec}, edge endpoints, multiplicity, and
    object boundaries, but with parameter payloads omitted.
\end{itemize}
The corresponding bit-level budget is
\begin{equation}
\label{eq:info-budget}
\boxed{
\begin{aligned}
B_{\mathrm{cut}}(c)
&=B_{\mathrm{control}}(c)+B_{\mathrm{store}}(c)\\
&\quad+B_{\mathrm{window}}(c)+B_{\mathrm{parameter}}(c)
      +B_{\mathrm{IR}}(c).
\end{aligned}}
\end{equation}
where each summand is the literal bit length of the named serialization.
For a fixed run, $B_{\mathrm{cut}}$ is the maximum over its cuts.  For
randomized compilers we use the worst-input expected budget
\begin{equation}
\label{eq:expected-cut}
 \overline B_{\mathrm{cut}}(\mathcal A,\mathcal I)
 :=\max_{I\in\mathcal I}\mathbb E_\rho
       \bigl[\max_c B_{\mathrm{cut}}(c;I,\rho)\bigr].
\end{equation}
The quantities introduced above are therefore
unambiguous bit measures:
$Q:=\max_c B_{\mathrm{parameter}}(c)$ and
$V:=\max_c B_{\mathrm{IR}}(c)$; no theorem below substitutes either one
for the total $B_{\mathrm{cut}}$.  The final output is write-once and is
charged separately by the output codes below.
\end{definition}

\begin{remark}[Model scope and escape channels]
\label{rem:escape-channels}
Equation~\eqref{eq:info-budget} charges all restart information, not only
the finite-control state.  In particular, a ZX graph, phase-polynomial
table, spilled update log, hash table, cache, random seed, or materialized
intermediate tape is serialized in the indicated field, and all coefficients
are serialized in $B_{\mathrm{parameter}}$.  Dynamic rescanning is allowed,
but the pass schedule, heads, and any rereadable old input are part of the
snapshot.  The five fields are disjoint by construction, so parameter bits
are not counted again as store, window, or IR structure.  A compiler may
escape a RAM-only workspace lower bound by materializing a large IR, but it
then pays the same bits in $B_{\mathrm{cut}}$.
\end{remark}

\begin{definition}[Three noninterchangeable output measures]
\label{def:info-budget-bits}
Fix two prefix-free, streaming binary codes.  The self-contained code
$E_{\mathrm{self}}$ serializes a target-basis circuit using gate opcodes,
all qubit/generator addresses, record delimiters, and sparse parameters
using \cref{def:token-codec}; it assumes no family dictionary.  For a public
packing-family dictionary
$\mathcal F_m=(m,P_1,\ldots,P_m,K_{\mathrm{pack}},\Delta,B)$, the conditional code
$E_{\mathrm{rel}}(\cdot\mid\mathcal F_m)$ has four modes: (i) an aggregate mode
consisting of a self-delimiting header and a packed
base-$K_{\mathrm{pack}}$ block of $m$
digits; (ii) a literal mode containing complete target-basis records;
(iii) an ordered-share mode whose header declares a split point and whose
fixed public schedule permits generator addresses to be omitted from its
residue slots, while retaining round-boundary and packed-block delimiters;
and (iv) a $p$-block hybrid mode whose header declares the pass blocks and
passthrough coordinates and whose data section contains packed aggregate
blocks and round-major residue blocks in that public order.  Define
\begin{equation}\label{eq:Bout-def}
\begin{aligned}
 B_{\mathrm{rel}}(C\mid\mathcal F_m)
 &:=|E_{\mathrm{rel}}(C\mid\mathcal F_m)|,\\
 B_{\mathrm{self}}(C)&:=|E_{\mathrm{self}}(C)|,\\
 G(C)&:=\text{number of target-basis gates in }C.
\end{aligned}
\end{equation}
For a compiler, a superscript ``out'' and a maximum over valid inputs denote
the worst-case output value of the corresponding measure; a barred
superscript ``out'' denotes the worst-input expectation over $\rho$.  These quantities
are not converted into one another: conditional description bits,
self-contained circuit bits, and gate count are reported and bounded
separately.
\end{definition}

\input{theory/model_implementation_mapping}

\begin{remark}[Physical scope of the bounded-memory streaming model]
\label{rem:streaming-grounding}
The charged streaming model of \cref{def:compiler}---bounded live
state, explicit pass accounting, append-only first exposure---is the
operating regime of a real-time fault-tolerant control stack, not a
convenience abstraction.  Between rounds, the classical electronics
adjacent to the quantum device holds decoder state, Pauli-frame
records, and instruction buffers under hard latency and buffer
budgets: corrections must be resolved before the next non-Clifford
operation, and frame tracking exists precisely because unbounded
buffering is unavailable at that
boundary~\cite{riesebos2017pauli,watkins2024surfacecode}.  Distributed
and modular architectures motivate the same scope from the
interconnect side: a cross-module interaction splits the
\emph{support} of a gate across modules while its coefficient stays
within one module, so modularity bounds the state that crosses a
partition without by itself dispersing any coefficient.  These
architectures therefore ground the model's memory and pass
constraints only; no claim is made that modular partitioning
instantiates the dispersed streams of \cref{sec:physical-dispersed},
whose instances come from scheduling and twirling alone.
\end{remark}

\subsection{Exact-Semantics Corollary from Prior Work}

We record the exact-semantics ($\epsilon = 0$) results
as corollaries; the full transducer proofs are
restated in \cref{app:sec:full-proofs}. The witness family is
fixed-width.

\begin{remark}[Separate fixed-width and growing-width encoding regimes]
\label{rem:encoding-regimes}
The exact corollary in this subsection fixes $n=n_0$, a precision bound
$q$, a deterministic constant pass count $p_0$, and one finite alphabet
$\Gamma_{B,n_0}^{\leq q}$; its unit is therefore a token over that fixed
alphabet.  No such constant-cost token assumption is made in the
growing-width results.  There $n=m+1$ (or $n\geq m$), the pass count may be
$p=p(m,I,\rho)$, and every input, output, RAM, and IR record is charged in
bits by \cref{def:token-codec} and \eqref{eq:info-budget}.
\end{remark}

\begin{definition}[Fixed-width witness family]
\label{def:witness-family}
Let $H_\Lambda$ be a fixed invertible boundary layer (a Hadamard layer in
the experiments) and let
$D = \prod_{j=1}^{m_0} \exp(-i\theta_j P_j/2)$
be a fixed commuting diagonal layer with Pauli-$Z$ characters
$P_j = Z^{a_j}$, $a_j \in \Ftwo^{n}$, and angles $\theta_j$. The witness
at repetition count $r$ is $C_r = H_\Lambda D^{r} H_\Lambda^{-1}$, with
exact lowering $\mathcal{L}_B(C_r) = \Pi M^r \Sigma$ for fixed token
strings $\Pi, M, \Sigma$.
\end{definition}

Two instances are certified and reused here. The
\emph{rational CP witness} ($n = 4$; ten $rz$/$cp$ terms;
$\operatorname{rank}_{\Ftwo}(A) = 4$) satisfies finite-range noncollision
with exact period $T = 131040$, certified by the self-contained rational
angle, single-term order, relative-phase, and LCM derivation in
\cref{app:sec:rational-period}. The
\emph{irrational independent-Pauli witness} ($n = 4$; supports
$1100, 0110, 0011, 1000$; $\theta_1/\pi = \sqrt{2}/11$) satisfies
unbounded noncollision by the following uniqueness lemma.

\begin{lemma}[Coefficient uniqueness for independent characters]
\label{lem:coeff-uniqueness}
If $a_1,\ldots,a_{m_0} \in \Ftwo^{n}$ are linearly independent and two
diagonal layers with coefficient vectors $\alpha, \alpha'$ satisfy
$D(\alpha) = e^{i\varphi} D(\alpha')$, then
$\alpha_j \equiv \alpha'_j \pmod{2\pi}$ for every $j$. In particular, if
some $\theta_j/\pi$ is irrational, the powers $D^{r}$ are pairwise
distinct up to global phase for all $r \geq 1$.
\end{lemma}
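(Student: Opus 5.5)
The plan is to compare the two layers eigenvalue by eigenvalue on the computational basis and then use linear independence over $\Ftwo$ to isolate each coefficient. Both layers are diagonal. With the convention $Z^{a}\lvert x\rangle=(-1)^{a\cdot x}\lvert x\rangle$, the layer $D(\alpha)=\prod_j\exp(-i\alpha_jZ^{a_j}/2)$ acts on $\lvert x\rangle$ by the phase
\begin{equation*}
 \exp\Bigl(-\tfrac{i}{2}\sum_{j=1}^{m_0}\alpha_j(-1)^{a_j\cdot x}\Bigr).
\end{equation*}
The hypothesis $D(\alpha)=e^{i\varphi}D(\alpha')$ therefore says that, for every $x\in\Ftwo^{n}$, writing $\beta:=\alpha-\alpha'$,
\begin{equation*}
 \tfrac12\sum_j\beta_j(-1)^{a_j\cdot x}\equiv-\varphi\pmod{2\pi}.
\end{equation*}

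To isolate one coordinate, fix $j$. Because the $a_i$ are linearly independent, the linear map $x\mapsto(a_1\cdot x,\ldots,a_{m_0}\cdot x)$ from $\Ftwo^{n}$ to $\Ftwo^{m_0}$ is surjective. So I can choose $x_0$ with $a_i\cdot x_0=0$ for all $i$ (take $x_0=0$), and $x_1$ with $a_j\cdot x_1=1$ and $a_i\cdot x_1=0$ for $i\neq j$. Subtracting the congruences at $x_0$ and $x_1$ cancels $\varphi$ and every term with $i\neq j$, leaving
\begin{equation*}
 \tfrac12\beta_j\bigl(1-(-1)\bigr)=\beta_j\equiv0\pmod{2\pi},
\end{equation*}
which is the claim $\alpha_j\equiv\alpha'_j\pmod{2\pi}$. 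The only point needing care is that the factor $1/2$ in the exponent does not cost a factor of two in the modulus. Differencing the two sign patterns produces exactly $\beta_j$, so the conclusion holds mod $2\pi$ and not merely mod $4\pi$.

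For the corollary about powers, note that $D^{r}=D(r\theta)$, since all the factors commute. Suppose $D^{r}=e^{i\varphi}D^{r'}$ with $r\neq r'$. The first part then gives $(r-r')\theta_j\equiv0\pmod{2\pi}$ for every $j$. Taking $j$ with $\theta_j/\pi$ irrational, this forces $(r-r')\theta_j/\pi$ to be an even integer, which is impossible for $r-r'\neq0$. Hence the powers $D^{r}$ are pairwise distinct up to global phase. I do not expect any step to be a real obstacle; the one thing to get right is the surjectivity argument that produces $x_1$.
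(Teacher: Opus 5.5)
Your proof is correct and is the paper's argument. You use surjectivity of the joint character map from $\Ftwo$-independence and compare two sign patterns that differ only in coordinate $j$; the paper states the result as $2(\alpha_j-\alpha'_j)\equiv0\pmod{4\pi}$, which is your mod-$2\pi$ conclusion, and then uses $(r-r')\theta_j\in2\pi\mathbb Z$ with $\theta_j/\pi$ irrational to force $r=r'$.
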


\begin{proof}
Independence makes
$z \mapsto (\chi_{a_1}(z),\ldots,\chi_{a_{m_0}}(z))$ surjective onto
$\{\pm 1\}^{m_0}$. Comparing the phase constraint at two sign patterns
that differ only in coordinate $j$ gives
$2(\alpha_j - \alpha'_j) \equiv 0 \pmod{4\pi}$. For the second claim,
$(r-r')\theta_j \in 2\pi\mathbb{Z}$ with $\theta_j/\pi$ irrational forces
$r = r'$.
\end{proof}

\begin{assumption}[Finite-range noncollision]
\label{ass:noncollision}
On a range $1 \leq r \leq R$, the powers $D^1,\ldots,D^R$ are pairwise
distinct up to global phase.
\end{assumption}

\begin{assumption}[Unbounded noncollision]
\label{ass:unbounded-noncollision}
The powers $D^r$, $r \geq 1$, are pairwise distinct up to global phase.
\end{assumption}

\begin{corollary}[Exact-semantics final-output lower bound]
\label{cor:exact-semantics-corollary}
\label{thm:final-output}
Assume exact lowering ($\epsilon = 0$), a fixed-width witness
$C_r = H_\Lambda D^r H_\Lambda^{-1}$ with
$\mathcal{L}_B(C_r) = \Pi M^r \Sigma$, and unbounded noncollision
(\cref{ass:unbounded-noncollision}). Let $\mathcal{A}$ be a uniform
deterministic finite-window, finite-burst transducer with constant budget
(constant $p,w,\mu$ and $B_{\mathrm{cut}}$) that is correct for every
$r \geq 1$. Then there exist constants
$h \geq 1$, $D \geq 1$ (depending on $\mathcal{A}$ and the witness but not
$r$) such that for every $r \geq h + D$,
\begin{equation}\label{eq:exact-output-bound}
  |T_p(r)| \geq \frac{r-h-D+1}{D} = \Omega_{\mathcal{A}}(r).
\end{equation}
\end{corollary}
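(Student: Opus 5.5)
The plan is a pumping argument on the finite configuration space of the transducer, combined with unbounded noncollision. Because $\mathcal A$ is deterministic with constant $p$, window $w$, burst bound $\mu$, and constant $B_{\mathrm{cut}}$, the snapshot carried across any cut inside the $M^r$ block ranges over a finite set $\mathcal S$ of restart states. This set covers control, window contents over the fixed alphabet $\Gamma_{B,n_0}^{\leq q}$, store, and IR. Set $D:=|\mathcal S|$, or $|\mathcal S|^{p}$ if the $p$ passes are tracked jointly as a tuple of per-pass states at corresponding cuts. Let $h$ be the number of tokens of $\Pi$ together with the window length, so that after $h$ tokens the window lies entirely inside the periodic region $M^r$. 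Cuts are placed at the block boundaries between consecutive copies of $M$.

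\textbf{Step 1 (pigeonhole).} For $r\ge h+D$, look at the $D+1$ boundary cuts after copies $h,h+1,\dots,h+D$. Two of them, at copies $i<j$ with $j-i\le D$, carry identical joint restart states. By determinism, the computation between them is a loop driven by $M^{j-i}$.

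\textbf{Step 2 (output on loops).} Suppose the loop emits no final output in any pass. Then deleting or duplicating the loop's input copies gives runs on $r'=r\pm(j-i)$ whose final outputs coincide with the output on $r$. This needs a cut-and-paste lemma: in the multi-pass setting, every pass that crosses the segment must see identical states at both endpoints. That is why the state tuple is taken jointly over passes, and why rereadable input is charged in $B_{\mathrm{IR}}$, so that no hidden channel distinguishes copies. Correctness then forces $U(C_r)\propto U(C_{r'})$, hence $D^r\propto D^{r'}$ after conjugating away $H_\Lambda$. This contradicts \cref{ass:unbounded-noncollision}, since $r'\neq r$ and $r'\ge1$. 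So every repeated-state segment emits at least one output token.

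\textbf{Step 3 (counting).} Partition the copies from $h$ up to $r$ into disjoint windows of $D$ consecutive copies. There are at least $\lfloor (r-h)/D\rfloor\ge (r-h-D+1)/D$ such windows. Each window contains a repeated joint state, so by Step 2 it contributes at least one emitted token, giving $|T_p(r)|\ge (r-h-D+1)/D$.

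The main obstacle is Step 2's surgery lemma for $p>1$ passes with finite bursts. Output emitted in a later pass may depend on a loop traversed in an earlier pass, so equal states are needed simultaneously in all passes at the same pair of boundaries. I would handle this by pigeonholing on the $p$-tuple of crossing states, which is why $D=|\mathcal S|^p$. I would also absorb the burst bound $\mu$ into $h$ so that outputs are attributed to the segment in which they are emitted.
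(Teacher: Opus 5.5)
Your argument is correct, but it takes a different route from the paper's. The paper never pigeonholes inside a single run. Instead it shows that (persistent store, tape) is a \emph{pumpable} family in $r$: on each residue $\rho$ one has $T_i(h+\rho+kD)=A_\rho B_\rho^{k}C_\rho$. This starts from $\Pi M^r\Sigma$ with $h=0$, $D=1$, and is pushed through each pass using the ultimate periodicity of the state map induced by reading the pump word, which refines the period to $D\operatorname{lcm}_\rho\ell_\rho$. The final output is then, on each residue, either independent of $k$ or of length at least $k$. A $k$-independent residue would give identical strings at $r$ and $r-D$, contradicting correctness plus noncollision.

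You instead fix one $r$ and pigeonhole the $p$-tuple of per-pass configurations over $D+1$ block boundaries. You then delete the loop to reach $r'=r-(j-i)\geq h\geq 1$ and count disjoint windows. Your crossing-sequence route is more elementary, gives an explicit $D=|\mathcal S|^p$ instead of a compounded lcm period, and yields $\lfloor (r-h)/D\rfloor$ directly. The paper's route composes passes as tape-to-tape maps, so it never has to align cuts across tapes, and it produces an explicit normal form of every tape.

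Two points must be made explicit for your surgery lemma to hold in the paper's model.
\begin{itemize}
\item Passes $k\geq 2$ read the intermediate tape $T_{k-1}$, not the input. The ``corresponding cut'' for pass $k$ is therefore the position of pass $(k-1)$'s write head when its read head crosses the image of the input cut. These position maps are monotone. Induction on $k$ then shows that deleting the loop removes exactly the image segment from every tape and leaves each end-of-pass store unchanged. This is also what makes the final-output segments of disjoint windows disjoint.
\item $\mathcal S$ must contain only the per-pass control, store and window. The intermediate tapes are not finite-state data here (the paper allows $T_1(r)=\mathtt{marker}^r$), so the IR should not be listed in $\mathcal S$, and the single-state choice $D=|\mathcal S|$ is not available.
\end{itemize}
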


\begin{proof}
This is \cref{app:thm:final-output}, proved in full in
\cref{app:sec:full-proofs}. Constant budget gives one finite
charged alphabet; per-tape pumpability propagation gives pumpability of the
final output; noncollision prevents collapse of distinct pump residues.
\end{proof}

\begin{corollary}[Finite-range tail form]
\label{cor:finite-tail}
Under finite-range noncollision (\cref{ass:noncollision}) with $N(R)=R$ on
$1 \leq r \leq R$, $|T_p(r)| \geq (r-h-D+1)/D$ for $h+D \leq r \leq R$.
For the rational CP witness, this applies only for $R < T = 131040$.
\end{corollary}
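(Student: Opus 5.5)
The plan is to reuse the proof of \cref{cor:exact-semantics-corollary} essentially verbatim, and to check at each step which hypothesis was actually used. The point is that unbounded noncollision enters only at the final step. There it guarantees that the pumped family contains $N(r)=r$ pairwise distinct target unitaries up to global phase, so distinct pump residues cannot collapse. Every other ingredient is fixed-width and budget-theoretic and makes no reference to $R$:
\begin{itemize}[leftmargin=1.5em]
\item the single finite charged alphabet $\Gamma_B(q)$ coming from constant $p,w,\mu$ and $B_{\mathrm{cut}}$;
\item the pigeonhole choice of the period constants $h\geq1$, $D\geq1$ on the finitely many restart snapshots while $\mathcal A$ reads $\Pi M^r\Sigma$;
\item the per-tape pumpability propagation showing that $T_p(r)$ is pumpable with period $D$ beyond offset $h$.
\end{itemize}
So first we rerun the argument of \cref{app:thm:final-output} and confirm that $h$ and $D$ depend only on $\mathcal A$ and on the token strings $\Pi,M,\Sigma$.

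Next, we restrict the counting step to $r\leq R$. Suppose $|T_p(r)|<(r-h-D+1)/D$ for some $r$ with $h+D\leq r\leq R$. The pumping structure then writes the outputs $T_p(r')$ for the residues $r'\equiv r \pmod D$ in $[h,r]$ as a function of fewer than the required number of distinct output words. This forces two indices $r'\neq r''$, both in $[h,r]\subseteq[1,R]$, to receive the same final output. Correctness for every $r'\geq1$ gives $U(C_{r'})=e^{i\varphi}U(C_{r''})$. Conjugating by $H_\Lambda$, this becomes $D^{r'}=e^{i\varphi}D^{r''}$, which contradicts \cref{ass:noncollision}, since both exponents lie in $[1,R]$. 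This yields the stated inequality on the stated range.

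The main obstacle is bookkeeping: we must verify that every colliding pair produced by the appendix argument has both indices no larger than the current $r$, hence no larger than $R$. If the original argument compares $r$ with a pumped-up index $r+kD$ larger than $r$, we instead pump down toward $h$, which uses only smaller indices. That suffices because the contradiction needs just one collision inside $[1,r]$.

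For the rational CP witness, \cref{app:sec:rational-period} certifies exact period $T=131040$. Hence $D^{r}$ for $1\leq r<T$ are pairwise distinct up to global phase, while $D^{T+1}\simeq D^{1}$. So \cref{ass:noncollision} holds precisely for $R\leq T-1$, i.e.\ $R<T$, and fails for any larger $R$. This is why the bound is asserted only in that regime.
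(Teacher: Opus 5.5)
Your argument is correct and is essentially the paper's: the pumpability data $(h,D)$ come from the per-tape induction (\cref{app:cor:per-tape}) before any $R$ is chosen, and \cref{app:lem:no-collapse} already pumps down by comparing $r$ with $r-D$, both of which lie in $[1,R]$ once $h\geq1$. So a too-short $T_p(r)$ forces the pump word on its residue to be empty, hence $T_p(r)=T_p(r-D)$, contradicting noncollision, and your fallback for pumped-up indices is never needed. One small correction to your last paragraph: exact period $T$ means the layer powers $D^a,D^b$ coincide up to global phase iff $T\mid(a-b)$, so $D^1,\ldots,D^T$ are still pairwise distinct and noncollision holds for every $R\leq T$ (failing first at $R=T+1$); the restriction to $R<T$ is therefore a conservative scope choice rather than the sharp threshold you claim.
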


\begin{corollary}[Counting toll for short outputs]
\label{cor:counting-toll}
If finite-range noncollision supplies $N(R)$ distinct targets, then every
deterministic exact compiler satisfies
\begin{equation*}
 \max_{1\leq r\leq R}B_{\mathrm{self}}(T_p(r))
 \geq \left\lceil\log_2N(R)\right\rceil.
\end{equation*}
The same statement holds for $B_{\mathrm{rel}}(\cdot\mid\mathcal F)$
when one fixed public dictionary $\mathcal F$ is used on the entire range.
\end{corollary}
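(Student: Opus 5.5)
This is a pigeonhole argument: a deterministic compiler's output is a function of $r$ alone, and exactness forces $N(R)$ distinct outputs, so some output must be long.

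\emph{Setup.} For a deterministic exact compiler $\mathcal A$, the output $T_p(r)$ depends only on the input $\mathcal{L}_B(C_r)=\Pi M^r\Sigma$, hence only on $r$. Finite-range noncollision (\cref{ass:noncollision}) supplies a set $\mathcal R\subseteq\{1,\ldots,R\}$ of size $N(R)$ on which the targets $U(C_r)=H_\Lambda D^rH_\Lambda^{-1}$ are pairwise distinct up to global phase. Conjugation by the fixed invertible $H_\Lambda$ preserves this distinctness.

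\emph{Injectivity.} Exactness ($\epsilon=0$) gives $d_{\mathrm{proj}}(U(T_p(r)),U(C_r))=0$. So for $r\neq r'$ in $\mathcal R$, $U(T_p(r))$ and $U(T_p(r'))$ are projectively distinct, and therefore $T_p(r)\neq T_p(r')$ as circuits. Because $E_{\mathrm{self}}$ is a prefix-free code, and in particular injective, the codewords $E_{\mathrm{self}}(T_p(r))$ for $r\in\mathcal R$ are $N(R)$ distinct binary strings.

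\emph{Counting.} Let $L$ be the largest of these codeword lengths. There are at most $2^L$ binary strings of length exactly $L$, and the Kraft inequality for a prefix-free set gives the same bound even when lengths vary: $N(R)\leq\sum_r 2^{L-|E(T_p(r))|}\leq 2^L$. Hence $L\geq\log_2N(R)$, and since $L$ is an integer, $L\geq\lceil\log_2N(R)\rceil$. Taking the maximum over all $1\leq r\leq R$ can only increase $L$, which gives the displayed bound.

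\emph{Relative code.} For $B_{\mathrm{rel}}$, the same argument goes through once $E_{\mathrm{rel}}(\cdot\mid\mathcal F)$ is injective on circuits for a fixed $\mathcal F$. It is, because it is a prefix-free code for each fixed dictionary. Its four modes are distinguished by their headers, so circuits encoded in different modes cannot share a codeword.

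The only step needing care is this last point. If different $r$ were allowed to use different dictionaries, the codewords would no longer need to be distinct, since the dictionary itself would carry the missing information. That is why the statement requires one fixed public $\mathcal F$ over the whole range. With that in place, the same pigeonhole bound applies.
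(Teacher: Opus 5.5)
Your proposal is correct and follows essentially the same route as the paper's proof. Determinism and exactness force pairwise distinct outputs on the $N(R)$ noncolliding indices, and Kraft's inequality over the resulting distinct prefix-free codewords gives $\max_i\ell_i\geq\lceil\log_2N(R)\rceil$, separately for each fixed code. Your additional remarks (conjugation by $H_\Lambda$, injectivity of the codes, and why a single public dictionary is needed) make explicit what the paper leaves implicit, without changing the argument.
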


The irrational independent-Pauli witness instantiates
\cref{cor:exact-semantics-corollary} for all $r$ via
\cref{lem:coeff-uniqueness}; the rational CP witness instantiates
\cref{cor:finite-tail} on ranges $R < T$. These exact-semantics results
concern $\epsilon = 0$ and fixed width $n = 4$; the results below
generalize to growing $m$ and expose the dependence on $\epsilon>0$.

\section{Packing Construction}
\label{sec:packing}

\subsection{Growing-Width Family}

\begin{definition}[Packing family]
\label{def:packing-family}
For each $m$ and accuracy scale $\epsilon$, a packing family
$\mathcal F_{m,\epsilon}=\{U_x:x\in\mathcal X_{m,\epsilon}\}$ has
pairwise projective separation greater than $2\epsilon$, a semantic
description of $O(m)$ addressed records, and a lowered streaming
description whose aggregate information may be dispersed across updates.
The accuracy-dependent construction below has
$|\mathcal X_{m,\epsilon}|=K_\epsilon^m$ with
$K_\epsilon=\Theta(1/\epsilon)$.
\end{definition}

Every growing-width family in this paper---the codebook that carries the
separation, the streams that carry the lower bounds, the two-round
device that is executed in \cref{sec:cut-tracing}, and the schedules of
\cref{sec:physical-dispersed}---is one scheme under four parameters.  We
state it once and list the specializations in
\cref{tab:family-specializations}.

\begin{definition}[Phase-accumulation family]
\label{def:phase-family}
Let $P_1,\ldots,P_m$ be $\Ftwo$-independent Pauli-$Z$ characters on
$n\geq m$ qubits (i.e.\ $P_j=Z^{a_j}$ with
$\operatorname{rank}_{\Ftwo}(a_1,\ldots,a_m)=m$), and write
$R_{P_j}(\theta)=\exp(-i\theta P_j/2)$ and $[K]_0=\{0,\ldots,K-1\}$.  A
family is fixed by a step $\Delta>0$, an aggregate alphabet $[K]_0$, a
number of rounds $r\geq1$, and a sign set $\Sigma\subseteq\{\pm1\}$.  The
\emph{semantic input} is the addressed vector $x\in[K]_0^m$, realizing
\begin{equation}\label{eq:Ux-def}
 U_x=\exp\Bigl(-\tfrac{i\Delta}{2}\textstyle\sum_{j}x_jP_j\Bigr)
    =\prod_{j=1}^{m}R_{P_j}(\Delta x_j).
\end{equation}
The \emph{flat input} is the round-major record sequence
$\mathsf{Update}(t,j,a_j^{(t)},s_j^{(t)})$ with residues $a_j^{(t)}$ and
signs $s_j^{(t)}\in\Sigma$, realizing
\begin{equation}\label{eq:family-flat}
 W=\prod_{t=1}^{r}\prod_{j=1}^{m}R_{P_j}\bigl(\Delta\,s_j^{(t)}a_j^{(t)}\bigr),
\end{equation}
whose aggregate is $x_j=\sum_{t}s_j^{(t)}a_j^{(t)}$.  Two arithmetic
regimes are used, and no result mixes them:
\begin{itemize}[leftmargin=1.4em,topsep=2pt,itemsep=1pt]
\item \emph{no-wrap}: residues and aggregate live in $\mathbb Z$ and
$(K-1)\Delta\leq\pi$, so distinct aggregates are distinct angles;
\item \emph{cyclic}: $\Delta=2\pi/Q$ for a modulus $Q\geq3$, residues lie
in $\mathbb Z_Q$, the aggregate is taken $\bmod\,Q$, and $K=Q-1$ unless
stated otherwise.
\end{itemize}
Since the $P_j$ commute, $W=U_x$ up to global phase in either regime
(\cref{prop:family-aggregates}).
\end{definition}

\begin{proposition}[Flat inputs aggregate projectively]
\label{prop:family-aggregates}
Every valid flat input of \cref{def:phase-family} implements $U_x$ up to
global phase, with no promise about any individual round.
\end{proposition}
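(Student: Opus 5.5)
The plan is to use commutativity to regroup the product~\eqref{eq:family-flat} by generator. Then we handle the two arithmetic regimes separately, since the aggregate is defined differently in each.

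\textbf{Step 1 (regrouping).} The characters $P_j=Z^{a_j}$ are all diagonal in the computational basis, so the rotations $R_{P_j}(\phi)$ pairwise commute for arbitrary angles. We may therefore reorder the $rm$ factors of $W$ freely. Collecting the factors with the same $j$ and using $R_{P_j}(\phi)R_{P_j}(\phi')=R_{P_j}(\phi+\phi')$ gives
\begin{equation*}
W=\prod_{j=1}^{m}R_{P_j}\Bigl(\Delta\textstyle\sum_{t=1}^{r}s_j^{(t)}a_j^{(t)}\Bigr).
\end{equation*}
This is an exact operator identity, not merely one up to phase. It uses no property of any single round: residues and signs are arbitrary within the alphabet. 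This is why the statement carries no per-round promise.

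\textbf{Step 2 (no-wrap regime).} Here the aggregate $x_j=\sum_t s_j^{(t)}a_j^{(t)}$ is an integer, and a valid input has $x\in[K]_0^m$. Step~1 then gives $W=\prod_j R_{P_j}(\Delta x_j)=U_x$ exactly, by~\eqref{eq:Ux-def}.

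\textbf{Step 3 (cyclic regime).} Write $x_j$ for the reduced residue and $\sum_t s_j^{(t)}a_j^{(t)}=x_j+Q\,n_j$ with $n_j\in\mathbb Z$. Since $\Delta=2\pi/Q$, the angle becomes $\Delta x_j+2\pi n_j$. Because $P_j^2=I$, we have $R_{P_j}(2\pi n_j)=\exp(-i\pi n_jP_j)=(-1)^{n_j}I$. So each wrap contributes only a global sign, and
\begin{equation*}
W=(-1)^{\sum_j n_j}\,U_x.
\end{equation*}
Hence $d_{\mathrm{proj}}(W,U_x)=0$ by \cref{def:proj-norm}.

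The only real subtlety is this $2\pi$-periodicity issue. $R_P$ is $4\pi$-periodic, not $2\pi$-periodic, so equality can hold only up to the sign $\pm1$. That sign is exactly the global phase allowed in the statement.

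We would also note that $\mathbb F_2$-independence of the $a_j$ is not needed here; it matters only for separation (\cref{lem:coeff-uniqueness}).
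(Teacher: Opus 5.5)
Your proof is correct and follows the paper's argument essentially verbatim: regroup by commutativity into $\prod_j R_{P_j}\bigl(\Delta\sum_t s_j^{(t)}a_j^{(t)}\bigr)$, read off $U_x$ directly in the no-wrap regime, and in the cyclic regime absorb each wrap $2\pi k_j$ as the sign $(-1)^{k_j}$, whose product over $j$ is a global phase. Your side remarks are accurate additions not spelled out in the paper's proof: the no-wrap case gives exact equality, and $\mathbb F_2$-independence of the characters is not used.
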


\begin{proof}
The $P_j$ commute, so $W=\prod_jR_{P_j}(\Delta\sum_ts_j^{(t)}a_j^{(t)})$.
In the no-wrap regime the exponent is $\Delta x_j$ by definition.  In the
cyclic regime write $\sum_ts_j^{(t)}a_j^{(t)}=x_j+Qk_j$ for integers
$k_j$; since $Q\Delta=2\pi$, $R_{P_j}(\Delta x_j+2\pi k_j)
=(-1)^{k_j}R_{P_j}(\Delta x_j)$, and multiplying over $j$ changes only a
global sign.
\end{proof}

\begin{table}[tbp]
\centering
\footnotesize
\setlength{\tabcolsep}{3.5pt}
\begin{tabular}{@{}clcccl@{}}
\toprule
Def. & Specialization & Regime & $r$ & $\Sigma$ & Role \\
\midrule
\ref{def:packing-construction} & no-wrap packing & no-wrap & $1$ & $\{+\}$ & codebook \\
\ref{def:w1-dispersed-stream} & dispersed update & cyclic & $\geq2$ & $\{+\}$ & lower bounds \\
\ref{def:masked-share} & masked share & cyclic, $Q{=}3$ & $2$ & $\{+\}$ & executed \\
\ref{def:scheduled-stream} & scheduled stream & cyclic & $\geq2$ & $\{\pm\}$ & physical \\
\ref{def:streaming-encoding} & balanced & no-wrap & $r$ & $\{+\}$ & diagnostic \\
\bottomrule
\end{tabular}
\caption{The growing-width families are specializations of
\cref{def:phase-family}.  The balanced encoding is listed for
completeness: its tokens contain $x_j/r$, so it proves no lower bound
(\cref{rem:dispersal}).  The fixed-width witness family of
\cref{def:witness-family} belongs to the separate exact-semantics regime
of \cref{rem:encoding-regimes} and is not an instance of this scheme.}
\label{tab:family-specializations}
\end{table}

\begin{definition}[$K$-ary no-wrap construction]
\label{def:packing-construction}
The no-wrap specialization of \cref{def:phase-family} with $r=1$,
$\Sigma=\{+1\}$, integer $K\geq2$, and $(K-1)\Delta\leq\pi$.  Its
codebook is $\{U_x:x\in[K]_0^m\}$ as in \eqref{eq:Ux-def}.
\end{definition}

\begin{lemma}[Packing separation]
\label{lem:packing-separation}
For the construction in \cref{def:packing-construction},
all $x\neq y\in[K]_0^m$ satisfy
\begin{equation}\label{eq:Kary-separation}
 d_{\mathrm{proj}}(U_x,U_y)\geq2\sin(\Delta/4).
\end{equation}
The constant is tight: pairs differing by one in a single coordinate
attain it.
\end{lemma}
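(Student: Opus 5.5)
The plan is to compute $d_{\mathrm{proj}}(U_x,U_y)$ exactly via the spectral formula from the proof of \cref{lem:diamond-comparison}. We have $d_{\mathrm{proj}}(U,V)=2\sin(\Theta/4)$, where $\Theta$ is the length of a shortest closed arc containing the spectrum of $U^\dagger V$. Since all $P_j$ are diagonal and commute, $U_x^\dagger U_y=\prod_j R_{P_j}(\Delta(y_j-x_j))$. Write $d_j=y_j-x_j\in\{-(K-1),\ldots,K-1\}$. This operator is diagonal, and on the computational basis state $z$ it carries the phase $\exp\bigl(-\tfrac{i\Delta}{2}\sum_j d_j\chi_{a_j}(z)\bigr)$, where $\chi_{a_j}(z)=(-1)^{a_j\cdot z}\in\{\pm1\}$. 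By $\Ftwo$-independence of the $a_j$ (the surjectivity used in \cref{lem:coeff-uniqueness}), every sign pattern $\sigma\in\{\pm1\}^m$ occurs. The spectrum is therefore exactly $\{e^{-i\Delta\langle d,\sigma\rangle/2}:\sigma\in\{\pm1\}^m\}$.

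Now fix $x\neq y$ and pick a coordinate $j$ with $d_j\neq0$. Take two sign patterns that agree outside $j$ and differ at $j$. Their phase angles differ by exactly $\Delta|d_j|$, with $1\le|d_j|\le K-1$, so the spectrum contains two points at angular separation $\Delta|d_j|$. The no-wrap condition $(K-1)\Delta\le\pi$ gives $\Delta|d_j|\le\pi$, so this is the true circular (geodesic) distance between the two points. Any arc containing both points has length at least that geodesic distance, so $\Theta\ge\Delta|d_j|\ge\Delta$. Since $\Theta\le2\pi$, the quantity $\Theta/4$ lies in $[0,\pi/2]$, where $\sin$ is monotone. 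Hence $d_{\mathrm{proj}}=2\sin(\Theta/4)\ge2\sin(\Delta/4)$, which is \eqref{eq:Kary-separation}.

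For tightness, take $y=x+e_j$. The spectrum is then $\{e^{\mp i\Delta/2}\}$, which lies on an arc of length exactly $\Delta$. So $\Theta=\Delta$ and equality holds.

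The main obstacle is justifying the identity $d_{\mathrm{proj}}=2\sin(\Theta/4)$, which the paper only asserts. If that needs support, I would argue directly. For a diagonal unitary with eigenphases $\phi_k$, we have $\|U-e^{i\phi}V\|_{\mathrm{op}}=\max_k|1-e^{i(\phi_k-\phi)}|=\max_k 2|\sin((\phi_k-\phi)/2)|$. Minimizing over $\phi$ amounts to centering the smallest arc containing all eigenphases, which gives half-width $\Theta/2$. For the lower bound alone, it suffices to note that the two spectral points at separation $\Delta|d_j|$ force at least one of them to lie at angular distance at least $\Delta|d_j|/2$ from any center $\phi$. This yields $d_{\mathrm{proj}}\ge2\sin(\Delta|d_j|/4)$ directly, with no appeal to the full arc formula.
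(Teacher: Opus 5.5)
Your proposal is correct and follows essentially the same argument as the paper. Surjectivity of the character sign map yields two eigenvalues of the relative unitary whose arguments differ by $\Delta|d_j|\in[\Delta,\pi]$. The paper then bounds $\max_z|\lambda(z)e^{-i\phi}-1|$ directly for every $\phi$, which is exactly your closing fallback, rather than passing through the arc formula $d_{\mathrm{proj}}=2\sin(\Theta/4)$; tightness likewise comes from centering the phase for a unit one-coordinate difference.
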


\begin{proof}
Let $s=x-y$ and choose $j^*$ with $s_{j^*}\neq0$.
Since $P_j$ commute, $U_x U_y^\dagger = \exp(-\frac{i\Delta}{2}\sum_j s_j P_j)$
is diagonal in the computational basis. Its eigenvalue at basis state $z$ is
$\lambda(z) = \exp(-\frac{i\Delta}{2}\sum_j s_j \chi_{P_j}(z))$,
where $\chi_{P_j}(z) = (-1)^{a_j \cdot z} \in \{\pm 1\}$.

Because $a_1,\ldots,a_m$ are $\Ftwo$-independent, the map
$z \mapsto (\chi_{P_1}(z),\ldots,\chi_{P_m}(z))$ is surjective onto
$\{\pm 1\}^m$. Hence two basis states can be chosen whose character
vectors agree everywhere except at $j^*$.  Their eigenvalue arguments
differ by $\Delta s_{j^*}$ modulo $2\pi$.  The no-wrap condition gives
\begin{equation*}
 \Delta\leq |\Delta s_{j^*}|\leq(K-1)\Delta\leq\pi .
\end{equation*}

Now include the global-phase infimum required by
\cref{def:proj-norm}. For every $\phi \in \mathbb{R}$,
\begin{equation}\label{eq:packing-sep-norm}
\|U_x - e^{i\phi} U_y\|_{\mathrm{op}}
 = \max_z \bigl|\lambda(z)e^{-i\phi} - 1\bigr|
 \geq 2\sin(\Delta/4),
\end{equation}
because two points on the unit circle whose arguments differ by
$|\Delta s_{j^*}|\in[\Delta,\pi]$ cannot both lie within circular distance
less than $\Delta/2$ of $1$ after any common rotation, and
$|e^{i\gamma} - 1| = 2|\sin(\gamma/2)|$. Taking the infimum over $\phi$
preserves the bound. A one-coordinate difference of one has exactly two
spectral values separated by $\Delta$, and the centered choice of $\phi$
attains the lower bound.
\end{proof}

\begin{proposition}[$\epsilon$-tuned $K$-ary packing]
\label{prop:epsilon-packing}
For $0<\epsilon\leq1/8$, set
\begin{equation}\label{eq:epsilon-codebook}
\begin{aligned}
 L_\epsilon&=\left\lfloor
  \frac{\pi}{8\arcsin(2\epsilon)}\right\rfloor,&
 M_\epsilon&=4L_\epsilon,\\
 K_\epsilon&=L_\epsilon+1,&
 \Delta_\epsilon&=\frac{2\pi}{M_\epsilon}.
\end{aligned}
\end{equation}
Then $K_\epsilon=\Theta(1/\epsilon)$,
$(K_\epsilon-1)\Delta_\epsilon=\pi/2$ (so no coefficient wraps around
$2\pi$), and the $K_\epsilon^m$ targets in
\cref{def:packing-construction} obey
\begin{equation}\label{eq:epsilon-separation}
 d_{\mathrm{proj}}(U_x,U_y)
 \geq2\sin(\Delta_\epsilon/4)\geq4\epsilon
 \qquad(x\neq y).
\end{equation}
\end{proposition}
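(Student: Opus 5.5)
The plan is to reduce everything to \cref{lem:packing-separation}, after checking that the parameters in \eqref{eq:epsilon-codebook} are an admissible instance of \cref{def:packing-construction}. First, well-definedness: for $0<\epsilon\leq1/8$ we have $2\epsilon\leq1/4$, so $\arcsin(2\epsilon)\leq\arcsin(1/4)<\pi/8$. Hence $\pi/(8\arcsin(2\epsilon))>1$, so $L_\epsilon\geq1$, $K_\epsilon\geq2$ and $M_\epsilon\geq4$. The no-wrap identity is then immediate: $(K_\epsilon-1)\Delta_\epsilon=L_\epsilon\cdot 2\pi/(4L_\epsilon)=\pi/2\leq\pi$. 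So the construction is a valid $K$-ary no-wrap codebook with $K_\epsilon^m$ targets, and \cref{lem:packing-separation} gives $d_{\mathrm{proj}}(U_x,U_y)\geq2\sin(\Delta_\epsilon/4)$ for $x\neq y$.

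For the second inequality, note that $\Delta_\epsilon/4=\pi/(8L_\epsilon)$. By the definition of the floor, $L_\epsilon\leq\pi/(8\arcsin(2\epsilon))$, hence $\pi/(8L_\epsilon)\geq\arcsin(2\epsilon)$. Both arguments lie in $[0,\pi/2]$, because $\pi/(8L_\epsilon)\leq\pi/8$. Since $\sin$ is increasing there, $\sin(\Delta_\epsilon/4)\geq\sin(\arcsin 2\epsilon)=2\epsilon$, and so $2\sin(\Delta_\epsilon/4)\geq4\epsilon$.

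Finally, for $K_\epsilon=\Theta(1/\epsilon)$, use $2\epsilon\leq\arcsin(2\epsilon)\leq(\pi/2)\cdot2\epsilon$ on $[0,1]$, which follows from concavity of $\sin$ on $[0,\pi/2]$ (Jordan's inequality). This gives $1/(8\epsilon)\leq\pi/(8\arcsin 2\epsilon)\leq\pi/(16\epsilon)$. Therefore $K_\epsilon\leq\pi/(16\epsilon)+1=O(1/\epsilon)$, using $\epsilon\leq1/8$. For the lower bound, $K_\epsilon=L_\epsilon+1>\pi/(8\arcsin 2\epsilon)\geq1/(8\epsilon)$ by $\lfloor u\rfloor+1>u$.

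The only step needing care is the direction of the floor: it must give an upper bound on $L_\epsilon$, hence a lower bound on $\Delta_\epsilon$. Separately, $\epsilon\leq1/8$ is what guarantees $L_\epsilon\geq1$ and keeps all angles in the monotone range of $\sin$. Neither step is a real obstacle.
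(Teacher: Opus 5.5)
Your proof is correct and follows essentially the same route as the paper: check $L_\epsilon\geq1$ and the no-wrap identity, invoke \cref{lem:packing-separation}, and use $L_\epsilon\leq\pi/(8\arcsin(2\epsilon))$ with monotonicity of sine to get the $4\epsilon$ bound, with $K_\epsilon=\Theta(1/\epsilon)$ following from $\arcsin(2\epsilon)=\Theta(\epsilon)$. The only difference is that you spell out the explicit constants, the bound $\arcsin(1/4)<\pi/8$, and the monotone range of $\sin$, all of which the paper's proof leaves implicit.
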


\begin{proof}
The restriction $\epsilon\leq1/8$ gives $L_\epsilon\geq1$.
Since $\arcsin(2\epsilon)=\Theta(\epsilon)$, the asserted order of
$K_\epsilon$ follows.  The equality in the no-wrap statement is direct.
Finally $L_\epsilon\leq\pi/(8\arcsin(2\epsilon))$, hence
$\Delta_\epsilon/4=\pi/(8L_\epsilon)\geq\arcsin(2\epsilon)$.
Apply \cref{lem:packing-separation} and monotonicity of sine on this range.
\end{proof}

The next lemma shows that the alphabet size is intrinsic rather than
designer-chosen: up to
an explicit factor paid for the no-wrap property, $K_\epsilon$ is the
metric-entropy count of pairwise distinguishable $Z$-rotations at the
working separation, an intrinsic function of the tolerance rather than a
construction parameter.

\begin{lemma}[Synthesis-grid calibration]
\label{lem:grid-calibration}
For $0<\epsilon\leq1/8$, let $N(\epsilon)$ be the maximum cardinality of
a set of single-axis rotations $\{R_P(\theta):\theta\in\mathbb
R/2\pi\mathbb Z\}$ that is pairwise $4\epsilon$-separated in
$d_{\mathrm{proj}}$, and let $g_\epsilon:=4\arcsin(2\epsilon)$.  Then
\begin{equation}\label{eq:grid-calibration}
 N(\epsilon)=\left\lfloor\frac{2\pi}{g_\epsilon}\right\rfloor,
 \qquad
 \frac{1}{2\epsilon}-1\;<\;N(\epsilon)\;\leq\;\frac{\pi}{4\epsilon},
\end{equation}
and the alphabet of \cref{prop:epsilon-packing} satisfies
\begin{equation}\label{eq:grid-calibration-K}
 \frac{N(\epsilon)}{4}\;<\;K_\epsilon\;<\;\frac{N(\epsilon)}{4}+\frac32 .
\end{equation}
Moreover the grid step is separation-tight,
$g_\epsilon\leq\Delta_\epsilon\leq g_\epsilon(1+O(\epsilon))$, so the
factor $4$ in \eqref{eq:grid-calibration-K} is exactly the restriction of
the coefficient range to the quarter period $[0,\pi/2]$ imposed by the
no-wrap condition.
\end{lemma}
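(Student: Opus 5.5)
The plan is to reduce everything to a one-dimensional packing problem on the circle. First compute $d_{\mathrm{proj}}$ between two single-axis rotations exactly. Since $R_P(\theta)^\dagger R_P(\theta')=R_P(\theta'-\theta)$ has spectrum $\{e^{\mp i(\theta'-\theta)/2}\}$, the shortest arc containing it has length $\delta(\theta,\theta')$. Here $\delta$ is the circular distance of $\theta,\theta'$ in $\mathbb R/2\pi\mathbb Z$, with $\delta\in[0,\pi]$.

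The formula $d_{\mathrm{proj}}=2\sin(\Theta/4)$ from the proof of \cref{lem:diamond-comparison} then gives $d_{\mathrm{proj}}(R_P(\theta),R_P(\theta'))=2\sin(\delta/4)$. Note that $R_P(\theta+2\pi)=-R_P(\theta)$, so the projective parameter space really is the circle of length $2\pi$. This is the step I expect to need the most care: the wrap-around and the global-phase infimum must be handled consistently, so that $\delta$ is reduced modulo $2\pi$ and not modulo $4\pi$. Since $\delta/4\in[0,\pi/4]$, where sine is increasing, the condition $d_{\mathrm{proj}}\geq4\epsilon$ is equivalent to $\delta\geq4\arcsin(2\epsilon)=g_\epsilon$. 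This requires $2\epsilon\leq1/4$, which holds for $\epsilon\leq1/8$.

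Next, $N(\epsilon)$ is the maximum number of points on a circle of circumference $2\pi$ with pairwise circular gaps at least $g_\epsilon$. For the upper bound, sort the points; the consecutive gaps sum to $2\pi$ and each is at least $g_\epsilon$, so $N\leq 2\pi/g_\epsilon$. For the matching lower bound, $\lfloor 2\pi/g_\epsilon\rfloor$ equally spaced points achieve this. Hence $N(\epsilon)=\lfloor2\pi/g_\epsilon\rfloor=\lfloor\pi/(2a)\rfloor$ with $a=\arcsin(2\epsilon)$. The numerical bounds in \eqref{eq:grid-calibration} follow from two estimates:
\begin{itemize}[leftmargin=1.5em]
\item $a\geq2\epsilon$, which gives $N\leq\pi/(4\epsilon)$;
\item $a\leq\pi\epsilon$, by concavity of $\arcsin$ on $[0,1]$, i.e.\ $\arcsin x\leq \pi x/2$; this gives $\pi/(2a)\geq1/(2\epsilon)$, so $N>1/(2\epsilon)-1$.
\end{itemize}

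For \eqref{eq:grid-calibration-K}, set $y=\pi/(8a)$, so that $L_\epsilon=\lfloor y\rfloor$ and $N=\lfloor4y\rfloor$. Then $4L_\epsilon\leq N\leq4L_\epsilon+3$, so $N/4\in[L_\epsilon,L_\epsilon+3/4]$. This gives $K_\epsilon=L_\epsilon+1\geq N/4+1/4>N/4$ and $K_\epsilon\leq N/4+1<N/4+3/2$.

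Finally, for the grid step, $\Delta_\epsilon=\pi/(2L_\epsilon)\geq\pi/(2y)=4a=g_\epsilon$ because $L_\epsilon\leq y$. Conversely, $\Delta_\epsilon/g_\epsilon=y/L_\epsilon<(L_\epsilon+1)/L_\epsilon=1+1/L_\epsilon$. Since $L_\epsilon\geq1$ and $L_\epsilon=\Theta(1/\epsilon)$ by \cref{prop:epsilon-packing}, this ratio is $1+O(\epsilon)$. The factor-$4$ interpretation is then immediate: the alphabet $[K_\epsilon]_0$ spaced by $\Delta_\epsilon\approx g_\epsilon$ covers the quarter period $[0,\pi/2]$ of the full $2\pi$ circle that realizes $N(\epsilon)$.
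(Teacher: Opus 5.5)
Your proposal is correct and follows the paper's proof essentially step for step. You compute $d_{\mathrm{proj}}=2\sin(\delta/4)$ exactly for the circular distance $\delta$ modulo $2\pi$, reduce to an equal-spacing packing on a circle of circumference $2\pi$, apply the elementary arcsine bounds, and use floor bookkeeping for $K_\epsilon$ and $\Delta_\epsilon$; your bracketing $4L_\epsilon\le N(\epsilon)\le 4L_\epsilon+3$ even gives the slightly sharper $K_\epsilon\le N(\epsilon)/4+1$. One slip in the justification: $\arcsin$ is \emph{convex}, not concave, on $[0,1]$, and it is convexity (the graph lies below the chord from $(0,0)$ to $(1,\pi/2)$) that gives $\arcsin x\le \pi x/2$.
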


\begin{proof}
For a fixed non-identity Pauli $P$, both characters $\pm1$ are realized,
so $R_P(\theta)R_P(\theta')^\dagger$ has eigenvalue arguments
$\mp\delta/2$ with $\delta=\theta-\theta'$; projectively $\theta$ is
defined modulo $2\pi$.  Writing $g\in[0,\pi]$ for the circular distance
between $\theta$ and $\theta'$ in $\mathbb R/2\pi\mathbb Z$, the
global-phase infimum of \cref{def:proj-norm} is attained at the centered
rotation, giving $d_{\mathrm{proj}}=2\sin(g/4)$, increasing in $g$; the
computation is the one in the proof of \cref{lem:packing-separation}.
Hence pairwise $4\epsilon$-separation is equivalent to pairwise circular
distance at least $g_\epsilon$, because $2\sin(g_\epsilon/4)=4\epsilon$
exactly.  The maximum number of points of a circle of circumference
$2\pi$ with pairwise circular distance at least $g_\epsilon$ is
$\lfloor2\pi/g_\epsilon\rfloor$, attained by equal spacing.  The
two-sided bound follows from $x\leq\arcsin x\leq(\pi/2)x$ on $[0,1]$.
For \eqref{eq:grid-calibration-K}, put $y=\pi/(2\arcsin(2\epsilon))$, so
that $N(\epsilon)=\lfloor y\rfloor$ and, by
\eqref{eq:epsilon-codebook}, $K_\epsilon=\lfloor y/4\rfloor+1$.  Then
$K_\epsilon>y/4\geq N(\epsilon)/4$ and
$K_\epsilon\leq y/4+1<(N(\epsilon)+1)/4+1=N(\epsilon)/4+5/4$.  Finally
$\Delta_\epsilon=\pi/(2L_\epsilon)$ with
$\pi/(8\arcsin(2\epsilon))-1\leq L_\epsilon
\leq\pi/(8\arcsin(2\epsilon))$ gives the step bounds.
\end{proof}

\begin{remark}[Clifford+$T$ synthesis calibration of the grid]
\label{rem:rs-grid}
The packing grid is chosen by the target accuracy; it is not itself a
Ross--Selinger synthesis lattice.  Ancilla-free
Clifford+$T$ approximation of an arbitrary $z$-rotation to within
$\epsilon$---in the up-to-phase metric matched to $d_{\mathrm{proj}}$
\cite[Def.~9.1, Cor.~9.5, Alg.~9.8]{ross2016optimal}---is achieved by
the algorithm of \cite[Alg.~7.6]{ross2016optimal} with $T$-count
$3\log_2(1/\epsilon)+O(\log\log(1/\epsilon))$ in the typical case and
$c+4\log_2(1/\epsilon)$ for a constant $c\approx10$ in the worst case
\cite[Prop.~8.8 and \S8.3]{ross2016optimal}; given a factoring oracle
the returned $T$-count is exactly optimal per instance
\cite[Prop.~8.2]{ross2016optimal}.  Generic non-Clifford angles therefore
have synthesis cost $\Theta(\log(1/\epsilon))$, but special grid points
such as the identity or Clifford angles can be cheaper.  The
fixed-total-error campaign of \cref{sec:ft-consequences} does not assume a
uniform per-angle cost: it synthesizes every distinct angle with staq's
\texttt{grid\_synth} implementation and records the checked result.
Consequently the information term $m\log_2K_\epsilon$ and the generic
synthesis scale are both $\Theta(m\log(1/\epsilon))$, while no
angle-by-angle $T$-count equality is claimed.  The constant in the
packing entropy is pinned by \eqref{eq:grid-calibration-K}; both directions
of the calibration are stated as two-sided bounds.  For the cyclic grid used by the dispersed
streams (\cref{cor:w1-epsilon-cyclic}) the identification is exact:
$Q_\epsilon=\lfloor\pi/(2\arcsin(2\epsilon))\rfloor=N(\epsilon)$, so the
sharing modulus equals the metric-entropy count itself, and the factor
$4$ above concerns only the no-wrap subfamily.
\end{remark}

\subsection{Streaming Encoding}

\begin{definition}[Balanced streaming encoding]
\label{def:streaming-encoding}
The no-wrap specialization of \cref{def:phase-family} in which every round
carries the same residue $x_j/r$, emitted on the round-robin schedule
$\sigma(i)=((i-1)\bmod m)+1$ over $N=rm$ slots.  Each token therefore
contains $x_j/r$; \cref{rem:dispersal} records why this encoding proves no
lower bound.
\end{definition}

\begin{definition}[Two-round masked-share lowering]
\label{def:masked-share}
The cyclic specialization of \cref{def:phase-family} with $Q=3$,
$\alpha:=\Delta=2\pi/3$, $r=2$, and $\Sigma=\{+1\}$, in which the
aggregate alphabet is restricted to the binary packing
$K_{\mathrm{pack}}=2$, so $x\in\{0,1\}^m$ while shares range over
$\mathbb Z_3$.  A record carries the round tag, the full generator
address $j$, and one residue in $\mathbb Z_3$; it does not carry $x_j$,
since for every fixed first-round residue both $x_j=0$ and $x_j=1$ admit
a compatible second-round residue.  Each residue has $O(1)$ parameter
payload, while a worst-case complete token has $\Theta(\log m)$
structural address bits as required by \cref{def:token-codec}.
\end{definition}

\begin{remark}[Roles of the two flat schedules]
\label{rem:dispersal}
The balanced encoding of \cref{def:streaming-encoding} is retained only as
the configured experimental input in \cref{sec:validation}; because each
of its tokens contains $x_j/r$, it is not used to prove a representation
lower bound.  The theorem uses the masked-share stream instead.  Before the
second round arrives, every first-round update has two valid binary
aggregate completions, whereas the semantic representation supplies $x$
itself.  The separation below therefore charges recovery of an aggregate
hidden across rounds, not a voluntary delay after a full coefficient was
already visible.
\end{remark}

\subsection{Physically Realized Dispersed Representations}
\label{sec:physical-dispersed}

The masked-share and $r$-round streams above are stated adversarially:
shares are chosen so that no token determines the aggregate.  This
subsection replaces the adversary by three standard fault-tolerant
mechanisms.  Time-dependent scheduling supplies the additive
decomposition, randomized compiling supplies sign symmetry, and the
synthesis grid of \cref{lem:grid-calibration} supplies the modulus.  The
hypotheses of the lower bounds proved in
\cref{sec:reduction,sec:w1-unified-frontier} below are then met by
streams that a fault-tolerant stack emits anyway, not by a constructed
secret sharing.

\begin{definition}[Scheduled phase-accumulation stream]
\label{def:scheduled-stream}
The cyclic specialization of \cref{def:phase-family} with modulus
$Q\geq3$, step $\Delta=2\pi/Q$, $r\geq2$ rounds, and the full sign set
$\Sigma=\{\pm1\}$.  The residue $a_j^{(t)}\in\mathbb Z_Q$ is a round-$t$
physical coefficient rounded to the grid and the sign $s_j^{(t)}$ is a
twirl variable ($s_j^{(t)}\equiv+1$ for untwirled streams), so that the
aggregate of \cref{def:phase-family} reads
\begin{equation}\label{eq:scheduled-aggregate}
 x_j=\sum_{t=1}^r s_j^{(t)}a_j^{(t)}\bmod Q .
\end{equation}
A twirled stream presents its signs in one of two modes.  In
\emph{folded} mode each sign is absorbed at emission,
$a\mapsto sa\bmod Q$, into the residue and its dressing Clifford
records, as in compile-time randomized
compiling~\cite{wallman2016noise,hashim2021randomized}.  In
\emph{tracked} mode the signs accumulate in a classical Pauli-frame
register carried by the control stack and are settled downstream of the
emission point~\cite{knill2005realistically,riesebos2017pauli}.
\end{definition}

\begin{lemma}[Clifford interleaving]
\label{lem:clifford-interleaving}
Let the $r$ rounds of a scheduled stream be separated by Clifford
records, and let $D_{t-1}$ denote the accumulated Clifford emitted
before round $t$ (so $D_0=I$).  Suppose each $D_{t-1}$ carries the
generating set into itself up to sign and permutation,
\begin{equation}\label{eq:clifford-normalizes}
 D_{t-1}^\dagger P_jD_{t-1}=\varepsilon_j^{(t)}P_{\pi_t(j)},
 \qquad \varepsilon_j^{(t)}\in\{\pm1\},
\end{equation}
with $\pi_t$ a permutation of $[m]$.  Then commuting every Clifford
record to the end of the stream yields a scheduled stream of
\cref{def:scheduled-stream} on the same generators, in which the
round-$t$ record of generator $j$ contributes residue $a_j^{(t)}$ with
sign $\varepsilon_j^{(t)}s_j^{(t)}$ to coordinate $\pi_t(j)$, followed
by the fixed Clifford suffix $D_r$.  Both $\pi_t$ and
$\varepsilon^{(t)}$ are functions of the prefix, so the normalization is
prefix-measurable, and every statement about
\cref{def:scheduled-stream} applies to the normalized image.
\end{lemma}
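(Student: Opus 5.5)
The plan is to push each Clifford record rightward through the stream, one record at a time, using the conjugation identity for Pauli rotations. Write the stream as
\[
W=\prod_{t=1}^{r}\Bigl(C_t\,\prod_{j=1}^{m}R_{P_j}\bigl(\Delta s_j^{(t)}a_j^{(t)}\bigr)\Bigr),
\]
where $C_t$ is the Clifford block emitted between rounds $t-1$ and $t$, so that $D_{t-1}=C_1\cdots C_t$ (with $C_1=I$ in the hypothesis's indexing). Any trailing Clifford is absorbed into $D_r$. The basic identity I will use is, for a Clifford $D$ and a Pauli $P$ with $D^\dagger PD=\varepsilon P'$,
\[
R_P(\theta)\,D=D\,R_{D^\dagger PD}(\theta)=D\,R_{P'}(\varepsilon\theta).
\]
This holds because $D^\dagger e^{-i\theta P/2}D=e^{-i\theta D^\dagger PD/2}$. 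I will fix the operator-order convention (circuit order versus matrix product) at the outset so that the conjugation direction matches \eqref{eq:clifford-normalizes}. If the stated convention conjugates the other way, the claim applies with $D_{t-1}$ replaced by $D_{t-1}^\dagger$. That replacement is harmless, since a signed permutation of the generator set is preserved by inversion.

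The induction runs on $t$. After commuting the first $t-1$ Clifford blocks to the right of rounds $1,\dots,t$, the round-$t$ rotations sit immediately to the left of (equivalently, in circuit order, after) the accumulated Clifford $D_{t-1}$. Moving $D_{t-1}$ past each $R_{P_j}(\Delta s_j^{(t)}a_j^{(t)})$ turns that rotation into $R_{P_{\pi_t(j)}}(\Delta\,\varepsilon_j^{(t)}s_j^{(t)}a_j^{(t)})$. All of these are rotations about the commuting set $\{P_k\}$, so the round's records may be reordered freely within the round. The Clifford then merges with $C_{t+1}$ into $D_t$. At the end, every rotation lies in the commuting prefix and $D_r$ is the suffix. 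This proves the structural claim: each record of generator $j$ in round $t$ contributes residue $a_j^{(t)}$ with sign $\varepsilon_j^{(t)}s_j^{(t)}$ to coordinate $\pi_t(j)$. The new signs still lie in $\{\pm1\}=\Sigma$, and $\pi_t$ is a bijection. Hence round $t$ still carries exactly one record per coordinate, and by \cref{prop:family-aggregates} the image is a valid scheduled stream of \cref{def:scheduled-stream} with aggregate $x_k=\sum_t\varepsilon^{(t)}_{\pi_t^{-1}(k)}s^{(t)}_{\pi_t^{-1}(k)}a^{(t)}_{\pi_t^{-1}(k)}\bmod Q$.

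For prefix-measurability, $D_{t-1}$ is the product of the Clifford records emitted before round $t$. It is therefore determined by the stream prefix, and so are the tableau data $(\pi_t,\varepsilon^{(t)})$ computed from it by conjugating each $P_j$. The normalization map thus acts record-by-record, using only information available at the time each record arrives. This is the property that lets the lower-bound statements transfer to the normalized image.

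The main obstacle is this last transfer step, not the algebra. A compiler on the original stream must be simulable by a compiler on the normalized stream, and vice versa, without changing the charged cut budget by more than the Clifford tableau state. My first approach would be to argue that any statement quantified over all valid scheduled streams already covers the normalized image, since that image is itself such a stream. The residual claim then needs only that the map is prefix-measurable and, given the known Clifford records, a bijection on the residue alphabet ($a\mapsto\varepsilon a$). Tracking $D_{t-1}$ counts toward the crossing state $S$ of whichever compiler performs the normalization.
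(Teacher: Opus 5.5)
Your algebraic core matches the paper's proof. The paper writes the stream as $D_r'L_r\cdots D_1'L_1$, moves every Clifford to the end, and reads off $D_{t-1}^\dagger L_tD_{t-1}=\prod_jR_{P_{\pi_t(j)}}(\Delta\varepsilon_j^{(t)}s_j^{(t)}a_j^{(t)})$ from \eqref{eq:clifford-normalizes}. Your round-by-round induction with $R_P(\theta)D=DR_{D^\dagger PD}(\theta)$ is the same computation, and your prefix-measurability argument is the paper's.

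Your hedge about the conjugation direction is unnecessary. Take $D_{t-1}$ to be the operator of the Cliffords emitted before round $t$; pushing it later in time yields exactly $D_{t-1}^\dagger L_tD_{t-1}$. The fallback of replacing $D_{t-1}$ by $D_{t-1}^\dagger$ would also not be literally harmless. It sends record $j$ to coordinate $\pi_t^{-1}(j)$ with sign $\varepsilon^{(t)}_{\pi_t^{-1}(j)}$, which is not the labelling the lemma states.

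The gap is the last clause, which you flag as the main obstacle and leave at ``my first approach would be''. The normalized image is not ``itself such a stream'': it is a scheduled stream followed by $D_r$. A correct compiler on the original input must therefore approximate $D_rU_x$, not $U_x$. The claim that every statement about \cref{def:scheduled-stream} transfers needs an argument about these new targets.

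The missing idea is the one-line fact the paper uses. $D_r$ is the same unitary for every member of the family, because the Clifford records are held fixed. Moreover, $d_{\mathrm{proj}}(D_rU,D_rV)=d_{\mathrm{proj}}(U,V)$ by unitary invariance of the operator norm. Hence the separation of \cref{thm:w1-exact-separation}, and with it every nearest-packing, fooling-set and Fano step, is unchanged.

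With that in hand, the lower-bound proofs run directly on the original compiler. Since $(\pi_t,\varepsilon^{(t)})$ are fixed by the fixed records, any desired normalized residue $b$ at coordinate $\pi_t(j)$ is produced by the raw residue $a_j^{(t)}=\varepsilon_j^{(t)}b$ (for an untwirled stream). So the fooling prefixes and common suffixes of those proofs are realized inside the interleaved family.

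No compiler-to-compiler simulation is needed, and the one you propose is harder than the problem itself. A compiler reading the bare normalized stream cannot denormalize records unless it is handed the Clifford records. It would also have to strip $D_r$ from its output. So ``vice versa'' simulability is not free, and the lemma does not need it.
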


\begin{proof}
Write the emitted unitary as $D_r'L_r\cdots D_1'L_1$ with $L_t$ the
round-$t$ layer and $D_t'$ the Clifford record following it, so
$D_t=D_t'\cdots D_1'$.  Moving each Clifford leftwards replaces $L_t$ by
$D_{t-1}^\dagger L_tD_{t-1}$ and leaves $D_r$ as a suffix.  Since
$L_t=\prod_jR_{P_j}(\Delta s_j^{(t)}a_j^{(t)})$ and conjugation is an
automorphism, \eqref{eq:clifford-normalizes} gives
$D_{t-1}^\dagger L_tD_{t-1}
=\prod_jR_{P_{\pi_t(j)}}(\Delta\varepsilon_j^{(t)}s_j^{(t)}a_j^{(t)})$,
which is a round of \cref{def:scheduled-stream} after the stated
re-indexing.  The Clifford records lie in the stream, so $D_{t-1}$, and
hence $\pi_t$ and $\varepsilon^{(t)}$, are computable from the prefix.
Finally $D_r$ is common to every member of the family and unitary, so it
preserves projective distance and the separation of
\cref{thm:w1-exact-separation} is unaffected.
\end{proof}

\begin{remark}[What the interleaving lemma does and does not cover]
\label{rem:clifford-scope}
Condition \eqref{eq:clifford-normalizes} is a real restriction, not a
formality.  It holds for Pauli-frame layers ($\pi_t=\mathrm{id}$,
$\varepsilon$ the frame signs), hence for the twirl insertions of
randomized compiling itself; and for routing or SWAP layers
whose induced qubit relabeling permutes the generating set, such as the
cyclic shifts of the chain characters $a_j=e_j+e_{j+1}$.  It fails for a
general Clifford: if $D_{t-1}$ normalizes the diagonal group but sends
some $P_j$ to a product of generators rather than to a generator, the
rounds accumulate in different bases and the coordinatewise structure of
\cref{def:scheduled-stream} is lost.  It also does not rescue the
non-commuting interleaving of \cref{rem:round-contiguity}(i): the
layers separating the diagonal rounds of a generic Trotter step are
$R_X$ rotations, which are not Clifford at generic angles.
\end{remark}

\begin{remark}[Downstream compiler scope]
\label{rem:downstream-scope}
The bounds obtained from scheduled streams constrain compilers
positioned \emph{downstream} of the schedule, the twirl, and the
lowering: the input is the emitted record stream, and the upstream
context---the pre-randomization circuit, the schedule generator, and
the twirl seed---is not part of it.  This position is not hypothetical;
it is where much practical compilation runs: (1) a cloud transpiler
receiving a flat QASM/QIR program without access to the user's
high-level intent~\cite{javadi2024qiskit}; (2) a real-time stream
optimizer inside the classical control
stack~\cite{riesebos2017pauli}; (3) a vendor-side pass that
post-processes an already emitted instruction stream.  If the compiler
can see the circuit before randomization and lowering, the theorems do
not apply, and this boundary is itself informative.  The fooling
families range over every valid schedule and sign pattern; a compiler
coupled to the upstream generator---for example, one holding the
$\lambda$-bit seed that pseudorandomly generated the twirl
signs---faces only $2^\lambda$ of the $2^m$ sign patterns, and the
frame lower bound of \cref{prop:frame-cross-cut}(ii) degrades from $m$
to $\lambda$ accordingly.  The theorems therefore quantify exactly what
is forfeited by discarding upstream context before compilation: they
are the price attached to the design rule \emph{randomize and lower
late}.
\end{remark}

\begin{proposition}[Physical instantiation]
\label{prop:physical-instantiation}
Each of the following mechanisms can emit scheduled streams of
\cref{def:scheduled-stream}.  For the families considered here, assume
that the grid-rounded coefficients are free to vary independently across
the $r$ rounds.  Under this coefficient-freedom hypothesis, the resulting
stream family satisfies the hypotheses of
\cref{def:w1-dispersed-stream}: every aggregate
$x\in[K]_0^m$, $K=Q-1$, is realized by some valid stream, and no individual token
or proper prefix determines it.
\begin{enumerate}[leftmargin=1.7em,label=(\alph*)]
\item \emph{Time-dependent Trotterization.}  For a commuting
time-dependent Hamiltonian $H(t)=\sum_jc_j(t)P_j$ with step $\tau$,
round $t$ applies $R_{P_j}(2c_j(t_t)\tau)$; grid rounding gives
$a_j^{(t)}$ and $s\equiv+1$.
\item \emph{Layerwise-parameterized phase layers.}  A depth-$r$ ansatz
whose round-$t$ commuting layer carries free coefficients
$\theta_j^{(t)}$ gives $a_j^{(t)}$ by grid rounding of $\theta_j^{(t)}$.
\item \emph{Randomized compiling of (a) or (b).}  Twirling multiplies
each residue by $s_j^{(t)}$, presented in folded or tracked mode.  If
the twirl bits are independent and uniform, then under the canonical
sharing distribution of \cref{prop:w1-stream-equivalence} every signed
residue is uniform on $\mathbb Z_Q$ and independent of $x_j$.
\end{enumerate}
In contrast, the constant schedule $a_j^{(t)}\equiv a_j$ is degenerate:
its first round determines the aggregate.  This is the balanced-encoding
degeneracy of \cref{rem:dispersal}.
\end{proposition}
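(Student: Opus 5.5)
The plan is to reduce all three mechanisms to one combinatorial fact about additive sharing over $\mathbb Z_Q$, and then check that each mechanism emits streams of the form in \cref{def:scheduled-stream}.

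\emph{Step 1: stream shape.} For (a), the step unitary $\prod_j R_{P_j}(2c_j(t_t)\tau)$ is a commuting layer on the declared generators. Rounding each angle to the grid $\Delta\mathbb Z$ gives the residue $a_j^{(t)}\in\mathbb Z_Q$ with $s\equiv+1$. The $r$ steps are contiguous, so the emitted record sequence is exactly \eqref{eq:family-flat}. Case (b) is identical with $\theta_j^{(t)}$ in place of $2c_j(t_t)\tau$. For (c), any Pauli-frame records interleaved by twirling satisfy \eqref{eq:clifford-normalizes} with $\pi_t=\mathrm{id}$, as noted in \cref{rem:clifford-scope}. So \cref{lem:clifford-interleaving} normalizes them into signs $\varepsilon_j^{(t)}s_j^{(t)}$ on the same coordinates, which is again a scheduled stream. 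In folded mode I will use that $a\mapsto sa$ is a bijection of $\mathbb Z_Q$, so the folded residue alphabet is unchanged. \Cref{prop:family-aggregates} then gives that the stream implements $U_x$ with $x$ as in \eqref{eq:scheduled-aggregate}.

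\emph{Step 2: surjectivity and non-determination.} Using the coefficient-freedom hypothesis, the round-$t$ residues can be any tuple in $\mathbb Z_Q^m$ independently across rounds. Fix a target $x\in[K]_0^m$ and arbitrary residues for rounds $1,\dots,r-1$. Then solve $a_j^{(r)}=s_j^{(r)}\bigl(x_j-\sum_{t<r}s_j^{(t)}a_j^{(t)}\bigr)\bmod Q$, using $s^{-1}=s$. This shows every aggregate is realized. It also shows that any proper prefix is compatible with every $x$, because $r\geq2$ guarantees at least one free round remains after the prefix. For a single token, pick it in round $t$ and solve in some other round $t'\neq t$, which again needs only $r\geq2$. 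This is precisely the non-determination hypothesis of \cref{def:w1-dispersed-stream}. For the uniformity claim in (c), I will take the canonical sharing distribution of \cref{prop:w1-stream-equivalence}, presumably the $r-1$ uniform shares plus a correcting last share. The rounds other than the last are uniform and independent of $x$ by construction. The last share is a uniform shift of $x_j$, hence also uniform. Multiplying by an independent uniform sign preserves uniformity on $\mathbb Z_Q$, because negation is a bijection.

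\emph{Step 3: the degenerate contrast.} For the constant schedule, $x_j=r a_j \bmod Q$ is a function of round one alone. So the first prefix determines the aggregate, matching \cref{rem:dispersal}.

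The main obstacle is not the algebra but matching exactly the hypotheses of \cref{def:w1-dispersed-stream}, which appears later in the paper. Its precise formulation (token versus prefix, and whether shares must lie in $\mathbb Z_Q$ or a restricted range) and the exact canonical distribution in \cref{prop:w1-stream-equivalence} need to be confirmed. The plan is to state Step 2 for arbitrary residues in $\mathbb Z_Q$, which covers the strongest reading. The coefficient-freedom assumption does the real physical work, and I will flag it rather than derive it.
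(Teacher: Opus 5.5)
Your proposal is correct and follows the paper's proof. Coefficient freedom makes every residue sequence available; solving for one remaining share gives realizability and token and prefix privacy, where the paper simply cites \cref{prop:w1-stream-equivalence} instead of re-deriving it. The bijection $a\mapsto sa$ of $\mathbb Z_Q$ handles (c), and $x_j=ra_j\bmod Q$ settles the constant schedule. One nuance, shared with the paper's own wording: your claim that every proper prefix is compatible with every $x$ holds only for prefixes ending at a round boundary. A record-level prefix that already contains some round-$r$ records fixes those coordinates, though it still never determines $x$, because the final record $(r,m)$ is always missing.
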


\begin{proof}
In (a) and (b) the stated coefficient-freedom hypothesis makes the round
coefficients unconstrained across rounds,
so every residue sequence---in particular every valid share
decomposition of \cref{def:w1-dispersed-stream}---arises from some
schedule, and \cref{prop:w1-stream-equivalence} supplies realizability
and token privacy verbatim.  For prefixes: after any proper prefix, a
final-round residue choice reaches every aggregate in $\mathbb Z_Q^m$,
hence every $x\in[K]_0^m$.  In (c), sign normalization $a\mapsto sa$ is
a bijection of $\mathbb Z_Q$ for either sign, so normalizing every
record maps the signed family onto the unsigned family
record-for-record, and the same properties transfer; the distributional
clause is the one proved in \cref{prop:w1-stream-equivalence}, applied
after normalization.  For the constant schedule, $x_j=ra_j\bmod Q$ is a
function of the first round.  An exhaustive enumeration at
$(m,r,Q)=(2,2,5)$, frozen at
\path{data/frozen/fooling_set_m2_r2_Q5.json}, independently
confirms all three behaviors: every time-varying and every signed
prefix class reaches all $Q^m$ aggregates, and each constant-schedule
prefix class reaches exactly one.
\end{proof}

\begin{remark}[Contiguity of the rounds, and Clifford interleaving]
\label{rem:round-contiguity}
\Cref{def:scheduled-stream} is round-major over \emph{one} commuting
layer: the $r$ rounds that accumulate into $x_j$ reach the compiler as
records of a single layer.  Two consequences should be stated plainly,
since they delimit \cref{prop:physical-instantiation}.

(i) \emph{Non-commuting interleaving is outside the scope, and is not
claimed.}  In a split-operator product formula such as
$H=H_{ZZ}+H_X$, consecutive diagonal rounds are separated by
non-commuting layers.  There the rounds of the diagonal part may not be
aggregated at all: no compact output exists, the premise of the
recoverability question fails, and neither the hypotheses nor the
conclusions of \cref{thm:w1-one-pass-mask,thm:w1-unified-frontier}
are asserted.  \Cref{prop:physical-instantiation}(a) is a statement
about layers whose $r$ rounds are genuinely contiguous---a
time-dependent or shaped schedule of one commuting layer, a
commuting-block Trotterization, a contiguous diagonal ladder---and not
about the interleaved structure of a generic Trotter step.  The same
qualification applies to (b): variational cost layers separated by
mixers are aggregatable only within a block of consecutive commuting
rounds, and it is such a block that (b) describes.  Note that for a
\emph{fully} commuting $H$ the Trotter error vanishes and the round
structure is a scheduling choice rather than an accuracy requirement;
what the bounds charge is precisely that choice, made upstream and
visible to the downstream compiler only as $r$ separate records.

(ii) \emph{Clifford interleaving is inside the scope.}  Rounds separated
by Clifford records normalize back to \cref{def:scheduled-stream}
whenever the accumulated Clifford permutes the generating set up to
sign; this is \cref{lem:clifford-interleaving}, and
\cref{rem:clifford-scope} delimits it.  Contiguity in
\cref{def:scheduled-stream} should therefore be read modulo such
Clifford records.

Contiguity and coefficient freedom are separate hypotheses, and
different families supply them.  Contiguous diagonal rounds arise
structurally in controlled-power ladders and phase estimation, where
$\mathrm{c}\text{-}U^{2^k}$ contributes a sequence of diagonal layers with
no intervening non-commuting structure (the
\texttt{qpe\_controlled\_powers} and \texttt{controlled\_power\_ladder}
families of \cref{sec:workloads}).  Round coefficients that vary freely
across rounds---what the fooling families of
\cref{prop:physical-instantiation} require---are supplied by
time-dependent schedules and by layerwise-parameterized ansatz blocks.
\end{remark}

\begin{proposition}[Frame as cross-cut state]
\label{prop:frame-cross-cut}
Let a compiler in the charged model of \cref{def:compiler} be correct
within projective error $\epsilon<\sin(\pi/(2Q))$ on every valid twirled
stream of \cref{def:scheduled-stream}, and fix any inter-round cut $c$
with committed prefix output of length $\ell_{\mathrm{pre}}$ and
five-field snapshot budget $B_{\mathrm{cut}}(c)$.
\begin{enumerate}[leftmargin=1.9em,label=(\roman*)]
\item In both presentation modes the normalized stream is a valid
dispersed stream of \cref{def:w1-dispersed-stream}, so
\cref{thm:w1-one-pass-mask,thm:w1-unified-frontier} apply verbatim to
the physical stream.
\item In tracked mode,
\begin{equation}\label{eq:frame-lower}
 \ell_{\mathrm{pre}}+B_{\mathrm{cut}}(c)\geq m
\end{equation}
already from the pending-frame classes alone, and the frame register is
live classical restart state, serialized in
$\mathsf{ser}_{\mathrm{store}}$ (or $\mathsf{ser}_{\mathrm{control}}$)
by \cref{def:compiler}.  For a deferred compiler with
$\ell_{\mathrm{pre}}=O(\log m)$, the crossing snapshot therefore
carries $m-O(\log m)$ bits that determine the frame action on the $m$
generators: the Pauli frame is a sub-register of the cross-cut state.
\item In folded mode the flush map is a prefix-measurable sign
normalization, injective on sign components; the same $m$ bits appear
as the sign components of the signed residues, serialized in
$\mathsf{ser}_{\mathrm{parameter}}$ or $\mathsf{ser}_{\mathrm{store}}$,
and the bounds of (i) are unchanged.
\end{enumerate}
\end{proposition}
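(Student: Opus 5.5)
I will prove the three clauses in order. Clause (i) is the structural base; (ii) and (iii) are counting and bookkeeping on top of it.

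\textbf{Clause (i).} For each presentation mode I define a prefix-measurable normalization map and then invoke the bijection argument of \cref{prop:physical-instantiation}(c).
\begin{itemize}
\item \emph{Folded mode.} Each emitted record already carries the signed residue $s_j^{(t)}a_j^{(t)}\bmod Q$. Its dressing Clifford records are Pauli-frame layers, so they satisfy \eqref{eq:clifford-normalizes} with $\pi_t=\mathrm{id}$. \Cref{lem:clifford-interleaving} then moves them to a fixed suffix and leaves a round-major residue stream.
\item \emph{Tracked mode.} The residues appear unsigned, and the frame register is applied downstream. The normalized image is therefore the residue stream composed with a sign vector that is settled at the end.
\end{itemize}
In both modes, $a\mapsto sa$ is a bijection of $\mathbb Z_Q$. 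The signed family therefore maps record-for-record onto the unsigned dispersed family of \cref{def:w1-dispersed-stream}. Realizability of every $x\in[K]_0^m$ and prefix-privacy transfer unchanged, and \cref{thm:w1-one-pass-mask,thm:w1-unified-frontier} apply verbatim. The only point to check is that the hypothesis $\epsilon<\sin(\pi/(2Q))$ is exactly the correctness threshold those theorems use for the cyclic grid with separation $2\sin(\pi/(2Q))$.

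\textbf{Clause (ii).} I will use a fooling-set argument over pending frames.
\begin{enumerate}
\item Fix the prefix residues before cut $c$ and vary only the pending frame signs $f\in\{\pm1\}^m$ accumulated up to the cut. This gives $2^m$ prefixes.
\item Choose one fixed suffix and compare two frames $f\neq f'$ differing at coordinate $j$. With residue $a$ on coordinate $j$, the two streams realize aggregates that differ by $2a\bmod Q$. Choosing $a\not\equiv0$ with $2a\not\equiv0\bmod Q$ (possible since $Q\geq3$) makes the targets at least $2\sin(\pi/(2Q))$ apart in $d_{\mathrm{proj}}$, by \cref{lem:packing-separation} in cyclic form.
\item Two $\epsilon$-correct outputs, with $\epsilon$ below half that separation, must therefore differ.
\item The output is a function of the committed prefix together with the restart snapshot, since the snapshot is complete by \cref{def:compiler}. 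So the map $f\mapsto(\text{prefix output},\mathsf{ser}(c))$ is injective on $2^m$ classes. Counting binary strings of total length $\ell_{\mathrm{pre}}+B_{\mathrm{cut}}(c)$ gives \eqref{eq:frame-lower}.
\item The frame is classical restart information, so it lives in $\mathsf{ser}_{\mathrm{store}}$ or $\mathsf{ser}_{\mathrm{control}}$. Subtracting $\ell_{\mathrm{pre}}=O(\log m)$ leaves $m-O(\log m)$ bits of the snapshot that determine the frame.
\end{enumerate}
For randomized compilers I would either restrict to a fixed seed or note that the seed is already serialized in control. Either way the deterministic count suffices per seed.

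\textbf{Clause (iii).} The flush map sends a signed residue to its sign component together with the residue. Because $a\mapsto sa$ is a bijection, distinct sign vectors with fixed unsigned residues give distinct signed-residue prefixes. The same $m$-bit distinction is therefore carried in the parameter payloads or store, and the bounds from (i) are unchanged.

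\textbf{Main obstacle.} I expect the hardest step to be the injectivity argument in (ii). Concretely, I must show that one common suffix separates all $2^m$ frames simultaneously, coordinatewise, while staying inside the valid family. I plan to handle this with a single suffix that puts residue $a=1$ on every coordinate and then apply the per-coordinate separation argument above.
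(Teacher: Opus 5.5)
\textbf{There is a genuine gap, at exactly the step you flagged as the main obstacle.} Your overall route is the paper's: fix identical residue fields, let the sign fields range over the $2^m$ pending-frame classes, attach one common suffix, show that the pair (committed prefix, snapshot) is injective, and apply a length bound. But the resolution you propose for that step fails.

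The compiler is obliged to be correct only on \emph{valid} streams. In the cyclic family of \cref{def:w1-dispersed-stream}, validity means every aggregate lies in $[K]_0=\{0,\dots,Q-2\}$, so the residue $Q-1$ is excluded. Your condition $2a\not\equiv0\pmod Q$ secures separation but says nothing about validity. Take a suffix contributing residue $1$ on every coordinate, and a prefix residue $u_j$ whose sign the frame flips. The two completions then have $x_j=1\pm u_j\bmod Q$. For $Q=3$ no choice of $u_j$ works: $u_j=0$ gives no separation, and $u_j\in\{1,2\}$ sends one of the two completions to $x_j=2=Q-1$. So at $Q=3$ every coordinate either fails to distinguish frames or admits only one valid sign. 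All valid completions then share a single target, and the fooling set has one element. The bound $\ell_{\mathrm{pre}}+B_{\mathrm{cut}}(c)\ge m$ is therefore unproved for $Q=3$, which is the tracked-sign device the paper actually executes.

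The missing idea is to choose the suffix residue so that \emph{both} completions stay inside $[K]_0$. With prefix residue $1$ on every generator, pick $a_j$ with both $a_j+1$ and $a_j-1$ in $[K]_0$. That means $a_j=2$ for $Q=3$ (aggregates $0$ and $1$) and $a_j=1$ for $Q\ge4$ (aggregates $0$ and $2$). The completions are then valid and differ by $2\not\equiv0\pmod Q$ wherever the frames differ, so \cref{thm:w1-exact-separation} gives separation $2\sin(\pi/(2Q))>2\epsilon$.

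Two smaller repairs are also needed.
\begin{itemize}
\item Counting pairs of binary strings loses a lower-order term, because the split point between prefix and snapshot is not free. The exact bound $m$ comes from prefix-freeness of the declared serializers together with Kraft's inequality. The control field records the committed-prefix length, so the concatenated message is self-delimiting.
\item In (iii), observing that distinct sign vectors give distinct signed-residue \emph{inputs} does not by itself place $m$ bits in the crossing state. You need to rerun the fooling argument of (ii) on the same family presented in folded records, which is what the paper does.
\end{itemize}
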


\begin{proof}
(i) Sign normalization is a bijection of $\mathbb Z_Q$, so normalizing
every record maps the twirled family bijectively onto the family of
\cref{def:w1-dispersed-stream} while preserving aggregates, record
boundaries, and cut positions; the cited theorems apply to the image
and pull back verbatim.

(ii) Take prefixes whose residue fields emit the residue $1$ exactly
once on each generator and are otherwise zero, and whose sign fields
realize the frame class $\sigma\in\{\pm1\}^m$; the residue fields are
identical across classes.  Choose one common suffix adding a residue
$a_j$ on generator $j$ with both $a_j+1\bmod Q$ and $a_j-1\bmod Q$ in
$[K]_0$; such $a_j$ exists for every $Q\geq3$ ($a_j=2$ for $Q=3$,
$a_j=1$ otherwise).  The completed aggregates
$x_j=\sigma_j+a_j\bmod Q$ then differ between two classes in every
coordinate where the frames differ, and both completions are valid
family instances.  By \cref{thm:w1-exact-separation} the corresponding
targets are separated by $2\sin(\pi/(2Q))>2\epsilon$, so a correct
compiler must emit different final outputs; but a deterministic suffix
computation is a function of the snapshot and the common suffix, so two
runs with equal committed prefix and equal snapshot emit identical
final outputs.  The pair (prefix output, snapshot) is therefore
injective on the $2^m$ frame classes; prefix-freeness of the declared
serializers and Kraft's inequality give \eqref{eq:frame-lower}.  The
frame register is classical restart state, so its bits lie in the named
fields; subtracting the deferred-output cap bounds the snapshot share.

(iii) In folded mode the sign is consumed at emission, so the crossing
information consists of the signed residues themselves; injectivity on
sign components is the bijection of (i) restricted to fixed unsigned
residues, and \eqref{eq:frame-lower} holds with the same fooling family
presented in folded records.
\end{proof}

\begin{corollary}[Physically realized lower bounds]
\label{cor:physical-lower-bounds}
For the streams of \cref{prop:physical-instantiation}(a)--(c) with the
tuned modulus $Q_\epsilon$ of \cref{cor:w1-epsilon-cyclic}, the
hypotheses of \cref{def:w1-dispersed-stream,def:w1-pass-accounting} are
met, so \cref{thm:w1-one-pass-mask,thm:w1-unified-frontier} (and, for
the two-round binary case, \cref{thm:main-tradeoff}) hold verbatim on
the physical stream: any compiler positioned downstream of the schedule
and of the twirl obeys
$\overline A_p+(2p-1)S\geq(1-\delta)m\log_2K_\epsilon-h_2(\delta)$, and
under tracked randomized compiling the crossing state contains the
Pauli frame in the sense of \cref{prop:frame-cross-cut}(ii).
\end{corollary}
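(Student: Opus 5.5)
The plan is to reduce the corollary to the already-stated results: check their hypotheses on the physical streams, then read off the conclusions. There are three steps.

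\textbf{Step 1: stream-family hypotheses.} We verify that the streams of \cref{prop:physical-instantiation}(a)--(c), taken at $Q=Q_\epsilon$, satisfy \cref{def:w1-dispersed-stream}. Under the coefficient-freedom hypothesis, \cref{prop:physical-instantiation} already asserts two things. First, every aggregate $x\in[K]_0^m$ with $K=Q_\epsilon-1=K_\epsilon$ is realized. Second, no token or proper prefix determines $x$.

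For (c), we apply \cref{prop:frame-cross-cut}(i). Sign normalization $a\mapsto sa\bmod Q_\epsilon$ is a bijection of $\mathbb Z_{Q_\epsilon}$, so the twirled family maps record-for-record onto an unsigned dispersed stream. This map preserves aggregates, record boundaries and cut positions, in either presentation mode.

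If Clifford records separate the rounds, we first apply \cref{lem:clifford-interleaving}. Its prefix-measurable normalization returns the stream to \cref{def:scheduled-stream}, and the common suffix $D_r$ leaves projective separations unchanged.

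\textbf{Step 2: pass-accounting hypotheses.} We check \cref{def:w1-pass-accounting}. This concerns only the compiler model of \cref{def:compiler}, namely the committed output $\overline A_p$, the serialized cap $S$ on $B_{\mathrm{cut}}$, the $p$ passes and the failure probability $\delta$. None of these refer to where the stream came from. A compiler downstream of the schedule and twirl, in the sense of \cref{rem:downstream-scope}, sees exactly the emitted records. The quantifier "correct on every valid stream" therefore ranges over the whole family from Step 1.

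\textbf{Step 3: conclusions.} Because the normalization is a bijection computable from the prefix, any compiler for the physical stream composes with it to give a compiler for the normalized stream. The added cost is at most a constant-size control field, and we must absorb that overhead. The point to argue is that the normalization is applied on the fly, with the sign consumed per record. Hence it adds no live state beyond what the physical compiler already holds, and the composition does not raise $S$.

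With this in place, \cref{thm:w1-one-pass-mask,thm:w1-unified-frontier} pull back verbatim. Taking the calibrated $Q_\epsilon$ from \cref{cor:w1-epsilon-cyclic} gives $K_\epsilon$ and separation $2\sin(\pi/(2Q_\epsilon))>2\epsilon$. This yields $\overline A_p+(2p-1)S\geq(1-\delta)m\log_2K_\epsilon-h_2(\delta)$. For $r=2$ with binary aggregates, \cref{thm:main-tradeoff} applies in the same way. The tracked-mode claim about the Pauli frame is exactly \cref{prop:frame-cross-cut}(ii).

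\textbf{Main obstacle.} The hardest step is making sure the pullback in Step 3 is genuinely verbatim, i.e.\ that normalization costs no extra crossing bits. I expect this to hold because the lower bounds are fooling-set arguments over the normalized family. The physical fooling set is just the preimage, and it has the same cardinality and the same pairwise separations, so the counting runs directly on physical streams without composing compilers at all.
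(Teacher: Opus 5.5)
Your proposal is correct and follows the paper's route. The paper's proof says only that the corollary is immediate from \cref{prop:physical-instantiation} and \cref{prop:frame-cross-cut}: coefficient freedom makes every share decomposition a valid stream, sign normalization pulls the theorems back, and clause~(ii) gives the frame statement; your Steps~1--2 assemble exactly this. The compiler-composition detour in Step~3 is unnecessary, and as phrased it runs the wrong way, since precomposing with the normalization yields a physical-stream compiler rather than a normalized-stream one. The fooling-set argument on a fixed-sign preimage that you settle on is exactly the ``pull back verbatim'' step of \cref{prop:frame-cross-cut}(i), so there is no overhead to absorb.
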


\begin{proof}
Immediate from
\cref{prop:physical-instantiation,prop:frame-cross-cut}; no separate
argument is required.
\end{proof}

A single knob interpolates between the two regimes, and the
interpolation is itself a theorem: the fraction of coordinates that
reach the compiler in dispersed rather than pre-aggregated form trades
off linearly against the compiler state that survives the cut.  We call
that fraction the \emph{dispersal fraction} $\eta$.  The name is
deliberate: $\eta$ is not the twirl strength of randomized compiling
(the probability with which a random Pauli is applied), and the proof
below never uses the twirl---it conditions on the dispersal mask and
applies \cref{thm:w1-unified-frontier} to the dispersed coordinates.
What $\eta$ measures is the fraction of phase content presented upstream
in late-aggregated form, which is exactly the axis of the
design rule.

\begin{proposition}[Dispersal--memory Pareto bound]
\label{prop:eta-pareto}
Fix a dispersal fraction $\eta\in[0,1]$, $Q\geq3$, $K=Q-1$, and
$0\leq\epsilon<\sin(\pi/(2Q))$.  Let each generator independently with
probability $\eta$ enter the stream in dispersed twirled form (canonical
sharing of \cref{prop:w1-stream-equivalence} with uniform signs), and
otherwise in aggregate-exposed form.  Let $\mathcal A$ be correct within
$\epsilon$ with probability at least $1-\delta$ on every valid stream
and mask, making $p=p(m)$ forward passes in the charged model of
\cref{def:w1-pass-accounting}.  Then, in expectation over the dispersal
mask $T$,
\begin{equation}\label{eq:eta-pareto}
 \mathbb E_T\bigl[\overline A_p+(2p-1)S\bigr]
 \geq(1-\delta)\,\eta m\log_2K-h_2(\delta).
\end{equation}
Under tracked presentation, each twirled generator additionally
contributes its frame bit as in \cref{prop:frame-cross-cut}(ii).
\end{proposition}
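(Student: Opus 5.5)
The plan is to condition on the dispersal mask $T\subseteq[m]$ and reduce each conditional instance to the unified frontier of \cref{thm:w1-unified-frontier}, applied on the dispersed coordinates only. Then average over $T$ using linearity and the fact that the right-hand side is affine in $|T|$.

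\emph{Step 1: fix the mask.} For a fixed $T$ with $|T|=k$, the coordinates outside $T$ arrive in aggregate-exposed form. Fix them to an arbitrary common value, say $x_j=0$ for $j\notin T$. The compiler's behaviour on this subfamily is a compiler for the $k$-generator dispersed family on $\{P_j:j\in T\}$.

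By \cref{prop:frame-cross-cut}(i), sign normalization maps the twirled canonical shares bijectively onto the unsigned dispersed family of \cref{def:w1-dispersed-stream}. It preserves aggregates, record boundaries and cut positions, so the conditional family satisfies the hypotheses of \cref{def:w1-dispersed-stream,def:w1-pass-accounting} with $m$ replaced by $k$.

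Two things need checking here. First, the aggregate-exposed records are a fixed public padding, so they can be simulated inside the reduction at zero cost, or simply counted in $A_p$ and $S$, which only helps the inequality. Second, the projective separation of the $K^k$ targets survives the fixed padding. It does, because $U_x$ factors as a product over commuting independent characters, and the cyclic separation $2\sin(\pi/(2Q))>2\epsilon$ of \cref{thm:w1-exact-separation} is coordinatewise. The correctness guarantee ``with probability $\ge1-\delta$ on every valid stream and mask'' restricts to this subfamily. \Cref{thm:w1-unified-frontier} then gives
\[
\overline A_p(T)+(2p-1)S(T)\ \ge\ (1-\delta)\,|T|\log_2K-h_2(\delta).
\]
Here $\overline A_p(T)$ and $S(T)$ are the worst-input quantities over streams with mask $T$. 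These are bounded above by the quantities appearing in \eqref{eq:eta-pareto}, whichever way the expectation over $T$ is interpreted.

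\emph{Step 2: average.} Take $\mathbb E_T$ of the conditional bound. With $\mathbb E|T|=\eta m$ for independent Bernoulli$(\eta)$ inclusion, linearity gives \eqref{eq:eta-pareto}. The $h_2(\delta)$ term is constant in $T$, so it passes through unchanged. The degenerate case $T=\emptyset$ gives a right-hand side $-h_2(\delta)\le0$, which holds trivially. For the tracked-presentation clause, apply \cref{prop:frame-cross-cut}(ii) to the twirled subfamily on $T$, which adds $|T|$ frame bits per cut. Averaging gives the extra $\eta m$.

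\emph{Main obstacle.} The delicate step is Step 1's claim that the restricted family is a legitimate instance of \cref{thm:w1-unified-frontier}. The issue is the cost accounting: the self-delimiting codec charges $\Theta(\log m)$ address bits rather than $\Theta(\log k)$, and the pass count $p=p(m)$ is not $p(k)$. I would handle this by checking that the frontier's proof, a Fano/fooling argument over the $K^k$ aggregates at inter-round cuts, uses only the number of distinct targets and injectivity of (committed prefix, snapshot) pairs, never the ambient width. The address overhead then only inflates the left side and is harmless. Similarly, the proof should only require $p$ to be a finite pass bound, which $p(m)$ is for each fixed $m$.

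A second point to verify is that the mask $T$ is itself not charged: the bound is stated for the deferred/buffered setting where the stream discloses $T$. So no $\log_2\binom{m}{k}$ term appears here, and I would not attempt to add one. That term belongs to \cref{prop:commitment-toll}.
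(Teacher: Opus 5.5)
Your argument for \eqref{eq:eta-pareto} itself is the paper's proof. You condition on $T$, fix the aggregate-exposed coordinates, apply \cref{thm:w1-unified-frontier} to the $|T|$ dispersed coordinates, and average using $\mathbb E|T|=\eta m$. Your extra checks (separation under fixed padding, sign normalization, address width, the pass bound $p(m)$, the case $T=\emptyset$) are what the paper compresses into ``applies verbatim''. You do not even need the bijection of \cref{prop:frame-cross-cut}(i): the all-$+1$ sign pattern is a valid twirled stream, so the unsigned family of \cref{def:w1-dispersed-stream} already lies inside the correctness hypothesis.

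The step that does not go through is your tracked-presentation clause, read as an ``extra $\eta m$''. \Cref{prop:frame-cross-cut}(ii) is a separate bound, $\ell_{\mathrm{pre}}+B_{\mathrm{cut}}(c)\geq|T|$ on your restricted family, obtained from its own fooling family. It cannot be added to $(1-\delta)|T|\log_2K$. Its frame classes are prefixes whose signed partial sums are $u_j=\sigma_j\in\{1,Q-1\}$, a subfamily of the prefix aggregates the frontier already counts, so the bits that separate frames are not new bits.

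Concretely, take $Q=3$, $\delta=0$, $p=1$, $\eta=1$. The additive reading would force $A_1+S\geq 2m$. Yet the deferred compiler that folds each round-one sign into its residue as it reads the record is correct with $A_1=0$ and $S=m\log_2 3+O(\log m)$: it stores $u=s^{(1)}a^{(1)}\bmod 3$ packed in base three.

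In addition, (ii) is proved by an injectivity/Kraft argument for deterministic compilers, whereas the proposition allows failure probability $\delta$. What you can legitimately extract, for deterministic compilers, is the separate statement $\mathbb E_T[\ell_{\mathrm{pre}}+B_{\mathrm{cut}}]\geq\eta m$. That is, the crossing information determines the frame on the dispersed generators, not an additive term on the right of \eqref{eq:eta-pareto}. The paper's proof does not address this clause beyond the pointer.

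A smaller point: the bound is not restricted to the deferred or buffered setting. It holds for every compiler in the charged model, executed commitment included, and simply omits the $\log_2\binom{m}{k}$ term, which a lower bound is free to do.
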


\begin{proof}
Condition on the mask $T$.  Fixing the exposed coordinates and
dispersing the coordinates in $T$ yields a subfamily satisfying the
hypotheses of \cref{def:w1-dispersed-stream} with $|T|$ effective
generators, so \cref{thm:w1-unified-frontier} applies verbatim and
gives $\overline A_p+(2p-1)S\geq(1-\delta)|T|\log_2K-h_2(\delta)$ for
every fixed $T$.  Take the expectation over the Bernoulli mask and use
$\mathbb E|T|=\eta m$.
\end{proof}

\begin{figure*}[t]
\centering
\includegraphics[width=0.78\textwidth]{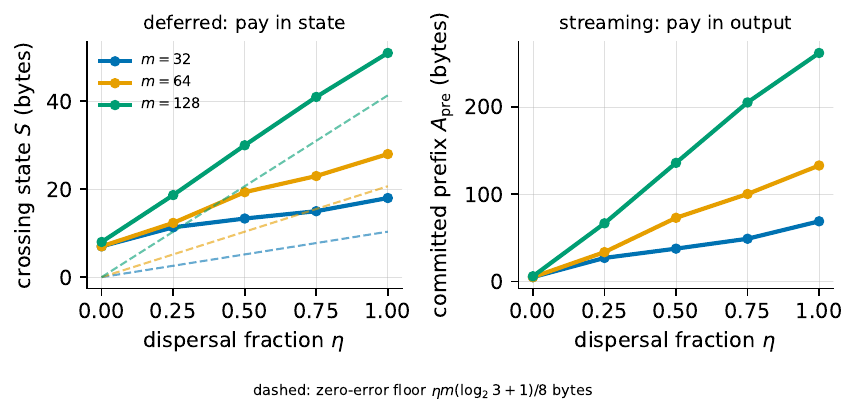}
\caption{Measured dispersal--memory tradeoff on the executable two-round
masked-share device ($Q=3$, binary aggregates, tracked signs; every
point restart-verified: the serialized snapshot alone plus the
round-two suffix reproduces the exact output).  Left: a deferred
compact-output compiler pays the dispersal in crossing state; measured
slopes $10.3/21.1/43.3$ bytes per unit $\eta$ at $m=32/64/128$ sit on
the zero-error floor $\eta m(\log_2 3+1)/8$ bytes (dashed; ratios
$0.99$--$1.05$), the residue-plus-frame cost of
\cref{thm:w1-one-pass-mask} and \cref{prop:frame-cross-cut}.  Right:
a streaming compiler pays the same dispersal in committed prefix output
($A_{\mathrm{pre}}$ slopes $60/129/260$ bytes per unit $\eta$, larger
by the public addresses) while its crossing state stays flat at
$\leq8$ bytes.  The horizontal axis is the dispersal fraction $\eta$ of
\cref{prop:eta-pareto}.  Frozen artifact:
\texttt{data/frozen/eta\_sweep\_pareto.json}, generated by
\texttt{PaperDraft/scripts/run\_eta\_sweep.py}.}
\label{fig:eta-sweep}
\end{figure*}

\Cref{fig:eta-sweep} measures \eqref{eq:eta-pareto} on the executable
masked-share device: the two admissible payment channels of the bound
appear as the two panels, the state channel sits on the predicted floor
to within $5\%$, and the exchange between them is linear in the
dispersal fraction with slope proportional to $m$.

The two panels also expose an asymmetry that the frontier itself does
not predict.  Per dispersed generator the state channel costs about
$2.6$ bits---the floor $\log_2 3+1=2.585$ of residue plus sign---while
the committed channel costs about $16$: an exchange rate near $6{:}1$
between two currencies the frontier charges at par.  The next
proposition isolates the part of that gap that is not an artifact of
the serializer.  This requires one additional hypothesis beyond
\cref{def:compiler}, and the hypothesis is substantive rather than
technical: without it, the gap is not merely unprovable but zero, and we
state that converse alongside the bound.

\begin{definition}[Executed and buffered commitment]
\label{def:self-identifying}
Let $c$ be the inter-round cut of a two-round partially dispersed stream
as in \cref{prop:eta-pareto}.  Let $T\subseteq[m]$ be its dispersal mask:
coordinates in $T$ carry two signed shares, whereas coordinates outside
$T$ arrive in aggregate-exposed form.  Suppose the compiler has
irrevocably written the round-one content
$W:=(a^{(1)}_j,s^{(1)}_j)_{j\in T}$ before $c$.  The commitment is
\emph{executed} if the prefix $O_{\mathrm{pre}}$ is applied at $c$,
before any round-two record is parsed; since the device must be told
which generators to rotate and by how much, an executed prefix
determines the pair $(T,W)$ on its own, without the suffix and without
the cut snapshot.  The commitment is \emph{buffered} if
$O_{\mathrm{pre}}$ is written and never rewritten, but is applied only
after later records have been read, in which case it need not determine
$(T,W)$.  Call $\mathcal A$ \emph{mask-uniform} if a single algorithm,
with no serializer chosen as a function of $T$, is correct within
$\epsilon$ on every valid partially dispersed stream for every mask.
\end{definition}

\begin{proposition}[Commitment toll: a characterization]
\label{prop:commitment-toll}
Fix $Q\geq3$, $K=Q-1$, $\epsilon<\sin(\pi/(2Q))$, $r=2$, and
$1\leq k\leq m$.  Let $\mathcal A$ be mask-uniform and correct within
$\epsilon$ on every valid two-round partially dispersed stream described
in \cref{def:self-identifying}, with a prefix-free committed serializer.
Draw the dispersal mask $T$ uniformly from the
$\binom{m}{k}$ subsets of size $k$, and the committed content $W$
uniformly on $\mathbb Z_Q^k\times\{\pm1\}^k$ independently of $T$.
\begin{enumerate}[label=(\roman*),leftmargin=1.6em]
\item \emph{(Executed commitment pays the mask entropy.)}  If
$\mathcal A$ commits in executed form at the inter-round cut, then
\begin{equation}\label{eq:commitment-toll}
 \mathbb E_{T,W}\bigl[A_{\mathrm{pre}}\bigr]
 \;\geq\;\underbrace{k\log_2(2Q)}_{\text{payload}}
 \;+\;\underbrace{\log_2\tbinom{m}{k}}_{\text{toll}},
\end{equation}
and \emph{a fortiori} $\max_{T,W}A_{\mathrm{pre}}$ obeys the same
bound.  A deferred compiler attains $A_{\mathrm{pre}}=0$ with
$S\leq k\log_2(2Q)+O(\log m)$, so the identical content crosses the cut
for $\log_2\binom{m}{k}$ fewer bits in the state channel.  That
difference is the toll.
\item \emph{(Buffered commitment pays none of it.)}  The hypothesis in
(i) cannot be removed.  The mask-deferred compiler that writes $W$
irrevocably in positional order and recovers $T$ from the round-two
records is admissible in \cref{def:compiler} and correct, commits in
buffered form, and satisfies
$\mathbb E_{T,W}[A_{\mathrm{pre}}]\leq k\log_2(2Q)+O(\log m)$: it pays
the payload and the header and no mask entropy whatever.
\end{enumerate}
The toll is therefore a property of \emph{executed} commitment, not of
commitment as such.  Where it is paid it equals the entropy of the
dispersal mask and is bounded by $h_2(k/m)\big/(k/m)$ bits per
committed generator.  A fixed public mask has zero entropy; within the
uniform size-$k$ ensemble above, the toll therefore vanishes at $k=m$
(and trivially at $k=0$, which the proposition excludes).
\end{proposition}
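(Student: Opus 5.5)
Part (i) is a source-coding argument, and parts (i) and (ii) are handled separately. By \cref{def:self-identifying}, an executed prefix $O_{\mathrm{pre}}$ determines the pair $(T,W)$ on its own: there is a decoding map $\phi$ with $\phi(O_{\mathrm{pre}})=(T,W)$ for every mask and every committed content. Mask-uniformity matters here. Because one algorithm with one serializer runs on all masks, the committed strings for different $(T,W)$ are drawn from a single prefix-free code, not from a family of codes that each know $T$ in advance. So $(T,W)\mapsto O_{\mathrm{pre}}$ is an injective map into a prefix-free set. I would make this injectivity explicit first, since two distinct pairs with the same executed prefix would apply the same device action at $c$ even though they differ either in which generators rotate or by how much.

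Kraft's inequality together with the source-coding bound then gives $\mathbb E[A_{\mathrm{pre}}]\geq H(T,W)$. By independence and uniformity, $H(T,W)=\log_2\binom{m}{k}+k\log_2 Q+k$, which equals the right-hand side of \eqref{eq:commitment-toll}. The worst-case bound on $\max_{T,W}A_{\mathrm{pre}}$ follows because a maximum dominates an average. If randomness $\rho$ is present, I would condition on $\rho$ and average, since injectivity holds for each fixed seed.

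For the deferred comparison, the compiler stores $(j,a^{(1)}_j,s^{(1)}_j)$ for $j\in T$. Stored naively, each address costs $\log m$ bits and the total exceeds the bound. To reach $S\leq k\log_2(2Q)+O(\log m)$, I would store the residues positionally instead, in a packed base-$2Q$ block in the order the round-one records arrived. The state then holds only the payload, plus $O(\log m)$ bits of control and counters. The mask $T$ is re-learned from round two, whose records carry generator addresses for exactly the dispersed coordinates in the same public order. This is how the same content crosses the cut with $\log_2\binom{m}{k}$ fewer bits.

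Part (ii) uses the same positional idea in the output channel. The compiler writes the packed block of $W$ irrevocably, under the ordered-share mode of $E_{\mathrm{rel}}$, with a header of $O(\log m)$ bits holding $k$ and the framing. Admissibility in \cref{def:compiler} is routine: the output is write-once and the state stays logarithmic. Correctness comes from \cref{prop:family-aggregates}: when the buffered block is applied after round two, the $i$-th residue is paired with the $i$-th dispersed address appearing in round two, and the suffix supplies the second shares. The prefix does not determine $T$, so no entropy argument applies, and the upper bound is a direct count.

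The remaining claims are arithmetic. Dividing the toll by $k$ and using $\log_2\binom{m}{k}\leq m h_2(k/m)$ gives the per-generator bound $h_2(k/m)/(k/m)$, and $\binom{m}{m}=1$ gives vanishing at $k=m$. The main obstacle I expect is the injectivity step in (i). The informal phrase ``the device must be told which generators to rotate'' has to be turned into a statement that the executed prefix decodes $(T,W)$ with a mask-independent decoder. Otherwise a serializer that implicitly depends on $T$ could encode only $W$. If that step resists, I would take decodability of $(T,W)$ from $O_{\mathrm{pre}}$ as the formal content of ``executed'' and note that mask-uniformity is what rules out a $T$-dependent serializer.
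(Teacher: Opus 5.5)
Your proposal follows the paper's proof. Part (i) is the same source-coding bound over one prefix-free code, $\mathbb E|O_{\mathrm{pre}}|\geq H(T,W)=\log_2\binom{m}{k}+k\log_2(2Q)$, with mask-uniformity excluding a $T$-dependent serializer; the deferred corner and (ii) likewise keep or write $W$ positionally and let the round-two records disclose $T$. One caution: your fallback is the only route that works, because decodability of $(T,W)$ cannot be derived from the device action. A dispersed coordinate with residue $0$ acts as the identity, and $(s,a)$ and $(-s,-a\bmod Q)$ induce the same rotation (cf.\ \cref{rem:self-id-scope}), so, as the paper does, take decodability as the defining content of executed commitment in \cref{def:self-identifying}.
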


\begin{proof}
(i) By \cref{def:self-identifying} an executed prefix determines
$(T,W)$, so $H(O_{\mathrm{pre}})\geq H(T,W)$.  Mask uniformity forbids
a serializer selected after $T$ is known, so the committed prefixes
form one prefix-free code over the joint source $(T,W)$, and the
expected length of a prefix-free code is at least the entropy of its
source; hence
$\mathbb E[A_{\mathrm{pre}}]=\mathbb E|O_{\mathrm{pre}}|\geq H(T,W)$.
By construction, $T$ and $W$ are independent and uniform, so
$H(T,W)=\log_2\binom{m}{k}+k\log_2(2Q)$, which is
\eqref{eq:commitment-toll}, and a maximum dominates a mean.  For the
deferred compiler the round-major order is public, so its snapshot holds
the $k$ residues and signs in positional order while the round-two record
types distinguish dispersed updates from aggregate-exposed coordinates
and thereby disclose the mask at no charge;
the $O(\log m)$ term is the header.

(ii) The mask-deferred compiler never rewrites an emitted record, so it
meets \cref{def:compiler}, and it is correct because the round-two
records supply $T$ before the final output is written.  Its prefix
holds $W$ in positional order at $k\log_2(2Q)$ bits plus a header, and
reveals no function of $T$ beyond the value of $k$ already implied by
its length, so no $\log_2\binom{m}{k}$ term arises.  There is no
conflict with (i): this prefix does not determine $(T,W)$, hence is not
executable at the cut, and the compiler has bought its exemption by
giving up the ability to apply what it has written.
\end{proof}

\begin{remark}[Scope of \cref{prop:commitment-toll}, and two degeneracies]
\label{rem:self-id-scope}
Three limits of the statement are worth naming.  First,
\cref{def:self-identifying} is an assumption about the compiler and not
a consequence of \cref{def:compiler}: append-only first exposure fixes
the \emph{semantics} of an emitted record---it is never
reinterpreted---but does not require the record to name its own
targets, and clause (ii) is precisely a compiler that exploits the
difference.  What the definition models is the real-time control stack
of \cref{rem:streaming-grounding}, where the prefix reaches the device
before the suffix is parsed.  Second, the bound is on an ensemble.  For
a single mask fixed in advance and known to the serializer there is
nothing to pay, and the $\log_2\binom{m}{k}$ term is then not merely
unprovable but false; mask uniformity is what makes the schedule a
source rather than a constant.  Third, the payload term is charged
against the declared record content $(a^{(1)},s^{(1)})$ and not against
the round-one unitary, which determines strictly less: a coordinate
carrying residue $0$ acts as identity and is invisible, and
$(s,a)$ and $(-s,-a\bmod Q)$ induce the same rotation, so the applied
unitary alone fixes $(T,W)$ only up to those two degeneracies.  Both
are degeneracies of the payload; neither touches the
$\log_2\binom{m}{k}$ toll, which is precisely the quantity isolated by clause (ii)
and \cref{fig:commit-toll} measures.
\end{remark}

The two channels are therefore \emph{not} at par once commitment is
executed rather than buffered: applying what one has written requires
the prefix to identify the dispersal mask, while deferring---or merely
buffering---obtains that same mask from the stream that follows.  Because
$\log_2\binom{m}{k}/k\to h_2(\eta)/\eta$, the toll,
where it is paid at all, is bounded by a constant---about $2$ bits per
generator at $\eta=\tfrac12$ and $0$ at $\eta=1$---and does not grow
with $m$.

\Cref{fig:commit-toll} tests both clauses on the same device under a
preregistered protocol frozen before the run
(\path{PaperDraft/research/E8_preregistration.md}), with every cell
restart-verified.  Four serializers are compared at
$m\in\{128,512,2048\}$ (and $m$ up to $32768$ for two of them): the
deferred snapshot, the address-per-record serializer of
\cref{fig:eta-sweep}, an identity-optimal executed prefix that ranks
the mask combinadically, and---instantiating
\cref{prop:commitment-toll}(ii)---a mask-deferred control whose prefix
omits the mask and recovers it from the suffix.  Against clause (i),
the measured toll matches $\log_2\binom{m}{k}/k$ within $0.17$ bits at
the worst cell and within $0.02$ bits typically; it is flat in $m$ at
fixed $\eta$ ($2.11/1.99/1.97$ bits at $\eta=\tfrac12$ across a
$16\times$ range of $m$); and it is exactly zero at $\eta=1$.  Against
clause (ii), it is exactly zero at every $\eta$ in the mask-deferred
control.  The control is the informative arm: it pays the same payload
into the same write-once output and rewrites nothing, so what separates
it from the executed encoder is not irreversibility and not addressing
but whether the prefix can be applied before the suffix is read.  The
experiment therefore measures where the class boundary of
\cref{def:self-identifying} lies, rather than confirming a toll that
commitment alone would impose.

This also explains the $6{:}1$ ratio in \cref{fig:eta-sweep}.  Under
the identity-optimal serializer the ratio is $1.00$ at $\eta=1$ and at
most $2.64$ anywhere in the measured range; the remainder is codec
slack, and it is the slack---not the toll---that grows with address
width, from $5.4{:}1$ at $m=128$ to $10.8{:}1$ at $m=32768$.  The
preregistration recorded this as a prediction \emph{against} the
natural reading of \cref{fig:eta-sweep}, and we report it as such: there
is no address-width law here.

Two attempts to carry the toll onto compilers we do not control were
preregistered, and \textbf{both failed}; we report them rather than
repair them.  The first asked only whether real toolchains spend
committed bytes on addresses, and its design placed every rotation on
the low-index qubits, so it never varied the address width it was meant
to vary (a corrected rerun, labelled non-preregistered in the frozen
artifact, finds $2.95$ bytes of growth per rotation over a $1000\times$
range of register width, itself below the preregistered threshold).
The second tested the mechanism directly: two circuits identical in $m$,
in $k$, in rotation count and in angle multiset, differing only in
whether the dispersed subset is an arithmetic progression or a random
$k$-subset, emitted through Qiskit and \textsc{tket} and compared by
compressed size.  Its textual control passed---the variants differ by
under $0.05\%$ of raw bytes, so the comparison is not confounded by text
length---and on \textsc{tket} the excess reaches $86\%$ of
$\log_2\binom{m}{k}$ at $k=m/2$, comfortably above the seed noise.  But
the excess peaks at $k=m/4$ rather than $k=m/2$, and on Qiskit it is
$8\%$ of prediction and smaller than the seed-to-seed noise.  Under the
frozen rule this is a failure, and we make no third-party claim: a
general-purpose compressor applied to a $100$\,kB instruction text is
evidently not a sharp enough entropy estimator for a few hundred bytes
of schedule information.  \Cref{prop:commitment-toll} rests on the
instrumented device of \cref{fig:commit-toll} alone, where every cell is
restart-verified; whether the toll is measurable at a third-party API
boundary remains open.

\begin{figure*}[t]
\centering
\includegraphics[width=0.78\textwidth]{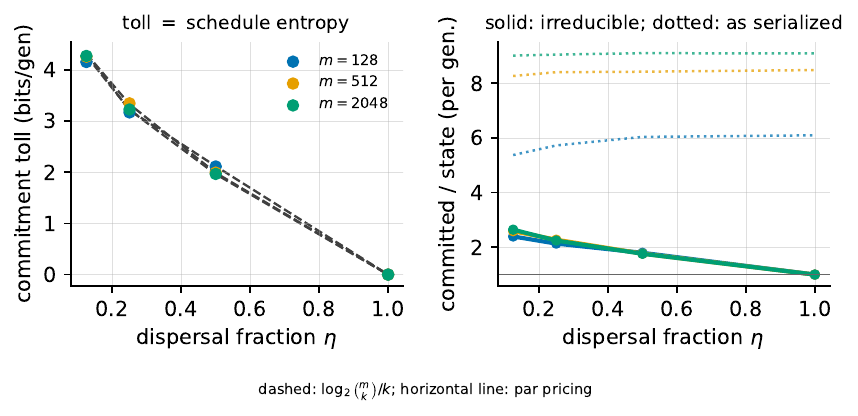}
\caption{The commitment toll, preregistered and measured on the
two-round masked-share device; every cell restart-verified.  Left:
bits per committed generator paid by an executed committed
prefix in excess of the deferred snapshot, against the predicted
mask entropy $\log_2\binom{m}{k}/k$ (dashed).  The toll is flat in
$m$ and vanishes at $\eta=1$; for the buffered mask-deferred control of
\cref{prop:commitment-toll}(ii) it is zero at every $\eta$.  Right: the per-generator exchange rate
between the committed and state channels.  Solid: irreducible, under
an identity-optimal serializer---at most $2.64$, and exactly $1$ when
the dispersal mask is public.  Dotted: the same rate as actually
serialized by the address-per-record codec of \cref{fig:eta-sweep},
which is where the $6{:}1$ comes from and which is the only part that
grows with address width.  Frozen artifact:
\texttt{generated/e8\_commit\_toll.json}, generated by
\texttt{PaperDraft/scripts/run\_commit\_toll.py}.}
\label{fig:commit-toll}
\end{figure*}

\section{Information Lower Bounds}
\label{sec:reduction}

We prove two lower bounds using one-way message arguments~\cite{kushilevitz1997communication}.
The first uses the $K_\epsilon^m$-point no-wrap packing and Fano's inequality
to obtain an explicit accuracy--description tradeoff, including randomized
compilers.  The second injects the $3^m$ possible first-round masks into the
message crossing the inter-round cut.  Its deterministic zero-error bound is
matched by a memory-capped hybrid compiler, and a conditional-Fano extension
covers randomized failure probability $\delta$.

\subsection{Packing Decoders}

\begin{lemma}[Nearest-packing decoder]
\label{lem:nearest-decoder}
Let $\mathcal F=\{U_x:x\in\mathcal X\}$ have pairwise distance at least
$4\epsilon$.  From any output circuit $C$, define
\begin{equation*}
 \widehat x(C)\in\arg\min_{y\in\mathcal X}
       d_{\mathrm{proj}}(U(C),U_y),
\end{equation*}
with a fixed tie rule.  If
$d_{\mathrm{proj}}(U(C),U_x)\leq\epsilon$, then $\widehat x(C)=x$.
\end{lemma}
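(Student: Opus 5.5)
The argument is the triangle inequality for $d_{\mathrm{proj}}$ combined with the packing separation. First I will check that $d_{\mathrm{proj}}$ is a pseudometric on unitaries, so that it satisfies the triangle inequality. Take $d_{\mathrm{proj}}(A,B)=\|A-e^{i\phi}B\|_{\mathrm{op}}$ and $d_{\mathrm{proj}}(B,C)=\|B-e^{i\psi}C\|_{\mathrm{op}}$, with both infima attained as noted after \cref{def:proj-norm}. Then
\begin{equation*}
 \|A-e^{i(\phi+\psi)}C\|_{\mathrm{op}}\leq\|A-e^{i\phi}B\|_{\mathrm{op}}+\|e^{i\phi}(B-e^{i\psi}C)\|_{\mathrm{op}}.
\end{equation*}
The operator norm is invariant under multiplication by a unimodular scalar, so the right-hand side equals $d_{\mathrm{proj}}(A,B)+d_{\mathrm{proj}}(B,C)$. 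Symmetry follows the same way, since $\|U-e^{i\phi}V\|_{\mathrm{op}}=\|V-e^{-i\phi}U\|_{\mathrm{op}}$.

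Now suppose $d_{\mathrm{proj}}(U(C),U_x)\leq\epsilon$, and let $y\neq x$ be any other element of $\mathcal X$. The triangle inequality gives
\begin{equation*}
 d_{\mathrm{proj}}(U(C),U_y)\geq d_{\mathrm{proj}}(U_x,U_y)-d_{\mathrm{proj}}(U(C),U_x)\geq4\epsilon-\epsilon=3\epsilon.
\end{equation*}
If $\epsilon>0$, then $3\epsilon>\epsilon\geq d_{\mathrm{proj}}(U(C),U_x)$. So $x$ is the unique minimizer, and $\widehat x(C)=x$ whatever the tie rule.

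The only point that needs care is the degenerate case $\epsilon=0$, where the hypothesis "pairwise distance at least $4\epsilon$" is vacuous. In that case I read the hypothesis as the packing families actually supply it, namely pairwise distance strictly positive (as in \cref{def:packing-family}, where separation exceeds $2\epsilon$). Then $d_{\mathrm{proj}}(U(C),U_y)=d_{\mathrm{proj}}(U_x,U_y)>0=d_{\mathrm{proj}}(U(C),U_x)$, and uniqueness holds again. The same argument shows that the weaker separation $>2\epsilon$ suffices, which is the form the later cyclic results use. No step is a real obstacle; the substance is only the phase-invariance check in the triangle inequality.
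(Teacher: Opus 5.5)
Your proof is correct and follows the paper's argument: the true target lies within $\epsilon$, and the triangle inequality places every other target at distance at least $4\epsilon-\epsilon=3\epsilon$, so $x$ is the unique minimizer regardless of the tie rule. Your additional checks are sound refinements that the paper leaves implicit: that $d_{\mathrm{proj}}$ obeys the triangle inequality despite the phase infimum, that $\epsilon=0$ requires strictly positive separation, and that separation $>2\epsilon$ already suffices (the form used in \cref{thm:w1-unified-frontier}).
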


\begin{proof}
The true target is at distance at most $\epsilon$.  Every other target is
at distance at least $4\epsilon-\epsilon=3\epsilon$ by the triangle
inequality, so it cannot minimize.
\end{proof}

The following binary-coordinate version will be used for the masked-share
tradeoff, where $K=3$ denotes the sharing modulus but the aggregate alphabet
remains $\{0,1\}$.

\begin{lemma}[Bit-class separation]
\label{lem:bit-class-separation}
For the binary subfamily of \cref{def:packing-construction}, put
$\epsilon_0=\sin(\Delta/4)$.  For each $i \in [m]$, partition the family into
$\mathcal{F}_i^{(0)} = \{U_x : x_i = 0\}$ and
$\mathcal{F}_i^{(1)} = \{U_x : x_i = 1\}$. Then the inter-class distance
satisfies
\begin{equation}\label{eq:bit-class-sep}
\operatorname{dist}\!\bigl(\mathcal{F}_i^{(0)},\,\mathcal{F}_i^{(1)}\bigr)
:= \inf_{\substack{x_i=0 \\ y_i=1}} d_{\mathrm{proj}}(U_x, U_y)
\;\geq\; 2\epsilon_0.
\end{equation}
\end{lemma}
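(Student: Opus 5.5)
The plan is to reduce \eqref{eq:bit-class-sep} directly to \cref{lem:packing-separation}. The binary subfamily of \cref{def:packing-construction} is the case $x\in\{0,1\}^m\subseteq[K]_0^m$ with $K\geq2$ and $(K-1)\Delta\leq\pi$; for $K=2$ this needs only $\Delta\leq\pi$. Fix $i$ and take any $x$ with $x_i=0$ and $y$ with $y_i=1$. Then $x\neq y$, since the two vectors differ in coordinate $i$.

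Apply \cref{lem:packing-separation} with $j^*=i$. Its proof shows the following:
\begin{itemize}[leftmargin=1.5em]
\item $U_xU_y^\dagger$ is diagonal;
\item by $\Ftwo$-independence of the characters, there are two basis states whose character vectors differ only at coordinate $i$;
\item the eigenvalue arguments at those two states differ by $\Delta|s_i|=\Delta$, where $s=x-y$ and $|s_i|=1$;
\item since $\Delta\leq\pi$, no common global phase can bring both eigenvalues within circular distance less than $\Delta/2$ of $1$.
\end{itemize}
This gives $\|U_x-e^{i\phi}U_y\|_{\mathrm{op}}\geq2\sin(\Delta/4)$ for every $\phi$. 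Taking the infimum over $\phi$ yields $d_{\mathrm{proj}}(U_x,U_y)\geq2\sin(\Delta/4)=2\epsilon_0$.

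The bound is uniform over all such pairs $(x,y)$, so the infimum defining $\operatorname{dist}(\mathcal F_i^{(0)},\mathcal F_i^{(1)})$ is at least $2\epsilon_0$. The families are finite, so the infimum is actually a minimum.

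The one point that needs care is the regime of the subfamily. The text mentions the sharing modulus $Q=3$ with $\Delta=2\pi/3$, which is cyclic and not no-wrap. So I would check that the separation argument still works there. For binary aggregates the only aggregate difference is $s_i\in\{\pm1\}$. This corresponds to an angular gap $\Delta=2\pi/3\leq\pi$, so the no-wrap inequality $\Delta\leq|\Delta s_i|\leq\pi$ used in \cref{lem:packing-separation} holds regardless of the sharing modulus. Wrap-around of the shares does not matter, because by \cref{prop:family-aggregates} it only changes the global phase.
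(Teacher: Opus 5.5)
Your proposal is correct and follows the paper's proof. The paper likewise observes that $x_i\neq y_i$ forces $x\neq y$, invokes \cref{lem:packing-separation}, and passes to the infimum. Your extra check is accurate and worth keeping: the binary subfamily needs only $\Delta\leq\pi$, so $\Delta=2\pi/3$ with $\epsilon_0=1/2$ is admissible, which the later use in \cref{thm:main-tradeoff} relies on. The remark about wrap-around of the shares is unnecessary, however, since the lemma concerns only the targets $U_x$.
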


\begin{proof}
Any pair with $x_i \neq y_i$ has $x \neq y$, so
\cref{lem:packing-separation} gives $d_{\mathrm{proj}}(U_x, U_y) \geq 2\epsilon_0$.
Taking the infimum preserves the inequality.
\end{proof}

\begin{lemma}[Representation-invariant decoder]
\label{lem:decoder}
For each $i \in [m]$ and any circuit $C$, define
\begin{equation}\label{eq:decoder}
\delta_i(C) :=
\arg\min_{b \in \{0,1\}}\ \min_{y:\, y_i = b} d_{\mathrm{proj}}(U(C), U_y).
\end{equation}
(Bob's local computation is unbounded, so he may exhaustively search the
$2^{m-1}$ candidates of each class.) If
$d_{\mathrm{proj}}(U(C_x), U_x) \leq \epsilon < \epsilon_0$, then
$\delta_i(C_x) = x_i$.
\end{lemma}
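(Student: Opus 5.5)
The claim follows from the bit-class separation of \cref{lem:bit-class-separation} together with the triangle inequality for $d_{\mathrm{proj}}$. The argument parallels \cref{lem:nearest-decoder}, but it is applied to the two classes $\mathcal F_i^{(0)},\mathcal F_i^{(1)}$ rather than to individual codewords.

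First, I would check that $d_{\mathrm{proj}}$ satisfies the triangle inequality. Given $U,V,W$, choose phases $\phi,\psi$ attaining the infima in \cref{def:proj-norm}. Unitary invariance of the operator norm then gives
\begin{equation*}
 \|U-e^{i(\phi+\psi)}W\|_{\mathrm{op}}\leq\|U-e^{i\phi}V\|_{\mathrm{op}}+\|V-e^{i\psi}W\|_{\mathrm{op}} .
\end{equation*}
Taking the infimum on the left yields $d_{\mathrm{proj}}(U,W)\leq d_{\mathrm{proj}}(U,V)+d_{\mathrm{proj}}(V,W)$.

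Next, bound the two class distances. Write $b=x_i$ and $D_c:=\min_{y:\,y_i=c}d_{\mathrm{proj}}(U(C_x),U_y)$ for $c\in\{0,1\}$; each minimum is over a finite set, so it is attained. Since $U_x$ itself lies in class $b$, we get $D_b\leq\epsilon$. For any $y$ with $y_i=1-b$, the triangle inequality and \cref{lem:bit-class-separation} give
\begin{equation*}
 d_{\mathrm{proj}}(U(C_x),U_y)\geq d_{\mathrm{proj}}(U_x,U_y)-\epsilon\geq2\epsilon_0-\epsilon .
\end{equation*}
Hence $D_{1-b}\geq2\epsilon_0-\epsilon$.

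Finally, the hypothesis $\epsilon<\epsilon_0$ gives $2\epsilon_0-\epsilon>\epsilon$, so $D_{1-b}>\epsilon\geq D_b$. The inequality is strict, so the argmin in \eqref{eq:decoder} is unique and no tie rule is needed; it returns $\delta_i(C_x)=b=x_i$. The decoder depends only on the unitary $U(C_x)$, which is why it is representation-invariant.

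The only step that needs care is the triangle inequality for the projective distance with the global-phase infimum. Once that is established, everything else is direct comparison.
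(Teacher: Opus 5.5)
Your proof is correct and follows essentially the paper's argument. The true class has inner minimum at most $\epsilon$, and the triangle inequality together with \cref{lem:bit-class-separation} forces the other class strictly above $\epsilon$; the paper phrases this step as a contradiction, you state it directly, and your explicit check of the triangle inequality for $d_{\mathrm{proj}}$ supplies a step the paper uses without comment.
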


\begin{proof}
The correct class $b = x_i$ contains $U_x$ itself, so its inner minimum is
at most $\epsilon$. Suppose the wrong class $b \neq x_i$ also attained an
inner minimum $\leq \epsilon$, realized by some $U_y$ with $y_i = b$. Then
the triangle inequality gives
$d_{\mathrm{proj}}(U_x, U_y) \leq 2\epsilon < 2\epsilon_0$, contradicting
\cref{lem:bit-class-separation}. Hence only the correct class attains the
minimum, i.e.\ $\delta_i(C_x) = x_i$.
\end{proof}

\subsection{Output-Description Lower Bound}

\begin{theorem}[Randomized $\epsilon$--description tradeoff]
\label{thm:output-counting}
Let $0<\epsilon\leq1/8$, let $\mathcal F_{m,\epsilon}$ be the
$K_\epsilon$-ary family of \cref{prop:epsilon-packing}, and let a possibly
randomized compiler $\mathcal A$ output $C_{x,\rho}$ satisfying, for every
$x$,
\begin{equation}\label{eq:random-success}
 \Pr_\rho\!\left[
 d_{\mathrm{proj}}(U(C_{x,\rho}),U_x)\leq\epsilon
 \right]\geq1-\delta,
 \qquad 0\leq\delta<1/2.
\end{equation}
The compiler may use random-access RAM/DAG IR, hash tables, caches, and a
dynamic pass count $p=p(m,x,\rho)$ as in \cref{def:compiler}.  Writing
$h_2(\delta)=-\delta\log_2\delta-(1-\delta)\log_2(1-\delta)$ with
$0\log 0:=0$, each output
code separately obeys
\begin{equation}\label{eq:output-counting}
\begin{aligned}
 \max_x\mathbb E_\rho
 B_{\mathrm{rel}}(C_{x,\rho}\mid\mathcal F_{m,\epsilon})
 &\geq(1-\delta)m\log_2K_\epsilon-h_2(\delta),\\
 \max_x\mathbb E_\rho B_{\mathrm{self}}(C_{x,\rho})
 &\geq(1-\delta)m\log_2K_\epsilon-h_2(\delta).
\end{aligned}
\end{equation}
In particular, for fixed $\delta<1/2$ both are
$\Omega(m\log(1/\epsilon))$.

At $\delta=0$, the family-relative bound is tight to lower-order framing
terms: a deterministic semantic compiler uses
\begin{equation}\label{eq:relative-epsilon-upper}
 B_{\mathrm{rel}}(C_x\mid\mathcal F_{m,\epsilon})
 \leq\lceil m\log_2K_\epsilon\rceil
       +O(\log m+\log K_\epsilon).
\end{equation}
Separately, for $P_j=Z_j$ and $RZ\in B$, the explicit self-contained
circuit satisfies
$B_{\mathrm{self}}(C_x)=O(m(\log m+\log K_\epsilon))$ and $G(C_x)\leq m$;
no self-contained-bit or gate-count tightness claim is made.
\end{theorem}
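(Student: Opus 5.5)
The lower bound is a Fano argument applied to the output description, made representation-free by routing every output through the nearest-packing decoder of \cref{lem:nearest-decoder}. Draw $X$ uniformly from $[K_\epsilon]_0^m$, so $H(X)=m\log_2K_\epsilon$, and let $C=C_{X,\rho}$ with $\rho$ independent of $X$. By \cref{prop:epsilon-packing} the family is pairwise $4\epsilon$-separated, so $\widehat X:=\widehat x(C)$ equals $X$ whenever the success event in \eqref{eq:random-success} holds. Hence $\Pr[\widehat X\neq X]\leq\delta$ after averaging over $X$. Fano's inequality in the weak form gives
\begin{equation*}
 H(X\mid\widehat X)\leq h_2(\delta)+\delta\log_2(K_\epsilon^m-1),
\end{equation*}
and therefore
\begin{equation*}
 I(X;\widehat X)\geq(1-\delta)m\log_2K_\epsilon-h_2(\delta).
\end{equation*}
Here I use $\delta<1/2$ to keep $h_2$ monotone.

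The second step converts this mutual information into expected code length. Fix one output code $E\in\{E_{\mathrm{self}},E_{\mathrm{rel}}(\cdot\mid\mathcal F_{m,\epsilon})\}$. Since $\widehat X$ is a function of $U(C)$, which is a function of $C$, and $C$ is recoverable from $E(C)$, data processing gives $I(X;\widehat X)\leq I(X;E(C))\leq H(E(C)\mid\rho)$. The conditioning on $\rho$ is legitimate: $\rho$ is independent of $X$, so $I(X;E(C))\leq I(X;E(C)\mid\rho)$.

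For each fixed $\rho$, the map $x\mapsto E(C_{x,\rho})$ has prefix-free range, because both codes of \cref{def:info-budget-bits} are prefix-free. Kraft's inequality together with the source-coding bound then gives
\begin{equation*}
 H(E(C)\mid\rho=\rho_0)\leq\mathbb E_X|E(C_{X,\rho_0})|.
\end{equation*}
Averaging over $\rho$ and bounding the average over $X$ by the maximum yields $\max_x\mathbb E_\rho|E(C_{x,\rho})|\geq(1-\delta)m\log_2K_\epsilon-h_2(\delta)$, separately for each code.

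Dynamic passes, RAM, and IR play no role, since only the final output enters this argument. The $\Omega(m\log(1/\epsilon))$ consequence follows from $K_\epsilon=\Theta(1/\epsilon)$.

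I expect the main obstacle to be the Kraft step for $E_{\mathrm{rel}}$. That code has four modes, and the argument needs the whole code, across all modes, to be a single prefix-free code on circuits. I would check this against the self-delimiting headers in \cref{def:info-budget-bits}. The mode tag and header make the union prefix-free, so entropy is at most expected length, with no extra $\log 4$ loss.

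For the upper bound \eqref{eq:relative-epsilon-upper}, I would exhibit the deterministic semantic compiler that emits aggregate mode (i). This is a self-delimiting header encoding $m$ and $K_\epsilon$ in $O(\log m+\log K_\epsilon)$ bits, followed by the integer $\sum_jx_jK_\epsilon^{j-1}<K_\epsilon^m$ written in $\lceil m\log_2K_\epsilon\rceil$ bits. It has zero error because it names $U_x$ exactly.

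For the self-contained bound, I would emit the $m$ gates $RZ_j(\Delta_\epsilon x_j)$, omitting those with $x_j=0$, so $G(C_x)\leq m$. Each token, under \cref{def:token-codec}, costs $O(\log m)$ for the opcode and address. Its parameter $\tfrac{a}{d}\pi$, with $a=x_j$ and $d=2L_\epsilon$ up to reduction, costs $O(\log K_\epsilon)$ under \eqref{eq:precision-charge}. Summing over the $m$ tokens gives $O(m(\log m+\log K_\epsilon))$.
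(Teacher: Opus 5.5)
Your proof is correct and follows the paper's argument: a uniform $X$, the nearest-packing decoder of \cref{lem:nearest-decoder}, Fano's inequality, and a source-coding bound applied separately to each prefix-free output code, together with the same aggregate-mode and $RZ$-circuit constructions for the upper bounds. The only difference is cosmetic: you condition on the seed and apply Kraft for each fixed $\rho$, whereas the paper uses the unconditional chain $\mathbb E|E(C)|\geq H(C)\geq I(X;C)$; both routes are valid.
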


\begin{proof}
Let $X$ be uniform on $[K_\epsilon]_0^m$ and let $C=C_{X,\rho}$.
By \eqref{eq:epsilon-separation}, the nearest-packing decoder of
\cref{lem:nearest-decoder} has error probability at most $\delta$.  Fano's
inequality gives
\begin{equation*}
 H(X\mid C)\leq h_2(\delta)
       +\delta\log_2(K_\epsilon^m-1),
\end{equation*}
and hence
$I(X;C)\geq(1-\delta)m\log_2K_\epsilon-h_2(\delta)$.
For either prefix-free code $E$, source coding gives
$\mathbb E|E(C)|\geq H(C)\geq I(X;C)$.  The maximum conditional expected
length over $x$ is at least this uniform average, proving each line of
\eqref{eq:output-counting}.  This argument depends only on the output
distribution, so random access, dynamic passes, and internal IR do not
weaken it.

For \eqref{eq:relative-epsilon-upper}, the public dictionary supplies
$(P_1,\ldots,P_m,K_\epsilon,\Delta_\epsilon,B)$ and aggregate mode packs
the base-$K_\epsilon$ vector into one fixed-length integer block.  Its
header serializes $m,K_\epsilon$, the mode, and the block boundary.  In the
canonical self-contained construction, each nonzero $Z_j$ term has an
$O(\log m)$ address and an $O(\log K_\epsilon)$ rational-$\pi$ parameter
under \eqref{eq:precision-charge}; there are at most $m$ terms.
\end{proof}

\begin{corollary}[Deterministic fixed-grid specialization]
\label{cor:fixed-grid-output}
For any no-wrap $(K,\Delta)$ family of
\cref{def:packing-construction}, a deterministic compiler accurate to
$\epsilon<\sin(\Delta/4)$ has, separately,
\begin{equation*}
 B_{\mathrm{rel}}^{\mathrm{out}}\geq\left\lceil m\log_2K\right\rceil,
 \qquad
 B_{\mathrm{self}}^{\mathrm{out}}\geq\left\lceil m\log_2K\right\rceil.
\end{equation*}
In particular, the binary fixed-step family yields the linear-in-$m$ bound used in the
configured scaling panel.
\end{corollary}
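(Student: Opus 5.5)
The bound follows from the deterministic zero-error case of \cref{thm:output-counting}, replacing the Fano step by a pure pigeonhole count. Since $\epsilon<\sin(\Delta/4)$, \cref{lem:packing-separation} gives pairwise separation $d_{\mathrm{proj}}(U_x,U_y)\geq2\sin(\Delta/4)>2\epsilon$ for all $x\neq y\in[K]_0^m$. For the decoding step, \cref{lem:nearest-decoder} is stated with separation $4\epsilon$, so it does not apply verbatim. Instead we argue directly: if $C_x=C_y$ for some $x\neq y$, then the triangle inequality gives $d_{\mathrm{proj}}(U_x,U_y)\leq2\epsilon$, contradicting the separation. This is the same argument used in \cref{lem:decoder}, with the whole vector in place of one bit. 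Hence the deterministic output map $x\mapsto C_x$ is injective on the $K^m$ targets.

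Next fix either prefix-free code, $E_{\mathrm{rel}}(\cdot\mid\mathcal F_m)$ (one public dictionary for the whole family) or $E_{\mathrm{self}}$. Composing with the injection gives $K^m$ distinct codewords. Suppose every codeword had length at most $\ell$. The number of binary strings of length at most $\ell$ that are pairwise prefix-free is at most $2^\ell$; equivalently, Kraft gives $K^m2^{-\ell}\leq\sum 2^{-|w|}\leq1$. Therefore $\ell\geq\log_2K^m=m\log_2K$. Since $\ell$ is an integer, $\ell\geq\lceil m\log_2K\rceil$. The worst-case output $B^{\mathrm{out}}$ is at least this maximum length, which gives both bounds separately. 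Neither measure is converted into the other.

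The binary statement is then the case $K=2$, giving $B^{\mathrm{out}}\geq m$. The argument is insensitive to internal RAM, IR, or the number of passes, because it uses only the final output map.

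There is essentially no obstacle. The only care needed is that the relative code must use a single fixed dictionary, so that one prefix-free code serves all $x$; this matches the caveat in \cref{cor:counting-toll}. One also needs the threshold $\epsilon<\sin(\Delta/4)$ rather than the $4\epsilon$ separation assumed in \cref{lem:nearest-decoder}, and the direct injectivity argument above handles this.
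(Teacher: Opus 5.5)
Your proposal is correct and follows the paper's proof: the separation $2\sin(\Delta/4)>2\epsilon$ from \cref{lem:packing-separation} makes the $\epsilon$-balls disjoint, so the deterministic output map is injective on the $K^m$ targets. Kraft's inequality under either fixed prefix-free code then gives $K^m2^{-\max_x\ell_x}\leq1$, and hence both rounded bounds; your direct triangle-inequality check, used in place of the $4\epsilon$ nearest-packing decoder, is precisely the paper's ``accuracy balls are disjoint'' step.
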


\begin{proof}
Accuracy balls are disjoint by \cref{lem:packing-separation}, so the
deterministic output map is injective on $K^m$ inputs.  If $\ell_x$ are
their lengths under either prefix-free code, Kraft's inequality gives
$\sum_x2^{-\ell_x}\leq1$.  Thus
$K^m2^{-\max_x\ell_x}\leq1$, proving the two rounded bounds.
\end{proof}

\begin{remark}[What the counting bound does and does not say]
\label{rem:pass-through}
\Cref{eq:output-counting} holds regardless of internal resources, but the
two inequalities must not be cross-compared.  The packed $K_\epsilon$-ary
block in \eqref{eq:relative-epsilon-upper} is an upper bound only for
$B_{\mathrm{rel}}(\cdot\mid\mathcal F_{m,\epsilon})$; it is not a
self-contained target-basis circuit string.  Likewise $G(C)$ is never
multiplied by an implicit ``bits per gate'' constant.  The
representation-dependent question is addressed by the masked-share cut
below under one fixed family-relative output code.
\end{remark}

\subsection{The Masked-Share Compact-Output--Cut Tradeoff}

The semantic input supplies the aggregate vector $x$.  The flat input of
\cref{def:masked-share} instead supplies an arbitrary first share
$u\in\mathbb Z_3^m$ and only later a second share $v$ satisfying
$u+v=x\pmod 3$.  The cut between these rounds exposes information that is
specific to the representation: a correct compiler must preserve enough
about $u$ to handle every still-valid continuation $v$.

\begin{definition}[Masked-share cut and precommitted output]
\label{def:cuts}
Run $\mathcal A$ on a masked-share stream and let $c_{\mathrm{MS}}$ be the
arrival cut after all $m$ first-round updates have been delivered and before
any second-round update is available.  Before this cut the compiler may use
arbitrary RAM/DAG operations and a dynamic number of internal scans.  A
buffered or rereadable first-round record is included in
$B_{\mathrm{store}}$ or $B_{\mathrm{IR}}$; the only excluded prefix is one
irreversibly committed to the final write-once output.  Let
\begin{equation}\label{eq:Bcom-def}
\begin{aligned}
 B_{\mathrm{pre}}^{\mathrm{rel}}(\mathcal A,m\mid\mathcal F_m)
 &:=\max_{u\in\mathbb Z_3^m}
 \ell_{\mathrm{pre}}(u),\\
 B_{\mathrm{cut}}^{\mathrm{MS}}(\mathcal A,m)
 &:=\max_{u\in\mathbb Z_3^m}B_{\mathrm{cut}}(c_{\mathrm{MS}}(u)).
\end{aligned}
\end{equation}
Here $\ell_{\mathrm{pre}}(u)$ is the literal bit length already committed
under the prefix-monotone stream $E_{\mathrm{rel}}$; it is zero if final
output has not started.  Thus
$B_{\mathrm{pre}}^{\mathrm{rel}}\leq
B_{\mathrm{rel}}^{\mathrm{out}}$.
For randomized $\mathcal A$, barred quantities replace the maxima in
\eqref{eq:Bcom-def} by
$\max_u\mathbb E_\rho[\ell_{\mathrm{pre}}(u,\rho)]$ and
$\max_u\mathbb E_\rho[B_{\mathrm{cut}}(c_{\mathrm{MS}}(u,\rho))]$.
\end{definition}

\begin{theorem}[Representation-dependent masked-share tradeoff]
\label{thm:main-tradeoff}
Let the sharing modulus be $K_{\mathrm{share}}=3$, let $\alpha=2\pi/3$, and
$\epsilon_0=\sin(\pi/6)=1/2$.  Suppose first that a deterministic compiler
$\mathcal A$ is correct for \emph{every} valid two-round decomposition in
\cref{def:masked-share}:
\begin{equation*}
\begin{gathered}
 d_{\mathrm{proj}}(U(C(u,v)),U_{u+v\bmod 3})
 \leq\epsilon<\epsilon_0,\\
 u+v\bmod3\in\{0,1\}^m.
\end{gathered}
\end{equation*}
Then, under the single family-relative output code of
\cref{def:info-budget-bits},
\begin{equation}\label{eq:main-tradeoff}
 B_{\mathrm{pre}}^{\mathrm{rel}}(\mathcal A,m\mid\mathcal F_m)
 +B_{\mathrm{cut}}^{\mathrm{MS}}(\mathcal A,m)
 \geq \left\lceil m\log_2 3\right\rceil.
\end{equation}
Consequently, if the final output is family-relative compact,
\begin{equation*}
 B_{\mathrm{rel}}^{\mathrm{out}}(\mathcal A,m\mid\mathcal F_m)
 \leq m+c\log_2(m+2),
\end{equation*}
then
\begin{equation}\label{eq:compact-cut-floor}
 B_{\mathrm{cut}}^{\mathrm{MS}}(\mathcal A,m)
 \geq \left\lceil m\log_2 3\right\rceil
       -m-c\log_2(m+2)=\Omega(m).
\end{equation}
In contrast, a one-pass compiler receiving the semantic aggregate vector
$x$ streams the same aggregate-mode code with
$B_{\mathrm{rel}}^{\mathrm{out}}=m+O(\log m)$ and
$B_{\mathrm{cut}}=O(\log m)$.

More generally, let $0\leq\delta<1/2$ and let $\mathcal A$ be randomized
and satisfy the displayed
accuracy condition on every valid $(u,v)$ with probability at least
$1-\delta$.  It may use arbitrary random access and a dynamic
$p=p(m,u,v,\rho)$.  Then
\begin{equation}\label{eq:randomized-main-tradeoff}
 \overline B_{\mathrm{pre}}^{\mathrm{rel}}
 +\overline B_{\mathrm{cut}}^{\mathrm{MS}}
 \geq(1-\delta)m-h_2(\delta).
\end{equation}
In particular, for a deferred aggregate-mode compiler with
$\overline B_{\mathrm{pre}}^{\mathrm{rel}}=O(\log m)$ and fixed
$\delta<1/2$, the flat representation still requires
$\overline B_{\mathrm{cut}}^{\mathrm{MS}}=\Omega(m)$.
\end{theorem}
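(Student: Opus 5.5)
The plan is a one-way communication (fooling-set) argument at the cut $c_{\mathrm{MS}}$, with Kraft's inequality for the deterministic bound and a Fano/source-coding chain for the randomized one. The central claim is an injectivity statement. Let $\Phi(u):=(O_{\mathrm{pre}}(u),\mathsf{ser}(c_{\mathrm{MS}}(u)))$, where $O_{\mathrm{pre}}(u)$ is the committed prefix and $\mathsf{ser}$ is the five-field restart snapshot of \cref{def:compiler}. I will show that $\Phi$ is injective on all $3^m$ first shares $u\in\mathbb Z_3^m$.

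For injectivity, take $u\neq u'$ and choose a coordinate $i$ with $u_i\neq u'_i$. Build a common second share $v$ as follows.
\begin{itemize}
\item For $j\neq i$, set $v_j:=-u_j\bmod 3$, so that $(u+v)_j=0$. Both completions then agree off $i$, but we only need each completion to be valid.
\item For $j\neq i$, the value $u'_j+v_j$ may equal $2$. In that case the pair is invalid, so the choice must be made with more care.
\item The better choice is to pick each $v_j$ so that both $u_j+v_j$ and $u'_j+v_j$ lie in $\{0,1\}$. This is possible exactly when $u_j,u'_j$ differ by at most one step, i.e.\ when $\{u_j,u'_j\}$ is not $\{0,2\}$ read as differing by $2$. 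Since $\mathbb Z_3$ is cyclic, any two residues differ by $\pm1$ or are equal.
\item Hence every coordinate admits $v_j$ with both sums in $\{0,1\}$. At $j=i$ the two sums differ, so one is $0$ and the other is $1$.
\end{itemize}
The two completed targets $x=u+v$ and $x'=u'+v$ are then both valid and differ in coordinate $i$. By \cref{lem:bit-class-separation} (with $\Delta=\alpha$, $\epsilon_0=1/2$) and \cref{lem:decoder}, a correct compiler must produce outputs whose decoded bit $i$ differs, so the final outputs differ. The suffix computation is a deterministic function of the snapshot and $v$, and the final output is prefix $\Vert$ suffix output. Therefore equal $\Phi$ values would force equal outputs, a contradiction, and $\Phi$ is injective.

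Next comes the Kraft step. The prefix-monotone code $E_{\mathrm{rel}}$ and the declared serializers are prefix-free, so the concatenated pair is uniquely decodable with length $\ell_{\mathrm{pre}}(u)+B_{\mathrm{cut}}(c_{\mathrm{MS}}(u))$. Kraft over the $3^m$ distinct codewords gives $3^m2^{-\max}\le1$, which is \eqref{eq:main-tradeoff}. Subtracting the compact-output cap, using $B^{\mathrm{rel}}_{\mathrm{pre}}\le B^{\mathrm{rel,out}}$, yields \eqref{eq:compact-cut-floor}. The semantic upper bound is direct: stream the digits of $x$ in aggregate mode, keeping only a counter and header of $O(\log m)$ bits.

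For the randomized statement, I will not use the full $3^m$ fooling set, since Fano then only gives $m$ bits. Instead, draw $X$ uniform on $\{0,1\}^m$, $U$ uniform on $\mathbb Z_3^m$, and $V=X-U$, so that $U$ is independent of $X$. Let $M=\Phi(U,\rho)$. The final output is a function of $(M,V,\rho_{\mathrm{suffix}})$. Since $\rho$ is part of the control field, this is in fact $(M,V)$ together with fresh randomness independent of $X$ given $(M,V)$. The nearest-packing/bit decoder recovers $X$ with error at most $\delta$, so Fano gives $I(X;M,V)\ge(1-\delta)m-h_2(\delta)$. Because $V$ is independent of $X$ (it is uniform for each $x$), $I(X;M,V)\le I(X;M\mid V)\le H(M)\le\mathbb E|M|$. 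Finally, bound $\mathbb E|M|$ by the sum of the worst-input expectations.

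The main obstacle is this last chain: making sure the suffix randomness and the random tape are correctly accounted for inside the snapshot, and that the independence $V\perp X$ justifies dropping $V$.
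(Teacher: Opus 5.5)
Your argument is the paper's proof. It uses the same coordinatewise common continuation, relying on the fact that distinct residues of $\mathbb Z_3$ differ by $\pm1$. It uses the same restart-determinism step to make $u\mapsto(O_{\mathrm{pre}}(u),\mathsf{ser}(c_{\mathrm{MS}}(u)))$ injective, followed by Kraft. For the randomized bound it uses the same source: $X$ uniform on $\{0,1\}^m$, $U$ uniform on $\mathbb Z_3^m$, and $V=X-U$, with conditional Fano. Your identity $I(X;M,V)=I(X;M\mid V)$ from $V\perp X$ is the paper's $H(X\mid V)=m$ step in another form.

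There is one step where your stated reason does not work: the encoding of the pair as a single string. You say the concatenation is uniquely decodable because $E_{\mathrm{rel}}$ and the serializers are prefix-free. But $O_{\mathrm{pre}}(u)$ is only a prefix of an $E_{\mathrm{rel}}$ codeword, so the committed prefixes for different first shares need not form a prefix-free set; one $u$ may commit $0$ and another $01$.

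If the prefix is placed first, an injective pair can collide as a string. For example, $0\Vert110=01\Vert10$, even though $\{110,10\}$ is prefix-free. In that case neither the Kraft step nor your source-coding step $H(M)\le\mathbb E|M|$ applies.

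The paper's repair is to send the snapshot first. The control field records the emitted-prefix length, so a decoder parses the self-delimiting snapshot, reads $\ell_{\mathrm{pre}}$, and then reads exactly that many further bits. The message $\mathsf{ser}(c_{\mathrm{MS}}(u))\Vert O_{\mathrm{pre}}(u)$ is then prefix-free, with length $\ell_{\mathrm{pre}}(u)+B_{\mathrm{cut}}(c_{\mathrm{MS}}(u))$. You need this ordering in both the deterministic Kraft argument and the randomized entropy bound. With it, the rest of your proof goes through as written.
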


\begin{proof}
Let Alice hold an arbitrary first share $u\in\mathbb Z_3^m$.  She simulates
$\mathcal A$ through $c_{\mathrm{MS}}$ and sends the already committed
family-relative output prefix together with the complete five-field restart
serialization of \eqref{eq:info-budget}.  All head positions, lengths,
addresses, parameters, stores, windows, and materialized IR prefixes needed
to resume are present by definition, including any rereadable old input.
Bob, given a second share $v$, resumes the computation and runs its
dynamically chosen remaining operations locally.

The message must distinguish every pair $u\neq u'$.  To see this, choose a
single continuation $v$ coordinatewise.  If $u_j=u'_j$, set
$v_j=-u_j$.  If $u'_j=u_j+1\pmod3$, again set $v_j=-u_j$; the two aggregates
are then $0$ and $1$.  In the remaining case
$u_j=u'_j+1\pmod3$, set $v_j=-u'_j$; the aggregates are $1$ and $0$.
Thus both $(u,v)$ and $(u',v)$ are valid, binary aggregate instances, and
their aggregate vectors differ whenever $u\neq u'$.

If Alice's two messages were identical, determinism would make Bob emit the
same final circuit on these two valid completions.  That circuit cannot be
within $\epsilon$ of both targets, because
\cref{lem:packing-separation} gives target distance at least
$2\epsilon_0>2\epsilon$.  Hence the message map is injective on the
$3^m$ choices of $u$.  Serialize the restart snapshot first; its control
field gives the length of the committed prefix that follows, so the complete
messages are prefix-free.  Kraft's inequality therefore gives maximum
message length at least $\lceil m\log_2 3\rceil$.  The sum of the two
separate worst-case field lengths is at least that maximum, which proves
\eqref{eq:main-tradeoff}.  Prefix monotonicity and the assumed
upper bound on the final output give \eqref{eq:compact-cut-floor}.

For the semantic upper bound, the compiler writes the aggregate-mode header,
copies the arriving bits $x_j$ into the support-mask field, and retains only
a self-delimiting index and finite control.  It stores no coefficient table
or intermediate graph.  The sparse parameter payload is fixed and the live
address/index costs $O(\log m)$, so every cut has
$B_{\mathrm{cut}}=O(\log m)$ under the same serializer.

For the randomized extension, draw $U$ uniformly from $\mathbb Z_3^m$ and
$X$ uniformly from $\{0,1\}^m$, independently, and give Bob
$V=X-U\pmod3$.  Then $X$ and $V$ are independent and $H(X\mid V)=m$.
Alice's message $M$ consists of the committed prefix and the full restart
snapshot after processing $U$.  The snapshot includes the seed/tape and its
cursor, RAM, caches, old-input buffers, and IR, so Bob can resume the exact
randomized run upon receiving $V$.  The binary decoder of
\cref{lem:decoder} recovers $X$ with error at most $\delta$.  Conditional
Fano therefore gives
\begin{align*}
 H(X\mid M,V)&\leq h_2(\delta)+\delta m,\\
 I(X;M\mid V)&\geq(1-\delta)m-h_2(\delta).
\end{align*}
The two message fields are self-delimiting, so source coding lower-bounds
their expected total length by this conditional information, yielding
\eqref{eq:randomized-main-tradeoff}.  This is a conditional-Fano
(equivalently, distributional/Yao) argument and does not impose a fixed pass
count or sequential internal memory.
\end{proof}

\begin{proposition}[Memory-capped hybrid compilers]
\label{prop:hybrid-upper}
For every $\beta\in[0,1]$, let $h=\lfloor\beta m\rfloor$.  There is a
deterministic zero-error compiler for the masked-share stream, using the
ordered-share mode of $E_{\mathrm{rel}}$, such that at the inter-round cut
\begin{equation}\label{eq:hybrid-upper}
\begin{aligned}
 B_{\mathrm{cut}}^{\mathrm{MS}}
 &\leq h\log_2 3+O(\log m),\\
 B_{\mathrm{pre}}^{\mathrm{rel}}
 &\leq(m-h)\log_2 3+O(\log m).
\end{aligned}
\end{equation}
Thus \eqref{eq:main-tradeoff} is matched to leading order for every memory
fraction $\beta$ under this single family-relative encoding.  This statement
does not assert a matching curve for $B_{\mathrm{self}}$ or $G$.
\end{proposition}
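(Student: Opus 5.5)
The proof is an explicit construction: a coordinate-split compiler that buffers the first $h$ generators' shares in state and streams the remaining $m-h$ shares straight into the committed output. I then account for both budgets under the ordered-share mode (iii) of $E_{\mathrm{rel}}$ in \cref{def:info-budget-bits}.

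The construction is as follows. The compiler designates a public split point $h=\lfloor\beta m\rfloor$ and immediately emits the ordered-share header declaring $h$, which costs $O(\log m)$ bits. During round one, on reading $\mathsf{Update}(1,j,u_j)$ it acts according to $j$. For $j>h$ it appends $u_j$ to a packed base-$3$ residue block in the committed output. The fixed public round-major schedule lets the generator address be omitted, so these residue slots hold $\lceil (m-h)\log_2 3\rceil$ bits plus block delimiters. For $j\le h$ it stores $u_j$ in a packed base-$3$ integer in $\mathsf{ser}_{\mathrm{store}}$, i.e.\ $\lceil h\log_2 3\rceil$ bits.

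At the cut $c_{\mathrm{MS}}$ I then bound every field of \eqref{eq:info-budget}:
\begin{itemize}
\item control holds the pass ID, the heads, the emitted-prefix length and the index $j$, each $O(\log m)$;
\item the window is empty or a single token of $O(\log m)$ bits;
\item the parameter field holds only the fixed public $\alpha$, $O(1)$;
\item the IR field is empty, since no rereadable input is kept beyond the packed store.
\end{itemize}
This gives the first line of \eqref{eq:hybrid-upper}, and the committed length gives the second.

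In round two, for $j\le h$ the compiler adds $v_j$ to the stored $u_j$ modulo $3$ and emits the aggregate digit $x_j$. For $j>h$ it emits $v_j$ in the second residue block. Both can be done streaming with $O(\log m)$ extra state, because arithmetic on the packed store digit by digit costs nothing beyond the store already charged. A decoder recovers $x_j=u_j+v_j \bmod 3$ on the passthrough coordinates and reads $x_j$ directly on the others. The mode therefore names $U_{x}$ exactly, and by \cref{prop:family-aggregates} the compiler has zero error for every valid $(u,v)$.

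Leading-order matching then follows by summing the two lines of \eqref{eq:hybrid-upper}, which gives $m\log_2 3+O(\log m)$ against the lower bound $\lceil m\log_2 3\rceil$ of \cref{thm:main-tradeoff}. This holds for every $\beta$, and the disclaimer about $B_{\mathrm{self}}$ and $G$ needs no proof.

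The main obstacle I expect is the bookkeeping. It must be checked that mode (iii) genuinely allows the addresses to be omitted and that the packed block delimiters cost only $O(\log m)$. It must also be checked that the stored packed residues are charged once, in store and not again as parameters, which follows from the disjointness stated in \cref{rem:escape-channels}. A further point is that packing digits into a base-$3$ integer on the fly does not require a rewrite of already committed output. For the committed block I would avoid this issue by emitting fixed-width chunks, grouping $5$ trits into $8$ bits for example. That yields $(m-h)\log_2 3\,(1+o(1))$ rather than an exact ceiling, which is still leading order.
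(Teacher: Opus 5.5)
Your construction matches the paper's. You split at $h$, keep $u_1,\ldots,u_h$ as a packed base-three RAM block, and commit $u_{h+1},\ldots,u_m$ in public generator order under the ordered-share header. In round two you emit aggregates for $j\le h$ and the mates $v_j$ for $j>h$, with zero error from \cref{prop:family-aggregates}. Your field-by-field accounting at $c_{\mathrm{MS}}$ is what the paper's one-line ``fixed-order packing gives the two bounds'' leaves implicit.

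\textbf{The gap.} The step that fails is the repair you adopt for the write-once packing issue; the issue itself is legitimate, and the paper's sketch passes over it. Fixed chunks of $5$ trits in $8$ bits cost $1.6$ bits per trit against $\log_2 3\approx1.585$. That is a fixed factor $8/(5\log_2 3)\approx1.009$, not $1+o(1)$. The committed block therefore overshoots $(m-h)\log_2 3$ by about $0.015(m-h)$ bits, which is $\Theta(m)$ whenever $\beta$ is bounded away from $1$. Two consequences follow:
\begin{itemize}
\item the second line of \eqref{eq:hybrid-upper}, with its $O(\log m)$ slack, is not established;
\item your fallback claim of leading-order matching also fails, because for $\beta<1$ the ratio of your upper envelope to $\lceil m\log_2 3\rceil$ does not tend to $1$.
\end{itemize}
No fixed chunk length $c$ can work. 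Since $3^c$ is never a power of two, $\lceil c\log_2 3\rceil/c>\log_2 3$ strictly, and the loss is always linear in $m-h$.

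\textbf{The repair.} Both $B_{\mathrm{cut}}^{\mathrm{MS}}$ and $B_{\mathrm{pre}}^{\mathrm{rel}}$ are evaluated only at the single cut $c_{\mathrm{MS}}$ (\cref{def:cuts}), and $m$ is public through $\mathcal F_m$. So the compiler may keep the passthrough trits in RAM during round one. It then writes the packed integer of $\lceil(m-h)\log_2 3\rceil$ bits when it consumes the $m$-th round-one record, before the cut. This gives exactly the bounds you first wrote down.

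If you also want small state at every earlier cut, in the spirit of ``memory-capped'', there are two streaming options:
\begin{itemize}
\item Use chunks whose length grows with $m$. Each wastes at most one bit, so length about $m/\log_2 m$ gives $O(\log m)$ total loss. The price is an $O(m/\log m)$-bit transient buffer, flushed at the $m$-th record.
\item Use an integer arithmetic coder with $\log_2 m+O(1)$-bit registers and an underflow counter. It streams with $O(\log m)$ state and overshoots $(m-h)\log_2 3$ by only $O(1)$ bits.
\end{itemize}
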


\begin{proof}
The ordered-share header records $h$.  During round one the compiler packs
$u_1,\ldots,u_h$ into a base-three RAM block and immediately commits
$u_{h+1},\ldots,u_m$ in their public generator order.  It needs only the
packed block, an index, and framing state at the cut.  During round two it
emits the aggregates $u_j+v_j\pmod3$ for $j\leq h$ and the separate residues
$v_j$ for $j>h$.  The latter combine with the already emitted first shares;
all rotations commute, and \cref{prop:family-aggregates} gives the target up
to global phase.  Fixed-order packing gives the two bounds in
\eqref{eq:hybrid-upper}; no qubit address is hidden, because the ordered-share
mode declares its public schedule and split in its header.
\end{proof}

\begin{corollary}[Compactness versus flat-share recovery]
\label{cor:output-inflation}
On the masked-share representation, any compiler with
$B_{\mathrm{cut}}^{\mathrm{MS}}=o(m)$ must have
\begin{equation*}
 B_{\mathrm{rel}}^{\mathrm{out}}(\mathcal A,m\mid\mathcal F_m)
 \geq m\log_2 3-o(m),
\end{equation*}
whereas family-relative optimal output $m+O(\log m)$ requires at least the linear
cut budget of \eqref{eq:compact-cut-floor}.  This is a representation
tradeoff: the semantic aggregate stream attains both compact output and
$O(\log m)$ cut budget.
\end{corollary}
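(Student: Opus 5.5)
The corollary has two directions, and I would derive both from \cref{thm:main-tradeoff} together with prefix monotonicity. Nothing new about the compiler model is needed.

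\emph{First direction.} Start from the deterministic inequality \eqref{eq:main-tradeoff}:
\[
B_{\mathrm{pre}}^{\mathrm{rel}}+B_{\mathrm{cut}}^{\mathrm{MS}}\geq\lceil m\log_23\rceil .
\]
\Cref{def:cuts} gives $B_{\mathrm{pre}}^{\mathrm{rel}}\leq B_{\mathrm{rel}}^{\mathrm{out}}$, because the committed prefix is a prefix of the final prefix-monotone $E_{\mathrm{rel}}$ stream. Substituting and rearranging yields
\[
B_{\mathrm{rel}}^{\mathrm{out}}\geq m\log_23-B_{\mathrm{cut}}^{\mathrm{MS}}.
\]
Under the hypothesis $B_{\mathrm{cut}}^{\mathrm{MS}}=o(m)$ this is $m\log_23-o(m)$. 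I would also record the randomized analogue, obtained from \eqref{eq:randomized-main-tradeoff} with the barred quantities. It loses the factor $(1-\delta)$ and the term $h_2(\delta)$, giving $(1-\delta)m-o(m)$. The corollary as stated concerns the deterministic case, so I would only remark on this version.

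\emph{Second direction.} Assume the output is family-relative optimal, meaning $B_{\mathrm{rel}}^{\mathrm{out}}\leq m+c\log_2(m+2)$. This is exactly the hypothesis of \eqref{eq:compact-cut-floor}, which then gives
\[
B_{\mathrm{cut}}^{\mathrm{MS}}\geq\lceil m\log_23\rceil-m-c\log_2(m+2).
\]
This is $\Omega(m)$ because $\log_23-1>0$.

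\emph{Semantic side of the contrast.} I would cite the semantic upper bound proved inside \cref{thm:main-tradeoff}. There a one-pass compiler receiving $x$ achieves $B_{\mathrm{rel}}^{\mathrm{out}}=m+O(\log m)$ and $B_{\mathrm{cut}}=O(\log m)$ simultaneously. Both resources are therefore small on the semantic representation, whereas on the flat masked-share representation at least one of them must be linear.

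The only step needing care is bookkeeping: both quantities must be measured in the same single family-relative code $E_{\mathrm{rel}}$, so that the prefix-monotonicity inequality is legitimate. I would stress this and would not invoke $B_{\mathrm{self}}$ or $G$, in line with \cref{rem:pass-through}. Beyond that there is no real obstacle, since the argument is pure arithmetic on the theorem.
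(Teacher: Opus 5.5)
Your proposal is correct and is the paper's own argument: substitute $B_{\mathrm{pre}}^{\mathrm{rel}}\leq B_{\mathrm{rel}}^{\mathrm{out}}$ into \eqref{eq:main-tradeoff}, read the compact case off \eqref{eq:compact-cut-floor}, and contrast with the semantic construction inside \cref{thm:main-tradeoff}. One small correction to your aside: \eqref{eq:randomized-main-tradeoff} gives $(1-\delta)m-h_2(\delta)$, in which the $\log_2 3$ factor is replaced by $1-\delta$ rather than multiplied by it, so the randomized version does not by itself certify any inflation above the compact size $m$.
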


\begin{proof}
Use $B_{\mathrm{pre}}^{\mathrm{rel}}\leq
B_{\mathrm{rel}}^{\mathrm{out}}$ in \eqref{eq:main-tradeoff}, then compare
with the semantic construction in \cref{thm:main-tradeoff}.
\end{proof}

\begin{remark}[Three points under one encoding]
\label{rem:corners}
All points in \eqref{eq:main-tradeoff} use
$E_{\mathrm{rel}}(\cdot\mid\mathcal F_m)$.  The semantic pass-through has
compact relative output and logarithmic cut budget.  A flat compiler that
materializes the first round in a table or intermediate graph pays its
full serialized structure and parameters in $B_{\mathrm{cut}}$.  A
one-pass flat compiler may instead commit update residues before seeing
their mates, but those records enlarge $B_{\mathrm{pre}}^{\mathrm{rel}}$
and hence the final relative code.  The hybrids of
\cref{prop:hybrid-upper} interpolate between these endpoints and attain the
deterministic lower envelope to leading order for every $\beta$.  No claim
about $B_{\mathrm{self}}$ or $G(C)$ is inferred from this curve.
\end{remark}

\begin{remark}[Relation to the exact-semantics results and model scope]
\label{rem:three-channels}
The exact-semantics corollary
(\cref{cor:exact-semantics-corollary}) bounds final output length under
exactness and noncollision, via pumpability.  \Cref{thm:output-counting}
bounds two explicitly named output-description codes via counting.
\Cref{thm:main-tradeoff} is different: it uses the two-round completion
adversary and the complete bit-level restart snapshot
$B_{\mathrm{cut}}$.  A global ZX graph, phase-polynomial table, spilled
store, large-precision parameter, addressed window, RAM/DAG, cache, hash
table, dynamic pass schedule, or random seed is therefore an accounted
recovery channel, not an exception.  The deterministic theorem allows
arbitrary $p=p(m,I)$ in this charged model, and
\eqref{eq:randomized-main-tradeoff} allows input-dependent randomized
$p=p(m,I,\rho)$.  The balanced experimental schedule is not claimed to
instantiate this lower bound.
\end{remark}

\section{Unified $K$-ary Randomized Multipass Frontier}
\label{sec:w1-unified-frontier}

The preceding results isolate output information and the binary two-round
cut.  The following package aligns the packing modulus with a genuinely
dispersed additive-update representation and makes the number of passes
and the full serialized state cap explicit.  It also separates the proved
constant-factor upper envelope from the remaining model-dependent gaps.

\input{theory/qary_packing}
\input{theory/dispersed_update_tradeoff}
\input{theory/randomized_multipass_extension}
\input{theory/matching_upper_bounds}

\section{Representation-Aware Compiler}
\label{sec:compiler}

\subsection{Bounded Supported-Pattern Recognition}

\begin{definition}[Supported-pattern region recognizer]
\label{def:region-discovery}
The implemented recognizer accepts only configured high-level forms:
(i) explicit Pauli rotations or phase-gadget records that already carry
their support, (ii) local phase-ladder/Fourier spans, (iii) expanded
mirrored or self-inverse composite blocks, and (iv) repeated-run or preset
semantic nodes.  Within one supported contiguous span it maintains the
declared generators $P_1,\ldots,P_k$, coefficients
$\theta_1,\ldots,\theta_k$, qubit support, and source records.  It may extend
the span while the next record belongs to the supported grammar and passes
the configured commutation check; an unsupported or certified
noncommuting record closes the candidate.  Rejected recognition leaves the
original circuit available to the conservative fallback.
\end{definition}

\begin{remark}[Recognition, maximality, soundness, and complexity scope]
\label{rem:discovery-scope}
The artifact does not implement a general binary-symplectic extractor for
arbitrary circuit DAGs and does not claim a maximal commuting region in
that sense.  Its only maximality notion is the maximal contiguous span
accepted by the fixed supported-pattern grammar under the chosen scan
order.  Records for distinct generators remain separate until aggregation.
Soundness of a selected rewrite comes from the independent exact or
symbolic equivalence certificate of \cref{def:certificate}; failure returns
the unchanged candidate.  No general worst-case linear-time claim is made:
the runtime is the sum of the bounded scans, signature/hash bookkeeping,
certificate work, and any configured backend probes, as detailed in
Appendix~B.
\end{remark}

\subsection{Aggregation and Certificate}

\begin{definition}[Aggregation of an accepted supported region]
\label{def:aggregation}
Given an accepted supported region with generators $P_1,\ldots,P_k$ and coefficients
$\theta_1,\ldots,\theta_k$, the aggregation emits a representative
circuit implementing $\exp(-i\sum_j \theta_j P_j / 2)$ using at most
$c_0 + c_1 k$ gates in the target basis.
\end{definition}

\input{theory/certificate_soundness}

Within the supported exact domain, the executable audit applies
\CertificateMutationCases\ adversarial mutations---angle, sign, support bit,
qubit index/permutation, Clifford frame, deletion, and duplication---and
observes \CertificateFalseAccepts\ false accepts.  It also performs
\CertificateCrosscheckCases\ dense/symbolic cross-check cases for $n\leq6$
with no checker disagreement.  The status contract distinguishes
\texttt{completed\_valid}, \texttt{completed\_invalid},
\texttt{unsupported}, \texttt{predicate\_error}, and
\texttt{numerical\_inconclusive}; only independently certified
\texttt{completed\_valid} rows enter quality means.

\subsection{Constructive Aggregation Proposition}

\begin{proposition}[Constructive aggregation size]
\label{prop:constructive-aggregation}
For a supported semantic input accepted as a commuting diagonal phase layer
with $m$ independent phase terms, the semantic aggregation path emits an aggregate representative with at
most $c_0 + c_1 m$ phase records before final lowering, for constants
$c_0, c_1$ fixed by the representation and lowering conventions. Hence
the constructive representative has size $O(m)$ and is independent of the
repetition count $r$ at fixed width.
\end{proposition}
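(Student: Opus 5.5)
The plan is to prove \cref{prop:constructive-aggregation} directly from the structure of an accepted region, using the commutation of the generators together with the aggregation map of \cref{def:aggregation}.

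\emph{First step: collapse repeats.} An accepted region is a commuting diagonal phase layer. Every record in it is therefore a rotation $R_{P}(\theta)$ whose generator is one of $m$ independent characters $P_1,\ldots,P_m$, possibly repeated across rounds. Because all the $P_j$ commute, the product of the records can be regrouped by generator, exactly as in the proof of \cref{prop:family-aggregates}. This gives
\begin{equation*}
\prod_{\text{records}}R_{P_{j(\ell)}}(\theta_\ell)=\prod_{j=1}^{m}R_{P_j}\Bigl(\textstyle\sum_{\ell:j(\ell)=j}\theta_\ell\Bigr).
\end{equation*}
In the cyclic regime this identity holds up to a global sign, which is harmless under $d_{\mathrm{proj}}$. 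The semantic path keeps one coefficient accumulator per generator key, so after the region closes there are at most $m$ distinct $(P_j,\Theta_j)$ records. Any record with $\Theta_j\equiv0$ may be dropped.

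\emph{Second step: emit the representative.} Apply \cref{def:aggregation} with $k\le m$ to the collapsed list. It emits at most $c_0+c_1k\le c_0+c_1m$ phase records. Here $c_0$ accounts for the fixed framing and boundary records of the representation, and $c_1$ is the per-generator cost of one phase record, for example one $R_Z$ record conjugated by a fixed-size parity ladder in the chosen convention. Because the accumulator key set depends only on the generators and not on how many times each one appears, the count is independent of the repetition count $r$. At fixed width this also bounds the lowered gate count by a width-dependent constant times $m$.

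\emph{Main obstacle.} The step needing the most care is making $c_1$ genuinely independent of $m$ and $r$ rather than of the support weight. If a phase record means one generator--coefficient pair, the bound is immediate. If instead it means lowered gates, a weight-$w$ parity gadget costs $O(w)$ CNOTs. In that case I would state the constants at fixed width, or charge the support separately through \cref{def:token-codec}, so that the conclusion ``size $O(m)$, independent of $r$'' holds as stated.

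Correctness of the emitted representative is not part of the size claim. It follows from the certificate of \cref{def:certificate}, or directly from the regrouping identity above.
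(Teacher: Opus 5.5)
Your proof is correct and follows the paper's own argument. The paper likewise keys terms by their diagonal generator, adds coefficients across repetitions, and notes that repetition changes the coefficient on each key but not the number of keys, so the aggregate has at most one record per phase term plus fixed boundary and lowering overhead. Your remark that $c_1$ becomes width-dependent once records are counted as lowered gates is a fair refinement that the paper leaves implicit in ``at fixed width''.

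One minor precision: with raw sums your regrouping identity holds exactly. The global sign enters only when each $\Theta_j$ is reduced modulo $2\pi$, or when a record with $\Theta_j\equiv0$ is dropped, and this is harmless under $d_{\mathrm{proj}}$.
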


\begin{proof}
By hypothesis the accepted layer is a product of $m$ supported commuting
diagonal terms. The constructive
path keys terms by their diagonal generator and adds coefficients across
repetitions. Repetition changes the coefficient on each key, not the
number of keys. The aggregate contains at most one record per phase term,
plus fixed boundary and lowering overhead.
\end{proof}

\Cref{prop:constructive-aggregation} shows that a semantic-first pipeline
achieves an $O(m)$-record representative on the packing family.  Under the
public family dictionary it has
$B_{\mathrm{rel}}=m+O(\log m)$, matching the
family-relative inequality of \cref{thm:output-counting}; its separately
reported self-contained size is $O(m\log m)$.  On direct semantic input,
the support mask can be streamed with $B_{\mathrm{cut}}=O(\log m)$, which
is the constructive side of the separation in
\cref{thm:main-tradeoff}.

\section{Controlled Validation}
\label{sec:validation}

\subsection{Matched-Representation Benchmark}

\begin{definition}[Matched-representation benchmark]
\label{def:matched-benchmark}
For one sampled coefficient vector $x$, a matched cell fixes the target
$U_x$, target basis, hardware model, total error budget, timeout, memory cap,
and seed, and then evaluates every compiler on nine representations:
(1) semantic aggregate, (2) contiguous flat, (3) round-robin flat,
(4) random commuting order, (5) masked-share/update stream, (6) locally
folded flat, and (7--9) QASM, TKET, and ZX bridge records.  Each representation
has a unique identifier, serialized payload, generation seed, and unitary
certificate.  A cell is numerical only after the common certificate policy
returns \texttt{completed\_valid}; unsupported predicates and malformed or
inconclusive records remain status outcomes.
\end{definition}

The executable campaign uses the no-wrap target
\begin{equation*}
 U_x=\prod_{j=1}^{m}\exp\!\left(-\frac{i\pi x_j}{8K}Z_jZ_{j+1}\right),
 \qquad x_j\in\{0,\ldots,K-1\}.
\end{equation*}
The active support is sampled at the requested density and each active
coefficient is uniform on $\{1,\ldots,K-1\}$.  The frozen campaign contains
\MatchedCells\ matched cells across the nine representations and five
compilers.  Of these, \MatchedCompleted\ are
\texttt{completed\_valid}; the remaining \MatchedPredicateErrors\ are retained
as \texttt{predicate\_error}, all from the Qiskit masked-update path, and are
not converted into numerical wins or losses.  Every raw row records the
generation parameters, sampled $x$, realized density, schedule, basis,
timeout, memory cap, runtime, peak RSS, gates, depth, CX, serialized input and
output/IR bytes, bridge sequence, and certificate kind.

\begin{figure*}[t]
\centering
\includegraphics[width=0.94\textwidth]{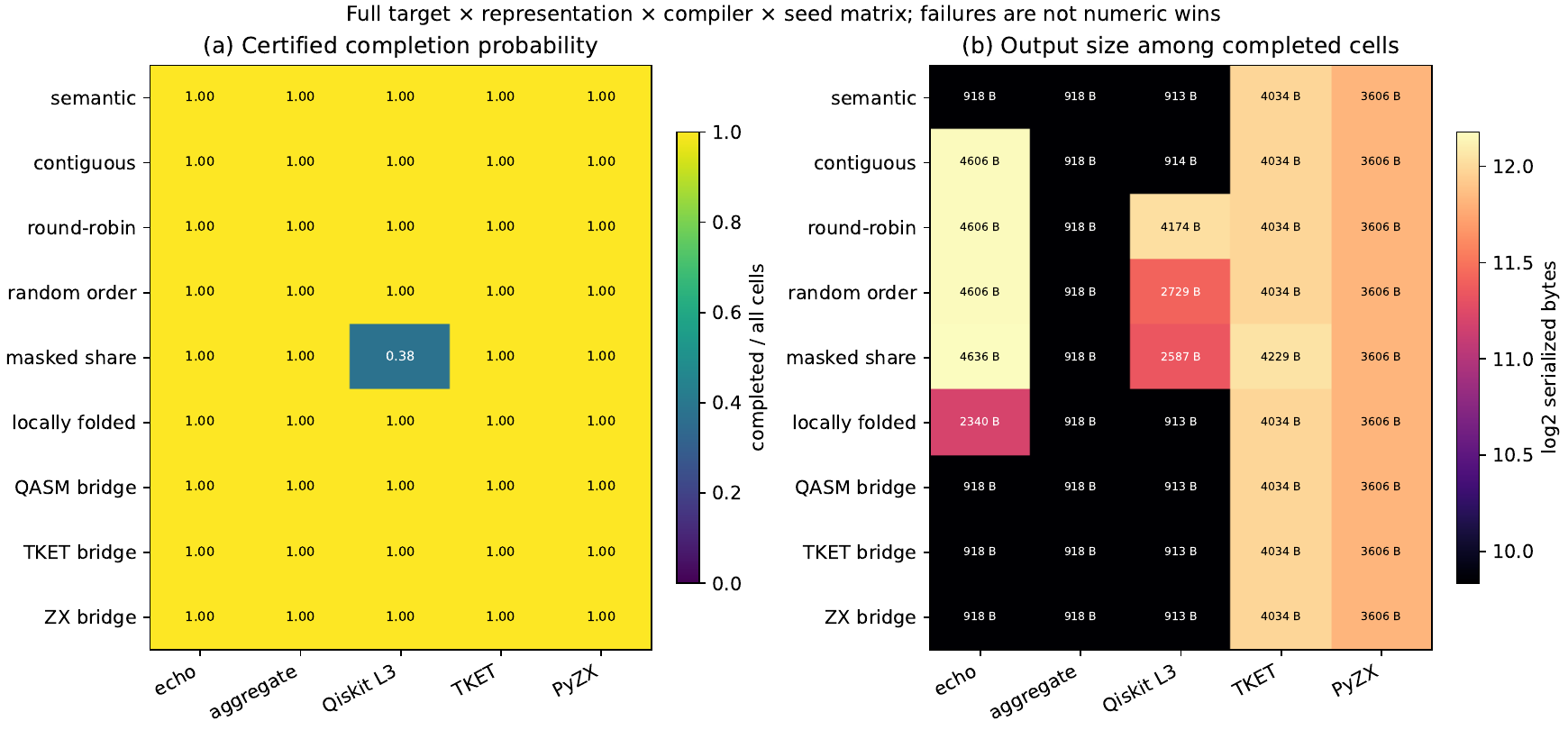}
\caption{Complete matched-representation interaction summary.  Panel (a)
reports certified completion probability for every representation--compiler
pair; panel (b) reports serialized output size only among completed cells.
Predicate errors remain status observations and never become numerical wins.}
\label{fig:matched-interaction}
\end{figure*}

The representation main effect, compiler main effect, their interaction,
completion probability, and scaling exponents are computed from the complete
matrix.  The interaction is substantive: a compiler that reconstructs a
phase polynomial or Pauli network before synthesis belongs on the
semantic-capable side even when its input arrived through a flat bridge.

\subsection{Semantic-Capable External Baselines and Bridge Loss}
\label{sec:external-baselines}

A separate \ExternalCells-cell campaign freezes recommended-native and
unified-target-basis configurations for the authors' exact phase-table
reference, local Qiskit optimization, TKET PauliSimp, and PyZX full-reduce.
It records tool version, available commit metadata, pass sequence, bridge,
seed, basis, timeout, memory cap, pre/post-bridge gate/depth/CX and serialized
IR, runtime, RSS, and certificate status.  \ExternalCompleted\ cells are
\texttt{completed\_valid}; \ExternalPredicateErrors\ are retained as
\texttt{predicate\_error}.  TKET and PyZX are classified as
semantic-capable, not as local baselines, whenever their configured global
IR recovers the compact structure.  Bridge loss is reported separately from
compiler quality, and unsupported is never conflated with incorrect.

The stance of this campaign is explanatory, not competitive.  TKET
PauliSimp and PyZX full-reduce already implement the semantic-first
principle on their global IRs; the contribution of this paper is the
bound that says what any such tool must pay to do so on a dispersed
stream (\cref{cor:physical-lower-bounds}).  A preregistered scan tested
this on a third-party compiler the authors do not control: with
predictions and decision rule committed before execution
(\path{PaperDraft/research/E7_preregistration.md}), TKET PauliSimp was
run on the masked-share family at $m\in\{8,16,32,64,128\}$, three seeds
per size, in the byte-identical pass sequence of this campaign.  It
achieved compact output in every cell ($508$ dispersed rotations
aggregated to $127$ non-Clifford $R_z$ at $m=128$), and its measured
pass-phase memory and serialized IR grew with log--log slope
$\approx1.7$ in $m$---at least the linear growth that the
$\Omega(m\log K_\epsilon)$ floor requires, and indeed more---while the
instrumented semantic-input control stayed flat ($196\to199$ bytes over
the same $16\times$ range of $m$).  The preregistered point prediction
of \emph{exactly} linear growth was thereby rejected under its own
frozen decision rule, and we report it as rejected; the one-sided lower
bound is consistent with, and slack against, the superlinear
measurement.  The frozen artifact is
\path{data/frozen/e7_tket_state_scaling.json}.  Two limits on what this
scan is allowed to establish should be stated explicitly.  First, the
quantity measured at a third-party API boundary is a pass-phase memory
and serialized-IR snapshot, which is a configured-pipeline diagnostic in
the sense of \cref{tab:evidence-classes}: it is not promoted to a
measurement of $B_{\mathrm{cut}}$ and is not commensurable with the $S$
of the frontier.  The scan therefore tests only the qualitative
dichotomy the theorem predicts---dispersed input forces growth in $m$,
semantic input does not---and tests neither the constant nor the
exponent.  Second, superlinear growth means the floor is not the binding
constraint on this tool: the bound establishes the floor, the observed
cost exceeds it, and the excess is implementation structure (the
pairwise commutation structure of a Pauli graph is itself
superlinear in the number of terms).  No claim that
the authors' implementation outperforms these tools is made or implied.

\begin{lemma}[Dimension-preserving instrumentation]
\label{lem:metric-conversion}
For every accepted version-one artifact $A$, its reported size is exactly
$8\operatorname{stat}(A).\mathtt{st\_size}$ bits.  Address growth is charged by
the literal varint bytes; in particular $127$ and $128$ have different address
lengths.  There is no representation-independent conversion from gate count,
depth, \texttt{cx}, token count, or IR-node count to bits.
\end{lemma}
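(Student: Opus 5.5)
The lemma is essentially definitional once the version-one artifact format is fixed. I will argue it in three steps: reported size equals file size in bits, addresses are charged by their literal varint encoding, and no conversion from structural counts to bits exists.

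\textbf{Step 1: reported size is the file length.} I will unfold what ``accepted version-one artifact'' means. An accepted artifact is a single file, written once and never rewritten, under the serializers of \cref{def:token-codec} and \cref{def:compiler}. The instrumentation defines its reported size as the byte count returned by the filesystem. Multiplying by eight gives exactly $8\operatorname{stat}(A).\mathtt{st\_size}$ bits. For this to be an identity rather than an estimate, I need that the version-one format has no compression, padding, or external side tables. I would check this against the format specification: all framing, lengths, and parameter payloads live inside the file, as \cref{def:token-codec} requires self-delimiting, out-of-band-free records. Any artifact violating this is, by definition, not accepted.

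\textbf{Step 2: address growth is charged by literal varint bytes.} The codec $\mathsf u$ is realized as a base-128 varint. Each address is serialized literally, so its contribution to $\mathtt{st\_size}$ is its varint byte length. Since $127<2^7\leq128$, the value $127$ occupies one byte and $128$ occupies two. Their address lengths therefore differ, which gives the stated instance.

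\textbf{Step 3: no representation-independent conversion.} This is the only step with content, and I will prove it by counterexample. I will exhibit two artifacts with equal gate count, depth, \texttt{cx} count, token count, and IR-node count but different bit sizes. One way is to take the same circuit and relabel one qubit from index $127$ to $128$, which changes the size by a varint byte. Another is to change a parameter's denominator $d$ in \eqref{eq:precision-charge}, which alters $\pcharge$. Any putative conversion function $f$ of those structural counts would then assign both artifacts the same bit size, a contradiction. The relabeling also witnesses the width dependence already noted after \cref{def:token-codec}.

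\textbf{Main obstacle.} The delicate point is ensuring Step 1 is genuinely exact, that is, that acceptance excludes compressed or sidecar storage. I would handle this by making ``accepted'' mean conformance to the uncompressed version-one schema, so that the file bytes are the serialization.
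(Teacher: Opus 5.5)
Your proposal is correct and follows essentially the paper's route. Step 1 corresponds to the file-backed accounting proposition, Step 2 to the literal canonical varint, and Step 3 to the paper's observation that equal-count objects, such as two one-node DAGs holding addresses $127$ and $128$, have different artifact lengths (the paper adds the converse that family-relative codes can give different circuits equal conditional length). Your ``main obstacle'' is misplaced, however: Step 1 is exact because the accountant obtains each byte count by \texttt{stat} and the verifier reopens the file, recomputes its size and digest, and rejects any disagreement, so compression or padding would not matter. The lemma only equates the reported value with the file length, not with an abstract serialization length.
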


\begin{proof}
The first statement is \cref{prop:file-backed-accounting}.  Gate or node count
does not determine parameter numerators, denominators, qubit-address widths,
edge endpoints, delimiters, or textual external payloads.  Thus equal-count
objects can have different artifact lengths.  Conversely, family-relative
dictionary codes can assign different self-contained circuits equal
conditional length.  Only the named artifact file supplies a bit value.
\end{proof}

\begin{table*}[!t]
\centering
\scriptsize
\setlength{\tabcolsep}{3pt}
\begin{tabularx}{\textwidth}{@{}l l X X@{}}
\toprule
Dimension & Recorded field & Executable serializer or definition & Permitted comparison \\
\midrule
bits & \texttt{serialized\_input\_bits} & verified \texttt{SemanticStream}, \texttt{FlatUpdateStream}, or external-input artifact file & same codec/mode only \\
bits & \texttt{B\_com\_bits} & verified literal committed-output artifact/prefix file & same codec only \\
bits & five \texttt{B\_*\_bits} fields & verified control/store/window/parameter/IR files & corresponding formal field only \\
records/tokens & source records & exact count before serialization & never converted by a constant \\
IR nodes & \texttt{ir\_nodes} & structural node count reported separately from \texttt{B\_ir\_bits} & no implicit bit or gate conversion \\
gates/depth/CX & structural output & Qiskit circuit operations, DAG depth, and \texttt{cx} count & like-for-like target basis \\
\bottomrule
\end{tabularx}
\caption{Instrumentation contract.  Figures and tables compare within rows of
one dimension.  Cross-dimensional discussion cites the explicit codec or is
stated only qualitatively.}
\label{tab:instrumentation-contract}
\end{table*}

\begin{table*}[!t]
\centering
\small
\setlength{\tabcolsep}{4pt}
\begin{tabularx}{\textwidth}{@{}l X X@{}}
\toprule
Evidence class & Included experiments & Authorized inference \\
\midrule
instrumented theorem-model implementation & W3 traced reference compilers and restart verifier & directly tests the serialized commitment--cut coordinates and reports theorem residuals \\
configured-pipeline diagnostic & matched matrix, external semantic baselines, exact-semantics witnesses, and natural workloads & compares observed tool behavior; never proves a lower bound \\
model-specific consequence & fixed-total-error synthesis and surface-code calculation & reports consequences under the disclosed QRE; not a hardware-independent forecast \\
\bottomrule
\end{tabularx}
\caption{Formal/evidentiary separation used throughout the paper.  The first
row is the instrumented reference implementation of the declared resource
model; configured external tools are not retroactively placed in that class.}
\label{tab:evidence-classes}
\end{table*}

\subsection{Packing-Family Scaling}
\label{sec:packing-scaling}

The complete multifactor campaign of \cref{fig:matched-interaction}
is the primary matched-representation experiment.  The earlier 5-by-5 sweep
below is retained as a separately keyed legacy diagnostic because it reaches
the same maximum width under a 600\,s protocol and preserves its historical
\texttt{predicate\_error} outcomes; it is not substituted for the new matrix.
This legacy experiment links the growing-$m$ theorems to configured-pipeline
behavior. Unlike the fixed-width witnesses below, which exercise the
exact-semantics corollary, the packing family grows the number of
independent phase terms $m$ at fixed tolerance.

\paragraph{Instance.}
We instantiate \cref{def:packing-construction} with the chain of two-body
characters $a_j = e_j + e_{j+1}$ on $n = m+1$ qubits, which are
$\Ftwo$-independent for every $m$ (certified by rerunning the exact rank
verifier), using the separate experimental setting $K=2$, $\Delta=\pi/4$.
Each token is lowered as
$RZ_i(\theta)\,RZ_j(\theta)\,CP_{ij}(-2\theta)$ under the convention
used here. The balanced round-robin encoding
(\cref{def:streaming-encoding}) with dispersal $r = 4$ interleaves the
terms across shared wires, so per-wire locality does not trivially reveal
the aggregate.  Because these balanced tokens contain $x_j/r$, this is a
pipeline diagnostic rather than an instance of the masked-share lower
bound. The sweep is $m \in \{4, 8, 16, 32, 64\}$ under the
standard 600\,s protocol of this section.  The exact rational-$\pi$ sidecar
records $\Delta/\pi=1/4$ and per-token angle $1/16$.  Correctness uses the
dense certificate for $n\leq6$ and the sound exact symbolic checker of
\cref{def:certificate,prop:symbolic-soundness} for $n>6$; any output outside
$\mathcal D_{\rm CPC}$ is status-only.

\paragraph{Results.}
\Cref{tab:packing-scaling} reports completed output gates as a function of $m$
for semantic UCC, \texttt{qiskit opt3}, rebased TKET PauliSimp, staq
rotation folding, and the phase-polynomial reference.
\pkResultsSummary

\begin{table*}[!t]
\centering
\small
\begin{tabular}{cccccc}
\toprule
$m$ & semantic UCC & \texttt{qiskit opt3} & TKET PauliSimp (rebased) & staq folding & phase-poly ref \\
\midrule
4  & $25/21/8$ & $68/38/32$ & $54/20/10$ & $40/27/32$ & $25/21/8$ \\
8  & $49/41/16$ & $136/54/64$ & \texttt{predicate\_error} & $80/39/64$ & $49/41/16$ \\
16 & $97/81/32$ & $272/86/128$ & \texttt{predicate\_error} & $160/63/128$ & $97/81/32$ \\
32 & $193/161/64$ & $544/150/256$ & \texttt{predicate\_error} & $320/111/256$ & $193/161/64$ \\
64 & $385/321/128$ & $1088/278/512$ & \texttt{predicate\_error} & $640/207/512$ & $385/321/128$ \\
\bottomrule
\end{tabular}
\caption{Packing-family scaling at growing $m$ (chain two-body
characters, $n = m+1$, dispersal $r = 4$, 600\,s budget). Completed cells
report output gates/depth/\texttt{cx}; status outcomes are recorded separately and are not
scored as structural wins. No external row is claimed to be a member of
the transducer class of \cref{def:compiler}.}
\label{tab:packing-scaling}
\end{table*}

The earlier two-panel figure is deliberately omitted.  Its gate panel is
redundant with \cref{tab:packing-scaling}, while its family-relative bit panel
was generated from a handwritten length formula and has no serialized
per-point files.  Under the present accounting contract such a curve may be
restored only by emitting the exact conditional-code artifacts (together with
the family-dictionary digest) and measuring their files.

\paragraph{Scope.}
The external-toolchain rows are configured-pipeline diagnostics in the
sense of this section: they neither prove nor test
\cref{thm:main-tradeoff}, whose serialized inter-round cut budget is not
observable from output gate counts, and no bit curve is inferred from these
gate-count rows.  The separate reference-compiler implementation below
measures theorem-model cut fields directly; it is not inferred from output
gates.

\subsection{Event-Level Commitment--Cut Tracing}
\label{sec:cut-tracing}

The instrumented suite contains echo/pass-through, full aggregation, 25\%,
50\%, and 75\% memory-capped hybrids, and an IR-materializing control.  Each
compiler follows the formal flat-update interface and records an event after
every input update and commitment.  The campaign scans $m,r,K,\epsilon,p,S$,
schedule, and seed, producing \TradeoffRuns\ complete run summaries plus the
compressed raw event log and content-addressed trace archive.  At each named
cut it records input position, pass, control/store/window/parameter and IR
bytes, committed-output bytes, current output length, RSS, and wall time.
The literal measured definitions are
\begin{equation*}
\begin{aligned}
 B_{\rm com}(c)&=8|A_{\rm com,c}|_{\rm byte},\\
 B_{\rm cross}(c)&=8\sum_{f\neq\mathrm{com}}|A_{f,c}|_{\rm byte}.
\end{aligned}
\end{equation*}
RSS is recorded as an implementation metric and is not substituted for the
serialized state cap $S$.  The analyzer recomputes the theorem residual from
the trace coordinates and retains every anomaly or failed run.  All frozen
points lie in the theorem-allowed region; no point was removed as an outlier.

\begin{figure*}[t]
\centering
\includegraphics[width=0.94\textwidth]{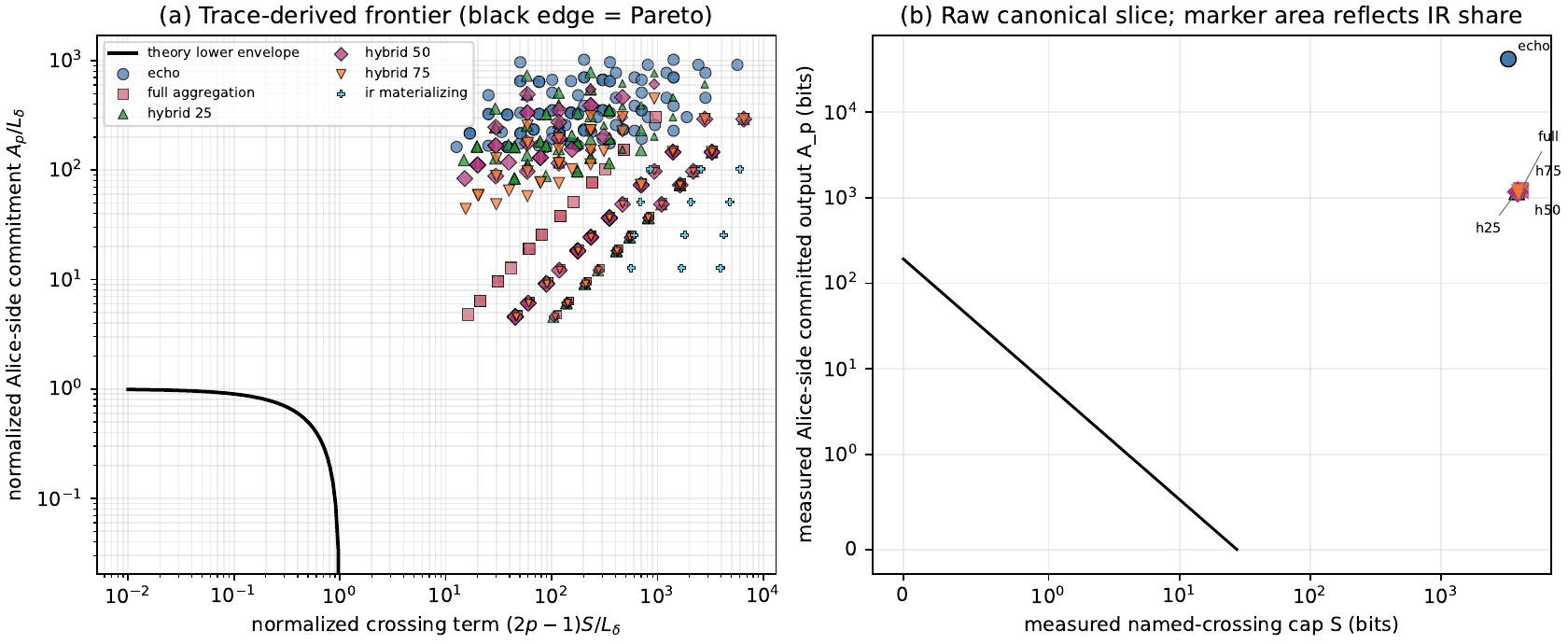}
\caption{Trace-derived commitment--crossing-state--IR surface.  The black
edge marks measured Pareto points and the theoretical lower envelope is
evaluated in the same serialized-bit coordinates.  Marker area in the raw
slice reflects IR share; RSS is recorded separately.}
\label{fig:measured-tradeoff}
\end{figure*}

Echo commits updates early with little retained state; full aggregation
occupies the state-heavy compact-output corner; and the three capped hybrids
interpolate through explicit block commitments.  Their scaling fits and
residuals are reported with the raw points, making this a direct measurement
of the theorem resources rather than an inference from output gate count.

\subsection{Width-Axis Scaling}

To check the constructive aggregation prediction
(\cref{prop:constructive-aggregation}), we run a width-axis sweep for the
\texttt{full\_pair\_cp} topology at $n = 4, 5, 6$.

\begin{table*}[htbp]
\centering
\small
\begin{tabular}{cccc}
\toprule
Width ($n$) & CP pairs ($m_{\rm CP}$) & Output Gates & CX Count \\
\midrule
4 & 6  & 42 & 12 \\
5 & 10 & 65 & 20 \\
6 & 15 & 93 & 30 \\
\bottomrule
\end{tabular}
\caption{Width-axis scaling for the semantic Fourier-layer witness. For
each fixed width $n$, the semantic output is constant across all tested
repetition counts $r$. Across widths, the output scales as $O(m_{\rm CP})$
where $m_{\rm CP} = n(n-1)/2$.}
\label{tab:width-axis-scaling}
\end{table*}

The output is $O(m_{\rm CP})$ and independent of $r$ on this finite
width sweep, consistent with the constructive proposition.

\subsection{Exact-Semantics Witness Results}

These experiments exercise the exact-semantics corollary
(\cref{cor:exact-semantics-corollary}), not the main resource tradeoff
(\cref{thm:main-tradeoff}). They verify the constructive
aggregation path on the tested fixed-width witnesses and record how configured
external pipelines behave predictably across the representation
boundary. Full data tables are in \cref{app:supplementary-experiments}.

The rational CP $H D^r H$ witness is the canonical fixed-width family
with a constant 42-gate output.  \Cref{fig:fourier-separation} shows that
semantic UCC remains at 42 gates while rebased TKET PauliSimp returns
52--73 gates across five scales; materialize-first baselines grow
linearly. The irrational independent-Pauli witness
(\cref{fig:unbounded-witness-scaling}) shows the same pattern at
27-gate constant output. Key observations:
\begin{itemize}[leftmargin=1.5em]
    \item Semantic UCC outputs a constant gate count (42 for rational,
    27 for irrational) at all five tested sizes.
    \item Rebased TKET PauliSimp recovers a small validated form at all
    five scales on both witnesses, demonstrating that a generic pipeline
    can cross the formal boundary by reconstructing an aggregate Pauli
    representation.
    \item Public staq rotation folding, \texttt{qiskit opt3}, and TKET
    GuidedPauliSimp grow linearly with input size.
    \item PyZX \texttt{full\_reduce} completes at small sizes but reaches
    time or correctness limits at larger scales.
\end{itemize}

\begin{figure*}[t]
\centering
\includegraphics[width=0.72\textwidth]{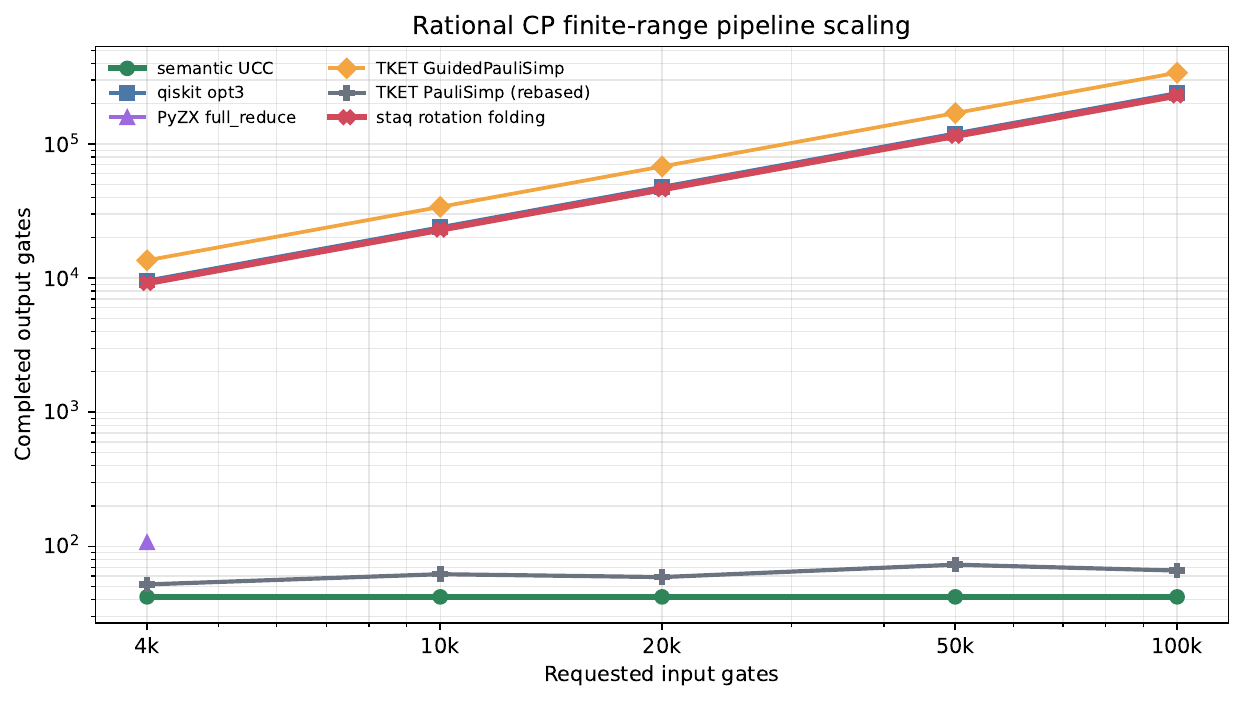}
\caption{Fixed-width non-inverse Fourier-layer scaling on the rational
$H D^r H$ witness. Semantic UCC remains at 42 gates and rebased TKET
PauliSimp returns 52--73 gates across the five scales. Public staq
rotation folding, \texttt{qiskit opt3}, and TKET GuidedPauliSimp grow
with input size; PyZX \texttt{full\_reduce} completes at 4k and reaches
the time budget at larger scales. Status-only cells are omitted from
numerical curves.}
\label{fig:fourier-separation}
\end{figure*}

\begin{figure*}[t]
\centering
\includegraphics[width=0.72\textwidth]{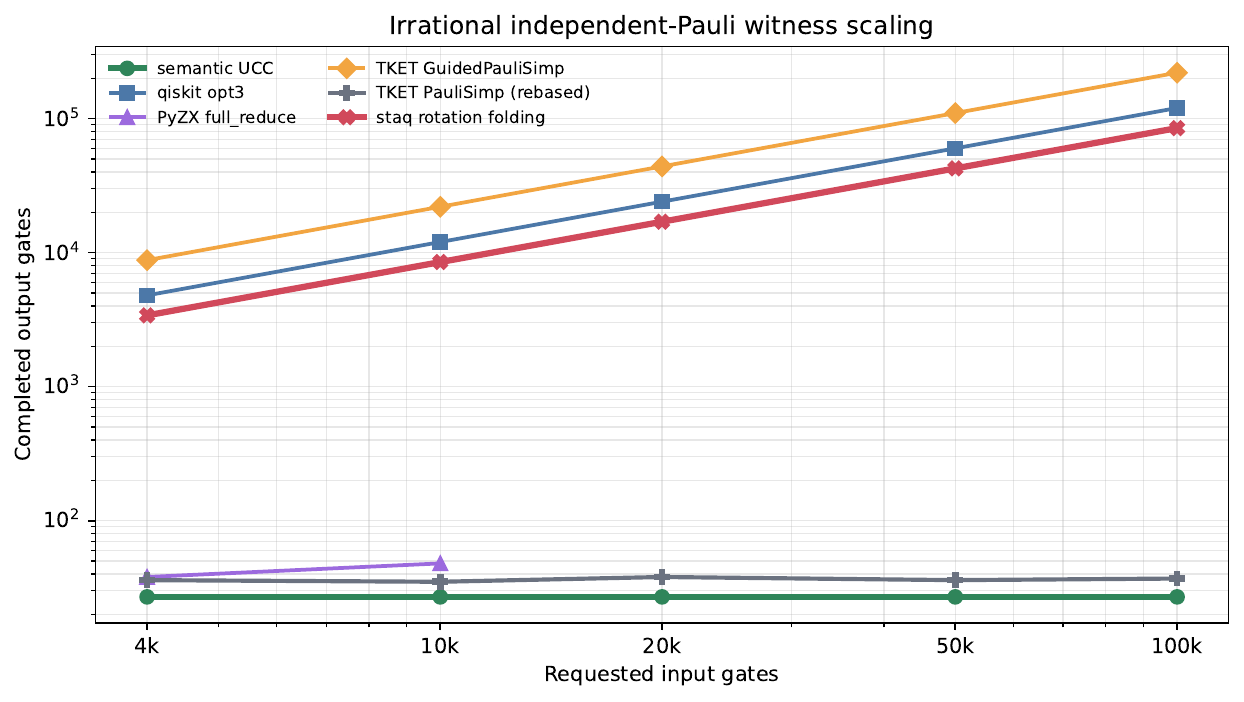}
\caption{Configured-pipeline scaling on the irrational independent-Pauli
witness. Semantic UCC remains at 27 gates. Rebased TKET PauliSimp
returns a small validated form at all five scales, demonstrating that a
generic pipeline can cross the formal boundary by reconstructing an
aggregate Pauli representation. Public staq rotation folding,
\texttt{qiskit opt3}, and TKET GuidedPauliSimp grow with input size.
PyZX \texttt{full\_reduce} completes at 4k and 10k, fails the numerical
tolerance at 20k, and reaches the time budget at larger scales.
Status-only cells are omitted from numerical curves.}
\label{fig:unbounded-witness-scaling}
\end{figure*}

These measurements exercise the constructive side of
\cref{cor:exact-semantics-corollary} and disclose the behavior of
configured tools; they do not prove the resource bound for any real
compiler. The rebased PauliSimp result is the classification's positive
prediction in action: once a generic pass sequence reconstructs a compact
Pauli representation, the output need not grow with the flat input.
Conversely, staq's growing output and PyZX's status outcomes are not
impossibility results---they are diagnostics of configured-pipeline
behavior on this specific witness family.

\subsection{Mechanism Ablation}

\begin{table*}[tbp]
\centering
\small
\setlength{\tabcolsep}{4pt}
\begin{tabularx}{\textwidth}{@{}r *{4}{>{\raggedright\arraybackslash}X}@{}}
\toprule
Requested & semantic UCC & artifact UCC (Fourier disabled) & \texttt{qiskit} opt3 & phase-poly ref \\
\midrule
4{,}000 & 42 / 2.9 s & 9{,}588 / 18.2 s & 9{,}588 / 1.0 s & 42 / 0.2 s \\
10{,}000 & 42 / 8.7 s & 23{,}988 / 94.2 s & 23{,}988 / 5.1 s & 42 / 0.3 s \\
20{,}000 & 42 / 28.5 s & 47{,}988 / 339.3 s & 47{,}988 / 19.5 s & 42 / 0.3 s \\
50{,}000 & 42 / 20.7 s & timeout & 119{,}988 / 120.9 s & 42 / 0.5 s \\
100{,}000 & 42 / 28.7 s & timeout & 239{,}988 / 485.8 s & 42 / 0.8 s \\
\bottomrule
\end{tabularx}
\caption{Fourier-layer semantic-path ablation on the fixed-width
$H D^r H$ witness under the 600\,s budget. Disabling the Fourier-layer
IR completes at 4k--20k but materializes large outputs; 50k--100k exceed
the budget. Fourier-enabled semantic UCC and the phase-polynomial
reference return 42 gates at all sizes.}
\label{tab:fourier-ablation}
\end{table*}

\begin{figure*}[t]
\centering
\includegraphics[width=0.72\textwidth]{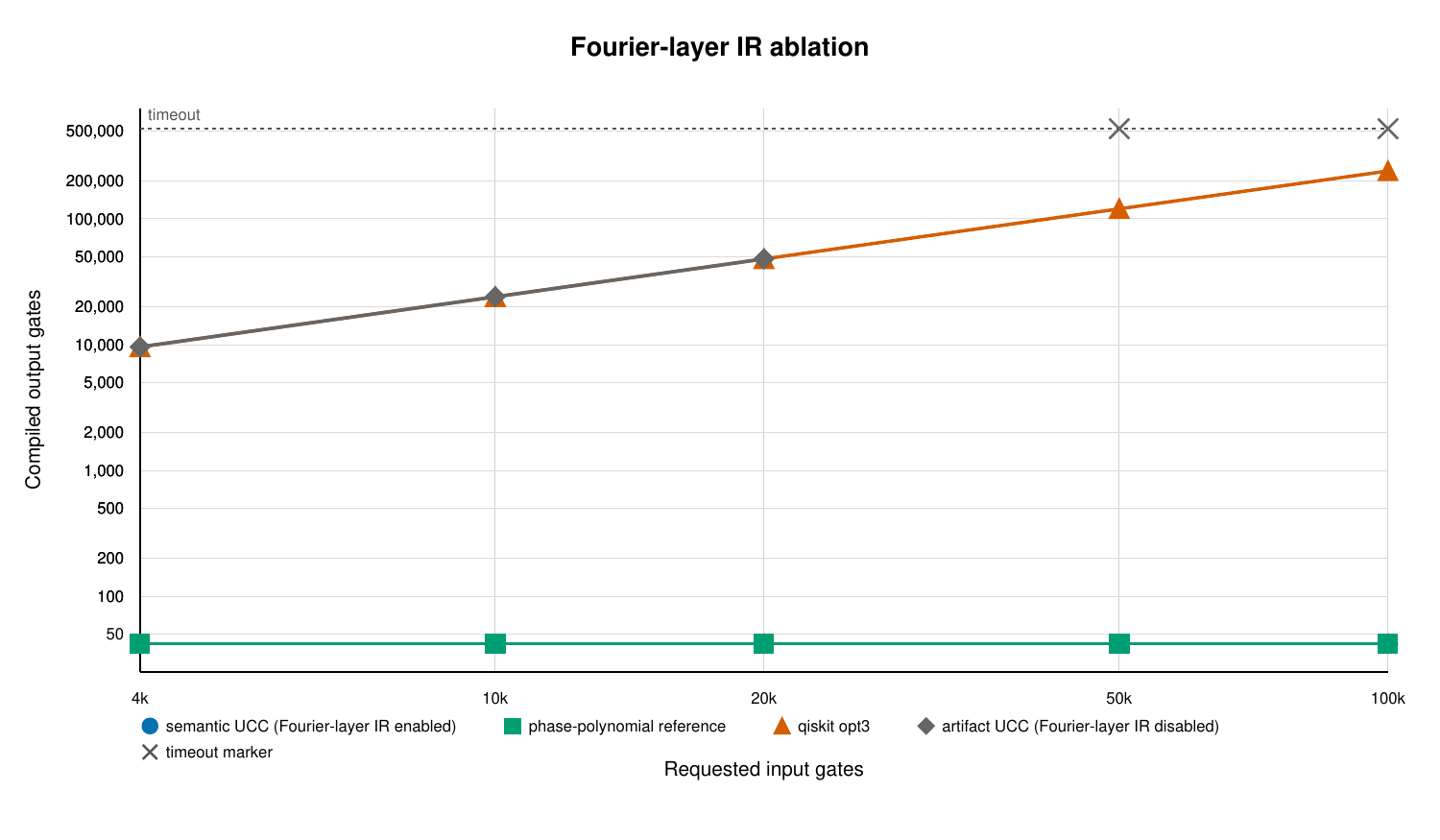}
\caption{Mechanism ablation for the fixed-width $H D^r H$ witness.
Disabling the Fourier-layer semantic path yields completed but linearly
large outputs through 20k and timeout rows at 50k and 100k, while
Fourier-enabled semantic UCC and the phase-polynomial reference recover
the canonical 42-gate circuit.}
\label{fig:fourier-ablation}
\end{figure*}

The ablation strengthens the mechanism interpretation. With Fourier-layer
IR disabled, the compiler completes at 4k, 10k, and 20k but materializes
the same large non-aggregated outputs as \texttt{qiskit opt3}; at 50k
and 100k the disabled-IR rows exceed the 600\,s budget
(\cref{fig:fourier-ablation}). In contrast, the Fourier-enabled semantic
UCC path and the phase-polynomial reference recover the same 42-gate
canonical circuit at every tested size. Thus the constant form is tied
to semantic Fourier-layer lifting and coefficient aggregation, not
merely to a favorable ordering of flat post-lowering passes.

\subsection{Fixed-Total-Error End-to-End Synthesis and QRE}

\begin{table*}[t]
\centering
\scriptsize
\begin{tabular}{l r r r r r r}
\toprule
pipeline & $N_{\rm rot}$ & $T$ & physical qubits & factories & cycles & qubit-s \\
\midrule
\input{generated/qre_main_slice}
\end{tabular}
\caption{Predeclared 20k representative slice of the fixed-total-error
campaign: $\epsilon_{\rm total}=10^{-6}$, equal synthesis allocation,
conservative $p_{\rm phys}=10^{-3}$ scenario, and 16 factories.  Each compiled
output has an independent certificate, and every unique-angle staq synthesis
returns \texttt{Check flag = 1}.  The frozen full table additionally records
code distance, data and factory qubits separately, and modeled quantum runtime.}
\label{tab:resource-consequence}
\end{table*}

The new campaign scans
$\epsilon_{\mathrm{total}}\in\{10^{-3},10^{-6},10^{-9}\}$ and assigns the
same additive budget fractions to every pipeline:
$0.10\epsilon_{\rm total}$ for the certified algorithmic stage,
$0.45\epsilon_{\rm total}$ for rotation synthesis, and
$0.45\epsilon_{\rm total}$ for logical failure.  For equal allocation and
$N_{\rm rot}$ output rotations, it chooses
$p=\lceil\log_{10}(N_{\rm rot}/\epsilon_{\mathrm{synth}})\rceil$ and
$\epsilon_i=10^{-p}$, hence
$\sum_i\epsilon_i\leq\epsilon_{\mathrm{synth}}$.  A second measured-cost
greedy allocation spends only this decimal slack; it is an upper construction,
not an optimality claim.  Each IEEE-serialized angle is actually synthesized
by the frozen deterministic GMP build of staq
\texttt{grid\_synth}~\cite{amy2020staq,ross2016optimal}.  Across
670 checked angle--precision pairs, the reported error never exceeds the
allocated radius; exact multiples of $\pi/4$ use staq's checked exact branch.

The full $2\times3\times3\times2\times2\times4=\QRECells$ matrix crosses two target
sizes, three pipelines, three total errors, two allocation policies, two QEC
profiles, and 1/4/16/32 factories.  All \QRECells\ rows are
\texttt{completed\_valid}, carry a content-addressed exact-symbolic or dense
$n\leq6$ certificate, and satisfy the requested total-error bound.  The two
surface-code profiles~\cite{fowler2012surface,horsman2012lattice,litinski2019game}
use $p_{\rm phys}=10^{-3}$ and $10^{-4}$, respectively,
with separately frozen data/factory coefficients, factory
periods~\cite{litinski2019magic}, and cycle
times.  They choose the least odd $d$ meeting the disclosed union bound.  These
are executable sensitivity scenarios rather than vendor estimates, and
follow the accounting conventions of published end-to-end
estimates~\cite{gidney2021factor,beverland2022assessing}.

\begin{figure*}[t]
\centering
\includegraphics[width=0.82\textwidth]{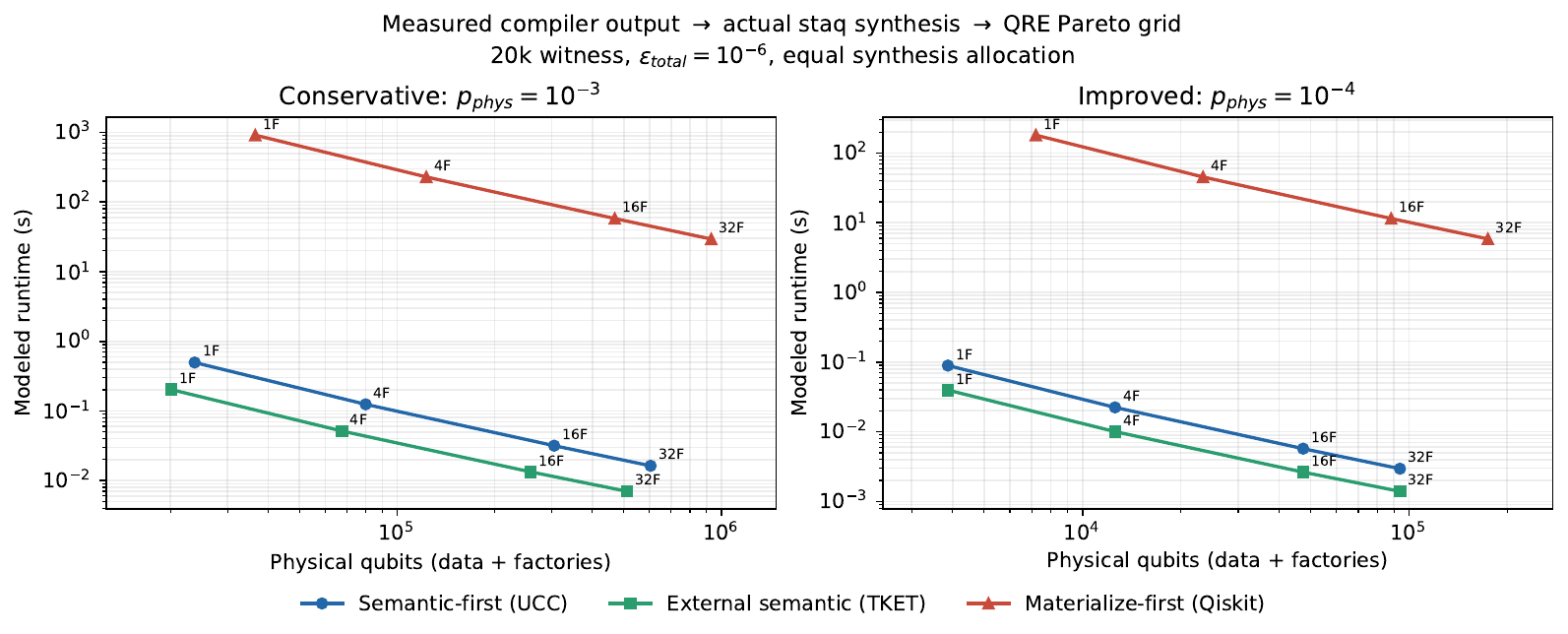}
\caption{Physical-qubit/runtime Pareto grid at the representative 20k,
$\epsilon_{\rm total}=10^{-6}$ slice.  Labels give factory counts.  The
external TKET path receives a materialized CX--$R_Z$--H bridge but reconstructs
a compact phase form before synthesis, so it is classified as
semantic-capable.}
\label{fig:qre-pareto}
\end{figure*}

The physically relevant point is the ordering of semantic aggregation relative to
synthesis.  In a fault-tolerant pipeline, arbitrary rotations are
eventually approximated by discrete resources such as Clifford+$T$
sequences. If the compiler first materializes $r$ separate rotations
and only later synthesizes them, the resource estimator sees $r$
independent approximation tasks. A resource-rich pipeline instead
combines the coefficients first and presents one aggregated rotation per
phase term. An FT toolchain that reconstructs the phase polynomial
before synthesis collapses this multiplicity itself and should be
classified on the semantic-capable side of the comparison
(see \cref{sec:ft-consequences} for the model-specific resource discussion).

\section{Application Workloads}
\label{sec:workloads}

\subsection{Workload Specification}

The application campaign contains seven structured families: Ising/MaxCut
cost evolution~\cite{farhi2014qaoa}, diagonal Hamiltonian simulation,
commuting-block Trotterization~\cite{childs2021trotter}, QFT arithmetic,
QPE controlled powers, controlled-power
ladders, and phase-polynomial blocks~\cite{amy2019cnot}.  Family sizes and
naming follow the low-level benchmark
conventions of~\cite{li2023qasmbench,quetschlich2023mqt}.  Random
Clifford+$R_Z$ and
routing-dominated~\cite{cowtan2019routing,li2019sabre} circuits are
negative controls.  The separate controlled
packing family
(\cref{def:packing-construction}) with growing $m$, fixed $\epsilon$,
is reported in \cref{sec:packing-scaling}.

\subsection{Strong Baseline Comparison}

The canonical comparison includes Qiskit
\texttt{optimization\_level=3}~\cite{javadi2024qiskit}, TKET
PauliSimp~\cite{sivarajah2021tket,cowtan2020phase}, PyZX
\texttt{full\_reduce}~\cite{kissinger2020pyzx}, and the authors' exact
phase-table reference.  Each external tool has a recommended-native and a
unified $\{\mathrm{CX},R_Z,H\}$ configuration, with pass sequence, bridge,
version, seed, timeout, memory cap, and certificate frozen.  TKET supplies the
Pauli-network/global-semantic class and PyZX the graph/global-IR class.  The
complete W4 interaction is \cref{fig:matched-interaction}; the independent W5
campaign and bridge-loss accounting are described in
\cref{sec:external-baselines}.  Historical staq, GuidedPauliSimp, and
phase-polynomial diagnostic rows remain in their original immutable campaigns
and are not silently mixed into the canonical matrix.  This comparison is
diagnostic, not a contest: the external tools are treated as independent
implementations of the semantic-first principle whose internal costs the
bounds of \cref{sec:physical-dispersed,sec:w1-unified-frontier} explain,
gate-count parity is reported as parity, and no headline ranks the
authors' pipeline above them.

\subsection{Real-Task Panel}

\begin{table*}[htbp]
\centering
\scriptsize
\begin{tabular}{lccccc}
\toprule
Instance & Method & Gates & Depth & CX & Runtime \\
\midrule
\multirow{4}{*}{\texttt{REAL24-phase-est.}}
  & translation only & 159{,}838 & 121{,}214 & 67{,}608 & 0.080 s \\
  & qiskit opt3 & 166{,}805 & 119{,}204 & 58{,}491 & 1.298 s \\
  & upstream UCC v0.4.12 default & 471{,}819 & 328{,}271 & 58{,}491 & 80.923 s \\
  & semantic UCC & 166{,}805 & 119{,}204 & 58{,}491 & 1.364 s \\
\midrule
\multirow{4}{*}{\texttt{REAL24-grover}}
  & translation only & 85{,}403 & 55{,}892 & 33{,}408 & 0.104 s \\
  & qiskit opt3 & 79{,}029 & 54{,}163 & 32{,}544 & 0.552 s \\
  & upstream UCC v0.4.12 default & 287{,}047 & 152{,}298 & 32{,}544 & 1.915 s \\
  & semantic UCC & 79{,}029 & 54{,}163 & 32{,}544 & 0.556 s \\
\midrule
\multirow{4}{*}{\texttt{REAL24-qaoa}}
  & translation only & 36{,}512 & 2{,}416 & 23{,}808 & 0.022 s \\
  & qiskit opt3 & 36{,}512 & 2{,}416 & 23{,}808 & 0.272 s \\
  & upstream UCC v0.4.12 default & 167{,}833 & 7{,}133 & 23{,}808 & 1.010 s \\
  & semantic UCC & 36{,}512 & 2{,}416 & 23{,}808 & 0.273 s \\
\bottomrule
\end{tabular}
\caption{Official real algorithm-family instances. The upstream UCC
v0.4.12 row is the frozen pre-semantic default defined in the provenance
map of Appendix~B. Parity with \texttt{qiskit opt3} reflects conservative
portfolio selection; this panel is an anti-regression check, not evidence
of semantic recovery.}
\label{tab:real-instances}
\end{table*}

\begin{figure*}[t]
\centering
\includegraphics[width=0.82\textwidth]{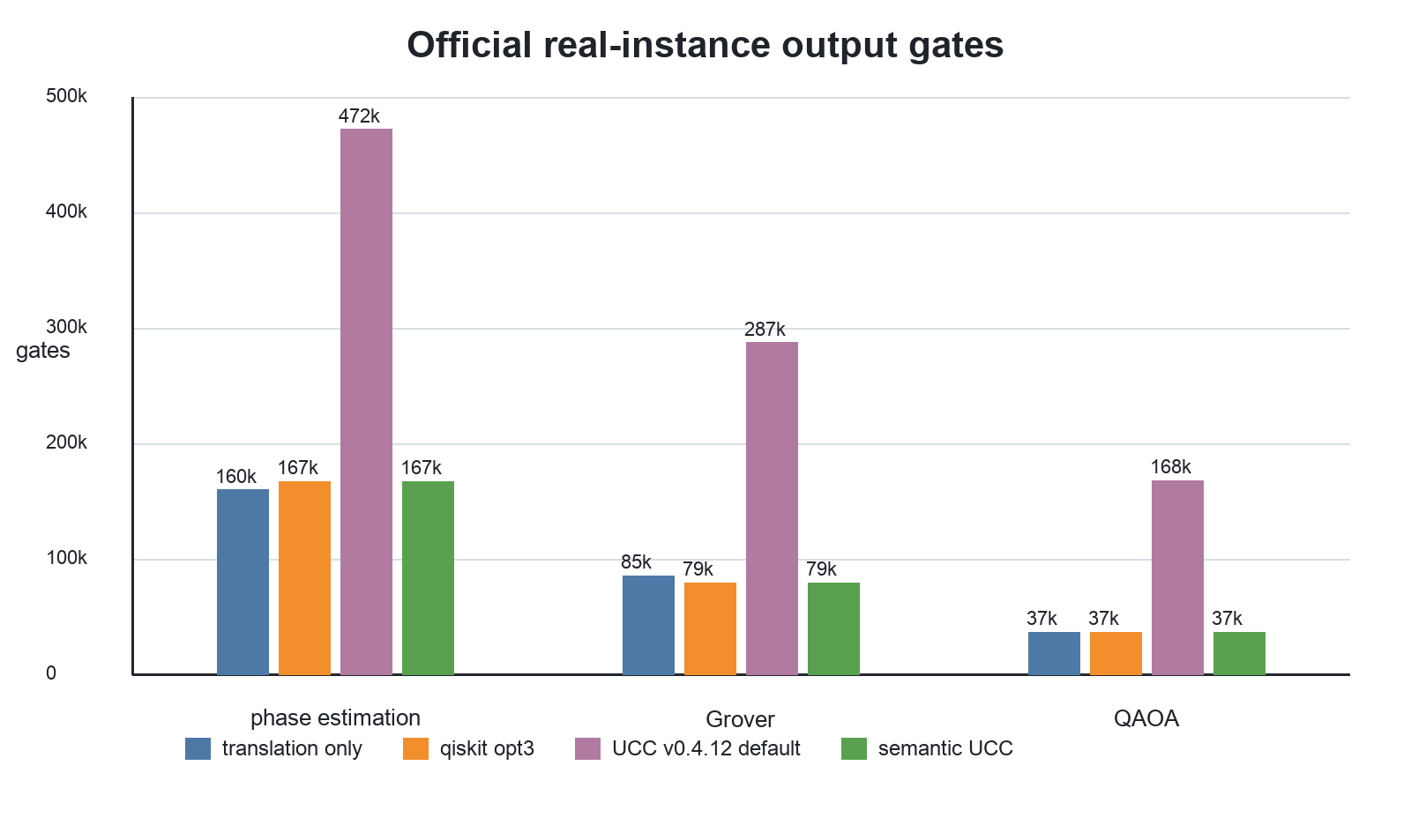}
\caption{Total output gate counts on the three official real instances.
The semantic UCC artifact matches \texttt{qiskit opt3}-level output
quality while substantially improving over the upstream UCC v0.4.12
pre-semantic default.}
\label{fig:real-instance-bars}
\end{figure*}

These results support the recoverability-parity regime
(\cref{fig:real-instance-bars}): the semantic artifact matches
\texttt{qiskit opt3}-level output quality while improving over the
upstream UCC v0.4.12 pre-semantic default of \cref{tab:real-instances} by
approximately 65\%, 73\%, and 78\% gate reduction on the three instances.
The selector cost in this panel is the implementation's conservative
structural score $\kappa = \text{gates} + \text{depth} +
10\cdot\text{multi-qubit}$, with candidate-control guards; the factor
$10$ is a conservative lexicographic-style heuristic favoring lower
two-qubit exposure, used only for portfolio selection. This explains the
\texttt{REAL24-phase-est.} row (source key
\texttt{phase\_estimation\_real}) in which translation-only has fewer
total gates than the selected candidate but more \texttt{cx} gates and a
higher score ($957{,}132$ versus $870{,}919$). This quality-recovery
pattern shows that semantic lifting behaves as an anti-regression
mechanism on workloads beyond the minimal Fourier construction.

\subsection{Natural-Kernel Factorial Ablation}
\label{sec:natural-factorial}

The frozen W8 campaign crosses nine workload families, scales 1/2/4, three
seeds, densities 0.5/1.0, and all-to-all/line hardware topologies.  Seven are
structured families: QAOA/Ising, diagonal-Hamiltonian simulation,
commuting-block Trotterization, QPE controlled powers, QFT arithmetic,
controlled-power ladders, and phase-polynomial blocks.  Random
Clifford+$R_Z$ and routing-dominated circuits are the two negative controls.
For every instance the authors' reference is evaluated on the complete
$2^6$ toggle matrix
\begin{equation*}
\begin{split}
&\text{semantic lift}\times\text{aggregation}\times\text{selector}
\times\text{cache/reuse}\\
&\qquad\times\text{preset recognizer}
\times\text{projected-block selection}.
\end{split}
\end{equation*}
The authors' factorial contributes 20,736 cells, and the strongest matched
external semantic baseline, TKET PauliSimp, contributes 324, for
\NaturalCells\ cells total.  Every cell is \texttt{completed\_valid}, passes
the independent dense projective certificate at six qubits, and carries the
same fixed-total-error W7 QRE fields.  Size increases repeat and densify
structure rather than relaxing the certificate width.

\begin{table*}[tbp]
\centering
\small
\begin{tabular}{lcc}
\toprule
Family & external-minus-authors gates [95\% CI] & strict success \\
\midrule
\input{generated/natural_families}
\end{tabular}
\caption{Complete structured-family summary against the matched external
semantic-capable baseline.  The predeclared strict criterion requires valid
outputs, no worse gate and CX counts at every primary cell, and lower measured
compiler runtime at every primary scale and seed.  Non-winning families are
retained.}
\label{tab:natural-factorial}
\end{table*}

\begin{figure*}[t]
\centering
\includegraphics[width=0.94\textwidth]{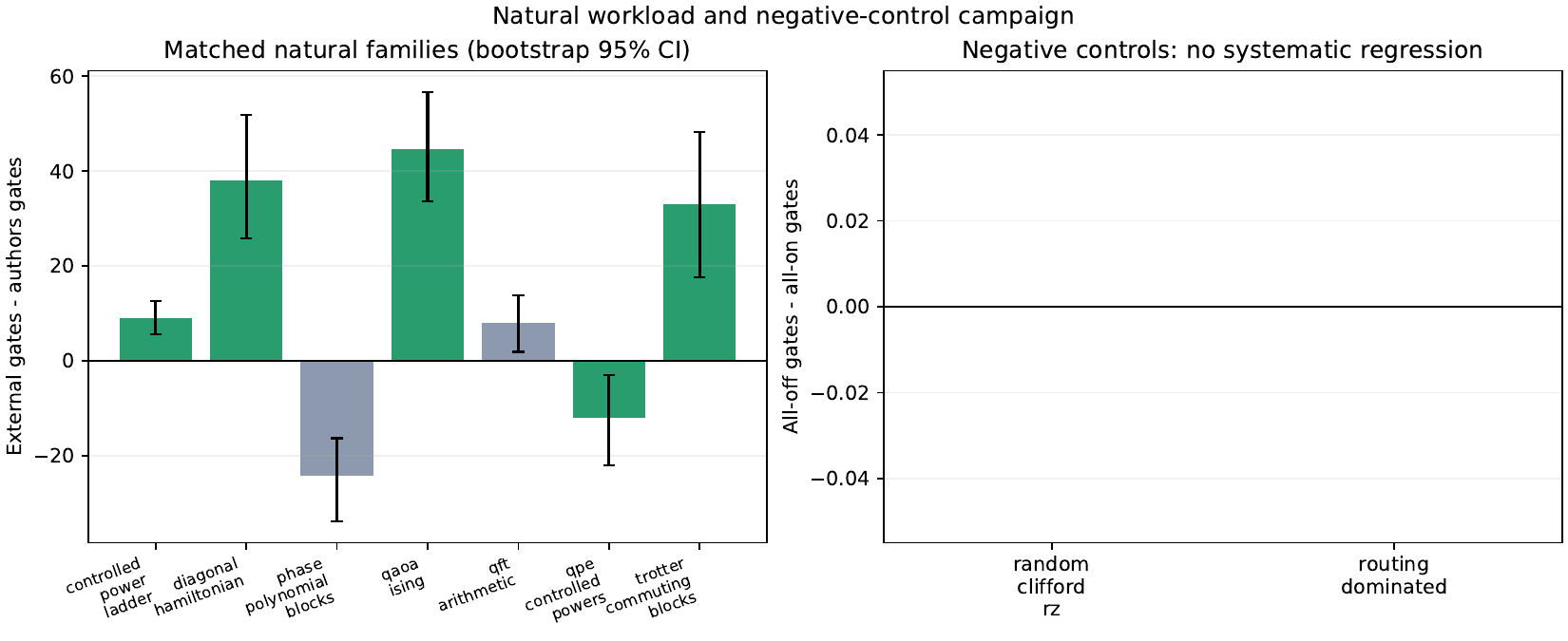}
\caption{Matched natural-family comparison and negative controls.  Error bars
are paired bootstrap 95\% intervals; all structured families and both controls
are retained.}
\label{fig:natural-summary}
\end{figure*}

\begin{figure*}[t]
\centering
\includegraphics[width=0.92\textwidth]{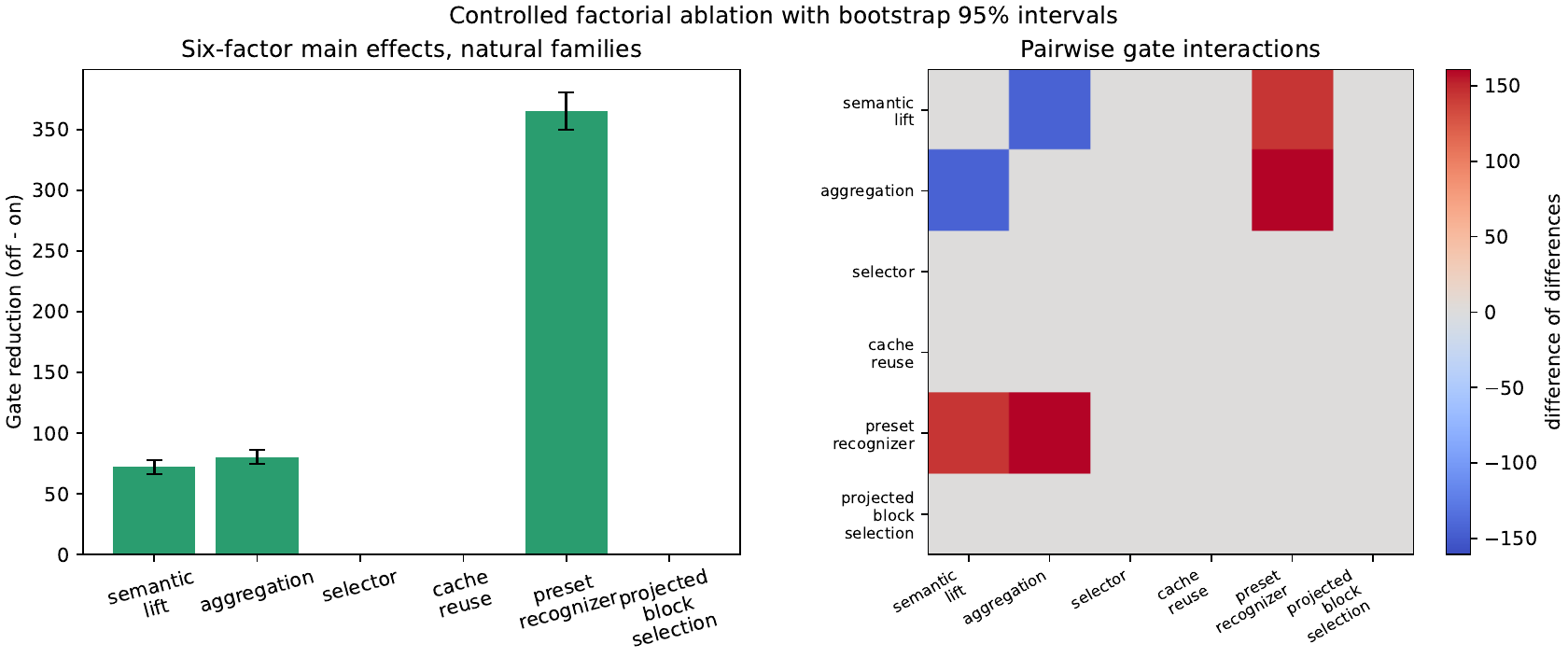}
\caption{All six factorial gate-count main effects and pairwise interactions.
Zero-effect factors are displayed rather than omitted.}
\label{fig:natural-factorial}
\end{figure*}

Five structured families satisfy the predeclared all-size/all-seed criterion;
QFT arithmetic and phase-polynomial blocks do not, and remain visible in
\cref{tab:natural-factorial,fig:natural-summary}.  The largest isolated
gate-count contribution is the preset recognizer, with mean reduction
\PresetGateEffect\ and bootstrap interval
[\PresetGateEffectLow,\PresetGateEffectHigh].  Aggregation and semantic lift
also have positive mean effects; selector, cache/reuse, and projected-block
selection have zero isolated gate effect in this controlled implementation.
Both negative controls have zero all-on-minus-all-off differences in gates,
depth, CX, logical $T$, and spacetime, hence
\NaturalNegativeRegressions\ systematic quality regressions in the frozen
grid.  These are finite reference measurements, not empirical proofs of the
streaming theorem or universal compiler superiority.

\section{Model-Specific Fault-Tolerant Resource Consequences}
\label{sec:ft-consequences}

The recoverability bounds of
\cref{sec:reduction,sec:w1-unified-frontier} establish an information
tradeoff, but they do not by themselves convert description bits into
gates or magic states.  The certified pipelines of
\cref{sec:validation} additionally exhibit different rotation
multiplicities.  This section propagates those measured circuit outputs
through Clifford+$T$ synthesis and surface-code scheduling under one
disclosed and versioned cost model.  The resulting physical numbers are
therefore a model-specific consequence of the measured pipelines, not a
representation-independent corollary of the bit lower bound.

The representation gap propagates into the fault-tolerant resources of the
explicit model in
\cref{tab:resource-consequence,fig:qre-pareto,fig:qre-sensitivity}.  Over all 96
matched target/error/allocation/QEC/factory settings, materialize-first uses
385--1{,}753 times as many logical $T$ states and 538--5{,}573 times the spacetime
volume of semantic-first.  No reversal occurs within the scanned error and
factory grid.  This is not a claim that the authors' implementation dominates
other semantic tools: external TKET \texttt{PauliSimp} reconstructs the phase
structure before synthesis and uses 0.443--0.536 times the semantic-first $T$
states on this witness.  Correctly crediting that recovery is part of the
comparison.

These values are outputs of actual checked angle synthesis followed by the
disclosed scheduling models, not universal hardware predictions.  The
absolute physical numbers still depend on routing, correlated-error, factory
layout, feed-forward, and code-distance assumptions outside this campaign.
The complete sensitivity plot and all preflight anomalies are retained in the
root QRE artifact package.

\begin{figure*}[t]
\centering
\includegraphics[width=0.98\textwidth]{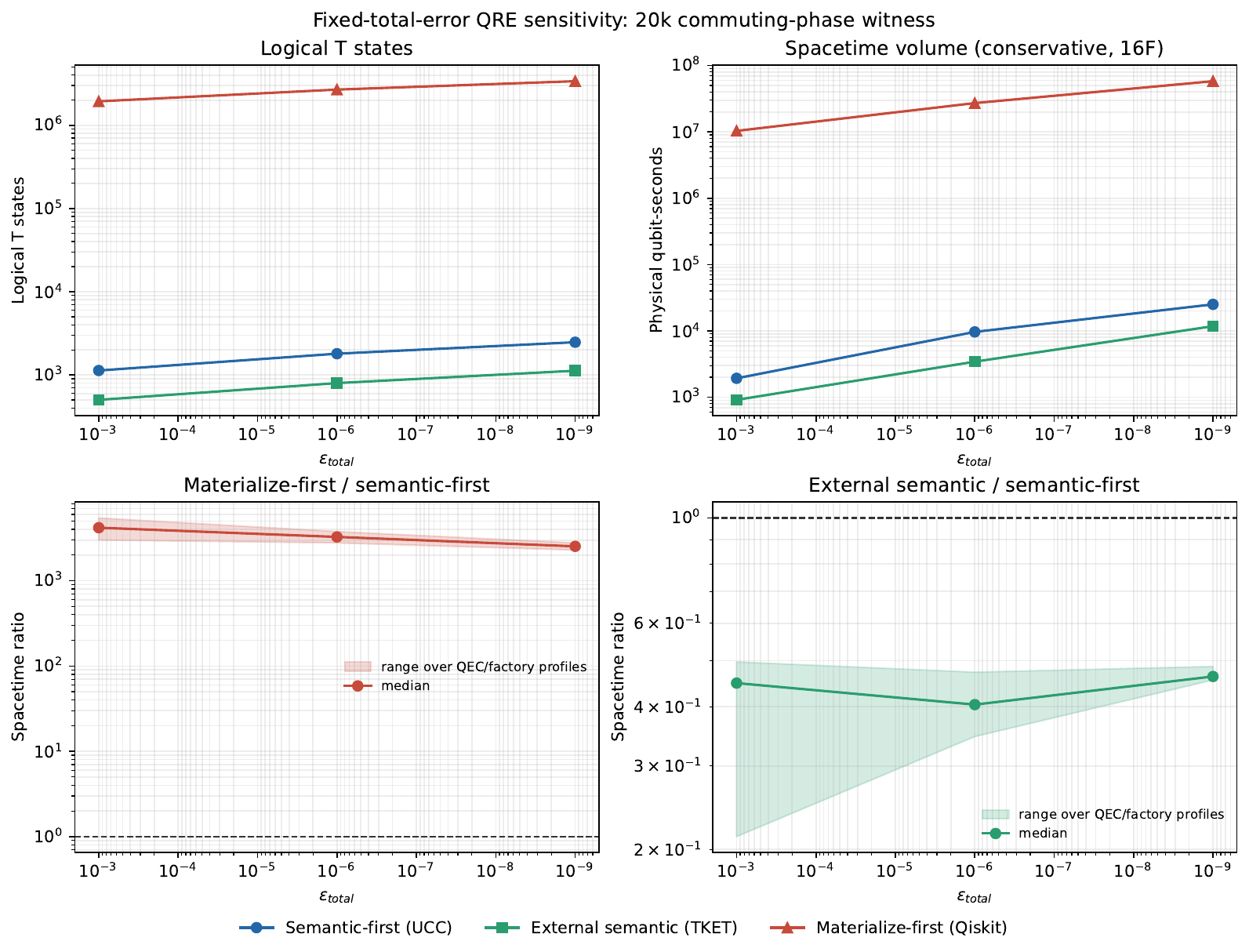}
\caption{Sensitivity to total error, QEC profile, and factory count on the 20k
witness.  Top: logical $T$ demand and the conservative-profile spacetime
volume.  Bottom: ratios to semantic-first, with bands spanning both QEC
profiles and all four factory counts.  The dashed line is equality.  No
materialize-first reversal occurs in the measured grid; the strong external
TKET path is reported on the semantic-capable side.}
\label{fig:qre-sensitivity}
\end{figure*}

\section{Related Work}
\label{sec:related-work}

\paragraph{Phase-polynomial, Pauli-gadget, and ZX optimization.}
Commuting-rotation aggregation is a standard optimization
target~\cite{amy2014tdepth,nam2018automated,ross2016optimal,cowtan2020phase,sivarajah2021tket,duncan2020graph,kissinger2020reducing,kissinger2020pyzx}.
The phase-polynomial line makes the generator--coefficient structure
explicit and then re-synthesizes the parity network, with or without a
connectivity
constraint~\cite{amy2019cnot,beaudrap2020techniques,meijer2023architecture};
the ZX line keeps a global diagrammatic normal form in which phases
migrate and fuse across syntactic
boundaries~\cite{vandewetering2020zx,kissinger2022simulating}.  All of
these are \emph{upper-bound} constructions: they exhibit a compiler that
recovers structure on the inputs where their rewrite system applies.
Our contribution is complementary and in the opposite direction: we
formalize a representation-dependent recoverability boundary with an
explicit $\epsilon$--resource tradeoff that no such pass can evade
without paying in the charged channels, connect it to communication
complexity, and measure model-specific fault-tolerant consequences under
one disclosed QRE.  Where these tools do recover the aggregate---as
rebased TKET \texttt{PauliSimp} does on our witnesses---the bound
charges the reconstruction state rather than treating it as free.
\Cref{sec:cut-tracing} measures that channel for the formal-model
reference compilers; the TKET API-boundary scan is reported separately
as a configured-pipeline diagnostic and is not identified with $S$.

\paragraph{Gate synthesis and $T$-count accounting.}
The packing entropy and the generic synthesis cost share the same
accuracy scale, but they are distinct statements.
Solovay--Kitaev decomposition~\cite{dawson2006solovay}, exact Clifford+$T$
synthesis~\cite{kliuchnikov2013fast}, and the number-theoretic
Ross--Selinger-type approximation
algorithms~\cite{ross2016optimal,selinger2015efficient} realize generic
$Z$-rotations with word length $\Theta(\log(1/\epsilon))$; multi-qubit $T$-count lower
bounds sharpen the same accounting at the unitary
level~\cite{gheorghiu2022tcount}.  The metric packing fixes
$K_\epsilon$ independently, and the QRE campaign then measures the
actual checked synthesis cost of each distinct output angle rather than
assigning one uniform repair cost to every coordinate.

\paragraph{Compiler IRs, verified compilation, and full-stack tooling.}
Quantum software stacks use intermediate representations to preserve
information across
lowering~\cite{steiger2018projectq,javadi2024qiskit,sivarajah2021tket,martiel2022architecture,watkins2024surfacecode,hietala2021verified,mills2021application,fellous2023optimizing}.
Recent designs are explicit about what the IR retains: OpenQASM~3 admits
timing and classical control in the instruction
stream~\cite{cross2022openqasm}, while SSA- and MLIR-based
representations expose dataflow so that passes can be composed and
verified~\cite{peduri2022qssa,mccaskey2021mlir}.  Our budget-explicit
model charges exactly this retained information---$B_{\mathrm{IR}}$ in
\cref{eq:info-budget}---rather than treating a richer IR as a free side
channel, which is what makes the semantic and materialized pipelines
comparable at all.

\paragraph{Routing, mapping, and benchmark suites.}
Architecture-directed passes reshape the stream that a downstream
compiler
sees~\cite{cowtan2019routing,li2019sabre,martiel2022architecture}, and
the platform-specific compilers for shuttling-based trapped
ions~\cite{kreppel2023shuttling} and dynamically field-programmable
neutral-atom arrays~\cite{tan2024dpqa} show how far that reshaping can
go before the stream reaches a downstream stage at all.  In
our model the \textsc{swap} and routing records they insert normalize
away only when the accumulated Clifford permutes the declared generator
set up to sign, exactly the hypothesis of
\cref{lem:clifford-interleaving}.  General routing that violates this
hypothesis is outside that corollary.  The
routing-dominated family is used as a negative control in
\cref{sec:workloads} for precisely this reason.  Benchmark suites
supply the workload conventions we
follow~\cite{quetschlich2023mqt,tomesh2022supermarq,mills2021application,li2023qasmbench}.

\paragraph{Fault-tolerant resource estimation.}
The downstream cost of a $T$ gate is set by distillation and by the
scheduling
model~\cite{bravyi2005magic,fowler2012surface,horsman2012lattice,litinski2019game,litinski2019magic}.
End-to-end estimates for concrete algorithms~\cite{gidney2021factor} and
the estimation frameworks that generalize
them~\cite{beverland2022assessing,watkins2024surfacecode} establish the
methodology used in \cref{sec:ft-consequences}: upstream rotation
multiplicity and synthesis accuracy propagate through distillation and
scheduling into spacetime cost, and the margin available to a quadratic speedup is
thin enough that they matter~\cite{babbush2021focus}.  Our QRE campaign
is an instance of this methodology under a disclosed model, not a new
estimator.

\paragraph{Noise tailoring and error mitigation.}
Randomized compiling and Pauli
twirling~\cite{wallman2016noise,hashim2021randomized,knill2005realistically}
are the mechanism that produces the dispersed stream we analyze, and
their sign randomization is exactly the transformation our
\cref{lem:clifford-interleaving} normalizes.  The broader mitigation
literature~\cite{cai2023mitigation}, including learned sparse noise
models used for probabilistic error
cancellation~\cite{vandenberg2023probabilistic}, shares the structural
feature that matters here: the physically executed circuit is a
randomized ensemble whose individual members do not display the
quantities the source program wrote down.

\paragraph{Classical lower-bound methodology.}
The proof imports fooling-set and communication-complexity reasoning into
a quantum-compilation representation
question~\cite{sipser2012introduction,kushilevitz1997communication,alur2010streaming}.
The pass-explicit form of the frontier follows the classical
space--passes tradition for one-way and multi-pass computation over a
stream~\cite{munro1980selection,alon1999space}, where the object charged
is the state serialized at a cut rather than total work.

\section{Discussion}
\label{sec:discussion}

\subsection{Representation as a Resource}

The central message is that representation is a computational resource in
quantum compilation.  A compiler receiving a semantic representation that
preserves generator--coefficient structure can aggregate in $O(m)$
records.  The cyclic packing gives
$\Omega(m\log(1/\epsilon))$ bits separately for family-relative and
self-contained output codes and is compatible with an $r$-round modular
share stream.  In the stated forward-scan model its pass-explicit frontier is
$\overline A_p+(2p-1)S\geq
(1-\delta)m\log_2K-h_2(\delta)$
(\cref{thm:w1-unified-frontier}).  Thus a deferred flat compiler needs
$S=\Omega(m\log K/p)$.  The block hybrid attains
$S=O((m/p)\log(K+1))$ with compact family-relative output, within the
proved constant factor, while the direct semantic vector uses only
$O(\log(mK))$ state
(\cref{prop:w1-block-hybrid,prop:w1-semantic-upper}).  The older binary
two-round theorem remains a sharper zero-error, one-pass special case.

This boundary is channel-explicit.  Equation~\eqref{eq:info-budget}
serializes control, store, window structure, every live parameter, every
live IR object, cache/hash layout, dynamic-pass state, and random seed.  A
compiler that reconstructs a global aggregate---as
rebased TKET PauliSimp does on the fixed-width witnesses---may succeed, but
its reconstruction table or graph is charged in $B_{\mathrm{cut}}$ rather
than treated as a free semantic side channel.

\subsection{Scope and Limitations}

\begin{enumerate}[leftmargin=1.5em]
    \item The results have different scopes.  The unified theorem
    (\cref{thm:w1-unified-frontier}) covers $r$-round additive shares,
    randomized failure, and $p=p(m)$ forward scans of the round-major
    stream, but charges all $2p-1$ serialized crossings and claims only a
    constant-factor state upper envelope.  Reverse scans, an uncharged
    random-access input oracle, or a different output code are outside that
    theorem.  The older representation-dependent tradeoff
    (\cref{thm:main-tradeoff}) is scoped to the two-round masked-share
    stream and gives the stronger deterministic one-pass mask count.  The
    exact-semantics corollary is scoped to fixed width, constant pass/cut
    budget, and certified noncollision.  The growing-width theorems instead
    use the complete bit codec and allow random access and dynamic
    $p=p(m,I,\rho)$.  The decoder is representation-invariant, but all
    syntax lengths retain their declared encoding semantics.
    \item The commitment toll (\cref{prop:commitment-toll}) is the most
    narrowly scoped result here, and deliberately so.  It is a
    two-sided statement about a compiler class, not about commitment in
    general: clause (i) needs the prefix to be \emph{executed} at the
    cut, so that it must name the coordinates it acts on, and clause
    (ii) exhibits an admissible buffered compiler that pays nothing.
    The bound is also on an ensemble of masks compiled by one
    mask-uniform algorithm; for a single mask known to the serializer
    there is no toll to pay.  The payload term is charged against the
    declared record content rather than the round-one unitary, which
    fixes it only up to zero residues and the
    $(s,a)\mapsto(-s,-a)$ degeneracy (\cref{rem:self-id-scope}).  Both
    clauses are measured only on the instrumented device of
    \cref{fig:commit-toll}; the two preregistered attempts to observe
    the toll at a third-party compiler boundary failed and are reported
    as failures.
    \item The tuned cyclic packing is stated for
    $0<\epsilon\leq1/8$ and chooses sharing modulus
    $Q_\epsilon=K_\epsilon+1$.  Its coefficient interval can cross $\pi$;
    \cref{thm:w1-exact-separation} handles circular distance and global
    phase exactly.  The prior no-wrap packing and the binary/$\mathbb Z_3$
    theorem remain separate valid specializations.
    The configured scaling experiment uses $K=2$, $\Delta=\pi/4$ and is not
    claimed to instantiate the masked-share theorem.
    \item The fixed-total-error experiment performs actual Clifford+$T$
    synthesis and executable surface-code calculations under two disclosed
    physical/QEC profiles, but these profiles share one phenomenological model
    family.  Its input is the measured rotation multiplicity of the three
    configured pipelines; the information lower bound alone does not imply a
    $T$-count or spacetime-volume ratio.  Absolute physical counts still require
    cross-validation under an independent estimator and architecture-specific
    routing/factory layouts.
    \item The packing-family scaling experiment
    (\cref{sec:packing-scaling}) links the growing-$m$ theorems to
    configured-pipeline behavior.  The instrumented reference compilers measure
    the masked-share cut fields directly (\cref{sec:cut-tracing}), but they are
    controlled formal-model objects, not instrumented versions of every
    external compiler.
    \item Dense independent certification fixes the W8 natural-workload width
    at six qubits; the size axis repeats and densifies structure rather than
    increasing Hilbert-space width.  Runtime and RSS are local-machine
    measurements.  The two negative controls and three zero gate-effect
    ablation factors are retained as finite null results, not safety theorems.
    \item The clean-room reconstruction creates an independent working
    directory and fresh virtual environment bound to locked project packages.
    It proves directory-level reconstruction of selected theorem, benchmark,
    and QRE results, not binary or container independence from the host.
\end{enumerate}

\section{Conclusion}
\label{sec:conclusion}

We have formalized approximate recoverability with a growing-width token
codec and a bit-level cut budget whose five executable serialization fields
include parameter, RAM/cache, random-seed, dynamic-pass, and IR channels.
The cyclic $K_\epsilon$-ary family has exact minimum separation
$2\sin(\pi/(2(K_\epsilon+1)))$, satisfies
$K_\epsilon=\Theta(1/\epsilon)$, and admits semantic and $r$-round flat
representations of the same $U_x$ even though no individual update
determines $x_j$.  For randomized failure $\delta$ and $p=p(m)$ forward
passes, each named output code, as well as the sum of the prefix-side
commitment and crossing transcript, requires at least
$(1-\delta)m\log_2K_\epsilon-h_2(\delta)$ expected bits.  Hence a deferred
compiler satisfies $(2p-1)S\geq
(1-\delta)m\log_2K_\epsilon-h_2(\delta)$.  A block hybrid reaches compact
family-relative output with
$S=O((m/p)\log(K_\epsilon+1))$, within the proved constant factor, and the
randomized relative-output bound is matched up to framing.  The semantic
vector achieves compact output with logarithmic state.  The two payment
channels exchange at par in the frontier but not everywhere beneath it:
commitment that is \emph{executed} at the cut additionally pays
$\log_2\binom{m}{k}$, the entropy of the dispersal mask, while deferred and
merely buffered commitment pay none of it, and we prove and measure both
directions rather than only the one that favors the thesis.  The
exact-semantics corollary retains the fixed-width final-output lower bounds
of the prior version.  The traced reference-compiler suite measures the
named commitment, state, and IR channels directly; the matched and external
campaigns show when global semantic tools recover compact structure; and the
natural-workload factorial records positive, null, and negative-control
outcomes without selection.  Fixed-total-error synthesis/QRE propagates the
measured rotation multiplicities of the configured pipelines through an
explicit physical model; the bit lower bound alone does not determine that
cost.  Every W3--W9
headline measurement is generated from the normalized frozen-data registry,
while retained legacy diagnostics are labelled historical.  Recoverability
should be treated as a controlled variable in compilation and resource
estimation.

\section*{Reproducibility}
\label{sec:reproducibility}

For scale, the headline campaigns comprise \TradeoffRuns\ traced
reference-compiler runs; a \MatchedCells-cell matched-representation
design and a separate \ExternalCells-cell external campaign containing
TKET PauliSimp and PyZX full-reduce, with completed metrics accepted
only when supported by the stated symbolic or dense certificate; \QRECells\
complete fixed-total-error fault-tolerant estimates under a common
algorithmic, synthesis, and logical-failure budget; and \NaturalCells\
natural-workload and negative-control cells crossing a six-factor
ablation, in which \NaturalStableFamilies\ structured families satisfy
the predeclared all-size/all-seed criterion and the two negative
controls show \NaturalNegativeRegressions\ systematic regressions.
Explicitly labelled historical diagnostics retain their original
artifact provenance.

The canonical registry is \path{data/manifest.yaml}; the normalized schema is
\path{schema/experiment.schema.json}.  Its complete key is
\begin{center}
\footnotesize\ttfamily
experiment\_id, instance\_id, representation\_id,\\
method\_id, seed, backend\_hash, tool\_version,\\
artifact\_commit, config\_hash.
\end{center}
The normalized Parquet index contains \UnifiedRows\ unique records and retains
\UnifiedPredicateErrors\ predicate-error rows outside quality means.  The data
audit checks CSV/Parquet mirrors, conflicting numeric cells, duplicate TeX
labels, source hashes, row counts, orphaned frozen files, and historical versus
canonical campaign disposition.  It finds no duplicate primary keys,
registered-file omissions, conflicting mirror cells, hash drift, or
unregistered frozen artifacts among the committed canonical sources.

All measurement macros, tables, and figures used by the W3--W9 sections are
generated from registered frozen sources by
\path{scripts/build_all_tables.py} and
\path{scripts/build_all_figures.py}; the operation and unique source row for
each printed number are recorded in \path{data/provenance_map.yaml} and tested
by \path{tests/test_paper_number_provenance.py}.  The correctness language,
symbolic checker, and mutation/dense audit are in \path{certificates/}; the
QRE schemas and two physical profiles are in \path{qre/}.  The approximately
700-MB raw W3--W8 run tree, including manifests, logs, certificates, seeds, and
package records, is versioned in the separate private
\path{OIerYangJZ/paper1-data} archive at the revision recorded in
\path{data/manifest.yaml}; its absence is reported but does not invalidate the
committed frozen-data rebuild.

The one-command entry point is
\path{reproducibility/run_submission.sh}.  From a fresh clone it creates the
locked project environment when needed, reruns frozen-data audits, regenerates
all registered tables and figures, executes the full test suite, builds the
canonical, arXiv, Quantum-submission, and supplement PDFs, and rejects
unresolved references.  This procedure reproduces the submitted analysis from
immutable frozen datasets; it does not rerun the complete measurement
campaigns.  The clean-room runner
copies a minimal artifact to an independent temporary directory, creates a
fresh environment bound to the locked project packages, reruns
\CleanTheoryTests\ headline-theorem tests, compiles and densely certifies a
matched QAOA/Ising instance through the authors and TKET paths, and exercises
\CleanGridPairs\ deterministic synthesis pairs in a fixed-total-error QRE
subset.  This is working-directory isolation, not a claim of host-binary
independence.  Complete audit and limitation records are
\path{DATA_AUDIT_REPORT.md}, \path{CLEAN_ROOM_REPORT.md},
\path{QRE_REPORT.md}, and \path{NATURAL_WORKLOAD_REPORT.md}.

\begin{acknowledgements}
J.Y.'s deepest thanks go to two teachers who are also coauthors of this
paper, Yangyang Li and Xiu-Hao Deng.  Their help was immense; without them
this paper would not exist.

J.Y. thanks Ke Lin of Xidian University for opening the door to research
and for the recommendation that led him to join Prof.~Li's group.

J.Y. thanks Brad Chase: the first ideas behind this paper arose while
fixing an issue in his \texttt{ucc} repository.

J.Y. also thanks the teachers who were willing to offer guidance along the
way, and the senior members of Prof.~Li's group, who have been generous
with their help throughout this work.

Thanks are due, finally, to everyone who supported this paper, too many to
name individually here.

Y.L. acknowledges support from the National Natural Science Foundation of
China under Grant No.~62476209, the Key Research and Development Program of
Shaanxi under Grant No.~2024CY2-GJHX-18, and the Fundamental Research Funds
for the Central Universities under Grant No.~QTZX26097.  X.-H.D.
acknowledges support from the Science, Technology and Innovation Commission of Shenzhen
Municipality (Grant Nos.~JCYJ20170412152620376 and KYTDPT20181011104202253),
the Innovation Program for Quantum Science and Technology (Grant
No.~2021ZD0301703), the Guangdong Major Project of Basic Research (Grant
No.~2025B0303000007), and the Shenzhen Science and Technology Program
(Grant No.~KQTD20200820113010023).

\ifdefined\QuantumSubmissionVersion
\else
This paper pays particular attention to provenance: every printed number is
traceable to a unique source row, every experiment to a frozen registry, and
every preregistered prediction to a criterion written down before the run.  That being
so---\emph{what one begins with, one must end with} (\emph{Zuo Zhuan},
Xuan 12)---it is fitting to record the provenance of the framing as well,
even though that provenance does not lie in the physics literature.

Two passages from the classical Chinese canon shaped parts of this work.
Their relevance is direct, and the acknowledgements provide a fitting
context in which to quote them.

The first is from the \emph{Xici} commentary to the \emph{Yijing}: ``What
is above form is called the Way; what is below form is called the vessel;
to transform and tailor it is called change; to carry it forward and put it
into practice is called continuity.''  What is useful here is not an
analogy but a set of distinctions.  The passage names four things
separately, and ranks none of them above another: a description that
carries structure, a description that carries only instructions, the act of
lowering the one into the other, and the act of execution.  A compiler is
precisely the engine that performs the third and hands off the fourth; and
the vocabulary of compilation itself---semantic level, lowering,
commitment---reproduces the same fourfold division without, so far as we
know, having borrowed it.  The lesson we draw from the passage is that
the level at which a computation is described is not a notational
convenience to be optimized away, but a variable that can be charged for.
This paper charges it, in bits.

The second is Chapter 64 of the \emph{Laozi}: ``What is at rest is easy to
hold; what has not yet shown a sign is easy to plan for; what is brittle is
easy to shatter; what is minute is easy to scatter.  Act on it before it
comes to be; order it before it falls into disorder.''  Every clause here
prices an action by the state of the object acted upon rather than by the
action itself---which is exactly this paper's claim about representation:
the same aggregation is cheap or expensive depending only on the form in
which the layer arrives.  The verb of the first clause, \emph{to hold}, is
the operation charged here as compiler state.  The chapter further
separates planning early, which it recommends, from grasping early, warning
that ``one who grasps it loses it''; that separation is,
here, the difference between a compiler that defers and one that commits,
and the price of the difference is the commitment toll.  And the rule on
which the chapter closes is the design rule of our abstract, read from the
other side.

Neither passage contains a theorem, and neither anticipates one; no result
reported here depends on either.  A reader who finds these readings
strained may discard them entirely and judge the paper by whatever in it
proves useful.  They are recorded because the question did come from there
and because, for a framing as for a datum, a reader may rightly ask where it
came from.  Translations are J.Y.'s own.
\fi

\begingroup
\looseness=-1
Generative AI tools were used to assist with manuscript drafting and
revision, code and repository maintenance, literature discovery, and
consistency checks.  The authors directed its use through task-specific
prompts and independently reviewed the generated text, citations, code, and
numerical claims against the cited literature, source files, executable tests,
and the symbolic or dense certificates reported in this work.  AI output was
not treated as a scientific source or as independent proof verification.  The
authors take full responsibility for the manuscript, proofs, code, data,
figures, citations, and conclusions.\par
\endgroup
\end{acknowledgements}

\section*{Author contributions}

J.Y. developed the theory and proofs, implemented the compiler instrumentation
and experimental pipeline, and performed the majority of the measurements and
analyses.  Y.L. supervised the research, reviewed and revised the manuscript,
and provided project administration and funding support.  X.-H.D.
independently validated the theoretical derivations, numerical results, and
computational artifacts.  All authors wrote the manuscript together,
discussed the results, and approved the final version.

\bibliographystyle{quantum}
\bibliography{references}

\appendix
\input{appendix/appendix_A}
\input{appendix/appendix_B}
\input{appendix/appendix_C_prxq}

\end{document}

%% file: generated/paper_numbers.tex
\newcommand{\TradeoffRuns}{\DataNumber{w3-cells}{592}}
\newcommand{\MatchedCells}{\DataNumber{w4-cells}{1,080}}
\newcommand{\MatchedCompleted}{\DataNumber{w4-completed}{1,065}}
\newcommand{\MatchedPredicateErrors}{\DataNumber{w4-predicate-errors}{15}}
\newcommand{\ExternalCells}{\DataNumber{w5-cells}{1,728}}
\newcommand{\ExternalCompleted}{\DataNumber{w5-completed}{1,695}}
\newcommand{\ExternalPredicateErrors}{\DataNumber{w5-predicate-errors}{33}}
\newcommand{\CertificateMutationCases}{\DataNumber{w6-mutations}{56}}
\newcommand{\CertificateFalseAccepts}{\DataNumber{w6-false-accepts}{0}}
\newcommand{\CertificateCrosscheckCases}{\DataNumber{w6-crosschecks}{12}}
\newcommand{\QRECells}{\DataNumber{w7-cells}{288}}
\newcommand{\QREMaterializeSpacetimeRatio}{\DataNumber{w7-spacetime-ratio}{2814}}
\newcommand{\NaturalCells}{\DataNumber{w8-cells}{21,060}}
\newcommand{\NaturalStableFamilies}{\DataNumber{w8-stable-families}{5}}
\newcommand{\NaturalNegativeRegressions}{\DataNumber{w8-negative-regressions}{0}}
\newcommand{\PresetGateEffect}{\DataNumber{w8-preset-effect}{364.9}}
\newcommand{\PresetGateEffectLow}{\DataNumber{w8-preset-effect-low}{349.8}}
\newcommand{\PresetGateEffectHigh}{\DataNumber{w8-preset-effect-high}{380.5}}
\newcommand{\UnifiedRows}{\DataNumber{w9-unified-rows}{24,748}}
\newcommand{\UnifiedPredicateErrors}{\DataNumber{w9-predicate-errors}{48}}
\newcommand{\CleanTheoryTests}{\DataNumber{w9-clean-theorem-tests}{9}}
\newcommand{\CleanGridPairs}{\DataNumber{w9-clean-grid-pairs}{17}}

%% file: theory/model_implementation_mapping.tex
\subsection{Executable Model--Implementation Correspondence}
\label{sec:model-implementation-mapping}

The formal fields above are implemented by one canonical binary codec rather
than by dimension-changing estimates.  The executable specification is
\texttt{encoding/FORMAT.md}; its reference implementation is
\texttt{encoding/codec.py}, and the only bit-accounting entry point is
\texttt{instrumentation/resource\_accounting.py}.  A complete codec frame has
the form
\begin{equation}
 \mathtt{UCCACCT0}\,\Vert\,
 \mathsf u(\text{version})\,\Vert\,
 \mathsf u(\text{type})\,\Vert\,
 \mathsf u(|z|)\,\Vert\,z,
 \label{eq:executable-codec-frame}
\end{equation}
where the displayed terminal zero denotes a null byte, $\mathsf u$ is the
canonical base-$128$ unsigned varint, and $z$ is a typed canonical payload.
Signed integers use zigzag coding; rational coefficients are reduced signed
numerator/positive-denominator pairs.  Maps are ordered by their UTF-8 key
bytes.  Qubit, generator, symbol, location, node, edge-endpoint, port, pass,
and head addresses are literal varints in $z$.

\begin{table*}[!t]
\centering
\scriptsize
\setlength{\tabcolsep}{3pt}
\begin{tabularx}{\textwidth}{@{}l l X X@{}}
\toprule
Formal object & Codec object/file & Included information & Measured field \\
\midrule
semantic input & \texttt{SemanticStream} & width and address-complete aggregate records & input artifact bits \\
dispersed flat input & \texttt{FlatUpdateStream} & width, round count, and every additive update & input artifact bits \\
$\mathsf{ser}_{\rm control}$ & \texttt{ControlState}/\texttt{control.uccbin} & pass count/index, heads, scheduling, status, framing and seed state when present & $B_{\rm control}$ \\
$\mathsf{ser}_{\rm store}$ & \texttt{StoreState}/\texttt{store.uccbin} & aggregate tables, maps, caches, stacks and their layout, excluding tagged parameter/IR payloads & $B_{\rm store}$ \\
$\mathsf{ser}_{\rm window}$ & \texttt{LiveWindow}/\texttt{window.uccbin} & ordered live complete-token structure and explicit parameter-slot count & $B_{\rm window}$ \\
$\mathsf{ser}_{\rm parameter}$ & \texttt{ParameterLedger}/\texttt{parameter.uccbin} & sparse symbolic parameters under complete location tuples & $B_{\rm parameter}$ \\
$\mathsf{ser}_{\rm IR}$ & \texttt{IRField}/\texttt{ir.uccbin} & serialized DAG width, opcodes, operands, nodes, edges, ports, boundaries and metadata, with parameters removed & $B_{\rm IR}$ \\
committed prefix & \texttt{committed-prefix.bin} & literal immutable prefix of a codec output frame & $B_{\rm com}$ \\
external output & \texttt{ExternalToolOutput} & tool, format and literal API payload & output artifact bits \\
certificate input & \texttt{CertificateInput} & certificate kind, certified-artifact SHA-256 and literal sidecar & certificate artifact bits \\
pass count & manifest \texttt{passes.count} and row \texttt{p} & actual declared scans of the instrumented compiler & $p$ \\
\bottomrule
\end{tabularx}
\caption{One-to-one map from formal serialized objects to executable artifacts.
No row permits a gate, token, support, or IR-node count to stand in for bits.}
\label{tab:model-implementation-map}
\end{table*}

\begin{lemma}[Canonical prefix-free implementation]
\label{lem:codec-prefix-free}
Version-one complete frames are injective and prefix-free.  Concatenations of
frames are uniquely parseable without an out-of-band arity or record count.
The semantic and flat decoders preserve their exact aggregate coefficient
table.
\end{lemma}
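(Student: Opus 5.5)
The plan is to prove the three claims of \cref{lem:codec-prefix-free} in order: frames are prefix-free (and hence injective), concatenations parse uniquely, and the decoders preserve the aggregate table. Each step should reduce to elementary properties of the frame layout in \eqref{eq:executable-codec-frame}.

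\textbf{Step 1 (frames are prefix-free).} Every complete frame starts with the fixed magic string $\mathtt{UCCACCT0}$ and its null byte, followed by $\mathsf u(\text{version})$, $\mathsf u(\text{type})$ and $\mathsf u(|z|)$. The canonical base-$128$ varint is itself prefix-free, because the continuation bit of the last byte is clear and no earlier byte has a clear continuation bit. A parser reading any byte string that begins with a valid frame therefore recovers the magic, the version, the type and the length $|z|$ unambiguously. It then reads exactly $|z|$ further bytes. Now suppose one frame were a proper prefix of another. The two frames would agree on the header, hence on $|z|$, hence on the total length, which is a contradiction. Prefix-freeness gives injectivity at the frame level, since equal frames are in particular prefixes of each other. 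Unique parseability of concatenations then follows by the standard induction: peel off the first frame, which is determined by the argument above, and recurse on the remainder. No out-of-band count is needed, because each frame's own header supplies its length.

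\textbf{Step 2 (injectivity of the payload encoding).} It remains to show that the map from typed objects to payloads $z$ is injective, given the type. This is where the canonicalization rules matter. Unsigned integers use canonical varints with no redundant leading zero groups. Signed integers use zigzag coding, which is a bijection $\mathbb Z\to\mathbb N$. Rationals are stored in reduced form with a positive denominator, so each value has exactly one representation. Maps are sorted by UTF-8 key bytes, so equal maps serialize identically, and distinct maps differ in some key or value. Each field is either itself self-delimiting (a varint) or length-prefixed, so I would argue by structural induction over the \texttt{FORMAT.md} grammar that a canonical payload decodes to a unique object. I expect this to be the main obstacle: the argument depends on every composite field being either fixed-arity given its type or explicitly counted, mirroring the explicit $r,s,k$ counts of \cref{def:token-codec}. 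I would check it type by type against the grammar, and I would treat any noncanonical byte string as rejected rather than decoded.

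\textbf{Step 3 (aggregate preservation).} For \texttt{SemanticStream}, the decoder returns exactly the address-complete records that were encoded, so the aggregate table is read off directly. For \texttt{FlatUpdateStream}, decoding returns every update $(t,j,a_j^{(t)},s_j^{(t)})$ exactly, by Step 2. Recomputing $x_j=\sum_t s_j^{(t)}a_j^{(t)}$ (modulo $Q$ in the cyclic regime) then reproduces the coefficient table. By \cref{prop:family-aggregates}, this table realizes the same $U_x$. Exact rational arithmetic ensures that no rounding intervenes anywhere in this computation.
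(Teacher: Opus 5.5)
Your proposal follows the paper's proof: canonical varints with a unique terminating byte fix the version, type and length, and the declared length locates the frame boundary, which gives prefix-freeness and unique parsing of concatenations. Canonical payload forms then give injectivity, and exact decoding preserves the table that \texttt{semantic\_signature} forms by summing exact rational sparse parameters keyed by (generator address, Pauli support), so Step~3 needs no reduction modulo $Q$ and no appeal to \cref{prop:family-aggregates}.

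The ``main obstacle'' you flag in Step~2 is settled in the paper by observing that typed payload values carry distinct one-byte tags, while lists and maps carry counts. Note also that frame-level injectivity comes from that payload argument together with the type field in the header, not from prefix-freeness itself as your Step~1 sentence suggests.
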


\begin{proof}
Canonical varints have a unique terminating byte, and the version, type, and
payload length are therefore uniquely decoded.  The declared payload length
locates the end of the frame.  A complete valid frame cannot be a proper prefix
of another complete valid frame: the shorter declared payload ends at its frame
boundary, whereas any following byte belongs to a subsequent frame.  Typed
payload values have distinct one-byte tags; lists and maps carry counts, maps
have strictly increasing keys, and reduced rational pairs are unique.  This
proves injectivity and unique concatenated parsing.  Finally,
\texttt{semantic\_signature} sums exact rational sparse parameters by the
literal pair (generator address, Pauli support), so encode--decode leaves the
aggregate table unchanged for both input representations.
\end{proof}

\begin{proposition}[File-backed accounting soundness]
\label{prop:file-backed-accounting}
Suppose the version-one verifier accepts a cut manifest $M$.  For every
$f\in\{\mathrm{control},\mathrm{store},\mathrm{window},
\mathrm{parameter},\mathrm{IR},\mathrm{com}\}$, let $A_f$ be the file named by
$M$.  Then the implementation reports
\begin{equation}
\begin{aligned}
 B_f&=8|A_f|_{\rm byte},\\
 B_{\rm cut}&=B_{\rm control}+B_{\rm store}+B_{\rm window}\\
             &\quad+B_{\rm parameter}+B_{\rm IR},
\end{aligned}
 \label{eq:file-backed-accounting}
\end{equation}
and reports the manifest pass count as $p$.  The reported values change when a
serialized address crosses a varint-width boundary.  They cannot be obtained
from a node or gate count alone.
\end{proposition}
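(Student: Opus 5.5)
The statement is \cref{prop:file-backed-accounting}, and the argument is essentially an implementation audit. It rests on \cref{lem:codec-prefix-free} and on the one-to-one map of \cref{tab:model-implementation-map}. The proof unfolds what the version-one verifier checks before it accepts a manifest $M$, and then shows that the reported numbers are read off the named files and nothing else.

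\textbf{Step 1: acceptance pins down the six files.} First I will list the acceptance conditions. The manifest names exactly one file $A_f$ for each $f\in\{\mathrm{control},\mathrm{store},\mathrm{window},\mathrm{parameter},\mathrm{IR},\mathrm{com}\}$. Each field file must parse as a complete version-one frame \eqref{eq:executable-codec-frame} of the type assigned to it by \cref{tab:model-implementation-map}. The committed-prefix file must be a literal prefix of a codec output frame. Because frames are injective and prefix-free (\cref{lem:codec-prefix-free}), acceptance determines each file's byte content uniquely, with no trailing or ambiguous bytes.

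\textbf{Step 2: each reported $B_f$ is eight times the file length.} The only bit-accounting entry point, \texttt{instrumentation/resource\_accounting.py}, computes $B_f$ as $8$ times the on-disk byte length of $A_f$. This gives the first line of \eqref{eq:file-backed-accounting}.

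\textbf{Step 3: $B_{\rm cut}$ is the sum of the five state fields, with no double counting.} $B_{\rm cut}$ is obtained by summing the five non-commitment fields. The five fields are disjoint by construction (\cref{rem:escape-channels}): the store and IR payloads carry location tags in place of parameters, so no byte is counted twice. This matches \eqref{eq:info-budget}. Likewise, $p$ is copied verbatim from the manifest entry \texttt{passes.count}.

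\textbf{Step 4: the value is sensitive to varint width, so it cannot come from counts alone.} Addresses are stored as canonical base-$128$ varints, so $127$ occupies one byte and $128$ occupies two. Take two cut states that are identical except that one address is $127$ in the first and $128$ in the second. They have the same node and gate counts, yet their files differ by one byte, so the reported values differ by $8$ bits. Since two inputs with equal counts produce different reported values, no function of node or gate count alone can produce the reported value.

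\textbf{Main obstacle.} The hard part is Step 1: making sure the verifier's acceptance really excludes out-of-band information, such as a payload length disagreeing with the file size or a truncated prefix. I would handle this by arguing from the declared-length field of the frame. That field, together with the verifier's file-size check, ties the parsed frame boundary exactly to the end of the file.
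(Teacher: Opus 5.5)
Your proposal is correct and follows essentially the paper's route. The paper also audits the accountant--verifier pipeline: the files are written first, their sizes are taken by \texttt{stat}, and the sum is formed with no count-to-bit code path. It then gives the same $127$ versus $128$ varint example for two cut states with equal node count.

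The one difference is how the reported sizes are tied to the named files. The paper does this through the verifier, which reopens all six files, recomputes their sizes and SHA-256 digests, and rejects any disagreement. Your Step~1 appeal to \cref{lem:codec-prefix-free} is therefore not needed, because $B_f=8|A_f|_{\rm byte}$ counts every byte of the file whether or not a frame boundary coincides with the end of the file.
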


\begin{proof}
The accountant first writes the five complete frames and the literal committed
prefix, then obtains every byte count by \texttt{stat}, stores the SHA-256 of
each file, writes $M$, and invokes the verifier before returning a row.  The
verifier reopens all six files, recomputes sizes and digests, decodes the five
restart fields, and rejects any disagreement.  It then forms the sum in
\eqref{eq:file-backed-accounting}; there is no count-to-bit code path.  Since an
address is a literal canonical varint, for example $127$ and $128$ occupy
different address lengths.  Conversely, two one-node DAGs may contain those
different addresses and hence have unequal file lengths despite equal node
count.
\end{proof}

\begin{remark}[Observable boundary versus restart cut]
\label{rem:observable-boundary-accounting}
The masked-share tracer serializes the five declared cut fields after every
update.  The verifier reopens and decodes every recorded field slice, checks
its size and digest, and recomputes the cut sum; its maximum is therefore a
file-backed measurement of the declared $B_{\rm cut}$ coordinate.  The W3
campaign does not claim that the Python process is destroyed and restarted
between events.  External compilers do not expose their private in-process DAGs,
caches, or tapes.  Their matched-matrix and natural-factorial rows consequently
report a file-backed \emph{API-boundary} snapshot and are diagnostic only; they
are not promoted to measurements of the maximum private $B_{\rm cut}$.  Peak
RSS remains a separate byte dimension and is not substituted for any field in
\eqref{eq:info-budget}.
\end{remark}

\begin{remark}[Self-contained and conditional descriptions]
\label{rem:dictionary-accounting-code}
Both input-stream objects declare either \texttt{self\_contained} or
\texttt{conditional}.  The former forbids an external dictionary.  The latter
requires the SHA-256 of the exact public family dictionary, and its manifest
records the dictionary's separate file size and whether it is charged at the
cut.  Thus a conditional description cannot silently be reported as a
self-contained circuit code.
\end{remark}

%% file: theory/qary_packing.tex
\subsection{Exact cyclic $K$-ary packing}
\label{sec:w1-qary-packing}

The no-wrap construction of \cref{def:packing-construction} is convenient
for output counting, but modular additive shares require a grid whose period
is also the sharing modulus.  The following formulation records exactly when
wraparound creates a collision.

\begin{definition}[Wrap margin]
\label{def:w1-wrap-margin}
For an alphabet $[K]_0$, step $\Delta>0$, and $K\geq2$, define
\begin{align}
 \operatorname{dist}_{2\pi}(a,0)
 &:=\min_{z\in\mathbb Z}|a-2\pi z|,\\
 \gamma(K,\Delta)
 &:=\min_{1\leq d\leq K-1}
       \operatorname{dist}_{2\pi}(d\Delta,0).
\end{align}
Let $P_1,\ldots,P_m$ be independent Pauli-$Z$ characters as in
\cref{def:packing-construction}, and set
\begin{equation}
 U_x^{(K,\Delta)}:=
 \exp\!\left[-\frac{i}{2}\sum_{j=1}^m x_j\Delta P_j\right],
 \qquad x\in[K]_0^m.
\end{equation}
\end{definition}

\begin{theorem}[Exact worst-pair separation]
\label{thm:w1-exact-separation}
For every $m\geq1$,
\begin{equation}
 \min_{x\neq y}d_{\mathrm{proj}}
 \bigl(U_x^{(K,\Delta)},U_y^{(K,\Delta)}\bigr)
 =2\sin\!\left(\frac{\gamma(K,\Delta)}4\right).
 \label{eq:w1-exact-separation}
\end{equation}
Consequently the family is projectively injective exactly when
$\gamma(K,\Delta)>0$.  If $(K-1)\Delta\leq\pi$, then
$\gamma(K,\Delta)=\Delta$.  If $Q\geq K+1$ and
$\Delta=2\pi/Q$, then again $\gamma(K,\Delta)=\Delta$, even though the
coefficient interval $[0,(K-1)\Delta]$ may cross the principal-argument
cut at $\pi$.
\end{theorem}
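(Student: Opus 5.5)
The plan is to reduce to a single-coordinate spectral computation, following the template of \cref{lem:packing-separation} and the projective-distance formula used in the proof of \cref{lem:grid-calibration}. Fix $x\neq y$ and set $s=x-y\in\{-(K-1),\ldots,K-1\}^m$. The operator $U_xU_y^\dagger=\exp(-\tfrac{i\Delta}{2}\sum_js_jP_j)$ is diagonal, with eigenvalue argument $\phi(\chi)=-\tfrac{\Delta}{2}\sum_js_j\chi_j$ at a character vector $\chi\in\{\pm1\}^m$. By $\Ftwo$-independence every $\chi$ is realized, as in \cref{lem:coeff-uniqueness}. The quantity to compute is $d_{\mathrm{proj}}=2\sin(\Theta/4)$, where $\Theta$ is the length of the shortest closed arc containing the spectrum; this is the formula recorded in the proof of \cref{lem:diamond-comparison}.

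\textbf{Lower bound.} Pick $j^*$ with $s_{j^*}\neq0$. Take two character vectors that differ only at $j^*$. Their eigenvalue arguments differ by $\Delta s_{j^*}$, whose circular distance from $0$ is $\operatorname{dist}_{2\pi}(|s_{j^*}|\Delta,0)\geq\gamma(K,\Delta)$, since $1\leq|s_{j^*}|\leq K-1$. Any arc containing two points at circular distance $g$ has length at least $g$. Hence $\Theta\geq\gamma$, and since $\Theta\mapsto2\sin(\Theta/4)$ is increasing on $[0,2\pi]$, this gives $d_{\mathrm{proj}}\geq2\sin(\gamma/4)$.

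\textbf{Attainment.} Choose $d^*\in\{1,\ldots,K-1\}$ achieving the minimum in $\gamma$. Set $x=d^*e_1$ and $y=0$. The spectrum is then $\{e^{\mp id^*\Delta/2}\}$, two points separated by $d^*\Delta$ modulo $2\pi$. The shortest arc containing them has length exactly $\operatorname{dist}_{2\pi}(d^*\Delta,0)=\gamma$. This matches the lower bound, so equality holds in \eqref{eq:w1-exact-separation}.

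\textbf{Injectivity.} The family is projectively injective exactly when the minimum is positive. Since $\Theta=\gamma$ at the minimizing pair and $\Theta=0$ iff $d_{\mathrm{proj}}=0$, this happens iff $\gamma>0$.

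\textbf{Special cases.} If $(K-1)\Delta\leq\pi$, then every $d\Delta$ lies in $(0,\pi]$, so $\operatorname{dist}_{2\pi}(d\Delta,0)=d\Delta$ and the minimum is $\Delta$ at $d=1$. If instead $\Delta=2\pi/Q$ with $Q\geq K+1$, then $\operatorname{dist}_{2\pi}(d\Delta,0)=\tfrac{2\pi}{Q}\min(d,Q-d)$. For $1\leq d\leq K-1\leq Q-2$, both $d$ and $Q-d$ are at least $1$ (indeed $Q-d\geq2$). So the minimum is $\Delta$, attained at $d=1$, and this holds even when $(K-1)\Delta>\pi$.

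\textbf{Main obstacle.} The delicate step is making the lower bound rigorous with the global-phase infimum in the wrapping regime. The naive argument "two eigenvalues differ by $\Delta s_{j^*}$" must be read modulo $2\pi$ and through the arc length $\Theta$, not through raw angle differences. I would handle this cleanly via the shortest-arc characterization $d_{\mathrm{proj}}=2\sin(\Theta/4)$ rather than re-deriving the centered-phase computation of \cref{lem:packing-separation}.
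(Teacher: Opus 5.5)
Your proof is correct and follows essentially the same route as the paper. The lower bound comes from two sign patterns that differ only at a coordinate with $s_{j^*}\neq0$, attainment comes from a single-coordinate difference $d^*$ realizing $\gamma(K,\Delta)$, and $\gamma$ is evaluated directly in the no-wrap and $\Delta=2\pi/Q$ cases. The only difference is cosmetic: you absorb the global-phase infimum through the shortest-arc formula $d_{\mathrm{proj}}=2\sin(\Theta/4)$ from the proof of \cref{lem:diamond-comparison}. The paper instead argues directly that, after any common rotation, one of two circle points at circular separation $\eta\in[0,\pi]$ lies at chordal distance at least $2\sin(\eta/4)$ from $1$.
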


\begin{proof}
Fix $x\neq y$, put $s=x-y$, and choose $j$ with $s_j\neq0$.
Independence of the characters makes their joint sign map surjective.
Hence the relative unitary $U_x^{(K,\Delta)}
(U_y^{(K,\Delta)})^\dagger$ has two eigenvalues obtained from sign patterns
that agree off $j$ and whose circular angular separation is
$\operatorname{dist}_{2\pi}(|s_j|\Delta,0)\geq\gamma(K,\Delta)$.
After any common phase rotation, one of two circle points separated by
$\eta\in[0,\pi]$ is at chordal distance at least
$2\sin(\eta/4)$ from $1$.  This proves the lower bound in
\eqref{eq:w1-exact-separation} after taking the global-phase infimum.

Choose $d\in\{1,\ldots,K-1\}$ attaining $\gamma(K,\Delta)$ and take a pair that
differs by $d$ in one coordinate only.  Its relative spectrum consists of
two points with circular separation $\gamma(K,\Delta)$, so centering the
shorter arc attains $2\sin(\gamma/4)$.  This proves equality.  Under the
no-wrap hypothesis, every $d\Delta$ lies in $[\Delta,\pi]$, whose minimum
is $\Delta$.  For $\Delta=2\pi/Q$ and
$1\leq d\leq K-1\leq Q-2$, every nonzero grid distance is at least
$2\pi/Q$, with equality at $d=1$.
\end{proof}

\begin{corollary}[$\epsilon$-tuned cyclic family]
\label{cor:w1-epsilon-cyclic}
For $0<\epsilon\leq1/8$, let
\begin{equation}
 Q_\epsilon:=\left\lfloor
 \frac{\pi}{2\arcsin(2\epsilon)}\right\rfloor,
 \qquad K_\epsilon:=Q_\epsilon-1,
 \qquad \Delta_\epsilon:=\frac{2\pi}{Q_\epsilon}.
 \label{eq:w1-epsilon-grid}
\end{equation}
Then
\begin{equation}
 \frac1{4\epsilon}\leq K_\epsilon\leq\frac{\pi}{4\epsilon},
 \quad
 \min_{x\neq y}d_{\mathrm{proj}}(U_x,U_y)
 =2\sin\!\left(\frac{\pi}{2Q_\epsilon}\right)
 \geq4\epsilon.
 \label{eq:w1-epsilon-pack}
\end{equation}
Thus, with $\mathcal G_{m,\epsilon}:=
\{U_x:x\in[K_\epsilon]_0^m\}$,
\begin{equation}
 \log_2|\mathcal G_{m,\epsilon}|
 =m\log_2K_\epsilon
 =\Theta\!\left(m\log\frac1\epsilon\right).
 \label{eq:w1-packing-entropy}
\end{equation}
The allowed coefficient interval is
$[0,(Q_\epsilon-2)2\pi/Q_\epsilon]$.  It can extend beyond $\pi$ but
omits the residue $Q_\epsilon-1$; \cref{eq:w1-exact-separation}, rather
than a principal-interval argument, controls its wraparound.
\end{corollary}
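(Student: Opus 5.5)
The plan is to read the corollary off \cref{thm:w1-exact-separation} with $Q=Q_\epsilon$ and $K=K_\epsilon=Q_\epsilon-1$. The only real work is elementary estimation of the floor in \eqref{eq:w1-epsilon-grid}, for which I will use $x\leq\arcsin x\leq(\pi/2)x$ on $[0,1]$, as in \cref{lem:grid-calibration}.

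\emph{Admissibility.} First I check that the grid is admissible. For $0<\epsilon\leq1/8$ we have $2\epsilon\leq1/4$, so $\pi/(2\arcsin(2\epsilon))\geq\pi/(2\arcsin(1/4))>6$. Hence $Q_\epsilon\geq6$ and $K_\epsilon\geq5\geq2$, and the hypotheses $Q\geq K+1$ and $\Delta=2\pi/Q$ of the cyclic clause of \cref{thm:w1-exact-separation} hold with equality $Q_\epsilon=K_\epsilon+1$.

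\emph{Two-sided bound on $K_\epsilon$.} For the upper bound, $\arcsin(2\epsilon)\geq2\epsilon$ gives $Q_\epsilon\leq\pi/(4\epsilon)$, and so $K_\epsilon<\pi/(4\epsilon)$. For the lower bound, $\arcsin(2\epsilon)\leq\pi\epsilon$ gives $\pi/(2\arcsin(2\epsilon))\geq1/(2\epsilon)$. The floor therefore gives $Q_\epsilon>1/(2\epsilon)-1$, so $K_\epsilon>1/(2\epsilon)-2$. The step I expect to need care is showing this dominates $1/(4\epsilon)$: that is equivalent to $1/(4\epsilon)\geq2$, which is exactly $\epsilon\leq1/8$. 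This is where the standing restriction on $\epsilon$ is used, and it also shows the floor loss is absorbed.

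\emph{Separation and entropy.} The cyclic clause of \cref{thm:w1-exact-separation} gives $\gamma(K_\epsilon,\Delta_\epsilon)=\Delta_\epsilon=2\pi/Q_\epsilon$. Then \eqref{eq:w1-exact-separation} yields the exact minimum $2\sin(\pi/(2Q_\epsilon))$. Since $Q_\epsilon\leq\pi/(2\arcsin(2\epsilon))$, we get $\pi/(2Q_\epsilon)\geq\arcsin(2\epsilon)$. Both sides lie in $[0,\pi/12]$ because $Q_\epsilon\geq6$, where sine is increasing, so $2\sin(\pi/(2Q_\epsilon))\geq2\cdot2\epsilon=4\epsilon$.

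Because $\gamma>0$, the same theorem makes $x\mapsto U_x$ projectively injective on $[K_\epsilon]_0^m$. Hence $|\mathcal G_{m,\epsilon}|=K_\epsilon^m$, which gives the first equality in \eqref{eq:w1-packing-entropy}. The two-sided bound on $K_\epsilon$ then gives $\log_2K_\epsilon=\log_2(1/\epsilon)+O(1)$, and this quantity is bounded below by $\log_2 5$, so the expression is $\Theta(m\log(1/\epsilon))$.

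\emph{Coefficient interval.} The final sentence is a direct computation: $(K_\epsilon-1)\Delta_\epsilon=(Q_\epsilon-2)2\pi/Q_\epsilon$, which exceeds $\pi$ as soon as $Q_\epsilon>4$, as it does here. The residue $Q_\epsilon-1$ is excluded by $K_\epsilon=Q_\epsilon-1$. Correctness across the $\pi$ cut needs no separate argument, because \cref{thm:w1-exact-separation} is stated with the circular distance $\operatorname{dist}_{2\pi}$ rather than principal arguments.
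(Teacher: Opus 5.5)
Your proof is correct and follows the paper's own argument. You use the same inequalities $2\epsilon\leq\arcsin(2\epsilon)\leq\pi\epsilon$ together with the floor estimate to get the two-sided bound on $K_\epsilon$, and you feed $\pi/(2Q_\epsilon)\geq\arcsin(2\epsilon)$ into the cyclic clause of \cref{thm:w1-exact-separation}. Your explicit checks that $Q_\epsilon\geq6$ (so the cyclic clause applies and sine is monotone on the relevant range) and that projective injectivity gives $|\mathcal G_{m,\epsilon}|=K_\epsilon^m$ fill in steps the paper leaves as immediate.
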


\begin{proof}
The bound $\arcsin(2\epsilon)\leq\pi\epsilon$ and
$\lfloor a\rfloor\geq a-1$ give
$K_\epsilon\geq1/(2\epsilon)-2\geq1/(4\epsilon)$; the reverse bound uses
$\arcsin(2\epsilon)\geq2\epsilon$.  Since
$Q_\epsilon\leq\pi/[2\arcsin(2\epsilon)]$,
$\pi/(2Q_\epsilon)\geq\arcsin(2\epsilon)$.  Apply
\cref{thm:w1-exact-separation}.  The cardinality statement is immediate.
\end{proof}

%% file: theory/dispersed_update_tradeoff.tex
\subsection{Dispersed additive-update representation}
\label{sec:w1-dispersed}

\begin{definition}[$r$-round dispersed update stream]
\label{def:w1-dispersed-stream}
The cyclic specialization of \cref{def:phase-family} with modulus
$Q\geq3$, $K=Q-1$, $\Delta=2\pi/Q$, $r\geq2$ rounds, and
$\Sigma=\{+1\}$: for $x\in[K]_0^m$ the shares
$a^{(1)},\ldots,a^{(r)}\in\mathbb Z_Q^m$ satisfy
\begin{equation}
 \sum_{t=1}^r a_j^{(t)}=x_j\pmod Q
 \qquad(j\in[m]),
 \label{eq:w1-share-sum}
\end{equation}
the flat input is the round-major record sequence
$\mathsf{Update}(t,j,a_j^{(t)})$, and the semantic input is the addressed
aggregate vector $x$.
\end{definition}

\begin{proposition}[Semantic equivalence and token privacy]
\label{prop:w1-stream-equivalence}
Every valid flat stream of \cref{def:w1-dispersed-stream} implements the
same $U_x$ as the semantic input, up to global phase.  Moreover, a single
flat token never determines $x_j$: for every fixed $(t,j,a)$ and every
$b\in[K]_0$, there is a valid stream with $a_j^{(t)}=a$ and $x_j=b$.
Under the canonical sharing distribution in which the first $r-1$ shares
are independent uniform residues and the last is fixed by
\eqref{eq:w1-share-sum}, every individual residue is uniform on
$\mathbb Z_Q$ and independent of $x_j$.
\end{proposition}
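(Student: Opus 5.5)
The proposition has three clauses, and I plan to prove them in order: semantic equivalence, single-token privacy, and the distributional statement. Each reduces to elementary facts about the cyclic group $\mathbb Z_Q$ together with \cref{prop:family-aggregates}.

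\emph{Semantic equivalence.} A stream of \cref{def:w1-dispersed-stream} is a cyclic instance of \cref{def:phase-family} with $\Sigma=\{+1\}$. Its aggregate is $\sum_t a_j^{(t)}\bmod Q=x_j$ by \eqref{eq:w1-share-sum}. Hence \cref{prop:family-aggregates} gives $W=U_x$ up to global phase directly. The only point to record is that the integer lift of $\sum_t a_j^{(t)}$ differs from $x_j$ by $Qk_j$. Since $Q\Delta=2\pi$, this contributes only the global sign $\prod_j(-1)^{k_j}$, and no hypothesis on individual rounds is used.

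\emph{Token privacy.} Fix $(t,j,a)$ and a target $b\in[K]_0$. Because $r\geq2$, I pick a round $t'\neq t$. I set every share of every coordinate $j'\neq j$ to zero, so that $x_{j'}=0\in[K]_0$. For coordinate $j$, I set $a_j^{(t)}=a$, set all other rounds except $t'$ to $0$, and put $a_j^{(t')}=b-a\bmod Q$. Then \eqref{eq:w1-share-sum} holds with $x_j=b$, and $x\in[K]_0^m$, so the stream is valid. The same construction with $t'$ chosen as the last round not in a given proper prefix also shows that no proper prefix determines $x$. I would note this in passing, since \cref{prop:physical-instantiation} cites it.

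\emph{Uniformity and independence.} Under the canonical distribution, condition on $x_j=b$. For $t<r$, the share $a_j^{(t)}$ is uniform by construction and independent of $b$. For $t=r$, we have $a_j^{(r)}=b-\sum_{t<r}a_j^{(t)}$. Since $a_j^{(1)}$ is uniform and independent of the other shares, the translate $b-a_j^{(1)}-(\cdots)$ is uniform on $\mathbb Z_Q$ for every fixed value of the remaining shares and of $b$. Therefore the conditional law of $a_j^{(r)}$ given $x_j=b$ is uniform and does not depend on $b$, which is exactly independence from $x_j$. Independence across coordinates $j$ is not claimed; it follows anyway because the shares are drawn independently per coordinate.

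The main subtlety, which is mild, lies in this last step. Each individual residue is independent of $x_j$, but a \emph{joint} collection of all $r$ shares is not. The statement should be read, and proved, only marginally per residue. The proof must use the uniform share $a_j^{(1)}$, which is distinct from the residue under consideration whenever that residue is the last one, and $r\geq2$ guarantees such a share exists.
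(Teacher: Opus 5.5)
Your proposal is correct and follows essentially the same route as the paper. For semantic equivalence you use the integer lift $\sum_t a_j^{(t)}=x_j+Qk_j$ together with $Q\Delta=2\pi$; for token privacy you enforce the target sum through one other round $t'\neq t$; and for the distributional clause you use that a sum containing an independent uniform residue is uniform, applied to the last share after conditioning on $x_j=b$. Your version only spells out details the paper's proof leaves implicit, namely zero-filling the other coordinates and the explicit conditioning step.
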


\begin{proof}
For integer representatives there are $k_j\in\mathbb Z$ with
$\sum_ta_j^{(t)}=x_j+Qk_j$.  Commutativity and $Q\Delta=2\pi$ give
\begin{equation*}
 \prod_tR_{P_j}(\Delta a_j^{(t)})
 =R_{P_j}(\Delta x_j+2\pi k_j)
 =(-1)^{k_j}R_{P_j}(\Delta x_j).
\end{equation*}
The product of the signs is global.  For privacy, fix the named token and
choose any one of the other $r-1$ residues to enforce the desired sum;
the remaining residues are arbitrary.  The distributional claim follows
because a sum containing at least one independent uniform residue is
uniform, including the last share after conditioning on $x_j$.
\end{proof}

\begin{definition}[Forward-pass cut accounting]
\label{def:w1-pass-accounting}
Partition the round-major stream into Alice's prefix $A$ (rounds
$1,\ldots,r-1$) and Bob's suffix $B$ (round $r$).  A forward $p$-pass
compiler scans $AB$ from left to right at most $p=p(m)$ times.  There are
$p$ crossings $A\to B$ and at most $p-1$ rewind crossings $B\to A$.
At every crossing, the complete restart state is the five-field
serialization of \eqref{eq:info-budget}; rereadable input and every live
IR object are therefore included.  For a run on $I$, let $A_p(I)$ be the
number of bits
irreversibly committed to the final output while the machine scans $A$,
summed over all passes, and define
\begin{equation}
 \Sigma_p(I):=\sum_{t=1}^{p}B_{\mathrm{cut}}(c_{A\to B,t})
       +\sum_{t=1}^{p-1}B_{\mathrm{cut}}(c_{B\to A,t}).
 \label{eq:w1-transcript-budget}
\end{equation}
Both are literal, self-delimiting transcript lengths.  Unbarred resource
symbols take maxima over valid inputs.  For randomized compilers, write
\begin{equation*}
 \overline A_p:=\max_I\mathbb E_\rho A_p(I,\rho),\qquad
 \overline\Sigma_p:=\max_I\mathbb E_\rho\Sigma_p(I,\rho),
\end{equation*}
and let
$S:=\max_I\mathbb E_\rho[\max_cB_{\mathrm{cut}}(c;I,\rho)]$ over the
$2p-1$ named crossings.  Then
$\overline\Sigma_p\leq(2p-1)S$; the deterministic version is identical
without expectations.
\end{definition}

\begin{theorem}[One-pass zero-error mask bound]
\label{thm:w1-one-pass-mask}
Let a deterministic one-pass compiler be correct to
$\epsilon<\sin(\pi/(2Q))$ on every valid stream of
\cref{def:w1-dispersed-stream}.  Then
\begin{equation}
 A_1+B_{\mathrm{cut}}(c_{A\to B,1})
 \geq\left\lceil m\log_2Q\right\rceil.
 \label{eq:w1-one-pass-mask}
\end{equation}
\end{theorem}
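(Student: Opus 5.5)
We prove the bound with the same one-way fooling-set argument as \cref{thm:main-tradeoff}, now with $Q$-ary residues and an $r$-round prefix. Alice holds the prefix $A$ (rounds $1,\ldots,r-1$) and Bob holds the final round. Alice simulates the deterministic one-pass compiler through the crossing $c_{A\to B,1}$. Her message is the complete five-field restart snapshot of \eqref{eq:info-budget}, followed by the committed output prefix of length $A_1$. The snapshot's control field records that prefix length, so the concatenated messages form a prefix-free code. Bob resumes the run from the snapshot and, by determinism, produces a final output that is a function of the message and his suffix alone.

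The key step is a fooling set of size $Q^m$ indexed by the accumulated prefix sum $u:=\sum_{t<r}a^{(t)}\in\mathbb Z_Q^m$. We fix all shares of rounds $2,\ldots,r-1$ to zero and let $a^{(1)}=u$ range over $\mathbb Z_Q^m$.

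For two distinct prefixes $u\neq u'$, we need one common last round $v$ such that both completions are valid and have different aggregates. Coordinatewise:
\begin{itemize}
\item If $u_j=u'_j$, set $v_j=-u_j$, so both aggregates equal $0\in[K]_0$.
\item If $u_j\neq u'_j$, one of the two values $u_j+v_j$, $u'_j+v_j$ must avoid the forbidden residue $Q-1$. Choose $v_j=-u_j$: then $x_j=0$ and $x'_j=u'_j-u_j\neq0$. This fails only when $u'_j-u_j\equiv Q-1$, i.e.\ $u_j=u'_j+1$. In that case choose $v_j=-u'_j$ instead, giving $x'_j=0$ and $x_j=1$. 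Both values lie in $[K]_0$ because $K=Q-1\geq2$.
\end{itemize}
Hence $(u,v)$ and $(u',v)$ are valid streams of \cref{def:w1-dispersed-stream} with $x\neq x'$.

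By \cref{thm:w1-exact-separation}, with $\Delta=2\pi/Q$ and $Q\geq K+1$, the targets are separated by $2\sin(\pi/(2Q))>2\epsilon$. A single output therefore cannot be $\epsilon$-close to both, so Alice's messages for $u$ and $u'$ must differ. The message map is thus injective on $Q^m$ prefixes. By Kraft's inequality its maximum length is at least $\lceil m\log_2Q\rceil$. Bounding that maximum by the sum of the separately maximized fields $A_1$ and $B_{\mathrm{cut}}(c_{A\to B,1})$ gives \eqref{eq:w1-one-pass-mask}.

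The main obstacle is the common-continuation choice in the second step: the aggregate alphabet omits the residue $Q-1$, so one must check that a valid $v_j$ always exists. The two-case rule above settles it, mirroring the $Q=3$ case of \cref{thm:main-tradeoff}.

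A secondary point is that the snapshot must truly suffice for Bob to resume the run. This holds because, in the one-pass model, rereadable input and all live IR are charged to the snapshot by \cref{def:w1-pass-accounting}. Since there is only one crossing, no rewind crossings enter the count.
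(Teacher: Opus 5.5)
Your proof is correct and follows the paper's argument. It uses the same canonical prefixes $(u,0,\ldots,0)$ and a common final share $v$ built coordinatewise, then applies the separation of \cref{thm:w1-exact-separation} and Kraft's inequality to the prefix-free (snapshot, committed prefix) message. The only difference is cosmetic: you write down $v_j$ explicitly, whereas the paper notes that at most two residues of $v_j$ hit the excluded aggregate $Q-1$, so $Q\geq3$ leaves an admissible choice.
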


\begin{proof}
For each $u\in\mathbb Z_Q^m$, choose a canonical prefix whose aggregate
is $u$ (put $u$ in its first round and zero in the other prefix rounds).
The committed prefix and restart state after this prefix must distinguish
all $Q^m$ choices.  Indeed, for $u\neq u'$, choose one common final share
$v$ coordinatewise.  At each coordinate at most two residues of $v_j$
would make either $u_j+v_j$ or $u'_j+v_j$ equal to the single excluded
aggregate $Q-1$; since $Q\geq3$, an allowed $v_j$ exists.  Both completions
are valid, and they differ in every coordinate in which $u$ and $u'$ differ.
By \cref{thm:w1-exact-separation}, one circuit cannot be within
$\epsilon$ of both targets.  The messages are prefix-free under the
declared serializers, so Kraft's inequality gives
\eqref{eq:w1-one-pass-mask}.
\end{proof}

%% file: theory/randomized_multipass_extension.tex
\subsection{Randomized multipass frontier}
\label{sec:w1-randomized-multipass}

Put
\begin{equation}
 L_\delta(m,K):=(1-\delta)m\log_2K-h_2(\delta),
 \qquad 0\leq\delta<\tfrac12.
 \label{eq:w1-Ldelta}
\end{equation}

\begin{theorem}[$K$-ary randomized multipass recoverability frontier]
\label{thm:w1-unified-frontier}
Let $Q\geq3$, $K=Q-1$, and
$0\leq\epsilon<\sin(\pi/(2Q))$.  Let a possibly randomized compiler be
correct within projective error $\epsilon$, with probability at least
$1-\delta$ for every $x\in[K]_0^m$ and every valid $r$-round dispersed
representation of $x$.  Assume that it makes at most $p=p(m)$ forward
passes in the charged model of \cref{def:w1-pass-accounting}.  Then:
\begin{align}
 \overline B_{\mathrm{rel}}^{\mathrm{out}}
 &\geq L_\delta(m,K),
 &
 \overline B_{\mathrm{self}}^{\mathrm{out}}
 &\geq L_\delta(m,K),
 \label{eq:w1-output-lower}\\
 \overline A_p+\overline\Sigma_p
 &\geq L_\delta(m,K).
 \label{eq:w1-pass-lower}
\end{align}
In particular, for the worst-input expected crossing cap $S$ of
\cref{def:w1-pass-accounting},
\begin{equation}
 \boxed{\quad
 \overline A_p+(2p-1)S
 \geq(1-\delta)m\log_2K-h_2(\delta).
 \quad}
 \label{eq:w1-frontier-box}
\end{equation}
Here $S$ includes RAM, parameters, cache/hash metadata, materialized input,
and all IR fields from \eqref{eq:info-budget}; it is not a RAM-only bound.
For the tuned choice of \cref{cor:w1-epsilon-cyclic}, both right-hand sides
are $\Omega((1-\delta)m\log(1/\epsilon))$ for fixed
$\delta<1/2$.
\end{theorem}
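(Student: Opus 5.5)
The output bounds \eqref{eq:w1-output-lower} follow the Fano-plus-source-coding template of \cref{thm:output-counting}, now applied to the cyclic family. Let $X$ be uniform on $[K]_0^m$. For each $X$, fix any valid $r$-round dispersed stream; \cref{prop:w1-stream-equivalence} ensures one exists. Let $C$ denote the compiler's output. \Cref{thm:w1-exact-separation} with $\Delta=2\pi/Q$ gives pairwise separation $2\sin(\pi/(2Q))>2\epsilon$. Hence the nearest-codeword decoder recovers $X$ whenever the output is $\epsilon$-accurate. (Compared with \cref{lem:nearest-decoder}, only the hypothesis $\epsilon<$ half the separation is needed here.) The decoder therefore errs with probability at most $\delta$. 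Fano's inequality gives $I(X;C)\geq L_\delta(m,K)$. For either prefix-free code, $\mathbb E|E(C)|\geq H(C)\geq I(X;C)$, and the worst-input expectation dominates the uniform average.

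For \eqref{eq:w1-pass-lower} I would use a randomized multi-round two-party simulation. Draw $X$ uniform and independent prefix shares $U=(a^{(1)},\ldots,a^{(r-1)})$ uniform on $\mathbb Z_Q^{(r-1)m}$. Give Alice the prefix $U$ and Bob the last share $V=X-\sum_tU^{(t)}\bmod Q$. Under this distribution $V$ is uniform and independent of $X$, so $H(X\mid V)=m\log_2K$. Alice and Bob jointly simulate the $p$-pass run with shared randomness $\rho$. At each $A\to B$ crossing Alice sends the full five-field snapshot, and at each $B\to A$ rewind Bob sends his. The committed output written while scanning $A$ is also sent by Alice, either appended to her crossing messages or concatenated as a separate self-delimiting field. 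The final output is a function of the transcript $M$, $V$ and $\rho$, so the decoder yields $H(X\mid M,V,\rho)\leq h_2(\delta)+\delta m\log_2K$. This gives $I(X;M\mid V,\rho)\geq L_\delta$.

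The step I expect to be the main obstacle is converting this conditional information into a bound on the expected transcript length. Bob's messages depend on $V$, so I would bound $I(X;M\mid V,\rho)\leq H(M\mid V,\rho)$. I would then use that the whole transcript (all $2p-1$ snapshots plus the committed $A$-output) is a concatenation of self-delimiting fields, hence prefix-free given $\rho$. Source coding then bounds $H(M\mid\rho)$ by its expected length, namely $\mathbb E[A_p+\Sigma_p]$. Because $V$ is independent of $X$ and conditioning only decreases entropy, the chain closes to give $\mathbb E_{X,\rho}[A_p+\Sigma_p]\geq L_\delta$. Taking the maximum over inputs gives $\overline A_p+\overline\Sigma_p\geq L_\delta$. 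I need to check that the bits committed while scanning $B$ never feed back into Alice's side except through Bob's snapshots. This holds because the output is write-once and the head and emitted-prefix lengths are recorded in $\mathsf{ser}_{\mathrm{control}}$. Finally, the boxed bound \eqref{eq:w1-frontier-box} is immediate from $\overline\Sigma_p\leq(2p-1)S$ in \cref{def:w1-pass-accounting}. The $\Omega(m\log(1/\epsilon))$ claim follows from $K_\epsilon\geq 1/(4\epsilon)$ in \cref{cor:w1-epsilon-cyclic}.
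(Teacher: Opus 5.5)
Your proposal is correct and follows the paper's proof. The output bounds come from Fano plus prefix-free source coding; the pass bound comes from the alternating $A\mid B$ simulation with a last share $V$ uniform and independent of $X$, conditional Fano, and source coding on the self-delimiting transcript of length $A_p+\Sigma_p$. The differences are cosmetic: you randomize all $r-1$ prefix rounds where the paper puts $U$ in round one and zeros elsewhere, and you condition on shared randomness $\rho$ via $I(X;M\mid V,\rho)\le H(M\mid\rho)$ where the paper relies on the seed/tape being in the control field. Sending each committed chunk after the snapshot whose control field records the emitted-prefix length, as you suggest, avoids extra framing bits.
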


\begin{proof}
For \eqref{eq:w1-output-lower}, take $X$ uniform on $[K]_0^m$ and fix one
valid dispersed representation for each value of $X$.
The packing separation is greater than $2\epsilon$, so nearest-packing
decoding recovers $X$ from a successful output.  Fano's inequality gives
\begin{equation*}
 I(X;C)\geq m\log_2K-h_2(\delta)
 -\delta\log_2(K^m-1)\geq L_\delta(m,K).
\end{equation*}
Expected length in either separately fixed prefix-free output code is at
least this mutual information; a worst-input expectation is at least the
uniform average.

For \eqref{eq:w1-pass-lower}, independently choose a uniform prefix
aggregate $U\in\mathbb Z_Q^m$, put it in the first prefix round, put zero
in any other prefix rounds, and set the last share $V=X-U\pmod Q$.
Then $V$ is uniform and independent of $X$, so
$H(X\mid V)=m\log_2K$.  Convert the $p$ scans into the standard alternating
simulation across the $A\mid B$ partition.  The transcript $T$ contains
each complete crossing snapshot and all output chunks written on Alice's
side.  Given $T$ and $V$, Bob can reproduce every Bob-side segment and the
complete final output; random-tape state and dynamic control are present in
the crossing serialization.  The output decoder therefore recovers $X$
with error at most $\delta$, and conditional Fano yields
\begin{equation*}
 I(X;T\mid V)\geq L_\delta(m,K).
\end{equation*}
The framed transcript is prefix-free conditional on $V$, so its expected
literal length is at least this information.  Its length is precisely
$A_p+\Sigma_p$ up to fields already included in the self-delimiting
serializers.  Maximizing over inputs proves \eqref{eq:w1-pass-lower}; the
cap and tuned-$\epsilon$ consequences follow.
\end{proof}

\begin{remark}[What is and is not optimal]
\label{rem:w1-optimality-scope}
Equation~\eqref{eq:w1-output-lower} is an output-information bound and
\eqref{eq:w1-frontier-box} is a transcript bound; neither converts
self-contained circuit bits into gates.  The factor $2p-1$ counts the
actual alternating crossings of a forward scan.  A random-access input
oracle, an uncharged rereadable file, or reverse scans would be a different
model.  For $p=1$ and zero error, \cref{thm:w1-one-pass-mask} strengthens
$m\log_2K$ to $m\log_2(K+1)$.  No such strengthening is claimed for
randomized or general multipass protocols.
\end{remark}

%% file: theory/matching_upper_bounds.tex
\subsection{Memory-capped constructions}
\label{sec:w1-upper}

\begin{proposition}[Block hybrid]
\label{prop:w1-block-hybrid}
Fix $p,h\geq1$ and let $q=\min\{m,ph\}$.  There is a deterministic
zero-error $p$-pass compiler for \cref{def:w1-dispersed-stream} that
aggregates $q$ coordinates and passes the other update records through in
the public $p$-block hybrid mode of $E_{\mathrm{rel}}$.  It obeys
\begin{align}
 S&\leq \left\lceil\frac qp\right\rceil\log_2Q
       +O(\log(mQrp)),
 \label{eq:w1-upper-state}\\
 A_p&\leq(r-1)(m-q)\log_2Q+O(r\log(mQrp)),
 \label{eq:w1-upper-A}\\
 B_{\mathrm{rel}}^{\mathrm{out}}
 &\leq q\log_2K+r(m-q)\log_2Q
       +O(r\log(mQrp)).
 \label{eq:w1-upper-output}
\end{align}
In the compact-output corner $h=\lceil m/p\rceil$, this becomes
\begin{align}
 B_{\mathrm{rel}}^{\mathrm{out}}
 &\leq m\log_2K+O(\log(mQrp)),\\
 S&\leq\left\lceil\frac mp\right\rceil\log_2Q
       +O(\log(mQrp)).
 \label{eq:w1-compact-upper}
\end{align}
For two rounds, the construction satisfies
$A_p+pS\leq m\log_2Q+O(p\log(mQrp))$ after using a smaller final block.
Compared with
\eqref{eq:w1-frontier-box} at $\delta=0$, it is within the constant
$(2p-1)\log_2(K+1)/(p\log_2K)<2\log_2 3$ for every $K\geq2$.
\end{proposition}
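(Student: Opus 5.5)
I would prove \cref{prop:w1-block-hybrid} by exhibiting the compiler explicitly and then reading off each resource bound from the serialized fields it keeps at the $2p-1$ named crossings.

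\textbf{Construction.} Partition the first $q=\min\{m,ph\}$ generator indices into $p$ consecutive blocks $J_1,\ldots,J_p$ of size at most $h$, with $|J_t|\leq\lceil q/p\rceil$. Coordinates $j>q$ are passthrough coordinates. The header of the $p$-block hybrid mode declares $p$, $q$, the block boundaries and the passthrough set. All of this is public, so no generator addresses are charged later.

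On pass $t$, the compiler scans $A$ and folds every record $\mathsf{Update}(s,j,a)$ with $j\in J_t$ into a base-$Q$ packed RAM block holding running sums $\bmod Q$. On pass $1$ only, it also commits the passthrough residues $a_j^{(s)}$, $s\leq r-1$, $j>q$, in round-major public order. It then crosses to $B$, finishes the sums for $J_t$ with round $r$, and appends that packed aggregate block of base-$K$ digits. Each final aggregate lies in $[K]_0$ by validity of the stream. On the last pass it additionally emits the round-$r$ passthrough residues and terminates. Otherwise it discards the block and rewinds.

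\textbf{Correctness.} Correctness is exact and follows from \cref{prop:w1-stream-equivalence,prop:family-aggregates}. Emitted aggregates and passthrough residue rotations all commute, and they multiply to $U_x$ up to global phase.

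\textbf{Accounting.} At any $A\to B$ crossing the live state is:
\begin{itemize}
\item the packed block, of $\lceil q/p\rceil\log_2Q$ bits up to rounding;
\item the pass index, head positions and emitted-prefix length, each $O(\log(mQrp))$;
\item the finite control.
\end{itemize}
At a $B\to A$ rewind crossing the state holds only the control field. This gives \eqref{eq:w1-upper-state}.

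Committed output while scanning $A$ consists only of the passthrough prefix residues, $(r-1)(m-q)\log_2Q$ bits, plus round-boundary and block delimiters at $O(\log(mQrp))$ each, at most $O(r)$ of them. This gives \eqref{eq:w1-upper-A}.

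The total output adds the $q$ base-$K$ digits and the round-$r$ passthrough residues, which gives \eqref{eq:w1-upper-output}. Setting $h=\lceil m/p\rceil$ makes $q=m$ and yields the compact corner.

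\textbf{Two-round sum bound.} For $r=2$ I would rebalance so that the aggregated blocks have sizes $\lceil q/p\rceil$ except a smaller final one. Then $p\cdot\max_t|J_t|$ plus the passthrough count is at most $m$ up to rounding, and $A_p+pS\leq m\log_2Q+O(p\log(mQrp))$. I expect the main obstacle here to be the rounding bookkeeping, which makes the bound hold for all $q$ rather than only when $p$ divides $q$. Choosing $q$ and the block sizes so the leftover lands in the passthrough set should resolve it.

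\textbf{Comparison with the frontier.} Finally, compare with \eqref{eq:w1-frontier-box} at $\delta=0$:
\[
\frac{(2p-1)\log_2 Q}{p\log_2K}<\frac{2\log_2(K+1)}{\log_2K}\leq 2\log_2 3
\quad\text{for } K\geq2.
\]
The last inequality holds because $\log_2(K+1)/\log_2K$ is decreasing in $K$ and equals $\log_2 3$ at $K=2$.
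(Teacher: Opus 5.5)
Your proposal is correct and is essentially the paper's proof: balanced consecutive blocks of the first $q$ coordinates are folded modulo $Q$ one block per pass, passthrough shares are emitted in public round-major order (emitting the round-$r$ passthrough shares on the last pass rather than the first changes nothing), and block boundaries are a public function of $(m,p,q)$ --- a point you should state explicitly, since listing $p$ boundaries or delimiters would cost $\Theta(p\log m)$ and break the $O(\log(mQrp))$ framing of the compact corner. The rounding you flag as the main obstacle in the two-round bound is not one: $p\lceil q/p\rceil\leq q+p-1$, so the excess $(p-1)\log_2Q$ is absorbed by the $O(p\log(mQrp))$ term, which is all the paper's ``up to the last partial block'' means.
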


\begin{proof}
Partition the first $q$ coordinates into at most $p$ consecutive blocks of
size at most $h$.  On pass $t$, keep only the residues of block $t$ and add
their first $r-1$ rounds modulo $Q$; when the final-round mates arrive,
emit the corresponding aggregate base-$K$ block.  The live packed block,
indices, and framing give \eqref{eq:w1-upper-state}.  On the first pass,
emit every share of each unselected coordinate in the declared round-major
order; these rotations already have the correct product and require no
retained coefficient table.  The selected coordinates can be balanced
across the $p$ blocks, so the maximum block has size $\lceil q/p\rceil$.
Their consecutive boundaries are a public function of $(m,p,q)$, so the
header stores those integers rather than $p$ separate delimiters.
The prefix contains $r-1$ residues and the
whole output contains $r$ residues for each such coordinate, giving
\eqref{eq:w1-upper-A} and \eqref{eq:w1-upper-output}.  If $ph\geq m$ there
are no passthrough coordinates.  For $r=2$, $(m-q)\log_2Q+ph\log_2Q$ is
at most $m\log_2Q$ up to the last partial block and framing.  The constant
comparison uses $(2p-1)/p<2$ and
$\log_2(K+1)/\log_2K\leq\log_2 3$.
\end{proof}

\begin{proposition}[Semantic and randomized upper corners]
\label{prop:w1-semantic-upper}
Under the same public dictionary, a one-pass semantic compiler receiving
$x$ emits the packed base-$K$ aggregate code with
\begin{equation}
 B_{\mathrm{rel}}^{\mathrm{out}}
 \leq\lceil m\log_2K\rceil+O(\log(mK)),
 \qquad S=O(\log(mK)),
 \label{eq:w1-semantic-upper}
\end{equation}
and implements the same $U_x$ as the flat stream.  For any
$0\leq\delta<1/2$, an input-independent abort coin that runs this exact
compiler with probability $1-\delta$ and otherwise emits a fixed short
circuit has worst-input success at least $1-\delta$ and expected relative
output
\begin{equation}
 (1-\delta)\lceil m\log_2K\rceil+O(\log(mK)+1).
 \label{eq:w1-random-upper}
\end{equation}
Thus the randomized family-relative output lower bound is matched up to
framing and the binary-entropy additive term.  A self-contained
$R_Z$-circuit upper bound remains
$O(m(\log m+\log Q))$ bits and at most $m$ rotations; no matching
self-contained-bit or gate-count claim is made.
\end{proposition}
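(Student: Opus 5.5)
Both claims of \cref{prop:w1-semantic-upper} are constructive, so the plan is to build the two compilers explicitly and count their output and state bits in the codec of \cref{def:info-budget-bits}.

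\textbf{Semantic compiler.} It reads the addressed records $x_1,\ldots,x_m$ in order. First it emits the aggregate-mode header of $E_{\mathrm{rel}}(\cdot\mid\mathcal F_m)$: the mode tag, $m$ and $K$ in self-delimiting form, and the length of the packed block. This costs $O(\log(mK))$ bits.

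For the block itself, there are two options.
\begin{itemize}
\item The compiler can fix the digits in a base-$K$ integer $N=\sum_j x_jK^{j-1}<K^m$ and write it in exactly $\lceil m\log_2K\rceil$ bits. Computing $N$ naively needs state proportional to $m\log K$, so this is the step to handle carefully.
\item Instead, it can chunk the digits. Group $g=\Theta(\log m)$ digits at a time, and write each group as a fixed-width integer of $\lceil g\log_2K\rceil$ bits. The total slack is then $m/g$ rounding bits. That is not $O(\log(mK))$ unless $g$ grows like $m/\log(mK)$.
\end{itemize}
So I would use a streaming arithmetic/range coder for a uniform $K$-ary source on a known length-$m$ block. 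Its total length is at most $m\log_2K+O(1)$, and its live state is a constant number of registers of $O(\log(mK))$ bits. Hence $\lceil m\log_2K\rceil+O(\log(mK))$ holds.

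I expect this precision bookkeeping of exact finite-precision range coding to be the main obstacle. If it fails, the fallback is to read the statement as permitting the aggregate-mode block to be specified as a packed base-$K$ code whose producer may hold one digit plus the running coder. Alternatively, one can note that the paper's $E_{\mathrm{rel}}$ aggregate mode is itself defined as base-$K$ digits, so emitting digits of a mixed-radix stream is the mode. Either way, the crossing state is the index $j\le m$, the coder registers, and finite control. No coefficient table is kept, so $S=O(\log(mK))$ follows from the five-field serialization in \eqref{eq:info-budget}.

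\textbf{Correctness.} The output decodes to $U_x$ exactly. By \cref{prop:w1-stream-equivalence}, the flat stream implements the same $U_x$ up to global phase.

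\textbf{Randomized upper bound.} An input-independent coin with bias $\delta$ is stored in the seed field at $O(1)$ cost. With probability $1-\delta$, the compiler runs the compiler above. Otherwise it emits a fixed constant-length circuit, say the identity in aggregate mode with a reserved short header. The success probability is at least $1-\delta$ on every input. The expected length is $(1-\delta)\lceil m\log_2K\rceil+O(\log(mK)+1)$, which is \eqref{eq:w1-random-upper}. Comparing with $L_\delta(m,K)$ in \eqref{eq:w1-output-lower}, the gap is $O(\log(mK))+h_2(\delta)$, which gives the matching claim.

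\textbf{Self-contained bound.} For the self-contained circuit, as in \cref{thm:output-counting}, emit at most $m$ records $R_{P_j}(\Delta x_j)$ with $P_j=Z_j$. Each record has an $O(\log m)$ address and an $O(\log Q)$ rational-$\pi$ parameter, totalling $O(m(\log m+\log Q))$ bits.
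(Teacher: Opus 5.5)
Your argument is correct in its main line, but it takes a different route from the paper at exactly the step you flagged. It agrees with the paper on three points:
\begin{itemize}
\item correctness, via \cref{prop:w1-stream-equivalence};
\item the abort coin, where the paper prefixes a branch tag just as your reserved header does;
\item the self-contained $R_Z$ count.
\end{itemize}

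The paper's proof only says to pack the ordered aggregate vector as one base-$K$ integer and retain the streaming index and framing state. It does not say how that integer is written out in a single write-once pass.

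Your finite-precision range coder, with $\Theta(\log(mK))$-bit registers and a pending-bit counter, is what actually delivers $S=O(\log(mK))$. Interval rounding costs $O(K2^{-b})$ bits per symbol, hence $O(1)$ in total, and termination costs $O(\log(mK))$.

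Your suspicion of the literal integer is well founded. The continuation after a cut depends only on the snapshot and the suffix. Take $L=\lceil m\log_2K\rceil$, read digits most significant first, and suppose $j$ digits have been read. Two prefixes $P\neq P'$ that share a snapshot must then have:
\begin{itemize}
\item the same committed length $\ell$, with $K^{m-j}\le 2^{L-\ell}$;
\item $PK^{m-j}\equiv P'K^{m-j}\pmod{2^{L-\ell}}$.
\end{itemize}
For odd $K$ and $j\le m/2$, this forces $P=P'$. So some snapshot at that cut has at least $\lfloor m/2\rfloor\log_2K$ bits. Other digit and bit orders allow even less commitment.

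What each route buys:
\begin{itemize}
\item The paper's route gives the literal fixed-length integer block used in \cref{thm:output-counting}. It streams as claimed only when $K$ is a power of two.
\item Your route gives a justified state bound. The price is redefining the aggregate-mode block as the range code.
\end{itemize}

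State that redefinition explicitly. Also pad the block to a length fixed by $(m,K)$, so the header can precede it and the mode stays prefix-free. The lower bounds hold for any fixed prefix-free code, so nothing else changes.

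Finally, drop your fallbacks. Using one $\lceil\log_2K\rceil$-bit field per digit overshoots $\lceil m\log_2K\rceil$ by about $m(\lceil\log_2K\rceil-\log_2K)$ bits. That is not $O(\log(mK))$ in general, so it does not give \eqref{eq:w1-semantic-upper}.
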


\begin{proof}
Pack the ordered aggregate vector as one base-$K$ integer and retain only
its streaming index and framing state.  Proposition~\ref{prop:w1-stream-equivalence}
gives semantic equality with the flat representation.  Prefix the output
with the abort-coin branch tag for the randomized construction and take
expectations.  The self-contained construction writes each nonzero
$R_{P_j}(2\pi x_j/Q)$ with its address and rational parameter.
\end{proof}

%% file: theory/certificate_soundness.tex
Let

\begin{equation*}
 \mathcal C_n=\langle H,S,S^\dagger,X,Y,Z,\mathrm{CX},
 \mathrm{CZ},\mathrm{SWAP}\rangle
\end{equation*}
be the Clifford gate language on addressed qubits.  Fix a declared finite
formal-symbol set $\mathcal S$ and the exact coefficient module
\begin{equation*}
 \Lambda=\mathbb Q\pi\oplus
 \bigoplus_{s\in\mathcal S}\mathbb Q\theta_s.
\end{equation*}
For a signed Hermitian $n$-qubit Pauli $P$ and $\lambda\in\Lambda$, write
$R_P(\lambda)=\exp(-i\lambda P/2)$.

\begin{definition}[Exact commuting-Pauli/Clifford certificate domain]
\label{def:certificate}
The domain $\mathcal D_{\rm CPC}$ consists of finite addressed circuits over
$\mathcal C_n$, exact records $R_Z(\lambda)$ and $R_P(\lambda)$, identity and
barrier records, and an arbitrary exact global phase.  Every coefficient is
a reduced rational $\pi$ component plus a sorted sparse list of reduced
rational formal-symbol components carried by the input; a floating-only
parameter is not in the exact domain.

Scan a circuit in execution order while maintaining a Clifford frame $F$ and
the invariant
\begin{equation}
 U_{\rm prefix}=F R_{P_t}(\lambda_t)\cdots R_{P_1}(\lambda_1)
 \quad\text{up to global phase}.
 \label{eq:w6-frame-invariant}
\end{equation}
The frame is represented by the complete signed inverse-conjugation tableau
$\{F^\dagger X_jF,F^\dagger Z_jF\}_{j=1}^n$.  Appending a physical rotation
$R_Q(\lambda)$ appends $F^\dagger QF$ to the Pauli list.  The certificate
predicate additionally requires all extracted $P_j$ to commute pairwise.

Canonicalization moves a Pauli sign into its coefficient, discards an
identity-axis rotation, adds coefficients on identical unsigned supports,
reduces the rational $\pi$ coefficient modulo $2$ to the interval $[0,2)$,
keeps every formal-symbol coefficient exact, deletes zero records, and sorts the remaining records lexicographically by
their binary $(x,z)$ masks.  Circuit and Clifford global phases are omitted.
The resulting record is
\begin{align}
 \beta_j&:=(b_{j,s})_{s\in\mathcal S}\in\mathbb Q^{\mathcal S},
 &\bar a_j&\in\mathbb Q\cap[0,2),\nonumber\\
 \operatorname{can}(C)&=
 \bigl(n,T_F,((x_j,z_j,\bar a_j,\beta_j))_{j=1}^{\ell}\bigr).
 \label{eq:w6-canonical-record}
\end{align}
The symbolic checker accepts exactly when the two canonical byte records are
equal.
\end{definition}

\begin{proposition}[Soundness of canonical symbolic acceptance]
\label{prop:symbolic-soundness}
For $C,C'\in\mathcal D_{\rm CPC}$ satisfying the pairwise-commutation
predicate,
\begin{equation*}
 \operatorname{can}(C)=\operatorname{can}(C')
 \quad\Longrightarrow\quad
 U_C=e^{i\phi}U_{C'}
\end{equation*}
for some $\phi\in\mathbb R$.
\end{proposition}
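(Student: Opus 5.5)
The invariant \eqref{eq:w6-frame-invariant} is the engine of the argument: we verify it by induction along the scan and then show that equal canonical records force equal unitaries. The invariant holds trivially for the empty prefix, with $F=I$ and an empty list. For the inductive step we distinguish two cases by the type of the appended record.

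If the next record is a Clifford gate $G$, then the prefix unitary becomes $G\,U_{\rm prefix}=(GF)R_{P_t}\cdots R_{P_1}$. Updating the frame to $GF$ preserves the invariant, and the tableau of $(GF)^\dagger\cdot(GF)$ is obtained from the old tableau by the standard conjugation rules. If instead the next record is a rotation $R_Q(\lambda)$, we use the identity
\begin{equation*}
 R_Q(\lambda)F=F\,R_{F^\dagger QF}(\lambda),
\end{equation*}
which follows because conjugation commutes with the exponential. Since $F^\dagger QF$ is a signed Pauli read off from the tableau, appending it to the Pauli list preserves the invariant. Identity records, barriers, and global-phase records change the unitary only by a phase. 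Thus, at the end of the scan, $U_C=e^{i\phi}F_C\prod_t R_{P_t}(\lambda_t)$.

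Next, we check that each canonicalization step preserves the rotation product up to global phase, using the commutation predicate throughout. Pairwise commutation lets the rotations be reordered freely, which justifies sorting. It also gives $R_P(\lambda)R_P(\mu)=R_P(\lambda+\mu)$, which justifies merging records on identical supports. A sign flip obeys $R_{-P}(\lambda)=R_P(-\lambda)$. An identity-axis rotation $R_{\pm I}(\lambda)$ is a pure phase and can be discarded. Reduction modulo $2\pi$ is also safe, since $R_P(\lambda+2\pi)=-R_P(\lambda)$ is a global sign; this is the same computation as in \cref{prop:family-aggregates}. Zero records are identities and can be deleted. Formal-symbol coefficients are kept exact, so no identification of symbols is ever made. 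Hence $U_C=e^{i\phi}F_C\,\mathcal R(\operatorname{can}(C))$, where $\mathcal R$ denotes the ordered product of the canonical records.

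The remaining step is the frame, and I expect the main subtlety to lie here: the tableau $T_F$ determines $F$ only up to a global phase. Suppose two Cliffords $F$ and $F'$ have identical signed tableaux. Then $F'F^\dagger$ commutes with every $X_j$ and $Z_j$, so it commutes with the full Pauli group. Since the Pauli group spans all matrices, $F'F^\dagger$ is a scalar. Therefore equal canonical records give equal $T_F$ and equal rotation lists, and so $U_C=e^{i\phi}U_{C'}$.

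One caveat concerns symbolic coefficients. For these, equality is meant as the identity $U_C(\theta)=e^{i\phi(\theta)}U_{C'}(\theta)$ for every value of the formal symbols. This holds because the whole derivation above is symbolic in $\Lambda$: no step ever substitutes numerical values for the symbols.
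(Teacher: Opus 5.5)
Your proof is correct and follows the paper's own argument essentially step for step. It proceeds by induction on the frame invariant \eqref{eq:w6-frame-invariant} via $R_Q(\lambda)F=FR_{F^\dagger QF}(\lambda)$, then by phase-only canonicalization steps (commuting reorder, sign absorption, $R_P(\lambda+2k\pi)=(-1)^kR_P(\lambda)$), and ends with equal signed tableaus forcing $F'F^\dagger$ to be a scalar; your closing remark that $\phi$ may depend on the formal-symbol assignment (e.g.\ through discarded identity-axis rotations) is a correct clarification the paper leaves implicit.
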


\begin{proof}
Induction over the input records proves \eqref{eq:w6-frame-invariant}.
For a Clifford record $G$, replace $F$ by $GF$.  For a rotation,
\begin{equation*}
 R_Q(\lambda)F=F R_{F^\dagger QF}(\lambda),
\end{equation*}
which is exactly the tableau update used by the checker.  Thus the extracted
record implements the original circuit up to its explicitly ignored global
phase.

Pairwise commutation permits deterministic reordering without changing the
unitary.  Sign normalization is exact because
$R_{-P}(\lambda)=R_P(-\lambda)$, while coefficient reduction changes a factor only
by
\begin{equation*}
 R_P(\lambda+2k\pi)=(-1)^kR_P(\lambda).
\end{equation*}
All such signs multiply to a global phase; an
identity-axis rotation is also a global phase.  Equality of the sorted
rotation records, including every formal-symbol component, therefore gives
equal rotation products projectively.

Finally, equality of the signed tableaus means that
$F'F^\dagger$ commutes with every Pauli generator.  The Pauli representation
generates the full matrix algebra, so $F'F^\dagger$ is a scalar.  Hence the
two Clifford frames also agree up to global phase, proving the claim.
\end{proof}

\begin{remark}[Status and completeness boundary]
\label{rem:w6-certificate-status}
Only status \texttt{completed\_valid}, accompanied by the content-addressed
canonical records and an independent certificate payload, is an acceptance.
A canonical mismatch is \texttt{completed\_invalid}; an unlisted gate or
malformed address is \texttt{unsupported}; failure of pairwise commutation is
\texttt{predicate\_error}; and a floating coefficient without an exact
sidecar, an unassigned formal symbol in the dense checker, or a dense distance
inside the numerical guard band is
\texttt{numerical\_inconclusive}.  Only \texttt{completed\_valid} rows enter
quality means.  The proposition is a soundness statement, not a claim that
every equivalent Clifford-plus-rotation circuit has the same record.
\end{remark}

For $n\leq6$, the audit independently forms dense matrices and checks the
normalized Frobenius projective distance
\begin{equation*}
 d_{\rm F,proj}(U,V)=2^{-n/2}\min_{|\omega|=1}\|U-\omega V\|_{\rm F}.
\end{equation*}
The frozen suite covers both accepted and rejected pairs at every width
$1\leq n\leq6$ and changes, separately, an angle, a sign, one support bit,
a qubit permutation, the Clifford frame, and a deleted or duplicated gate.
The executable definition and status contract are
\path{certificates/CERTIFICATE_SPEC.md}; the canonicalizer and checker are
\path{certificates/canonicalize.py} and
\path{certificates/symbolic_checker.py}.

%% file: generated/qre_main_slice.tex
External semantic-capable (TKET) & 29 & 798 & 258,152 & 16 & 13,317 & 3.44e+03 \\
Materialize-first & 23,992 & 2,679,110 & 468,968 & 16 & 57,966,528 & 2.72e+07 \\
Semantic-first reference & 22 & 1,800 & 305,000 & 16 & 31,675 & 9.66e+03 \\
\bottomrule

%% file: generated/natural_families.tex
controlled\_power\_ladder & 9.0 [5.7, 12.7] & yes \\
diagonal\_hamiltonian & 38.0 [25.9, 51.8] & yes \\
phase\_polynomial\_blocks & -24.3 [-33.8, -16.3] & no \\
qaoa\_ising & 44.7 [33.7, 56.6] & yes \\
qft\_arithmetic & 8.1 [1.9, 13.8] & no \\
qpe\_controlled\_powers & -12.0 [-22.0, -3.0] & yes \\
trotter\_commuting\_blocks & 33.0 [17.6, 48.2] & yes \\
\bottomrule

%% file: appendix/appendix_A.tex
\section{Supplementary Theory and Full Proofs}

\subsection{Operational Gap Notation}

For reference, the experimental gap can be written without adding another
formal theorem hypothesis. Let:
\begin{itemize}[leftmargin=1.5em]
    \item $L_B(C)$ be exact lowering into target basis $B$,
    \item $\kappa$ be a conservative structural cost,
    \item $\mathcal{T}_K(C)$ be the bounded set of semantic candidates produced
    by the semantic lift under budget $K$,
    \item $\mathfrak{A}$ be a configured post-lowering pipeline family used for
    an experiment.
\end{itemize}

Define the $K$-representation gap of $C$ by
\[
\mathrm{Gap}^{K}_{\mathrm{rep}}(C)
:=
\inf_{A \in \mathfrak{A}} \kappa(A(L_B(C)))
-
\inf_{T \in \mathcal{T}_K(C)} \kappa(L_B(T)).
\]
Define the operational recoverability gap by
\[
\mathrm{Gap}^{K}_{\mathrm{rec}}(C)
:=
\kappa(C_{\mathrm{flat}}^\star) - \kappa(C_{\mathrm{term}}^\star),
\]
where $C_{\mathrm{flat}}^\star$ is the best candidate recovered from the flat
basis-lowered representation and $C_{\mathrm{term}}^\star$ is the best
candidate recovered from the bounded semantic term representation.

The meaning is simple: semantics may survive lowering, while cheap
recoverability of the opportunity may not. This notation is operational and
experimental; the formal lower bounds in the main text are the
budget-explicit output and cut-transcript theorems for the declared
finite-window stream model.

\begin{definition}[Aggregate recovery]
\label{def:aggregation-equivalent-recovery}
Consider the Fourier-layer family
$C_{m,r}=H_{\Lambda}D_m(\boldsymbol{\theta})^rH_{\Lambda}$ with
$D_m(\boldsymbol{\theta})=\prod_{j=1}^m e^{-i\theta_jP_j/2}$. A compiler
performs \emph{aggregation-equivalent recovery} on this family if, for each
non-removable term $P_j$, its successful output can be associated with an
$O(1)$-size record, subcircuit, or algebraic object whose coefficient is
equivalent modulo $2\pi$ to the aggregate $r\theta_j$. The object may be
explicit, as in a phase-polynomial or commuting-diagonal IR, or implicit, as
in a specialized rewrite, lookup, ZX transformation, or synthesis routine that
computes the same aggregate diagonal action without storing all $r$ lowered
occurrences separately.
\end{definition}

\begin{proposition}[Operational recovery channels]
\label{prop:classification-boundary}
For the Fourier-layer family, a pipeline that recovers the constant
fixed-width output can be classified by the channel through which it carries
the aggregate coefficient information. Examples include:
\begin{enumerate}[leftmargin=1.5em]
    \item store enough persistent summary information to carry the aggregate;
    \item lift enough lowered volume into a richer IR to reconstruct the
    aggregate;
    \item use nonlocal provenance information from before basis lowering that
    already contains the aggregate;
    \item reconstruct a global semantic representation equivalent to a
    phase-polynomial, ZX, stabilizer-phase, or commuting-diagonal Hamiltonian
    object.
\end{enumerate}
Consequently, a successful future flat-looking implementation would not by
itself refute the main theorem. It would instead classify the implementation by
identifying where the aggregate information crosses the finite-state boundary,
or by showing that the formal transducer model is missing a cheaper channel.
\end{proposition}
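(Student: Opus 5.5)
The plan is to prove the proposition by an exhaustive accounting of where the aggregate coefficient can reside, not by a new lower bound. The list in the statement is introduced by ``Examples include'', so what has to be proved is the dichotomy in the final sentence. Either the aggregate information of a successful constant-output pipeline lies in one of the channels that \cref{def:compiler} charges, or it enters through a channel that the formal transducer model omits.

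The first step fixes what a constant-size successful output must carry. Take the Fourier-layer family $C_{m,r}$ at fixed width with independent characters, or the irrational witness. By \cref{lem:coeff-uniqueness}, distinct repetition counts give projectively distinct unitaries, and the aggregate phase of each generator is determined modulo $2\pi$ by the target. A correct output of constant size $O(1)$ per non-removable term therefore determines $r\theta_j \bmod 2\pi$ for each $j$ through constant-size data. This is exactly the associated object of \cref{def:aggregation-equivalent-recovery}, so every such pipeline performs aggregation-equivalent recovery, whether explicitly or implicitly.

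The second step traces that object back to a cut of the run. Choose any cut after a long prefix of the lowered stream $\Pi M^r\Sigma$ and before the final output is written. Once the output prefix is fixed, the compiler's future behaviour is a function of its complete restart snapshot and the remaining input. By \cref{cor:exact-semantics-corollary}, a constant-budget finite-window transducer whose restart information is confined to the charged fields cannot produce constant output for all $r$. So at least one of the following must hold.
\begin{enumerate}
\item The store field grows to hold the aggregate; this is item~1.
\item The window or IR field absorbs lowered volume from which the aggregate is recomputed; this is item~2.
\item A global phase-polynomial, ZX or commuting-diagonal object is materialized in the IR and parameter fields; this is item~4.
\item The information reaches the compiler other than through the lowered stream, for example as pre-lowering provenance; this is item~3.
\end{enumerate}
Because the five fields of \eqref{eq:info-budget} are disjoint and jointly complete for restart, as noted in \cref{rem:escape-channels}, these cases cover every possibility inside the model.

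The final step derives the ``consequently'' clause by contraposition. Suppose a flat-looking implementation succeeds. Either its aggregate information appears in one of the charged fields, and \cref{cor:exact-semantics-corollary} is not contradicted, since that corollary assumes a constant $B_{\mathrm{cut}}$. Or it bypasses every charged field, which is precisely the statement that the model lacks a cheaper channel.

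The main obstacle is making ``crosses the finite-state boundary'' precise enough that the case split is genuinely exhaustive. I would handle this by identifying the boundary with the restart snapshot of \cref{def:compiler}. The hard part is ensuring that an implicit realisation counts as lying in some field: a specialized rewrite or lookup table compiled into the program falls under finite control, or under item~3 if it depends on the input. For this I would rely on the requirement that a snapshot must suffice to resume the run.
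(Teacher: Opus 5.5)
Your closing contrapositive is essentially the paper's entire justification. It runs as follows: \cref{cor:exact-semantics-corollary} constrains only constant-budget finite-window transducers on a certified noncolliding family; a constant-output pipeline must still carry the aggregate that separates the noncolliding cases; and the four items are implementation-level ways of carrying it. The paper stops there and calls the result ``a classification statement rather than an optimality claim''. It never asserts that the items exhaust the possibilities.

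Your Step~2 tries to prove exactly that exhaustiveness, and it fails. \Cref{lem:coeff-uniqueness} yields pairwise-distinct powers $D^r$ only when some $\theta_j/\pi$ is irrational. Independent characters give coefficient uniqueness modulo $2\pi$, not noncollision in $r$. The canonical 42-gate Fourier-layer witness is the rational CP witness: ten terms of rank $4$, with exact period $T=131040$. Take a one-pass finite-window transducer that counts $r\bmod T$ in its finite control and emits the aggregated circuit for that residue. It is correct for every $r$, with constant budget and constant output. Its final output is periodic with exact period $T$, so every pumpability period is a multiple of $T$, and \cref{cor:finite-tail} is vacuous for $R<T$. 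Nothing is forced to grow, and the aggregate crosses the cut in constant-size finite control, which is none of your four growth cases.

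Even under unbounded noncollision, the corollary only says that a successful constant-output implementation violates one of its hypotheses. The candidates are: constancy of $p$, $w$, $\mu$ or $B_{\mathrm{cut}}$; determinism; uniformity; or the premise that the input is the lowered stream $\Pi M^r\Sigma$. Your case list has three defects:
\begin{itemize}
\item it omits the control field (pass counter, heads, seed);
\item it places item~3, which is a change of input representation rather than a field, ``inside the model'';
\item it lets items~2 and~4 share the IR field.
\end{itemize}
So the disjointness and completeness of the five fields of \eqref{eq:info-budget} do not make your cases exhaustive.

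There is also a regime mismatch. The five-field $B_{\mathrm{cut}}$ of \cref{def:compiler} contains input-head positions that grow like $\log r$ for \emph{every} compiler. In that model ``some charged field grows'' is automatic and classifies nothing. The paper keeps that model separate from the fixed-width constant-budget model of the corollary (\cref{rem:encoding-regimes}).

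The repair is to drop the exhaustiveness claim and state only the dichotomy. Either the implementation violates a named hypothesis of \cref{cor:exact-semantics-corollary}, and the violated resource is where the information crosses. Or it satisfies them all, and the model fails to charge a channel the implementation uses. The four items should then be treated as illustrations, as the paper does.
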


\begin{proof}[Justification]
The fixed-width theorem lower-bounds final output length for constant-budget
finite-window transducers on a certified noncolliding family. A pipeline that
achieves the semantic constant output must still represent the aggregate
coefficients that distinguish the noncolliding cases. The listed channels are
implementation-level ways to carry that information. This is a classification
statement rather than an optimality claim.
\end{proof}

\subsection*{Additional Structural Families}

The artifact also targets phase-ladder, mirrored, and conjugation patterns that
appear in QPE/QFT-style workloads and Grover-style workloads. These families
are not used here as additional theorem instances. They are best read as
structural stress families: the semantic controller preserves repeated
phase-heavy ladders, mirrored shells, and conjugation shells long enough to
compare projected costs before materializing the selected candidate.

For phase-ladder workloads, the relevant empirical question is whether bounded
semantic terms preserve a repeated commuting-diagonal opportunity that is
harder to recover after early lowering. For mirrored and conjugation workloads,
the relevant empirical question is whether the artifact recognizes repeated
shell structure and avoids materializing every candidate before selection. The
Grover and mirrored-shell experiments in Appendix~C are therefore empirical
positive cases and robustness checks, not support for a separate theorem-level
family separation.

\subsection{Self-Contained Rational CP Period Certificate}
\label{app:sec:rational-period}

The canonical four-qubit rational witness has diagonal block
\begin{equation*}
\begin{aligned}
D={}&\prod_{j=0}^{3}RZ_j(\theta_j)
 \prod_{0\leq a<b\leq3}CP_{ab}(\theta_{ab}),\\
\boldsymbol\theta_{RZ}/\pi
={}&(1/7,1/8,1/9,1/10),\\
(\theta_{01},\theta_{02},\theta_{03})/\pi
={}&(1/12,1/13,1/14),\\
(\theta_{12},\theta_{13},\theta_{23})/\pi
={}&(1/14,1/15,1/16).
\end{aligned}
\end{equation*}
For a reduced rational angle $\theta/\pi=p/q$, both $RZ(\theta)$ and
$CP(\theta)$ have projective single-term order
\begin{equation*}
 \operatorname{ord}(p/q)
 =\frac{2q}{\gcd(|p|,2q)},
\end{equation*}
because the relevant relative phase is $p/q$ modulo $2$.  The complete
single-term certificate is
\begin{center}
\begin{tabular}{lccc}
\toprule
term & support & $\theta/\pi$ & $T_j$\\
\midrule
$RZ_0$ & $0$ & $1/7$ & $14$\\
$RZ_1$ & $1$ & $1/8$ & $16$\\
$RZ_2$ & $2$ & $1/9$ & $18$\\
$RZ_3$ & $3$ & $1/10$ & $20$\\
$CP_{01}$ & $01$ & $1/12$ & $24$\\
$CP_{02}$ & $02$ & $1/13$ & $26$\\
$CP_{03}$ & $03$ & $1/14$ & $28$\\
$CP_{12}$ & $12$ & $1/14$ & $28$\\
$CP_{13}$ & $13$ & $1/15$ & $30$\\
$CP_{23}$ & $23$ & $1/16$ & $32$\\
\bottomrule
\end{tabular}
\end{center}
Hence
\begin{equation*}
\begin{aligned}
T_{\mathrm{terms}}
&=\operatorname{lcm}(14,16,18,20,24,26,28,28,30,32)\\
&=2^5\cdot3^2\cdot5\cdot7\cdot13=131040,
\end{aligned}
\end{equation*}
which proves that $D^{131040}$ is a global phase.  To certify minimality
for the product rather than merely obtain this common upper bound, subtract
the phase of $0000$.  For $z=z_0z_1z_2z_3$, the exact relative phase is
\begin{equation*}
\begin{aligned}
w(z)={}&z_0/7+z_1/8+z_2/9+z_3/10\\
 &+z_0z_1/12+z_0z_2/13+z_0z_3/14\\
 &+z_1z_2/14+z_1z_3/15+z_2z_3/16
 \pmod 2.
\end{aligned}
\end{equation*}
In particular,
\begin{equation*}
\begin{aligned}
 w(1011)&=\frac{37007}{65520},\\
 \operatorname{ord}(w(1011))
 &=\frac{2\cdot65520}{\gcd(37007,2\cdot65520)}\\
 &=131040.
\end{aligned}
\end{equation*}
Thus no smaller positive power of $D$ is a global phase, and
$T=131040$ exactly.  It follows that
$D^1,\ldots,D^R$ are pairwise distinct up to global phase for every
$R<T$.

The machine certificate uses only Python \texttt{fractions.Fraction} and
prints all 16 relative numerators, denominators, and orders.  The frozen
integer output contains
\begin{equation*}
\begin{gathered}
\texttt{term\_order\_lcm=131040},\\
\texttt{relative\_order\_lcm=131040},\\
\texttt{minimality\_state\_1011\_order=131040},\\
\texttt{exact\_global\_phase\_period\_T=131040}.
\end{gathered}
\end{equation*}
The reproducibility files and hashes are listed below; each digest is split
at its 32-hex midpoint only for typesetting.
\begin{itemize}[leftmargin=1.5em]
 \item verifier: \path{research/rational_witness/verify_period.py},
 SHA256 \texttt{c7c90f26e00e515383e94e4bbd707574}\allowbreak
 \texttt{eb3aabf94ad71c3cd5e2cdb8d5db7e76};
 \item integer output:
 \path{research/rational_witness/period_certificate.txt}, SHA256
 \texttt{375295d776f21ea2a82ee4f2bab5205d}\allowbreak
 \texttt{25bfef85ee864900baacfbd80be9bd6e};
 \item manifest: \path{research/rational_witness/SHA256SUMS}.
\end{itemize}
From the repaired-manuscript root, the byte-for-byte check is
\begin{quote}
\footnotesize
\texttt{cd research/rational\_witness}\\
\texttt{python3 verify\_period.py \textbackslash}\\
\texttt{\quad --check period\_certificate.txt}
\end{quote}

\subsection{Full Proofs for the Transducer Lower Bound}
\label{app:sec:full-proofs}

This section supplies the full finite-state proof layer underlying
\Cref{thm:final-output,cor:finite-tail,cor:counting-toll}. The main text states
the shorter version; the appendix keeps the pumpability data explicit.
The following statements restate the main-text theorem and corollaries with
full hypotheses and complete proofs; they are not additional independent
results.

\subsubsection{Pumpable Pair Families}

Persistent store can depend on $r$, so the induction tracks both a store value
and a tape word. In this subsection, $D$ in pumpability data denotes a period
modulus, not the diagonal witness layer in $C_r=H_\Lambda D^rH_\Lambda^{-1}$.
The next definition restates the main-text pumpable-pair definition with
explicit residue data for the full proof.

\begin{definition}[Pumpable pair family, restated with explicit residue data]
\label{app:def:pumpable}
Let $\Gamma$ be a finite alphabet and $P$ a finite set of store values. A family
\[
  X(r)=(\beta(r),W(r))\in P\times\Gamma^*,\qquad r\in\mathbb{N},
\]
is \emph{pumpable} if there exist integers $h\ge0$ and $D\ge1$ such that for
every residue $\rho\in\{0,\ldots,D-1\}$ there are
\[
  \beta_\rho\in P,
  \qquad
  A_\rho,B_\rho,C_\rho\in\Gamma^*,
\]
with
\[
  X(h+\rho+kD)=(\beta_\rho,A_\rho B_\rho^k C_\rho)
\]
for every $k\ge0$. The word $B_\rho$ is the pump word on residue $\rho$. The
residue is \emph{live} if $B_\rho\ne\epsilon$ and \emph{collapsed} if
$B_\rho=\epsilon$.
\end{definition}

\begin{lemma}[Input pumpability]
\label{app:lem:input-pumpable}
For the fixed-width witness $T_0(r)=\Pi M^r\Sigma$, the pair family consisting
of the initial persistent store and $T_0(r)$ is pumpable with $h=0$, $D=1$,
pump word $M$, and fixed store value.
\end{lemma}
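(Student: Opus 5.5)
This is a direct unfolding of \cref{app:def:pumpable} applied to the input tape. Take $h=0$ and $D=1$, so the only residue is $\rho=0$ and the condition to verify reads $X(k)=(\beta_0,A_0B_0^kC_0)$ for every $k\geq0$. Choose $\beta_0$ to be the initial persistent store value, and set $A_0:=\Pi$, $B_0:=M$, $C_0:=\Sigma$.

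The steps are as follows.
\begin{enumerate}[leftmargin=1.5em]
\item Check the tape component. By \cref{def:witness-family}, the lowering $\mathcal L_B(C_r)=\Pi M^r\Sigma$ uses fixed token strings $\Pi,M,\Sigma$ over the finite alphabet $\Gamma_B(q)$, which is the fixed-width regime of \cref{rem:encoding-regimes}. Hence $T_0(k)=A_0B_0^kC_0$ holds literally as words.
\item Check the store component. Before any input is read, the persistent store is the transducer's initial configuration. This configuration is part of the uniform algorithm and does not depend on $r$, so $\beta(k)=\beta_0$ for all $k$.
\item Handle the index range. The witness is stated for $r\geq1$, so I would read $X(0)$ as the word $\Pi\Sigma$, which is the natural extension. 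If instead the family is indexed only from $r\geq1$, I would shift to $h=1$ with $A_0=\Pi M$; this leaves the pump word equal to $M$.
\item Observe liveness. The block $M$ lowers $D$, which is a nontrivial diagonal layer, so $M\neq\epsilon$ and the unique residue is live. This fact is not part of the claim, but the downstream propagation argument uses it.
\end{enumerate}

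There is no real obstacle here. The only point requiring care is the indexing convention at $r=0$ versus $r=1$ described in step 3, together with the remark that the fixed finite alphabet is what makes $\Gamma$ legitimate in \cref{app:def:pumpable}.
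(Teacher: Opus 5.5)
Your proof is correct and is the paper's argument: take the fixed initial store with $A_0=\Pi$, $B_0=M$, $C_0=\Sigma$, so that $T_0(k)=\Pi M^k\Sigma$ holds by definition. Reading $r=0$ as $\Pi\Sigma$ is what the stated $h=0$ requires; your $h=1$ fallback would not match the statement, and the paper performs that shift later, in \cref{app:lem:no-collapse}.

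One correction to your step-4 aside. The propagation lemma \cref{app:lem:one-pass-pump} never uses liveness of the input pump word. Liveness of the final output residues is derived separately, from correctness and noncollision in \cref{app:lem:no-collapse}.
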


\begin{proof}
Take $A_0=\Pi$, $B_0=M$, and $C_0=\Sigma$.
\end{proof}

\subsubsection{One-Pass Propagation}

Fix a precision bound $q$ and a charged alphabet $\Gamma=\Gamma_B(q)$. Under
this bound, a pass of $\mathcal{A}$ can be viewed as an ordinary deterministic
subsequential transducer over a finite alphabet. Its finite state consists of
the ephemeral state, the persistent store, and the $w$-token window. The
end-of-pass persistent store is one component of the final finite state.

\begin{lemma}[One-pass pumpability propagation]
\label{app:lem:one-pass-pump}
Let $X(r)=(\beta(r),W(r))$ be a pumpable pair family over a finite input
alphabet. Let $\tau$ be one deterministic finite-window, finite-burst pass whose
output alphabet and store set are finite. Let
\[
  \tau(\beta(r),W(r))=(\beta'(r),W'(r))
\]
be the resulting final store and output tape. Then
$X'(r)=(\beta'(r),W'(r))$ is pumpable.
\end{lemma}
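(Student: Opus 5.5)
The plan is a joint induction over the input tape that propagates the pumpability data from \cref{app:def:pumpable} through the pass $\tau$. Fix the pumpable input family $X(r)=(\beta(r),W(r))$ with data $(h,D,\beta_\rho,A_\rho,B_\rho,C_\rho)$. Because the store set $P$, the window length $w$, the burst bound $\mu$, and the charged alphabet $\Gamma$ are all finite, the pass is a deterministic subsequential transducer. Its finite configuration set $\mathcal Q$ consists of the ephemeral state, the persistent store, and the contents of the $w$-token window. Every input letter read triggers a transition $q\mapsto\delta(q,a)$ and appends an output word $\lambda(q,a)$ of length at most $\mu$. At end-of-input there is a final flush $\phi(q)$, after which the end store $\beta'$ is read off the final configuration. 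I treat each residue $\rho$ separately and then combine the results.

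Fix $\rho$ and run $\tau$ on $A_\rho B_\rho^k C_\rho$, starting from the initial configuration determined by $\beta_\rho$. If $B_\rho=\epsilon$, the input does not depend on $k$, so neither does the output. That residue is therefore collapsed with the data carried over unchanged. If $B_\rho\neq\epsilon$, let $q_0$ be the configuration after reading $A_\rho$, and let $q_k=\delta^*(q_0,B_\rho^k)$. The sequence $(q_k)$ lies in the finite set $\mathcal Q$, so it is eventually periodic. Hence there are $k_0\le|\mathcal Q|$ and $e\in\{1,\dots,|\mathcal Q|\}$ with $q_{k+e}=q_k$ for all $k\ge k_0$. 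Determinism then gives the following for $k=k_0+j+\ell e$ with $0\le j<e$:
\[
\text{output}=\lambda^*(A_\rho)\,\lambda^*(q_0,B_\rho^{k_0+j})\,\bigl(\lambda^*(q_{k_0+j},B_\rho^{e})\bigr)^{\ell}\,\lambda^*(q_{k_0+j},C_\rho)\,\phi(\cdot).
\]
The final configuration, and hence the end store, depends only on $j$. The output therefore has the form $A'B'^{\ell}C'$ with a fixed store, which is pumpability along the subsequence indexed by $\ell$.

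The main obstacle is combining the residues, since each $\rho$ produces its own offset $k_0(\rho)$ and period $e(\rho)$. To handle this, set $E=\operatorname{lcm}_\rho e(\rho)$, $D'=DE$, and $h'=h+D\max_\rho k_0(\rho)$. Every $r\ge h'$ can then be written as $r=h'+\rho'+kD'$ with $0\le\rho'<D'$. The residue $\rho'$ determines the original residue $\rho$ and a phase $j$ modulo $e(\rho)$. Replacing the pump word by $\lambda^*(q,B_\rho^{E})$, which concatenates $E/e(\rho)$ period outputs, converts each subsequence into the uniform form $X'(h'+\rho'+kD')=(\beta'_{\rho'},A'_{\rho'}B'^{\,k}_{\rho'}C'_{\rho'})$.

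The remaining work is bookkeeping. The window contents must be treated as part of the configuration, so that the transition after $B_\rho^k$ also fixes the behaviour on $C_\rho$ and on the flush. The finite-burst bound is needed only to keep each $\lambda(q,a)$ a finite word over the finite output alphabet. The definition allows any $h\ge0$ and $D\ge1$, so enlarging $h$ and $D$ in this way is harmless.
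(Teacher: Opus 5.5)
Your proposal is correct and follows essentially the same route as the paper's proof. Both arguments absorb the store, ephemeral state and window into one finite configuration set, use ultimate periodicity of the orbit of the post-$A_\rho$ state under reading $B_\rho$, and recombine residues with $h'=h+D\max_\rho k_0(\rho)$, $D'=D\operatorname{lcm}_\rho e(\rho)$ and pump word equal to $E/e(\rho)$ copies of the cycle output. The differences are cosmetic: you place the partial block $B_\rho^{j}$ before the repeated cycle rather than after it, you treat $B_\rho=\epsilon$ separately (the paper's argument already covers it with preperiod $0$ and cycle length $1$), and the leftover $v_0$ cycles you call bookkeeping are absorbed into $A'_{\rho'}$ exactly as in the paper.
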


\begin{proof}
Fix pumpability data
$h,D,\{\beta_\rho,A_\rho,B_\rho,C_\rho\}_\rho$ for $X$. We first treat one
residue $\rho$, then recombine all residues into $r$-indexed pumpability data
for $X'$.

For a fixed residue $\rho$, write
\[
  W_k=A B^k C,
  \qquad
  \beta_k=\beta,
\]
where $k\ge0$ and $A,B,C,\beta$ abbreviate the residue data. Absorb the
ephemeral state, persistent store, and window into the pass's finite state set
$S$. Starting from the state determined by $\beta$, process $A$. This emits a
fixed word $O_A$ and reaches a state $s_A\in S$.

Let $f:S\to S$ be the state map obtained by reading the block $B$, and let
$o(s)\in(\Gamma')^*$ be the output emitted while reading $B$ starting in state
$s$. The sequence
\[
  s_A,\ f(s_A),\ f^2(s_A),\ldots
\]
in the finite set $S$ is ultimately periodic. Thus there exist $u\ge0$ and
$\ell\ge1$ with $u+\ell\le |S|+1$ such that
\[
  f^{t+\ell}(s_A)=f^t(s_A)
  \qquad\text{for all }t\ge u.
\]
For $k=u+v\ell+\eta$ with $0\le\eta<\ell$, the output while reading $B^k$ is
\[
\begin{aligned}
  &\bigl(o(s_A)o(f(s_A))\cdots o(f^{u-1}(s_A))\bigr)\\
  &\quad\cdot\Bigl(o(f^u(s_A))\cdots
  o(f^{u+\ell-1}(s_A))\Bigr)^v\\
  &\quad\cdot\bigl(o(f^u(s_A))\cdots
  o(f^{u+\eta-1}(s_A))\bigr).
\end{aligned}
\]
After this, the pass reads the fixed suffix $C$, emits a word depending only
on the state $f^{u+\eta}(s_A)$, and ends in a persistent store value depending
only on the same state.

Restore the residue subscripts. For each original residue
$\rho\in\{0,\ldots,D-1\}$, let $u_\rho\ge0$ and $\ell_\rho\ge1$ be the
preperiod and cycle length obtained above, let $f_\rho$, $s_{A_\rho}$, and
$O_{A_\rho}$ be the corresponding state map, post-prefix state, and prefix
output. For $0\le\eta<\ell_\rho$, let $F_{\rho,\eta}$ be the complete word
emitted while reading the fixed suffix $C_\rho$ from state
$f_\rho^{u_\rho+\eta}(s_{A_\rho})$, including the end-of-tape finalizer output,
and define
\begin{align*}
  P_\rho &:=
  o\bigl(s_{A_\rho}\bigr)o\bigl(f_\rho(s_{A_\rho})\bigr)\cdots
  o\bigl(f_\rho^{u_\rho-1}(s_{A_\rho})\bigr),\\
  Q_\rho &:=
  o\bigl(f_\rho^{u_\rho}(s_{A_\rho})\bigr)\cdots
  o\bigl(f_\rho^{u_\rho+\ell_\rho-1}(s_{A_\rho})\bigr),\\
  R_{\rho,\eta} &:=
  o\bigl(f_\rho^{u_\rho}(s_{A_\rho})\bigr)\cdots
  o\bigl(f_\rho^{u_\rho+\eta-1}(s_{A_\rho})\bigr)\cdot F_{\rho,\eta}.
\end{align*}
Let $\beta'_{\rho,\eta}$ be the end-of-pass persistent store value reached from
that state. Define
\[
\begin{aligned}
  u^*&:=\max_{\rho} u_\rho,
  &\ell^*&:=\lcmop_{\rho}\ell_\rho,\\
  h'&:=h+D u^*,
  &D'&:=D\ell^*.
\end{aligned}
\]
Each new residue $\rho'\in\{0,\ldots,D'-1\}$ decomposes uniquely as
\[
  \rho'=\rho+D\tilde\eta,
  \qquad
  \rho\in\{0,\ldots,D-1\},\quad
  \tilde\eta\in\{0,\ldots,\ell^*-1\}.
\]
For $r=h'+\rho'+k'D'$ with $k'\ge0$, the old representation
$r=h+\rho+kD$ has
\[
  k=u^*+\tilde\eta+k'\ell^*.
\]
Since $u^*\ge u_\rho$, we have $k\ge u_\rho$ for every $k'\ge0$; and since
$\ell_\rho\mid\ell^*$, the residue of $k-u_\rho$ modulo $\ell_\rho$ does not
depend on $k'$. Set
\[
  \eta:=(u^*+\tilde\eta-u_\rho)\bmod \ell_\rho,
  \qquad
  v_0:=\Bigl\lfloor\frac{u^*+\tilde\eta-u_\rho}{\ell_\rho}\Bigr\rfloor\ge0,
\]
so that
\[
  k=u_\rho+(v_0+k'\ell^*/\ell_\rho)\ell_\rho+\eta.
\]
By the output decomposition above, for every $k'\ge0$,
\[
  X'(h'+\rho'+k'D')
  =\bigl(\beta'_{\rho,\eta},\ A'_{\rho'}(B'_{\rho'})^{k'}C'_{\rho'}\bigr),
\]
where
\[
  A'_{\rho'}:=O_{A_\rho}P_\rho Q_\rho^{v_0},
  \qquad
  B'_{\rho'}:=Q_\rho^{\ell^*/\ell_\rho},
  \qquad
  C'_{\rho'}:=R_{\rho,\eta}.
\]
This is exactly pumpability of $X'$ with data $(h',D')$ and residue data
$\{\beta'_{\rho,\eta},A'_{\rho'},B'_{\rho'},C'_{\rho'}\}_{\rho'}$.
\end{proof}

\begin{corollary}[Per-tape induction, for proof]
\label{app:cor:per-tape}
Assume $\mathcal{A}$ has constant budget, so all executions use one finite
alphabet $\Gamma_B(q_*)$. For every pass index $i=0,1,\ldots,p$, the pair
family consisting of the persistent store after pass $i$ and the tape $T_i(r)$
is pumpable.
\end{corollary}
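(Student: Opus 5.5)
The plan is a direct induction on the pass index $i$, with \cref{app:lem:input-pumpable} as the base case and \cref{app:lem:one-pass-pump} as the inductive step. The constant-budget hypothesis is used only to make every pass an ordinary deterministic finite-window, finite-burst transducer over the single finite alphabet $\Gamma_B(q_*)$, with a finite store set $P$.

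\emph{Base case ($i=0$).} Before any pass, the persistent store is the fixed initial value and the tape is $T_0(r)=\Pi M^r\Sigma$. By \cref{app:lem:input-pumpable} this pair family is pumpable with $h=0$, $D=1$ and pump word $M$.

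\emph{Inductive step.} Assume the family $(\beta_i(r),T_i(r))$ is pumpable for some $i<p$. Pass $i+1$ starts from the finite state determined by $\beta_i(r)$ and reads $T_i(r)$. It produces the end-of-pass store $\beta_{i+1}(r)$ and the output tape $T_{i+1}(r)$.

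Four hypotheses of \cref{app:lem:one-pass-pump} need checking:
\begin{itemize}
\item the input alphabet is finite, since it is $\Gamma_B(q_*)$;
\item the output alphabet is the same finite charged alphabet;
\item the store set is finite, because $B_{\mathrm{cut}}$ is constant and so only finitely many serializations of bounded length exist;
\item the pass is deterministic, finite-window and finite-burst, with constant $w$ and $\mu$.
\end{itemize}
Given these, the lemma yields that $(\beta_{i+1}(r),T_{i+1}(r))$ is pumpable. Since $p$ is a constant, finitely many steps give the claim for every $i\le p$.

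\emph{Expected obstacle: matching the model to the lemma's interface.} The lemma takes a \emph{pair} and treats the initial store as residue-dependent but fixed along each pump progression. That is exactly how pumpability data carries $\beta_\rho$, so the dependence of pass $i+1$'s starting state on $\beta_i(r)$ is already absorbed.

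What must be argued explicitly is that nothing besides the persistent store and the current tape carries information from pass $i$ to pass $i+1$. I would check this against \cref{def:compiler}: ephemeral state and window are either reset between passes or folded into the finite store set. If a pass may read several earlier tapes, I would enlarge the tape to the concatenation with delimiters, or to a product alphabet of the tracked tapes. Pumpability survives this because each pass's pump data can be refined to a common $D$ by the $\operatorname{lcm}$ argument already used in the lemma.

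Finally, end-of-tape finalizer output is bounded by the finite-burst constant, so it fits into the $C'_{\rho'}$ word and the step goes through unchanged.
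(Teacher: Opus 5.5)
Your induction is exactly the paper's proof. The base case is \cref{app:lem:input-pumpable}, and \cref{app:lem:one-pass-pump} is applied successively to passes $1,\ldots,p$, with constant budget supplying the finite alphabet and finite store set. Drop the multi-tape contingency: the per-tape model never needs it, and it would not work as written, because concatenating two pumpable families (e.g.\ $a^k\#b^k$) gives two independent pump blocks that no $\operatorname{lcm}$ refinement merges into a single $AB^kC$ form.
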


\begin{proof}
The base case is \Cref{app:lem:input-pumpable}. Apply
\Cref{app:lem:one-pass-pump} successively to passes $1,\ldots,p$.
\end{proof}

\begin{remark}[Unary scratch tapes]
\label{app:rem:unary}
If pass 1 writes one marker per input period, then
$T_1(r)=\mathtt{marker}^r$ is pumpable with live pump word
$\mathtt{marker}$. \Cref{app:lem:one-pass-pump} applies to every later
pass. A later constant-budget pass may output one marker every $c$ input
markers, permute markers, or add a fixed periodic decoration, but it remains
pumpable. It cannot turn $\mathtt{marker}^r$ into a family of $O(1)$ many
correct aggregate circuits for all noncolliding $r$; if the final pump
collapses, the final output repeats on a residue class.
\end{remark}

\subsubsection{Collapse Versus Final-Output Length}

\begin{lemma}[Pumpable outputs are bounded-or-linear on each residue]
\label{app:lem:bounded-or-linear}
Let $W(r)$ be a pumpable word family with data
$(h,D,A_\rho,B_\rho,C_\rho)$. For $r=h+\rho+kD$:
\begin{enumerate}[label=(\alph*)]
\item if $B_\rho=\epsilon$, then $W(h+\rho+kD)$ is independent of $k$;
\item if $B_\rho\ne\epsilon$, then
\[
  |W(h+\rho+kD)|\ge k.
\]
\end{enumerate}
Consequently, on a live residue,
\[
  |W(r)|\ge \frac{r-h-D+1}{D}
\]
for all $r\ge h+D$ in that residue.
\end{lemma}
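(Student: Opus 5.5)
The statement to prove is \cref{app:lem:bounded-or-linear}. The plan is a direct reading of \cref{app:def:pumpable}, with no further structure needed. Fix a residue $\rho$ and write $r=h+\rho+kD$, so that by definition $W(r)=A_\rho B_\rho^kC_\rho$.

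For (a), if $B_\rho=\epsilon$, then $B_\rho^k=\epsilon$ for every $k$. Hence $W(h+\rho+kD)=A_\rho C_\rho$, which does not depend on $k$.

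For (b), if $B_\rho\neq\epsilon$, then $|B_\rho|\geq1$. Word length is additive under concatenation, so
\[
|W(h+\rho+kD)|=|A_\rho|+k|B_\rho|+|C_\rho|\geq k .
\]

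For the consequence, invert the residue parametrization. For $r\geq h+D$ in a live residue class, we have $k=(r-h-\rho)/D$ with $0\leq\rho\leq D-1$. This gives
\[
k\geq\frac{r-h-(D-1)}{D}=\frac{r-h-D+1}{D},
\]
and combining with (b) yields the displayed bound. The hypothesis $r\geq h+D$ only guarantees that the right-hand side is positive, so the bound is not vacuous; it also lies in the domain where the parametrization with $k\geq0$ is valid.

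I expect no step to be a real obstacle. The one point needing care is bookkeeping: $k$ must be an integer determined by $r$ and the residue $\rho\in\{0,\ldots,D-1\}$, and the worst case $\rho=D-1$ is exactly what produces the $-D+1$ term.
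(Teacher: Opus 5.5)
Your proposal is correct and matches the paper's proof essentially step for step. Both read (a) and (b) directly off $W(h+\rho+kD)=A_\rho B_\rho^kC_\rho$ using $|B_\rho|\ge1$, and both get the tail bound from $k=(r-h-\rho)/D$ with $0\le\rho\le D-1$.
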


\begin{proof}
Both statements are immediate from
$W(h+\rho+kD)=A_\rho B_\rho^k C_\rho$. If $B_\rho\ne\epsilon$, then
$|B_\rho|\ge1$, so $|A_\rho B_\rho^k C_\rho|\ge k$. Since
$r=h+\rho+kD$ and $0\le\rho<D$, $k\ge(r-h-D+1)/D$.
\end{proof}

\begin{lemma}[Correct noncolliding final outputs cannot collapse]
\label{app:lem:no-collapse}
Assume finite-range noncollision (Assumption~\ref{ass:noncollision}) with
$N(R)=R$ on the interval $1\le r\le R$. Suppose the final output family
$T_p(r)$ is pumpable with data $(h,D,A_\rho,B_\rho,C_\rho)$ and
$\mathcal{A}$ is correct on every $r\in\{1,\ldots,R\}$. Without loss of
generality take $h\ge1$: if $h=0$, replace $h$ by $h+D$; this preserves
pumpability after discarding the finite prefix of indices $k=0$ on each residue
and absorbing one copy of $B_\rho$ into $A_\rho$. Then every residue class
appearing twice in the tail $\{h,h+1,\ldots,R\}$ is live. In particular, for
every $r\in\{h+D,\ldots,R\}$,
\[
  |T_p(r)|\ge \frac{r-h-D+1}{D}.
\]
\end{lemma}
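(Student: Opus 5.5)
We argue by contradiction on residues, combining pumpability of $T_p$ with noncollision and correctness, and then invoke \cref{app:lem:bounded-or-linear} for the length bound. Fix a residue $\rho\in\{0,\ldots,D-1\}$ that occurs at least twice in the tail, i.e.\ there are $r_1=h+\rho+k_1D<r_2=h+\rho+k_2D$ with $h\le r_1<r_2\le R$. Suppose for contradiction that $\rho$ is collapsed, $B_\rho=\epsilon$. By \cref{app:lem:bounded-or-linear}(a), $T_p(r_1)=T_p(r_2)$ as words, hence they denote the same circuit $C$ and in particular the same unitary up to global phase. Correctness at $\epsilon=0$ gives $U(C)=e^{i\phi_1}C_{r_1}$ and $U(C)=e^{i\phi_2}C_{r_2}$ projectively, so $C_{r_1}$ and $C_{r_2}$ agree up to global phase. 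Conjugating by the fixed invertible $H_\Lambda$ then gives $D^{r_1}=e^{i\phi}D^{r_2}$. Since $1\le h\le r_1<r_2\le R$, this contradicts \cref{ass:noncollision}. Hence every residue appearing twice in the tail is live.

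The step needing care is the normalization $h\ge1$, together with the passage from a live residue to the stated inequality. For the normalization, we replace $(h,A_\rho)$ by $(h+D,A_\rho B_\rho)$; the pumpability identities continue to hold for all $k\ge0$, so the pumping data remain valid. This ensures that every tail index is $\ge1$ and therefore lies inside the range where correctness and noncollision are assumed.

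For the length bound, take $r\in\{h+D,\ldots,R\}$ in residue $\rho$. Then $r-D\ge h$ is also in the tail and has the same residue, so $\rho$ appears twice and is live by the first paragraph. \Cref{app:lem:bounded-or-linear}(b) then yields $|T_p(r)|\ge k=(r-h-\rho)/D\ge(r-h-D+1)/D$.

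The main (mild) obstacle is to make sure that ``same output word'' really forces ``same target'' projectively. This uses the fact that the output tape determines $U(C)$, and that the exact correctness criterion is $d_{\mathrm{proj}}=0$ rather than $\epsilon$-closeness. It is also where fixed width enters: the witness unitaries live in one Hilbert space, so the comparison is meaningful.
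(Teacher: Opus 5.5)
Your proposal is correct and follows the paper's proof. A collapsed residue forces identical output strings at two tail indices; you take arbitrary $r_1<r_2$, while the paper takes $r$ and $r-D$. Exact correctness plus exact lowering then makes $D^{r_1}$ and $D^{r_2}$ agree up to global phase, contradicting noncollision, and the live case of the bounded-or-linear lemma gives the length bound after the same $h\ge1$ normalization.
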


\begin{proof}
Let $r=h+\rho+kD$ with $k\ge1$ and $r\le R$. Then
$r-D=h+\rho+(k-1)D$ also lies in the tail, and by the convention $h\ge1$ it
satisfies $r-D\ge h\ge1$, so both $r$ and $r-D$ lie in the correctness domain
$\{1,\ldots,R\}$. If $B_\rho=\epsilon$,
\Cref{app:lem:bounded-or-linear} gives
\[
  T_p(r)=T_p(r-D)
\]
as literal circuit strings. Hence the two outputs have the same unitary.
Correctness at both $r$ and $r-D$ would imply
\[
U(\mathcal{L}_B(C_r))=U(\mathcal{L}_B(C_{r-D}))
\]
up to global phase, which by exactness of the lowering means $D^r\equiv
D^{r-D}$ up to global phase, contradicting Assumption~\ref{ass:noncollision}
with $N(R)=R$. Therefore $B_\rho\ne\epsilon$, and
\Cref{app:lem:bounded-or-linear} gives the stated length bound.
\end{proof}

\subsubsection{Final-Output Theorem and Corollaries}

\begin{theorem}[$p$-pass final-output lower bound, restated with full hypotheses]
\label{app:thm:final-output}
Fix a fixed-width diagonal witness family
$C_r=H_\Lambda D^rH_\Lambda^{-1}$ with exact lowering
$\mathcal{L}_B(C_r)=\Pi M^r\Sigma$, and assume unbounded noncollision
(Assumption~\ref{ass:unbounded-noncollision}): the powers $D^r$ are pairwise
distinct up to global phase for all $r\ge1$, so $N(R)=R$ on every range. Let
$\mathcal{A}$ be a uniform deterministic $p$-pass finite-window, finite-burst
transducer with constant budget that is correct on the input
$\mathcal{L}_B(C_r)=\Pi M^r\Sigma$ for every $r\ge1$.
Here uniform means one fixed deterministic transducer, with the same passes,
transition rules, finite state, window, burst bound, and store bound for all
repetition counts $r$.

Then there exist constants $h\ge1$ and $D\ge1$, depending on $\mathcal{A}$ and
the fixed witness but not on $r$, such that for every $r\ge h+D$,
\[
  |T_p(r)|\ge \frac{r-h-D+1}{D}=\Omega_{\mathcal{A}}(r).
\]
Equivalently, no constant-budget transducer in this model that is correct on
the whole noncolliding fixed-width family can emit $o(r)$-length final outputs
on that family.
\end{theorem}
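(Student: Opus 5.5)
The plan is to chain the three appendix lemmas already in place: input pumpability, propagation of pumpability through passes, and the no-collapse lemma. Under the constant-budget hypothesis, every tape produced in every run is written over the single finite charged alphabet $\Gamma_B(q_*)$. The persistent store, the ephemeral state and the $w$-token window together range over one finite set $S$ that does not depend on $r$. Because $\mathcal A$ is uniform, each pass is the same deterministic finite-window, finite-burst subsequential transducer for every $r$. This is exactly the setting of \cref{app:lem:one-pass-pump}.

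\textbf{Step 1: base case.} \Cref{app:lem:input-pumpable} shows that the pair (initial store, $T_0(r)=\Pi M^r\Sigma$) is pumpable, with $h=0$, $D=1$ and pump word $M$.

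\textbf{Step 2: induction over passes.} Applying \cref{app:lem:one-pass-pump} to passes $1,\ldots,p$ in turn gives \cref{app:cor:per-tape}. Hence the final family $(\beta_p(r),T_p(r))$ is pumpable with some data $(h,D,A_\rho,B_\rho,C_\rho)$. Each step only enlarges $h$ and multiplies $D$ by lcms of cycle lengths bounded by $|S|+1$, and $p$ is constant. So $h$ and $D$ depend only on $\mathcal A$ and the witness, not on $r$.

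\textbf{Step 3: normalize $h$.} If $h=0$, replace it by $D$ as in \cref{app:lem:no-collapse}. This absorbs one copy of $B_\rho$ into $A_\rho$ and leaves every pumped form unchanged for $k\geq0$.

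\textbf{Step 4: no collapse.} Unbounded noncollision gives finite-range noncollision with $N(R)=R$ for every $R$. Fix $r\geq h+D$ and choose any $R\geq r$. Then \cref{app:lem:no-collapse} says the residue class of $r$ is live. The argument is this: if $B_\rho=\epsilon$, then $T_p(r)=T_p(r-D)$ literally. Correctness at both $r$ and $r-D\geq h\geq1$, together with exactness of the lowering, would force $D^r\equiv D^{r-D}$ up to global phase, which contradicts \cref{ass:unbounded-noncollision}. \Cref{app:lem:bounded-or-linear} then gives $|T_p(r)|\geq k\geq(r-h-D+1)/D$.

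\textbf{Main obstacle.} The delicate point is Step 2. I need the propagated store and output to remain a \emph{pair} family, because the next pass starts from a store value that may depend on $r$. I also need the residue refinement $D'=D\ell^*$ to recombine consistently across all residues. The proof of \cref{app:lem:one-pass-pump} handles exactly this by taking $u^*=\max u_\rho$ and $\ell^*=\operatorname{lcm}\ell_\rho$. So I would check that the end-of-pass store $\beta'_{\rho,\eta}$ depends only on the new residue $\rho'$, which is what lets the induction continue. I would also check that the finite-burst bound keeps each $o(s)$ a finite word, so that the output alphabet stays finite.
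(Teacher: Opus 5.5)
Your proposal is correct and follows the paper's proof essentially step for step. You obtain pumpability of the final pair family from \cref{app:cor:per-tape}, apply the $h\ge1$ normalization of \cref{app:lem:no-collapse}, and then use the $r$ versus $r-D$ comparison under unbounded noncollision to make every final residue live; the points you flag as the main obstacle (the end-of-pass store depending only on the refined residue $\rho'$, and the finite-burst outputs) are already supplied by the residue recombination in the proof of \cref{app:lem:one-pass-pump}.
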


\begin{proof}
By constant budget, all executions use a finite charged token alphabet.
\Cref{app:cor:per-tape} gives pumpability of the final pair family and
hence of $T_p(r)$, with data $(h,D)$ depending only on $\mathcal{A}$ and the
witness. By the convention of \Cref{app:lem:no-collapse}, take $h\ge1$,
so every comparison point $r-D\ge h\ge1$ used below lies in the correctness
domain.

Fix any $r\ge h+D$ and any $R\ge r$. By Assumption~\ref{ass:unbounded-noncollision},
$N(R)=R$, and $\mathcal{A}$ is correct on every $r'\in\{1,\ldots,R\}$. Thus
\Cref{app:lem:no-collapse} applies and gives the displayed bound at $r$.
In particular every final residue is live: if any final residue collapsed, two
sufficiently large inputs in that residue would receive the same final output
string, contradicting correctness and noncollision.
\end{proof}

\paragraph{Correctness quantifier.}
Correctness in \Cref{app:thm:final-output} is required at every $r\ge1$; the
conclusion then holds at every $r\ge h+D$, without exception. This global
quantifier is not decorative: the collapse argument in
\Cref{app:lem:no-collapse} compares the outputs at $r$ and $r-D$ and
needs correctness at both points. Correctness at isolated points is not
sufficient; if $\mathcal{A}$ were correct at $r$ but not at $r-D$, identical
final strings at those two inputs would yield no contradiction.

\begin{corollary}[Finite-range tail form, restated with full hypotheses]
\label{app:cor:finite-tail}
Let $\mathcal{A}$ be a constant-budget transducer, and let $(h,D)$ be
pumpability data for its final pair family as provided by
\Cref{app:cor:per-tape}. Suppose finite-range noncollision
(Assumption~\ref{ass:noncollision}) holds with $N(R)=R$ on the range
$1\le r\le R$ and $\mathcal{A}$ is correct on every
$r\in\{1,\ldots,R\}$. Then
\[
  |T_p(r)|\ge \frac{r-h-D+1}{D}
  \qquad\text{for every }h+D\le r\le R.
\]
The constants $(h,D)$ are determined by $\mathcal{A}$ and the witness family
alone; in particular they do not depend on $R$. No statement is made for
transducers outside the constant-budget regime: there, pumpability data in the
sense of \Cref{app:def:pumpable} need not exist, and any finite-range
inequality would require explicit control of pumping constants not supplied by
this theorem.
\end{corollary}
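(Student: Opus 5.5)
The plan is to obtain the result by chaining the appendix lemmas, with no new combinatorics. The one point needing care is that the constants $(h,D)$ must be fixed before $R$ enters.

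First, I will fix the pumpability data once and for all from the transducer alone. Constant budget gives a single finite charged alphabet $\Gamma_B(q_*)$ and finite store and state sets. Then \Cref{app:lem:input-pumpable} (data $h=0$, $D=1$, pump word $M$) and $p$ applications of \Cref{app:lem:one-pass-pump}, packaged as \Cref{app:cor:per-tape}, give pumpability data $(h,D,\beta_\rho,A_\rho,B_\rho,C_\rho)$ for the final pair family $(\beta_p(r),T_p(r))$. I will stress that this construction never uses correctness or noncollision. The preperiods $u_\rho$ and cycle lengths $\ell_\rho$ come only from iterating the finite state maps of the passes on the fixed words $\Pi,M,\Sigma$. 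Hence $(h,D)$ depend only on $\mathcal A$ and the witness, not on $R$, which is the independence claim in the statement.

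Second, I will normalize to $h\ge1$ exactly as in the convention of \Cref{app:lem:no-collapse}. If the constructed data have $h=0$, I replace $h$ by $h+D$, absorb one copy of $B_\rho$ into $A_\rho$ on each residue, and keep $D$. This is still a function of $\mathcal A$ and the witness only. I read ``pumpability data as provided by \Cref{app:cor:per-tape}'' as meaning data in this normalized form.

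Third, I will apply \Cref{app:lem:no-collapse} on the given range. Take $r$ with $h+D\le r\le R$ and write $r=h+\rho+kD$ with $k\ge1$. Then $r-D=h+\rho+(k-1)D\ge h\ge1$, so both $r$ and $r-D$ lie in $\{1,\dots,R\}$, where $\mathcal A$ is correct. Suppose residue $\rho$ collapsed ($B_\rho=\epsilon$). Then \Cref{app:lem:bounded-or-linear}(a) makes $T_p(r)=T_p(r-D)$ literally. Exactness of the lowering would then force $D^r\equiv D^{r-D}$ up to global phase, contradicting \cref{ass:noncollision} with $N(R)=R$. So the residue is live, and \Cref{app:lem:bounded-or-linear}(b) gives $|T_p(r)|\ge k\ge (r-h-D+1)/D$. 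The closing disclaimer about non-constant budgets needs no proof: it only records that \Cref{app:def:pumpable} data are unavailable outside that regime.

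The step I expect to be the only real obstacle is the bookkeeping around the $h\ge1$ normalization. If one insisted on the raw data with $h=0$, the comparison point $r-D$ could equal $0$, which lies outside the correctness domain. Then the boundary case $r=D$ would need either a comparison with $2D$ (requiring $2D\le R$) or a separate nonemptiness argument. I will avoid this by fixing the normalized data up front, before $R$ is chosen, so the constants stay $R$-independent.
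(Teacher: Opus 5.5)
Your proposal is correct and follows the paper's own proof. \Cref{app:cor:per-tape} supplies $(h,D)$ from the transducer and the fixed words $\Pi,M,\Sigma$ alone, before $R$ is chosen, and \Cref{app:lem:no-collapse}, with its $h\ge1$ normalization (which you rightly build into the data), then gives the bound on $1\le r\le R$; the only difference is that you unpack the collapse argument inline instead of citing that lemma, which is harmless.
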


\begin{proof}
Constant budget puts all executions over one finite charged alphabet, so
\Cref{app:cor:per-tape} supplies pumpability data $(h,D)$ for the
final tape. The construction of this data uses only the fixed transducer, its
finite alphabets and state sets, and the fixed witness words $\Pi,M,\Sigma$;
it is obtained before any finite endpoint $R$ is chosen. Hence $(h,D)$ are
independent of $R$. \Cref{app:lem:no-collapse} then applies verbatim on
the range $1\le r\le R$.
\end{proof}

\begin{corollary}[Bit-level output counting toll, restated for proof]
\label{app:cor:counting-toll}
Assume finite-range noncollision supplies $N(R)$ pairwise distinct target
unitaries, up to global phase, on a range. Suppose a deterministic exact
compiler emits final outputs $T_p(r)$. Then
\[
  \max_{1\le r\le R}B_{\mathrm{self}}(T_p(r))
  \ge \left\lceil\log_2 N(R)\right\rceil.
\]
The same inequality holds for a fixed family-relative code when one public
dictionary is shared by the whole range.  For the rational-angle CP witness,
only the displayed per-range inequality is claimed on fixed ranges
$R<T=131040$.
\end{corollary}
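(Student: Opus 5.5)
The proof is a direct counting argument in the style of \cref{cor:fixed-grid-output}, with the no-wrap packing replaced by the noncollision hypothesis. First, fix the range $\{1,\ldots,R\}$ and let $\mathcal R\subseteq\{1,\ldots,R\}$ be the set of $N(R)$ indices whose targets are pairwise distinct up to global phase, as supplied by finite-range noncollision. Exactness means $U(T_p(r))=e^{i\phi_r}U(\mathcal L_B(C_r))$ for every $r\in\mathcal R$. So if $r\neq r'$ in $\mathcal R$ had $T_p(r)=T_p(r')$ as literal circuit strings, the two targets would coincide up to global phase, which contradicts noncollision. Because the compiler is deterministic, the map $r\mapsto T_p(r)$ is therefore injective on $\mathcal R$. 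This is the only place where exactness and determinism enter. I would note explicitly that the argument compares final outputs only, so no assumption on passes, window, or internal budget is needed.

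Second, I pass from circuits to codewords. The self-contained code $E_{\mathrm{self}}$ of \cref{def:info-budget-bits} is a prefix-free, uniquely decodable serialization of target-basis circuits through \cref{def:token-codec}. It is in particular injective on circuits, so the $N(R)$ distinct outputs receive $N(R)$ distinct codewords of lengths $\ell_r=B_{\mathrm{self}}(T_p(r))$. Kraft's inequality then gives
\begin{equation*}
 N(R)\,2^{-\max_{r}\ell_r}\leq\sum_{r\in\mathcal R}2^{-\ell_r}\leq1 .
\end{equation*}
Hence $\max_{r\in\mathcal R}\ell_r\geq\log_2N(R)$, and integrality of lengths gives the ceiling. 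Since the maximum over all $1\leq r\leq R$ dominates the maximum over $\mathcal R$, this proves the displayed bound.

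Third, for the family-relative statement, I repeat the same argument with $E_{\mathrm{rel}}(\cdot\mid\mathcal F)$. The one point to check is that a single public dictionary $\mathcal F$ is shared across the whole range. Under that hypothesis, $E_{\mathrm{rel}}(\cdot\mid\mathcal F)$ is one fixed prefix-free code, so Kraft applies to one code rather than to a collection of range-dependent codes. If the dictionary were allowed to vary with $r$, the codewords could collide, and this is exactly why the hypothesis is stated.

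Finally, for the rational CP witness, the period certificate of \cref{app:sec:rational-period} gives $D^{r}\equiv D^{r+T}$. Noncollision with $N(R)=R$ therefore holds only for $R<T=131040$, and the claim is restricted to those ranges. I do not expect a real obstacle here. The only delicate step is the injectivity-of-codewords point in the second step, and that is immediate from the prefix-free, uniquely decodable property of the declared codecs (\cref{lem:codec-prefix-free}).
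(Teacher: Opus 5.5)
Your proposal is correct and follows the paper's proof essentially step for step. Determinism, exactness, and noncollision force $N(R)$ pairwise distinct outputs, hence $N(R)$ distinct codewords of one fixed prefix-free code (self-contained, or family-relative with a single shared dictionary), and Kraft's inequality then gives $\max_r\ell_r\geq\lceil\log_2N(R)\rceil$. One small correction to your last paragraph: the exact minimal period makes $D^1,\ldots,D^R$ pairwise distinct for every $R\leq T$, so noncollision does not hold ``only for $R<T$''; the restriction to $R<T$ is the paper's scoping choice, not a consequence of the certificate.
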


\begin{proof}
A circuit string determines its unitary, so a deterministic exact compiler
that is correct on two inputs whose target unitaries are distinct up to global
phase must emit two distinct output strings. The noncollision hypothesis
therefore forces at least $N(R)$ pairwise distinct codewords.  If their
prefix-free lengths are $\ell_1,\ldots,\ell_{N(R)}$, Kraft's inequality
gives $\sum_i2^{-\ell_i}\leq1$, so
$N(R)2^{-\max_i\ell_i}\leq1$.  Hence the largest length is at least
$\lceil\log_2N(R)\rceil$.  The argument applies independently to either
fixed code and does not convert token count or gate count into bits.
\end{proof}

\subsection{Bookkeeping for the Masked-Share Cut}
\label{app:sec:streaming-proof}

This subsection records the restart property used by the masked-share
tradeoff (\cref{thm:main-tradeoff}).  It also makes explicit why a
materialized update log, phase table, graph, RAM/cache layout, old-input
spool, dynamic-pass state, or random seed is part of the cut message rather
than an uncharged reconstruction channel.

\begin{lemma}[Restart from the serialized cut]
\label{app:lem:reconstruction}
Fix $\mathcal A$ as in \cref{def:compiler}.  At the inter-round cut
$c_{\mathrm{MS}}$, the five fields in \eqref{eq:info-budget}, together
with any prefix already committed to the final write-once output, determine
the continuation of that run for every supplied second-round continuation
$v$.  For a randomized run this statement is conditional on the serialized
seed/tape and cursor.
\end{lemma}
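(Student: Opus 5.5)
The plan is to prove \cref{app:lem:reconstruction} as a determinism statement about the continuation of a run, using only the definition of the serialized cut in \cref{def:compiler}. The claim is that the pair (committed prefix, five-field snapshot at $c_{\mathrm{MS}}$) determines a complete configuration of $\mathcal A$. From that configuration the rest of the run is a deterministic function of the second-round input $v$, once the random tape is fixed.

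I would first fix what a full machine configuration of $\mathcal A$ at $c_{\mathrm{MS}}$ consists of. It comprises the finite control and pass/scheduling state; all heads and emitted-prefix lengths; the random seed/tape with its cursor; RAM, stacks, hash tables, and caches with their layout; the live read window; every live parameter; every live IR object and every rereadable consumed-input block; and the write-once output already emitted. I would then check this list against the five fields of \cref{def:compiler}, one item at a time:
\begin{itemize}
\item control, heads, dynamic pass state and seed/cursor are in $\mathsf{ser}_{\mathrm{control}}$;
\item memory contents and layout are in $\mathsf{ser}_{\mathrm{store}}$;
\item window structure is in $\mathsf{ser}_{\mathrm{window}}$;
\item parameter payloads are in $\mathsf{ser}_{\mathrm{parameter}}$, reattached by their location tags;
\item IR objects and spooled first-round records are in $\mathsf{ser}_{\mathrm{IR}}$;
\item the emitted output is the committed prefix itself.
\end{itemize}
By injectivity and prefix-freeness of the serializers (\cref{def:token-codec}, \cref{lem:codec-prefix-free}), decoding the concatenation recovers this configuration uniquely. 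Location tags make the reassembly of parameters into store, window and IR well defined.

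Next I would argue that the future is a function of this configuration and $v$. The input is append-only on first exposure, so the suffix of the stream after $c_{\mathrm{MS}}$ consists exactly of the round-two records encoding $v$. Any access to first-round data after the cut must go through rereadable storage, which \cref{def:compiler} requires to be charged, and hence is in the snapshot. Each step of $\mathcal A$, including dynamic choices of further passes, is then a deterministic function of the current configuration, the next input symbol, and the next random bit. For a randomized run the random bits are read from the serialized tape at the serialized cursor. Induction on steps after the cut shows that two runs with the same decoded configuration and the same $v$ (and same tape) produce identical step sequences. The output written after the cut is therefore identical, and since the committed prefix is shared, the complete final output is identical.

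The main obstacle is the completeness claim in the first step: ensuring there is no uncharged channel. Examples would be rereading old input through a file or mmap, or hidden layout-dependent behaviour of hash tables or caches. I would handle these by appealing directly to the definitional clauses of \cref{def:compiler} and \cref{rem:escape-channels}. Any information that can affect future execution is by definition live and placed in one of the five fields, and layout and metadata are explicitly included in $\mathsf{ser}_{\mathrm{store}}$. The lemma is thus essentially that definition read as a sufficiency statement, and the proof should say so explicitly rather than attempt a machine-specific argument.
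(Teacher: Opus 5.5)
Your proposal is correct and takes essentially the same route as the paper. Both check field by field that the five serializations of \cref{def:compiler}, together with the committed prefix, reconstruct the complete machine configuration, so the continuation on any supplied $v$ is then determined, conditionally on the serialized tape and cursor. Your explicit step induction and your appeal to append-only first exposure spell out what the paper leaves implicit. One small correction: unique recovery of the configuration comes from \cref{def:compiler} defining the fields as a complete restart snapshot, not from \cref{lem:codec-prefix-free}, which concerns only the executable version-one codec.
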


\begin{proof}
The control field restores the exact transition, all head positions,
dynamic pass/termination schedule, and random seed/tape cursor.  The store
and window fields restore RAM, stack, hash/cache metadata, nonparameter
machine data, and ordered live token structures.  The parameter field
restores every omitted coefficient by its location tag.  The IR field
restores every other live intermediate tape, DAG, table, addressed record,
and any consumed input block retained for rescanning.  These disjoint fields
therefore reconstruct the complete machine configuration.  Conditional on
the serialized random tape, the receiver can append the supplied second
round and execute the same dynamically chosen continuation.  The separately
transmitted final-output prefix is prepended to the suffix produced after
restart.  No additional state, old-input file, or public-family side channel
is used.
\end{proof}

%% file: appendix/appendix_B.tex
\section{Artifact Engineering}
The implementation described in this appendix is a bounded recognizer and
candidate controller, not a general maximal-commuting-region extractor for
arbitrary DAGs.  Its semantic nodes are limited to explicit supported
Pauli/phase records, local phase-ladder or Fourier spans,
mirrored/self-inverse composite blocks, and repeated-run or preset semantic
references.  The word ``region'' below means a contiguous span accepted by
that configured grammar under its scan order.  Soundness of a selected
rewrite is supplied by the independent certificate and conservative
fallback, not by a claim of global maximality.

\subsection{Candidate Control and Bounded Pipeline}

The semantic path is advisory, not authoritative: semantic terms are
lowered late, semantic subterms are compiled once and reused, and a
semantic candidate loses to a flat reference whenever it does not
clearly improve the conservative structural cost.  The term caches,
compiled-subterm caches, and repeated-run block caches in the
implementation follow from treating the semantic term rather than the
materialized circuit as the unit of comparison.  The whole method is the
following bounded pipeline.

\begin{enumerate}[leftmargin=1.8em]
    \item \textbf{Input.} A high-level circuit $C$, target basis $B$, and iteration budget $K$.
    \item Compute stable signatures for instructions and candidate blocks in $C$.
    \item Run bounded preprocessing $\mathcal{S}_K(C)$: repeated-prefix
    and repeated-run detection and reuse; adjacent
    block--inverse(block) cancellation; commutative inverse
    cancellation; semantic lifting into instruction-span, phase-ladder,
    and mirrored/self-inverse nodes where applicable; early stop if one
    full round does not improve the selected cost.
    \item Construct the candidate set
    \[
    \begin{aligned}
    \mathcal{C}(C) = \{&\mathcal{L}_B(C),
    \mathcal{P}_{\mathrm{ucc}}(\mathcal{L}_B(C)),\\
    &\mathcal{P}_{\mathrm{ucc}}(\mathcal{L}_B(\mathcal{S}_K(C))),
    \mathcal{R}(C)\},
    \end{aligned}
    \]
    where $\mathcal{R}(C)$ is an optional semantic or preset-reference
    candidate for large repeated structures, itself produced from the
    semantic term algebra (Fourier-layer lowering, mirrored/self-inverse
    lowering, or a bounded repeated-run rebuild).
    \item If the circuit factors as $C=P\cdot B^r\cdot S$, form the
    bounded block-candidate set $\mathcal{B}(B)=\{B_1,\dots,B_q\}$ and
    select $i^\star=\arg\min_i\kappa_{\mathrm{proj}}(P,B_i,r,S)$ on the
    projected full-circuit cost, rebuilding only $B_{i^\star}$.  This is
    what lets the repeated-run path avoid materializing several giant
    candidates merely to compare them.
    \item Evaluate all candidates under the conservative structural cost
    $\kappa$, with fixed lexicographic tie-breaking, and return
    $\widehat{C}=\arg\min_{X\in\mathcal{C}(C)}\kappa(X)$.
\end{enumerate}

The role of candidate control is not to dominate all downstream flows.
It is to prevent avoidable structured-circuit regressions by preserving
the best available output among a small family of semantically
equivalent candidates.

\subsection{Complexity Discussion}

The method is intentionally bounded and does not attempt global optimization. Let $n := |C|$ be the number of high-level instructions. Suppose one outer iteration consists of $r$ bounded scans, and scan $j$ has cost $T_j(n)$. Then the total runtime of preprocessing satisfies
\[
T_{\mathrm{pre}}(n)
\in
O\!\left(
K \cdot \sum_{j=1}^{r} T_j(n)
\right).
\]
The dominant scans are near-linear in $n$ up to implementation-dependent
hashing, block-comparison, and bookkeeping costs, so practical behavior
stays near-linear up to a bounded constant on repeat-heavy workloads.  A
large-circuit fast path handles repeat-dominant cases, source-level
short-circuiting avoids downstream work on Qiskit-native all-to-all
composites, and backend-aware portfolio expansion is restricted to cases
where instability matters.  The method is therefore a bounded systems
heuristic, not a globally optimizing pass.

\begin{proposition}[Expected regime of benefit]
The preprocessing stage is most likely to help on circuits whose useful structure is already explicit at the high-level instruction stream or in the bounded semantic lift, especially exact inverse blocks, repeated blocks, phase-ladder layers, mirrored/self-inverse shells, and dominant repeated prefixes. It is not expected to uniformly improve families whose performance is dominated by later synthesis or routing decisions that are not captured by the preprocessing signatures or semantic nodes.
\end{proposition}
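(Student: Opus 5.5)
The plan is to read the proposition as two claims about the bounded pipeline above. The first is a \emph{sufficiency} claim: benefit is possible when structure is explicit. The second is a \emph{non-uniformity} claim: no guaranteed benefit when cost is set by later synthesis or routing. Both follow from the shape of the candidate set $\mathcal C(C)$ and the selector $\widehat C=\arg\min_{X\in\mathcal C(C)}\kappa(X)$. I will first record the baseline fact. Because $\mathcal L_B(C)$ and $\mathcal P_{\mathrm{ucc}}(\mathcal L_B(C))$ always lie in $\mathcal C(C)$, with fixed lexicographic tie-breaking, the selected output satisfies $\kappa(\widehat C)\le\kappa(\mathcal P_{\mathrm{ucc}}(\mathcal L_B(C)))$. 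Preprocessing therefore never regresses the conservative cost. Any strict improvement must come from one of the two semantic candidates, $\mathcal P_{\mathrm{ucc}}(\mathcal L_B(\mathcal S_K(C)))$ or $\mathcal R(C)$.

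For the sufficiency direction, I will go through the listed structures one at a time. In each case I will exhibit the rule of $\mathcal S_K$ or $\mathcal R$ that fires on it.
\begin{itemize}
\item Exact inverse blocks are removed by block--inverse(block) cancellation.
\item Repeated blocks and dominant repeated prefixes are captured by repeated-run detection together with the projected block selection $i^\star$, so the output is rebuilt from one block instead of $r$ copies.
\item Phase-ladder layers become semantic nodes whose lowering aggregates coefficients. By \cref{prop:constructive-aggregation} the representative has $O(m)$ records, independent of $r$.
\item Mirrored or self-inverse shells become the corresponding semantic nodes.
\end{itemize}
In each case the semantic candidate has strictly smaller $\kappa$ on the witness families, and \cref{tab:fourier-ablation} supplies a concrete instance. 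Since the candidate lies in $\mathcal C(C)$, the selector picks it.

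For the non-uniformity direction, I will argue through signatures. If $C$ has no instruction or block whose stable signature matches one of the supported grammar rules (\cref{def:region-discovery}), then $\mathcal S_K(C)=C$ and $\mathcal R(C)$ is absent or not selected. The early-stop rule then terminates after one round. Hence $\mathcal C(C)$ collapses to the flat candidates, and $\widehat C$ coincides with the flat reference. This is the situation for circuits whose cost is set by synthesis angles or by routing, because those decisions occur after lowering and are invisible to the preprocessing signatures. The random Clifford$+R_Z$ and routing-dominated negative controls of \cref{sec:natural-factorial} give the empirical zero-effect witnesses. The claim is stated as ``most likely'' and ``not expected'', so this existence-of-null-cases argument suffices and no uniform lower bound is needed.

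The main obstacle is that ``useful structure already explicit'' is informal. I would handle it by identifying it with acceptance by the supported-pattern grammar of \cref{def:region-discovery}, plus the repeated-run factorization $C=P\cdot B^r\cdot S$. With that identification, the sufficiency claim reduces to checking that each accepted pattern lowers to a strictly cheaper candidate, which is the step I would verify case by case.
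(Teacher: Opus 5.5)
Your opening observation is the one the paper's justification rests on. Because the flat candidates always lie in $\mathcal C(C)$, any strict improvement must come from $\mathcal P_{\mathrm{ucc}}(\mathcal L_B(\mathcal S_K(C)))$ or $\mathcal R(C)$. These differ from the flat candidates only through rules that act on regularities visible to stable signatures, exact inverse relations, or the bounded semantic lift. That is a \emph{necessity} statement---benefit can only originate in structure that survives in this representation---and it is how the paper reads the first sentence.

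You instead build a \emph{sufficiency} argument, and the step you propose to verify case by case is false: it is not true that each accepted pattern lowers to a strictly cheaper candidate. Acceptance by the grammar of \cref{def:region-discovery} constrains the input syntax. It does not control how the semantic candidate compares with what $\mathcal L_B(C)$ and $\mathcal P_{\mathrm{ucc}}(\mathcal L_B(C))$ already achieve. For example, a single phase layer with $r=1$ has nothing to aggregate, so \cref{prop:constructive-aggregation} yields no reduction. Likewise, an inverse pair that the flat passes also cancel produces a tie, not a strict improvement. This is precisely why the semantic path is advisory and loses whenever it does not clearly improve $\kappa$. Keep the witness families (e.g.\ \cref{tab:fourier-ablation}) as existence illustrations of benefit, not as instances of a general implication.

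The non-uniformity step contains a non sequitur. From ``the cost is set by synthesis or routing decisions made after lowering'' you infer that no signature matches, hence $\mathcal S_K(C)=C$ and the candidate set collapses to the flat candidates. But the later decisions being invisible to the signatures says nothing about whether the circuit itself contains matched structure. A QAOA or QPE instance compiled for a constrained backend carries explicit phase layers and repeated blocks, yet its final cost can still be dominated by routing. There $\mathcal S_K(C)\neq C$, and your collapse argument does not apply.

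The paper's argument covers this case directly. The preprocessing rules act only on the pre-lowering representation, so the portion of the cost fixed afterwards is outside their reach. When that portion dominates, the pass may improve little or only preserve parity with stronger downstream baselines. Your no-match case is a degenerate special case of this. The zero-effect negative controls of \cref{sec:natural-factorial} illustrate the conclusion rather than carry it.
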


\begin{proof}[Justification]
This follows from the construction of the pass. Every preprocessing rule
operates on high-level structural regularities detected through stable
signatures, exact inverse relations, or the bounded semantic lift into semantic
terms such as Fourier layers and mirrored/self-inverse shells. Therefore the
pass can only exploit structure that survives in this representation.
Conversely, if output quality is dominated by later basis-specific synthesis or
hardware-routing effects not visible at this stage, the preprocessing pass may
improve little or only preserve parity with stronger downstream baselines. The
representation-aware compilation principle is therefore selective rather than
universal: it claims that some opportunities are materially easier to preserve
and compare before lowering, not that all useful optimization should be moved
to the semantic layer.
\end{proof}

\subsection{Integration Into UCC}
\label{sec:integration}

The semantic artifact changes the UCC flow in six places:
\begin{enumerate}[leftmargin=1.5em]
    \item it performs structural simplification before basis lowering,
    \item it lifts suitable subcircuits into a bounded semantic term algebra before candidate generation,
    \item it compares candidate outputs conservatively to avoid structured-circuit regressions,
    \item it uses source-level short-circuiting on suitable all-to-all Qiskit-native cases,
    \item it restricts backend-aware portfolio expansion to cases where extra probing is justified,
    \item it memoizes bounded semantic references and repeated-run block selections for reuse.
\end{enumerate}

The organizing distinction is that some circuit growth is unavoidable
basis-translation overhead while some is structural regression caused by
losing recoverable semantic evidence before candidate selection.

\subsection{Method and Provenance Map}
\label{app:method-provenance}

Every frozen numerical row is addressed by the complete primary key
\begin{equation*}
\begin{split}
(&\texttt{experiment},\texttt{instance},\texttt{representation},\\
 &\texttt{method},\texttt{seed},\texttt{backend\_hash},\\
 &\texttt{tool\_version},\texttt{artifact\_commit},
  \texttt{config\_hash}).
\end{split}
\end{equation*}
The backend and configuration hashes are SHA256 digests of canonical JSON
(sorted keys and no insignificant whitespace).  A row collision on all nine
fields is rejected.  In particular, the reorder-probe hardware rows and the
canonical hardware-aware rows have different experiment and configuration
keys, and therefore different reader labels even when the underlying MQT
instance name is the same.  The frozen source is
\path{research/experiment_registry/frozen_hardware_rows.json};
\path{scripts/generate_keyed_tables.py} validates key uniqueness and generates
the three affected table fragments plus
\path{expanded_primary_keys.json}.  The generated labels expose the campaign,
instance, seed, and the first eight configuration-hash digits.  Hand-edited
numeric rows are not accepted by this path.

The canonical validation data are grouped by experiment family in the artifact.
The external-baseline method map immediately below uses the disclosed
\(\texttt{timeout\_s}=600\) second wall-clock budget per method per instance;
it does not override the distinct \texttt{RP23}, \texttt{HW24}, or
\texttt{REAL24} campaign manifests. The table maps paper-facing names to
JSON keys, tool/pass sequences, artifact flags, and source classes. The older
real-panel UCC default row is included only to define that frozen-control
label in the main text and Appendix~C; it is not one of the canonical methods
used in the 600\,s external-baseline comparison.

\begin{table*}[t]
\centering
\tiny
\setlength{\tabcolsep}{2pt}
\begin{tabularx}{\textwidth}{@{}>{\raggedright\arraybackslash}p{0.18\textwidth}
                  >{\raggedright\arraybackslash}p{0.12\textwidth}
                  >{\raggedright\arraybackslash}X
                  >{\raggedright\arraybackslash}p{0.16\textwidth}
                  >{\raggedright\arraybackslash}p{0.07\textwidth}
                  >{\raggedright\arraybackslash}p{0.10\textwidth}@{}}
\toprule
Paper name & JSON key & Tool/pass sequence & Artifact flags & Budget & Source classes \\
\midrule
semantic UCC (Fourier-layer IR enabled)
& \path{semantic_ucc}
& \path{ucc.compile} to \{\texttt{cx}, \texttt{rx}, \texttt{ry}, \texttt{rz}, \texttt{h}\} with Fourier-layer IR enabled
& \path{UCC_DISABLE_FOURIER_LAYER_IR} unset
& 600\,s
& main, abl, res, corr \\
upstream UCC v0.4.12 pre-semantic default
& \path{baseline_ucc}
& \path{ucc.compile} loaded from the legacy baseline repository in \path{research/compare_real_instances.py}; target basis \{\texttt{cx}, \texttt{rx}, \texttt{ry}, \texttt{rz}, \texttt{h}\}
& \href{https://github.com/unitaryfoundation/ucc/tree/1b4212a10082ce667c53b6a75f2cd3c19449dbed}{legacy baseline repo}; semantic candidate controls absent
& 240\,s in real-panel runner
& real, appx \\
artifact UCC (Fourier-layer IR disabled)
& \path{no_fourier_ucc}
& same artifact compile path with Fourier-layer IR disabled
& \path{UCC_DISABLE_FOURIER_LAYER_IR=1}
& 600\,s
& abl, res \\
\texttt{qiskit opt3}
& \path{qiskit_opt3}
& \path{qiskit.transpile(..., optimization_level=3, layout_method="trivial", routing_method="none")}
& none
& 600\,s
& main, abl, res \\
PyZX configured bridge
& \path{pyzx_configured}
& Qiskit QASM2 to PyZX, \path{to_basic_gates}, \path{basic_optimization}, Qiskit target-basis cleanup
& none
& 600\,s
& main \\
PyZX \texttt{full\_reduce}
& \path{pyzx_full_reduce}
& Qiskit QASM2 to PyZX graph, \path{full_reduce}, circuit extraction, Qiskit target-basis cleanup
& none
& 600\,s
& main \\
TKET FullPeephole
& \path{tket_fullpeephole}
& Qiskit-to-TKET bridge when available, otherwise QASM fallback, then \path{FullPeepholeOptimise}
& none
& 600\,s
& main \\
TKET PauliSimp (unrebased configuration)
& \path{tket_paulisimp_unrebased}
& direct Qiskit--TKET bridge, then \path{PauliSimp}; retained only to record the predicate error
& none
& 600\,s
& status \\
TKET PauliSimp (rebased)
& \path{tket_paulisimp_rebased}
& direct bridge, \path{DecomposeBoxes}, \texttt{AutoRebase(\{CX, RZ, Rx\})}, \path{RemoveRedundancies}, \path{PauliSimp}, \path{RemoveRedundancies}, target-basis cleanup
& generic identical preamble
& 600\,s
& main, irr \\
TKET GuidedPauliSimp
& \path{tket_guided_paulisimp}
& same TKET bridge/fallback sequence, then \path{GuidedPauliSimp}
& none
& 600\,s
& main \\
staq rotation folding
& \path{staq_rotation_folding}
& Qiskit opt0 target-basis QASM2, public \path{staq} \path{fold_rotations}, QASM2 parse, target-basis cleanup
& no family labels; generic double-precision parser/serializer patch disclosed
& 600\,s
& main, irr \\
phase-polynomial reference
& \path{phase_poly_reference}
& recognize the $H D^r H$ witness, aggregate commuting \texttt{rz}/\texttt{cp} coefficients, then lower through Qiskit opt0
& none
& 600\,s
& main, abl, res, corr \\
\bottomrule
\end{tabularx}
\caption{Method/provenance map for the canonical validation protocol. The direct Qiskit-to-TKET bridge was available in the validation environment; the QASM fallback is documented for reproducibility when that bridge is unavailable.}
\label{tab:method-provenance}
\end{table*}

The source classes in \cref{tab:method-provenance} expand as follows:
Here ``main'' denotes membership in the canonical comparison campaign and does
not necessarily imply inclusion in the two headline tables.
\begin{itemize}[leftmargin=1.5em]
    \item main: canonical 600\,s external-baseline comparison.
    \item abl: Fourier-layer IR ablation.
    \item res: resource-consequence check.
    \item corr: large-instance correctness extension.
    \item irr: irrational independent-Pauli unbounded witness suite.
    \item status: provenance-only configuration excluded from numerical
    quality comparison when its predicate fails.
    \item real: official real-instance frozen panel.
    \item appx: older frozen-control appendix panels.
\end{itemize}

The new executable protocol and provenance are in
\path{research/round6_major_revision/}: the witness declaration and exact-rank
verifier, \path{unbounded_witness_results.json},
\path{rational_cp_round6_results.json},
\path{public_phase_folding_results.json},
\path{tket_paulisimp_results.json}, per-cell logs, and the environment
snapshot. After global-phase alignment, the implemented check computes the
maximum elementwise error $\epsilon$ and requires
$\epsilon \leq \mathrm{atol}+\mathrm{rtol}\max_{i,j}|U_{ij}|$. Thus
$\mathrm{atol}=\mathrm{rtol}=10^{-9}$ gives an effective upper threshold of
approximately $2\times10^{-9}$ because unitary entries have magnitude at most
one. Recorded maximum errors increase with repetition count $r$, while every
row marked completed and valid satisfies this criterion. The irrational-witness
PyZX $10$k result passes at approximately $1.05\times10^{-9}$; the $20$k result
at approximately $2.68\times10^{-9}$ is a correctness failure. Timeout,
predicate-error, unsupported-conversion, and correctness-failure statuses are
kept distinct. The staq provenance records
public commit \texttt{a2acd39} (with the full hash in the artifact) and the
generic double-precision OpenQASM patch used for continuous-angle round trips.

\subsection{Certificate, Cut-Trace, and QRE Artifacts}

The exact canonicalizer and symbolic checker are
\path{certificates/canonicalize.py} and
\path{certificates/symbolic_checker.py}.  The frozen audit
\path{data/frozen/certificate_audit.json} contains 12 accepted/rejected dense
cross-check cases over widths one through six and 56 adversarial mutations.
The checker and dense evaluator use separate implementations: the former
propagates complete signed Clifford tableaus and exact rational-$\pi$ Pauli
coefficients, while the latter constructs full numerical matrices and a
projective distance.  Only \texttt{completed\_valid} records carrying an
independent certificate enter numerical means; unsupported, predicate-error,
and numerical-inconclusive records remain status-only.

The canonical W3 theorem-model reference compilers are under
\path{compiler/reference_streaming/} and are driven by
\path{scripts/run_tradeoff_sweep.py}.  After every input update, their
version-one event schema records \texttt{committed\_output\_bytes}, the five
cut fields from \eqref{eq:info-budget}, pass index, RSS, elapsed nanoseconds,
artifact paths, and digests.  The verifier reopens and decodes every field
slice and checks the committed-prefix hashes.  The complete 592-run campaign
is stored under \path{data/runs/w3-tradeoff-20260802-local/}; its registered
summary and freeze manifest are \path{data/frozen/tradeoff_summary.csv} and
\path{data/frozen/tradeoff_freeze_manifest.json}.  The older
\path{research/cut_trace/} bundle was produced by an earlier serializer and
lacks the per-event field files, so its 1,320 events and 24 summaries are
historical only and are not accepted by the version-one verifier.  The checked
three-cut codec fixture and its manifests remain in the project-root
\path{instrumentation/example_trace/}.

The W7 runner and allocator are the project-root
\path{scripts/run_qre_campaign.py} and
\path{scripts/allocate_synthesis_error.py}; their total-error schema, two QEC
profiles, and factory grid are under \path{qre/}.  The 288-row Parquet/CSV,
full angle records, analysis, figures, and manifests are under
\path{data/frozen/}, \path{data/runs/w7-fixed-total-error-20260803-local/}, and
\path{figures/}.  The earlier single-profile runner and its six-row result are
retained as historical artifacts under \path{research/fixed_budget_qre/} but
are not the W7 main result.  The grid synthesizer is built from
public staq commit
\texttt{a2acd39e60ed0e5f1978fa530df7e1e0c7cedf7a}; the disclosed artifact change
only replaces the nondeterministic candidate-search seed with 20260801.  The
runner omits staq's \texttt{--time} fast path because that path bypasses its
internal detail/check branch; it instead measures subprocess time externally
and requires \texttt{Check flag = 1} for every unique angle.  The QRE is the
explicit surface-code model family stated in the main text, not a vendor
service or an independent architecture implementation.

\subsection{Immutable Campaign Hierarchy}

The machine-readable index
\path{research/campaign_manifests/campaign_index.json} defines two base
profiles and five noninterchangeable campaigns.  \texttt{RP23} is the
pass-order probe, \texttt{HW24} is the canonical line-backend campaign,
\texttt{REAL24} is the official real-instance anti-regression panel,
\texttt{MR25} is the matched-representation matrix, and \texttt{NF25} is the
natural-kernel factorial ablation.  Each prefix is unique and must appear in a
reader-facing label before the source instance name.

Every campaign manifest records or explicitly marks unavailable the backend
topology, native gate set, seed, timeout policy, memory cap, tool versions,
artifact/freeze commit, bridge sequence, certificate policy, immutable config
hash, and frozen dataset hash.  Historical campaigns did not enforce a memory
cap or independent certificate; their manifests say
\texttt{not\_enforced\_historical} and ``diagnostic only'' rather than imputing
a modern policy.  In particular, \texttt{HW24} uses 120\,s for QPE/QAOA and
240\,s for Grover, whereas \texttt{RP23} uses 600\,s.  Those rows may share an
MQT source-instance string but cannot share a short label or be averaged.

The configuration hierarchy is single-parent: a campaign inherits only its
named base profile and then supplies explicit overrides.  Backend and config
hashes use canonical JSON.  The frozen source snapshots
\path{frozen_hardware_aware_rows.json} and
\path{frozen_real_panel_rows.json} live beside the manifests, so later changes
to a runner cannot silently change a displayed row.

\subsection{Claim--Evidence Matrix}

Every headline statement in the abstract, introduction, resource section, and
conclusion is registered with one primary theorem, proposition, table, figure,
or artifact test and an explicit scope.  The machine-readable source is
\path{research/claim_evidence/claim_evidence.json}; the compact reader map is
\cref{tab:claim-evidence}.

\begin{table*}[tbp]
\centering
\scriptsize
\begin{tabularx}{\textwidth}{@{}l l l X@{}}
\toprule
ID & claim location & primary evidence & authorized scope \\
\midrule
\input{research/claim_evidence/claim_evidence_rows.tex} \\
\bottomrule
\end{tabularx}
\caption{Claim--evidence matrix.  The full JSON also contains the exact claim
text and evidence class.  A claim outside the displayed scope is not made by
this paper.}
\label{tab:claim-evidence}
\end{table*}

%% file: research/claim_evidence/claim_evidence_rows.tex
C01 & abstract/introduction & \cref{def:compiler,eq:info-budget} & formal model only \\
C02 & abstract/conclusion & \cref{thm:output-counting} & declared family-relative and self-contained codes \\
C03 & abstract/conclusion & \cref{thm:main-tradeoff} & two-round masked-share stream \\
C04 & introduction/discussion & \cref{prop:hybrid-upper} & ordered-share family-relative codec \\
C05 & validation & \cref{fig:matched-interaction} & 1,080 frozen cells; 24 cells per representation--compiler pair \\
C06 & validation & \cref{fig:measured-tradeoff} & 592 formal-model reference-compiler runs \\
C07 & validation/workloads & \cref{tab:natural-factorial,fig:natural-factorial} & 9 families; $2^6$ factors; 3 scales and 3 seeds \\
C08 & resource consequences & \cref{tab:resource-consequence} & one disclosed surface-code model \\
C09 & compiler & \cref{prop:symbolic-soundness} & declared symbolic domain only \\
C10 & validation & \cref{fig:matched-interaction,sec:external-baselines} & native and unified configurations; bridge loss reported separately \\
C11 & application workloads & \cref{tab:real-instances} & REAL24 campaign \\
C12 & conclusion & \cref{tab:evidence-classes} & evidence-class separation \\
C13 & abstract/conclusion & \cref{thm:w1-unified-frontier} & round-major forward scans; all 2p-1 crossings charged \\
C14 & abstract/discussion & \cref{prop:w1-block-hybrid} & family-relative p-block codec; compact-output corner for every r; full curve for r=2

%% file: appendix/appendix_C_prxq.tex
\section{Supplementary Experiments}
\label{app:supplementary-experiments}

In this appendix, rows labeled ``baseline UCC'' are frozen-control rows:
they denote the upstream UCC v0.4.12 pre-semantic default path defined in the
provenance map of Appendix~B, not a separate semantic method.

\subsection{Exact-Semantics Witness Data}
\label{app:witness-data}

The following tables contain the full uniform-600\,s comparison data for
the exact-semantics witnesses discussed in
\cref{cor:exact-semantics-corollary} of the main text.

\begin{table*}[htbp]
\centering
\scriptsize
\resizebox{\textwidth}{!}{\begin{tabular}{cccccccc}
\toprule
Requested & Actual & semantic UCC & \texttt{qiskit opt3} & PyZX \texttt{full\_reduce} & TKET PauliSimp (rebased) & staq rotation folding & TKET GuidedPauliSimp \\
\midrule
4{,}000 & 3{,}998 & 42/24/12 & 9588/5593/4788 & 107/59/31 & 52/29/19 & 9188/4996/4788 & 13574/6392/4788 \\
10{,}000 & 9{,}998 & 42/24/12 & 23988/13993/11988 & timeout & 62/36/28 & 22988/12496/11988 & 33974/15992/11988 \\
20{,}000 & 19{,}998 & 42/24/12 & 47988/27993/23988 & timeout & 59/27/20 & 45988/24996/23988 & 67974/31992/23988 \\
50{,}000 & 49{,}998 & 42/24/12 & 119988/69993/59988 & timeout & 73/44/35 & 114988/62496/59988 & 169974/79992/59988 \\
100{,}000 & 99{,}998 & 42/24/12 & 239988/139993/119988 & timeout & 66/37/28 & 229988/124996/119988 & 339974/159992/119988 \\
\bottomrule
\end{tabular}}
\caption{Uniform-600\,s comparison on the rational CP finite-range
witness. Completed cells report gates/depth/cx; other entries are status
outcomes and are not scored as structural wins. The family is certified
only for $R < T = 131040$, with per-transducer tail constants. No
external row is a formal theorem instance.}
\label{tab:rational-cp-600}
\end{table*}

\begin{table*}[htbp]
\centering
\scriptsize
\resizebox{\textwidth}{!}{\begin{tabular}{cccccccc}
\toprule
Requested & Actual & semantic UCC & \texttt{qiskit opt3} & PyZX \texttt{full\_reduce} & TKET PauliSimp (rebased) & staq rotation folding & TKET GuidedPauliSimp \\
\midrule
4{,}000 & 3{,}998 & 27/18/6 & 4799/2801/2394 & 38/19/8 & 36/14/6 & 3402/2201/2394 & 8786/4000/2394 \\
10{,}000 & 9{,}998 & 27/18/6 & 12000/7001/5994 & 48/33/12 & 35/15/8 & 8502/5501/5994 & 21986/10000/5994 \\
20{,}000 & 19{,}998 & 27/18/6 & 24000/14001/11994 & correctness failure & 38/16/8 & 17002/11001/11994 & 43986/20000/11994 \\
50{,}000 & 49{,}998 & 27/18/6 & 60000/35001/29994 & timeout & 36/15/8 & 42502/27501/29994 & 109986/50000/29994 \\
100{,}000 & 99{,}998 & 27/18/6 & 120000/70001/59994 & timeout & 37/16/8 & 85002/55001/59994 & 219986/100000/59994 \\
\bottomrule
\end{tabular}}
\caption{Uniform-600\,s comparison on the irrational independent-Pauli
witness. Completed cells report gates/depth/cx; other entries are status
outcomes and are not scored as structural wins. The exact rank and
irrational-angle certificate, rather than this numerical table,
instantiate \cref{cor:exact-semantics-corollary}.}
\label{tab:irrational-600}
\end{table*}

\subsection{Fourier Witness Diagnostics}

The tables in this subsection support the main-text Fourier witness without
serving as additional theorem instances. The canonical theorem-linked CP
witness is the finite-range case in \cref{tab:rational-cp-600}; the
generalized topology/angle rows are robustness diagnostics, and the external
stress rows use the budgets recorded in their frozen source files.

\begin{table*}[htbp]
\centering
\scriptsize
\begin{tabularx}{\textwidth}{@{}lccccX@{}}
\toprule
Topology & Angles & Points & \texttt{qiskit} ok/TO & Semantic completed & Semantic output signatures \\
\midrule
chain CP & 3 & 12 & 9 / 3 & all 12 & 23--27 gates, 17--18 depth, 6 \texttt{cx} \\
ring CP & 3 & 12 & 9 / 3 & all 12 & 32 gates, 23 depth, 8 \texttt{cx} \\
sparse CP $0.5$ & 3 & 12 & 6 / 6 & all 12 & 25--27 gates, 11--12 depth, 6 \texttt{cx} \\
full-pair CP & 2 & 6 & 4 / 2 & all 6 & 42 gates, 24 depth, 12 \texttt{cx} \\
\midrule
Total points & -- & 42 & 28 / 14 & all rows & repetition-independent at fixed topology/angle \\
\bottomrule
\end{tabularx}
\caption{Generalized fixed-width Fourier witness suite. Each point is a paired run of semantic UCC and \texttt{qiskit opt3} on an $H D^r H$ circuit at $n=4$, with requested sizes up to $50$k gates. The rows are filtered from the recorded 284-row generalized suite by the $n=4$ topology/angle/size slice. Recorded \texttt{qiskit opt3} timeout cells in that source use legacy 60\,s or 90\,s budgets, so the table is a robustness diagnostic rather than a uniform-600\,s comparison.}
\label{tab:generalized-fourier-witness}
\end{table*}

\begin{table}[htbp]
\centering
\scriptsize
\setlength{\tabcolsep}{2pt}
\begin{tabular}{cccccc}
\toprule
Width ($n$) & Checks & Output Gates & Depth & CX Count & Seed Variance \\
\midrule
4 & 15 / 15 & 42 & 24 & 12 & 0 \\
5 & 15 / 15 & 65 & 32 & 20 & 0 \\
6 & 15 / 15 & 93 & 40 & 30 & 0 \\
\bottomrule
\end{tabular}
\caption{Correctness and seed-robustness certificate for the full-pair Fourier witness. The correctness experiment checks $3$ widths $\times$ $3$ requested sizes $\times$ $5$ angle seeds using Qiskit \texttt{Operator.equiv}; all $45$ configurations are equivalent to the original circuit up to global phase. The seed-robustness experiment contains $90$ rows including \texttt{qiskit opt3}; semantic UCC has zero output gate-count variance across the five tested nonresonant angle seeds at each width.}
\label{tab:fourier-correctness-seed}
\end{table}

\begin{table*}[tbp]
\centering
\scriptsize
\begin{tabularx}{\textwidth}{@{}>{\raggedright\arraybackslash}Xccc
  >{\raggedright\arraybackslash}X
  >{\raggedright\arraybackslash}X@{}}
\toprule
Method & Requested & Actual & Status & Gates/depth/\texttt{cx} & Certificate \\
\midrule
semantic UCC (Fourier-layer IR enabled) & 100{,}000 & 99{,}998 & completed & 42/24/12 & \texttt{Operator.equiv=True} \\
phase-polynomial reference & 100{,}000 & 99{,}998 & completed & 42/24/12 & \texttt{Operator.equiv=True} \\
\bottomrule
\end{tabularx}
\caption{Large-instance correctness extension on the canonical $n=4$, $r=9999$ Fourier witness. The check is a finite-instance Qiskit \texttt{Operator.equiv} certificate for the two completed compressed outputs only; it does not certify the generalized topology/angle suite or prove an algebraic identity for all $r$.}
\label{tab:large-correctness-extension}
\end{table*}

\begin{table*}[t]
\centering
\scriptsize
\begin{tabular}{@{}rrrrrr@{}}
\toprule
Width & Requested & semantic UCC & \texttt{qiskit opt3} & PyZX bridge & TKET FullPeephole \\
\midrule
5 & 4{,}000 & 65 & 11{,}711 & 14{,}640 & 30{,}369 \\
5 & 10{,}000 & 65 & 29{,}315 & 36{,}640 & timeout \\
6 & 4{,}000 & 93 & 12{,}108 & 15{,}321 & 33{,}881 \\
6 & 10{,}000 & 93 & 30{,}417 & 38{,}487 & timeout \\
\bottomrule
\end{tabular}
\caption{Representative external stress tests on the width-axis full-pair Fourier witness. Entries report output gates after the configured bridge pipeline and final target-basis compilation. PyZX completes all four runs but does not recover the bounded semantic forms $65$ and $93$; TKET completes the $4$k runs with larger outputs and times out at $10$k. The recorded budget in this stress file is 300\,s.}
\label{tab:width-axis-external-stress}
\end{table*}

\begin{table*}[tbp]
\centering
\scriptsize
\begin{tabular}{@{}lrrrr@{}}
\toprule
Topology & semantic UCC & \texttt{qiskit opt3} & PyZX bridge & TKET FullPeephole \\
\midrule
chain CP & 27 & $6{,}851$ / $17{,}136$ & $10{,}838$ / $27{,}121$ & $20{,}555$ / timeout \\
ring CP & 32 & $7{,}996$ / $19{,}996$ & $11{,}984$ / $29{,}984$ & $23{,}987$ / $59{,}987$ \\
sparse CP $0.5$ & 27 & $6{,}851$ / $17{,}136$ & $10{,}838$ / $27{,}121$ & timeout / timeout \\
full-pair CP & 42 & $10{,}783$ / $26{,}983$ & $13{,}574$ / $33{,}974$ & timeout / timeout \\
\bottomrule
\end{tabular}
\caption{Topology-axis external stress test at $n=4$ with nonresonant seed $0$. Each external-baseline cell reports output gates at requested sizes $4$k / $10$k. The deterministic sparse-CP draw uses the same number of CP edges as chain CP and is graph-isomorphic to a chain at $n=4$, which explains the matching \texttt{qiskit opt3} and PyZX gate totals for those two rows. The recorded budget in this stress file is 300\,s.}
\label{tab:topology-external-stress}
\end{table*}

\begin{table*}[t]
\centering
\scriptsize
\begin{tabularx}{\textwidth}{@{}>{\raggedright\arraybackslash}p{0.18\textwidth}
                >{\raggedright\arraybackslash}p{0.19\textwidth}
                >{\raggedright\arraybackslash}X
                >{\raggedright\arraybackslash}p{0.19\textwidth}@{}}
\toprule
Pipeline & Representation tested & Observed behavior in the disclosed suites & Interpretation \\
\midrule
semantic UCC (Fourier-layer IR enabled)
& pre-basis Fourier-layer term
& completed 5/5; $42$ gates, depth $24$, $12$ \texttt{cx} at every tested scale
& semantic side \\
phase-polynomial reference
& same Qiskit input; only commuting-diagonal coefficient aggregation
& completed 5/5; $42$ gates, depth $24$, $12$ \texttt{cx} at $4$k--$100$k
& authors' protocol-specific constructive control \\
\texttt{qiskit opt3}
& preset target-basis pipeline
& completed 5/5; grows from $9{,}588$ gates at $4$k to $239{,}988$ gates at $100$k
& strong completed baseline, but not bounded \\
PyZX configured bridge
& Qiskit--QASM--PyZX basic optimization--Qiskit
& completed 4/5; grows through $169{,}974$ gates at $50$k and times out at $100$k
& configured bridge does not recover bounded form \\
PyZX \texttt{full\_reduce}
& QASM to PyZX graph; \texttt{full\_reduce}; extract and lower
& rational witness: completes only at $4$k with $107$ gates, depth $59$, $31$
\texttt{cx}; timeout at $10$k--$100$k. Irrational witness: completes and
validates at $4$k and $10$k with $38$ and $48$ gates; correctness failure at
$20$k; timeout at $50$k and $100$k
& strong small-scale simplification; correctness failure remains distinct from timeout and error \\
TKET FullPeephole
& direct Qiskit--TKET bridge to FullPeepholeOptimise
& timeout on all five canonical sizes under the 600\,s budget
& no completed bounded recovery in this protocol \\
TKET PauliSimp (unrebased configuration)
& direct Qiskit--TKET bridge to PauliSimp
& predicate error on all five sizes for both witness families
& provenance-only configuration; not a structural loss \\
TKET PauliSimp (rebased)
& generic DecomposeBoxes, AutoRebase to \{CX,Rz,Rx\}, RemoveRedundancies,
PauliSimp, RemoveRedundancies
& completes and validates at all five scales: $52$--$73$ gates on the rational
witness and $35$--$38$ on the irrational witness
& generic aggregate reconstruction crosses the formal boundary \\
staq rotation folding
& generic target-basis QASM2 through public \texttt{fold\_rotations}
& completes and validates at all five scales, but grows from $9{,}188$ to
$229{,}988$ gates on the rational witness and from $3{,}402$ to $85{,}002$ on
the irrational witness
& completed public baseline; growing output is not an impossibility result \\
TKET GuidedPauliSimp
& direct Qiskit--TKET bridge to GuidedPauliSimp
& completed 5/5 on each witness; grows from $13{,}574$ to $339{,}974$ gates on
the rational witness and from $8{,}786$ to $219{,}986$ on the irrational witness
& successful pipeline, but not bounded \\
artifact UCC (Fourier-layer IR disabled)
& same artifact with Fourier-layer semantic path disabled
& in ablation/resource runs, completes at $4$k, $10$k, and $20$k with $9{,}588$, $23{,}988$, and $47{,}988$ gates; times out at $50$k and $100$k
& semantic-path causality check; completed rows materialize large outputs \\
\bottomrule
\end{tabularx}
\caption{External-pipeline status matrix and constructive controls for the
rational CP and irrational independent-Pauli witnesses. Completed outputs and
status outcomes are separated. The rebased and unrebased PauliSimp
configurations are distinct, and no row is treated as a formal transducer
instance.}
\label{tab:external-failure-matrix}
\end{table*}

\subsection{Controlled Fixed-Basis Results}

\subsubsection{\texorpdfstring{Repeated $\mathrm{QFT}+\mathrm{QFT}^{-1}$ ($100$k gates)}{Repeated QFT+inverse-QFT (100k gates)}}

\subsubsection{\texorpdfstring{Repeated $\mathrm{QFT}+\mathrm{QFT}$ control ($100$k gates)}{Repeated QFT+QFT control (100k gates)}}

\begin{table*}[!t]
\centering
\begin{tabular}{lrr rr}
\toprule
& \multicolumn{2}{c}{$\mathrm{QFT}+\mathrm{QFT}^{-1}$}
& \multicolumn{2}{c}{$\mathrm{QFT}+\mathrm{QFT}$} \\
\cmidrule(lr){2-3}\cmidrule(lr){4-5}
Method & Output Gates & Output Depth & Output Gates & Output Depth \\
\midrule
translation only & 400{,}000 & 142{,}500 & 400{,}000 & 140{,}000 \\
baseline UCC & 968{,}784 & 276{,}266 & 1{,}207{,}504 & 317{,}501 \\
semantic UCC & 0 & 0 & 347{,}500 & 125{,}000 \\
\bottomrule
\end{tabular}
\caption{Controlled fixed-basis results for the repeated inverse and
non-inverse QFT controls at 100k input gates.}
\label{tab:qft-inverse-fixed}
\label{tab:qft-control-fixed}
\end{table*}

These finite controls show that, for this configured family and the tested
pipelines, a large structural opportunity remains after controlling for the
final target basis.  They are not a separate theorem-level witness.

\subsubsection{Systematic \texorpdfstring{$\mathrm{QFT}+\mathrm{QFT}^{-1}$}{QFT+inverse-QFT} external scaling}

The inverse-QFT scaling control is not the primary theorem witness, because
\texttt{qiskit opt3} also finds the zero-gate circuit at smaller sizes. Its role
is to test scalability when both paths return the same zero-gate output. As shown in
\cref{tab:qft-inverse-external-scaling}, the semantic artifact returns the
zero-gate circuit at all tested sizes, is faster than \texttt{qiskit opt3}
wherever \texttt{qiskit opt3} completes, and avoids the \texttt{qiskit opt3}
timeouts at $50$k and $100$k. As an additional control, it also matches
\texttt{qiskit\_commutative\_inverse} in output quality and is faster at every
tested size.

\begin{table*}[htbp]
\centering
\small
\begin{tabular}{rrrrrr}
\toprule
Input gates & \multicolumn{2}{c}{\texttt{qiskit opt3}} & \multicolumn{2}{c}{\texttt{qiskit\_commutative\_inverse}} & semantic UCC \\
\cmidrule(lr){2-3}\cmidrule(lr){4-5}
 & Gates & Runtime & Gates & Runtime & Runtime \\
\midrule
4{,}000 & 0 & 1.052 s & 0 & 0.082 s & 0.063 s \\
10{,}000 & 0 & 6.090 s & 0 & 0.182 s & 0.124 s \\
20{,}000 & 0 & 24.596 s & 0 & 0.341 s & 0.279 s \\
50{,}000 & timeout & $>90$ s & 0 & 0.979 s & 0.829 s \\
100{,}000 & timeout & $>90$ s & 0 & 2.271 s & 1.651 s \\
\bottomrule
\end{tabular}
\caption{External scaling on repeated $\mathrm{QFT}+\mathrm{QFT}^{-1}$. The semantic UCC artifact returns zero gates at all tested sizes; the table reports runtime because output quality is identical to the zero-gate reference wherever the references complete. Timeout cells in this source use the recorded 90\,s budget.}
\label{tab:qft-inverse-external-scaling}
\end{table*}

\subsection{Why Simple Pass Reordering Is Not Enough}

To test whether the observed gains are mainly explained by a missing pass or
pass-ordering issue inside \texttt{UCCDefaults}, we ran a controlled comparison
between baseline UCC, several minimal \texttt{UCCDefaults}-style alternatives,
and the semantic artifact. The minimal alternatives were
\texttt{defaults\_preinverse\_only}, \texttt{defaults\_reordered\_only}, and
\texttt{defaults\_minimal\_bundle}; the all-to-all cases were normalized to the
same final target basis $B = [\texttt{cx},\texttt{rx},\texttt{ry},\texttt{rz},\texttt{h}]$,
and the backend-aware cases used fixed transpiler seed $12345$. This is a
separate reorder-probe experiment, so its runtime numbers should be read
independently from the frozen mainline tables below.  Its hardware rows have
reader labels \texttt{RP23-qpeexact-s12345-c89463b96} and
\texttt{RP23-qaoa-s12345-c89463b96}; the canonical hardware campaign below
instead uses the \texttt{HW24-...-c0f32747f} labels.  The differing labels are
different full experiment keys, not repeated measurements of one row.

\begin{table*}[t]
\centering
\scriptsize
\begin{tabular}{llrrrr}
\toprule
Case & Method & Gates & Depth & CX count & Runtime \\
\midrule
\multirow{6}{*}{\texttt{phase\_estimation\_real}}
  & baseline UCC & 471{,}819 & 328{,}271 & 58{,}491 & 75.538 s \\
  & \texttt{defaults\_preinverse\_only} & 485{,}780 & 332{,}924 & 82{,}925 & 80.045 s \\
  & \texttt{defaults\_reordered\_only} & 485{,}780 & 332{,}924 & 82{,}925 & 80.005 s \\
  & \texttt{defaults\_minimal\_bundle} & 485{,}780 & 332{,}924 & 82{,}925 & 80.084 s \\
  & semantic UCC & 166{,}805 & 119{,}204 & 58{,}491 & 1.989 s \\
  & \texttt{qiskit opt3} & 166{,}805 & 119{,}204 & 58{,}491 & 1.254 s \\
\midrule
\multirow{6}{*}{\texttt{qaoa\_ring\_100k}}
  & baseline UCC & 300{,}055 & 163{,}055 & 40{,}000 & 2.431 s \\
  & \texttt{defaults\_preinverse\_only} & 434{,}104 & 217{,}962 & 40{,}000 & 11.352 s \\
  & \texttt{defaults\_reordered\_only} & 445{,}216 & 217{,}962 & 40{,}000 & 10.120 s \\
  & \texttt{defaults\_minimal\_bundle} & 434{,}104 & 217{,}962 & 40{,}000 & 10.818 s \\
  & semantic UCC & 100{,}000 & 62{,}000 & 40{,}000 & 3.131 s \\
  & \texttt{qiskit opt3} & 100{,}000 & 62{,}000 & 40{,}000 & 0.307 s \\
\midrule
\input{research/experiment_registry/generated/reorder_probe_rows.tex}\\
\bottomrule
\end{tabular}
\caption{Controlled comparison of baseline UCC, minimal \texttt{UCCDefaults}-style alternatives, the semantic artifact, and \texttt{qiskit opt3}. Small default-pipeline fixes explain part of the improvement on some cases, but they do not account for the main gains on the real and structured workloads. The overlapping \texttt{phase\_estimation\_real} rows use the per-method output \texttt{cx} counts from the recorded real-instance data.  The keyed hardware rows are generated from the frozen registry; their \texttt{RP23} labels distinguish this reorder-probe configuration from the \texttt{HW24} campaign below.}
\label{tab:defaults-vs-full}
\end{table*}

Two qualifications are not visible in \cref{tab:defaults-vs-full} itself.
On \texttt{RP23-qaoa-s12345-c89463b96} the minimal bundle already matches
the semantic artifact, so the reorder-only explanation is adequate there.
On \texttt{RP23-qpeexact-s12345-c89463b96} it is not: the minimal bundle
gives $1{,}893$ gates, depth $576$, \texttt{cx} $1{,}311$, against
$1{,}593$, depth $420$, \texttt{cx} $772$ for the semantic artifact.

\subsection{Public Benchmark Suite Results}

To move beyond official library examples, we added two public benchmark sources: \texttt{MQT Bench} and \texttt{SupermarQ}.

\paragraph{MQT Bench.}
Representative results are:
\begin{itemize}[leftmargin=1.5em]
    \item \texttt{mqt\_qpeexact\_32}: \texttt{qiskit opt3} $= 1{,}891$ gates, baseline UCC $= 5{,}494$, semantic UCC $= 1{,}891$;
    \item \texttt{mqt\_qpeinexact\_24}: \texttt{qiskit opt3} $= 1{,}206$ gates, baseline UCC $= 3{,}682$, semantic UCC $= 1{,}206$;
    \item \texttt{mqt\_ae\_8}: \texttt{qiskit opt3} $= 207$ gates, baseline UCC $= 542$, semantic UCC $= 207$;
    \item \texttt{mqt\_draper\_qft\_adder\_32}: \texttt{qiskit opt3} $= 1{,}630$ gates, baseline UCC $= 5{,}278$, semantic UCC $= 1{,}630$;
    \item \texttt{mqt\_qaoa\_32}: \texttt{qiskit opt3} $= 1{,}422$ gates, baseline UCC $= 6{,}310$, semantic UCC $= 1{,}422$;
    \item \texttt{mqt\_grover\_20}: \texttt{qiskit opt3} $= 3{,}848{,}810$ gates in $38.043$ s, baseline UCC times out at $>120$ s, and semantic UCC reaches the same output quality in $39.123$ s.
\end{itemize}
Thus, on \texttt{MQT Bench}, the semantic artifact repeatedly recovers
\texttt{qiskit opt3}-level output quality and clearly improves over baseline
UCC.  The added \texttt{qpeinexact}, \texttt{ae}, and
\texttt{draper\_qft\_adder} cases are especially relevant because they extend
the phase-ladder/Fourier-layer story beyond one exact-QPE benchmark into
inexact QPE, amplitude estimation, and QFT-based arithmetic. Runtime still
remains weaker on the hardest Grover-style case.

Additional mirrored/conjugation positive cases make the mirrored-shell
heuristic more concrete. On the official \texttt{grover\_real} instance,
baseline UCC returns
$287{,}047$ gates while the semantic artifact and \texttt{qiskit opt3} both return
$79{,}029$. On \texttt{MQT Bench} Grover instances, the same pattern persists
across scale: \texttt{mqt\_grover\_8} gives $18{,}992 \rightarrow 5{,}170$,
\texttt{mqt\_grover\_12} gives $259{,}797 \rightarrow 71{,}706$, and
\texttt{mqt\_grover\_16} gives $2{,}112{,}285 \rightarrow 581{,}098$, where in
each case the semantic artifact records the same gate count as
\texttt{qiskit opt3} while clearly
improving on baseline UCC.  The large fixed-basis mirrored Grover benchmark shows
the same parity pattern: baseline UCC gives $4{,}122{,}006$ gates, while
the semantic artifact and \texttt{qiskit opt3} both give $1{,}158{,}011$. These cases make the
mirrored/conjugation line less dependent on the routed timeout-recovery
interpretation of \texttt{hw\_mqt\_grover\_20}.

A dedicated mirrored/conjugation scaling sweep makes this support more
systematic. On public \texttt{MQT Bench} Grover instances, the semantic artifact
records the same gate count as \texttt{qiskit opt3} at sizes $8$, $12$, and
$16$, reducing
baseline UCC from $18{,}992$ to $5{,}170$ gates, from $259{,}797$ to $71{,}706$
gates, and from $2{,}112{,}285$ to $581{,}098$ gates, respectively. On the
fixed-basis \texttt{grover\_mirrored} repeated-shell scaling sweep, the same
parity pattern holds at $10$k, $20$k, $50$k, and $100$k input gates:
baseline UCC gives $412{,}206$, $824{,}406$, $2{,}061{,}006$, and
$4{,}122{,}006$ output gates, while the semantic artifact and \texttt{qiskit opt3}
give $115{,}811$, $231{,}611$, $579{,}011$, and $1{,}158{,}011$ gates. Thus
this second structural family targeted by the artifact has empirical positive
cases in both public Grover scaling and large repeated-shell fixed-basis
scaling.

\paragraph{SupermarQ.}
Representative results are:
\begin{itemize}[leftmargin=1.5em]
    \item \texttt{supermarq\_mermin\_bell\_8}: baseline UCC $= 386$ gates, depth $135$; semantic UCC $= 76$ gates, depth $44$; \texttt{qiskit opt3} $= 76$ gates, depth $44$;
    \item \texttt{supermarq\_qaoa\_vanilla\_12}: baseline UCC $= 947$ gates, depth $225$; semantic UCC $= 222$ gates, depth $80$; \texttt{qiskit opt3} $= 222$ gates, depth $80$;
    \item \texttt{supermarq\_hamiltonian\_sim\_8}: baseline UCC $= 71$ gates, depth $36$; semantic UCC $= 51$ gates, depth $24$; \texttt{qiskit opt3} $= 51$ gates, depth $24$.
\end{itemize}
This second public benchmark source provides additional finite anti-regression
checks on nontrivial families and also exposes a limitation: the semantic
artifact is not uniformly better on every benchmark family.

\subsection{Hardware-Aware Results}

We added a backend-aware comparison using a fixed $20$-qubit bidirectional line backend with native operations \texttt{u}, \texttt{sx}, \texttt{p}, \texttt{cx}, \texttt{measure}, and \texttt{id}. The benchmark source is \texttt{MQT Bench}.  The reader-facing rows are \texttt{HW24-qpeexact-s12345-c0f32747f}, \texttt{HW24-qaoa-s12345-c0f32747f}, and \texttt{HW24-grover-s12345-c0f32747f}; the full source instance remains part of each registry primary key.

\paragraph{\texttt{HW24-qpeexact-s12345-c0f32747f}.}
\begin{table*}[!t]
\centering
\begin{minipage}[t]{0.48\textwidth}
\centering
\textbf{(a) \texttt{HW24-qpeexact-s12345-c0f32747f}}\\[2pt]
\resizebox{\linewidth}{!}{\begin{tabular}{lrrrr}
  \toprule
  Method & Output Gates & Output Depth & CX Count & Runtime \\
  \midrule
  \input{research/experiment_registry/generated/hardware_qpe_rows.tex}\\
  \bottomrule
  \end{tabular}}
\end{minipage}\hfill
\begin{minipage}[t]{0.48\textwidth}
\centering
\textbf{(b) \texttt{HW24-qaoa-s12345-c0f32747f}}\\[2pt]
\resizebox{\linewidth}{!}{\begin{tabular}{lrrrr}
  \toprule
  Method & Output Gates & Output Depth & CX Count & Runtime \\
  \midrule
  \input{research/experiment_registry/generated/hardware_qaoa_rows.tex}\\
  \bottomrule
  \end{tabular}}
\end{minipage}
\caption{Canonical hardware-aware results for the QPE and QAOA reader labels,
generated from the frozen keyed dataset.}
\label{tab:hw-qpe}
\label{tab:hw-qaoa}
\end{table*}

This is a clean fixed-seed parity case: the semantic artifact matches
\texttt{qiskit opt3} in total gates, depth, and \texttt{cx} count, while still
remaining much better than baseline UCC on all structural metrics. The tradeoff
is runtime rather than output quality.

\paragraph{\texttt{HW24-qaoa-s12345-c0f32747f}.}
This is the clearest hardware-aware external-baseline improvement in the paper:
the semantic artifact improves on \texttt{qiskit opt3} in total gates, depth,
and \texttt{cx} count, while also improving on baseline UCC in total gates,
depth, and \texttt{cx} count.

\paragraph{\texttt{HW24-grover-s12345-c0f32747f}.}
\texttt{qiskit opt3} and baseline UCC still time out under this setting,
but the semantic artifact returns a result in 1.999\,s via a stronger
mirrored/self-inverse repeated-run backend fallback. The resulting circuit is
still extremely large, but the selector lowers total gates, depth, and
\texttt{cx} simultaneously (to $6{,}467{,}857$ total gates, depth
$4{,}094{,}196$, and \texttt{cx} count $3{,}045{,}048$), so we treat this as a
robustness result rather than a main competitive routed-quality claim.

\subsection{Stability and Seed Sensitivity}

We measured repeated-run stability on the frozen artifact in two ways:
\begin{enumerate}[leftmargin=1.5em]
    \item five repeated runs for the official real instances and the canonical hardware-aware seed,
    \item a five-seed sweep for the backend-aware \texttt{MQT Bench} cases.
\end{enumerate}

The main conclusions are:
\begin{itemize}[leftmargin=1.5em]
    \item the official real-instance outputs are deterministic across repeated runs,
    \item the hardware-aware outputs are deterministic for each fixed seed,
    \item \texttt{hw\_mqt\_qaoa\_20} remains better than \texttt{qiskit opt3} in total gates, depth, and \texttt{cx} count across all five tested seeds $0$, $1$, $42$, $12345$, and $54321$,
    \item \texttt{hw\_mqt\_qpeexact\_20} is now a fixed-seed parity case rather than a strict improvement.
\end{itemize}
Thus, the hardware-aware claim is a stable workload-family result across a small
fixed seed portfolio, rather than a single favorable seed.

\begin{figure*}[t]
\centering
\IfFileExists{figures/scaling_plot.png}{\includegraphics[width=\textwidth]{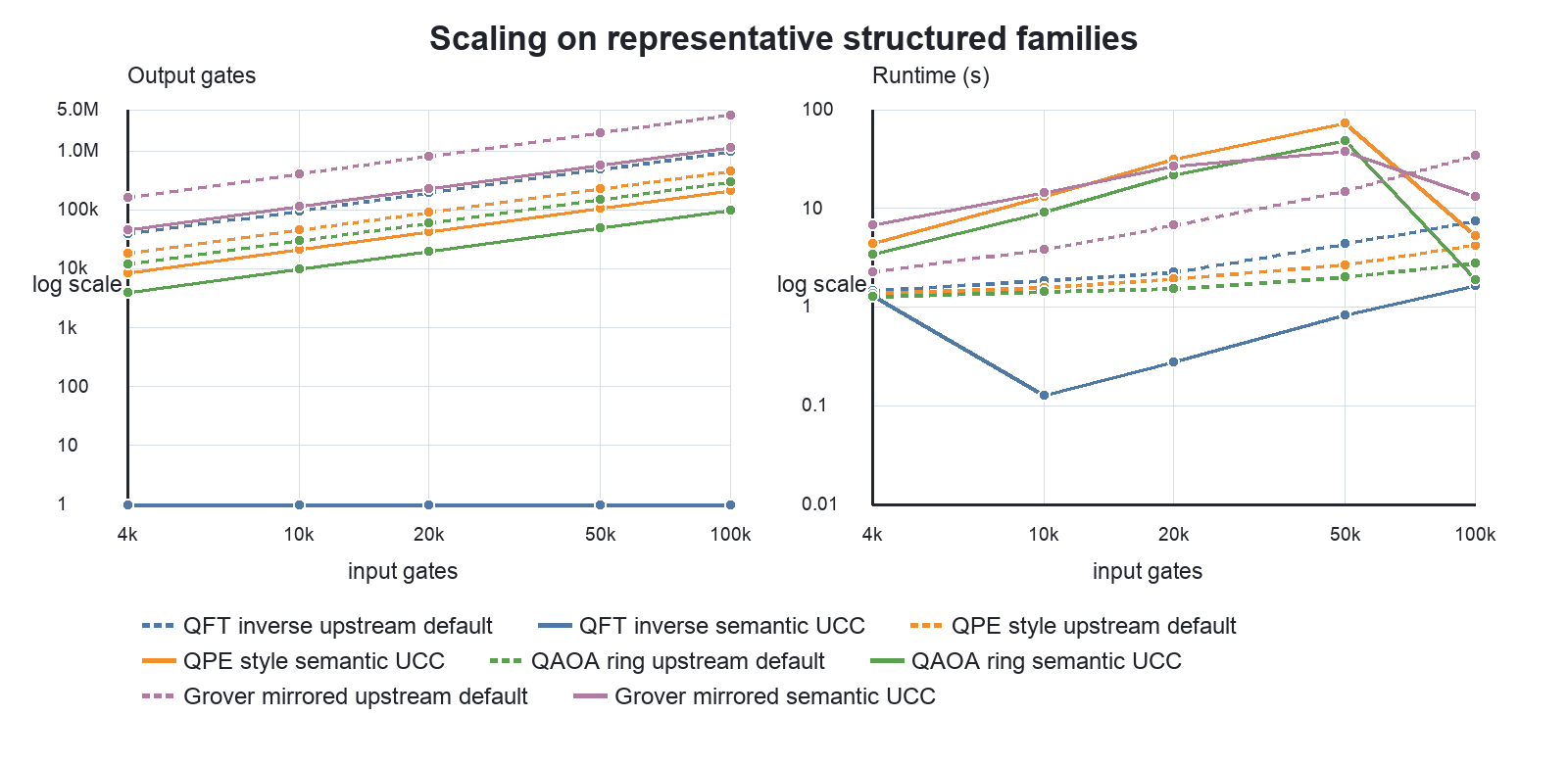}}{\fbox{\parbox{0.9\textwidth}{\centering\vspace{2em}Missing file: \texttt{figures/scaling\_plot.png}.\vspace{2em}}}}
\caption{Scaling behavior on representative structured families. Left: output gate count versus input gate count for upstream default (baseline UCC) and the semantic UCC artifact. Right: runtime versus input gate count. The semantic path preserves clear quality advantages across scale, while runtime behavior remains family-dependent.}
\label{fig:scaling}
\end{figure*}

\subsection{Scaling and Ablation}

The main scaling trends are summarized in \cref{fig:scaling}. The size sweep
covers $4$k, $10$k, $20$k, $50$k, and $100$k inputs for QFT-inverse,
QPE-style, QAOA-ring, and mirrored-Grover families; the QAOA-ring block uses
20 qubits. The most important
observations are stable quality behavior across scale and improving
repeated-family runtime once the semantic representation and dispatch controller are
enabled. The method is most effective on exact or highly repeated structure:
\texttt{qft\_inverse} remains maximally strong across all tested sizes,
\texttt{qpe\_style} remains a strong positive fixed-basis case,
\texttt{qaoa\_ring} consistently avoids UCC inflation, and the mirrored-Grover
family continues to match Qiskit quality at $100$k while remaining
more expensive. Direct spot checks on the semantic artifact give
\texttt{qpe\_style} $=82{,}200$ gates at $50$k in $2.504$ s and $164{,}400$
gates at $100$k in $4.636$ s, reflecting the phase-ladder plus
repeated-run controller improvements.

The non-monotonic semantic runtime between the $20$k and $50$k points is a
dispatch effect visible in the implementation, not an inferred hardware
effect. The 20,000-gate full-preset limit
\path{_MAX_FULL_PRESET_REFERENCE_SIZE} allows the 19,998-gate input to enter
the full preset-reference comparison,
whereas the 49,998-gate input exceeds that threshold and uses the cheaper
repeated-prefix/hierarchical route. The corresponding canonical witness
runtimes are 28.314~s and 20.934~s. This threshold explains the local drop;
it is not evidence of asymptotically decreasing compilation time.

The ablation picture is narrow. The dominant practical component is
conservative candidate control on recoverability-sensitive families, most
clearly on the 10,000-input-gate \texttt{qaoa\_ring} row, where the full configuration returns
$10{,}000$ gates while the no-candidate-selection variant returns $30{,}055$.
By contrast, \texttt{qft\_inverse}, \texttt{qft\_control},
\texttt{qpe\_style}, and \texttt{grover\_mirrored} are less sensitive to the
removal of individual structural rules. This reinforces the paper's current
framing: the main contribution is a bounded semantic representation and
dispatch controller that preserve recoverability before flat materialization,
rather than a single local identity.

%% file: research/experiment_registry/generated/reorder_probe_rows.tex
\multirow{6}{*}{\shortstack{\texttt{RP23-qpeexact}\\\texttt{s12345-c89463b96}}}  & baseline UCC & 2{,}077 & 626 & 1{,}262 & 0.126 s \\
  & \texttt{defaults\_preinverse\_only} & 1{,}893 & 576 & 1{,}311 & 0.292 s \\
  & \texttt{defaults\_reordered\_only} & 1{,}900 & 582 & 1{,}312 & 0.289 s \\
  & \texttt{defaults\_minimal\_bundle} & 1{,}893 & 576 & 1{,}311 & 0.287 s \\
  & semantic UCC & 1{,}593 & 420 & 772 & 0.296 s \\
  & \texttt{qiskit opt3} & 1{,}572 & 407 & 812 & 0.033 s \\
\midrule
\multirow{6}{*}{\shortstack{\texttt{RP23-qaoa}\\\texttt{s12345-c89463b96}}}  & baseline UCC & 2{,}088 & 522 & 1{,}439 & 0.098 s \\
  & \texttt{defaults\_preinverse\_only} & 2{,}050 & 487 & 1{,}458 & 0.270 s \\
  & \texttt{defaults\_reordered\_only} & 2{,}050 & 487 & 1{,}458 & 0.269 s \\
  & \texttt{defaults\_minimal\_bundle} & 2{,}050 & 487 & 1{,}458 & 0.271 s \\
  & semantic UCC & 2{,}050 & 487 & 1{,}458 & 0.278 s \\
  & \texttt{qiskit opt3} & 2{,}091 & 532 & 1{,}530 & 0.048 s

%% file: research/experiment_registry/generated/hardware_qpe_rows.tex
\texttt{qiskit opt3} & 1{,}572 & 407 & 812 & 0.056 s \\
baseline UCC & 1{,}976 & 648 & 1{,}350 & 0.174 s \\
semantic UCC & 1{,}572 & 407 & 812 & 0.235 s

%% file: research/experiment_registry/generated/hardware_qaoa_rows.tex
\texttt{qiskit opt3} & 2{,}091 & 532 & 1{,}530 & 0.078 s \\
baseline UCC & 2{,}057 & 513 & 1{,}503 & 0.141 s \\
semantic UCC & 2{,}050 & 487 & 1{,}458 & 0.112 s